\theoremstyle{plain}
\newtheorem{thm}{\protect\theoremname}
\theoremstyle{definition}
\newtheorem{defn}[thm]{\protect\definitionname}
\theoremstyle{remark}
\newtheorem{rem}[thm]{\protect\remarkname}
\theoremstyle{plain}
\newtheorem{fact}[thm]{\protect\factname}
\theoremstyle{plain}
\newtheorem{lem}[thm]{\protect\lemmaname}
\theoremstyle{plain}
\newtheorem{cor}[thm]{\protect\corollaryname}
\theoremstyle{plain}
\newtheorem{prop}[thm]{\protect\propositionname}
\DeclareMathOperator{\Tr}{Tr}
\DeclareMathOperator{\KM}{KM}
\DeclareMathOperator{\RLD}{RLD}
\numberwithin{equation}{section}
\providecommand{\corollaryname}{Corollary}
\providecommand{\definitionname}{Definition}
\providecommand{\factname}{Fact}
\providecommand{\lemmaname}{Lemma}
\providecommand{\propositionname}{Proposition}
\providecommand{\remarkname}{Remark}
\providecommand{\theoremname}{Theorem}
\begin{document}
\title{\textbf{Quantum Fisher information matrices from R\'enyi relative
entropies}}
\author{Mark M. Wilde\\
\textit{School of Electrical and Computer Engineering,}\\
\textit{Cornell University, Ithaca, New York 14850, USA}}
\date{\today}
\maketitle
\begin{abstract}
Quantum generalizations of the Fisher information are important in
quantum information science, with applications in high energy and
condensed matter physics and in quantum estimation theory, machine
learning, and optimization. One can derive a quantum generalization
of the Fisher information matrix in a natural way as the Hessian matrix
arising in a Taylor expansion of a smooth divergence. Such an approach
is appealing and intuitive for quantum information theorists, given
the ubiquity of divergences in quantum information theory. In contrast
to the classical case, there is not a unique quantum generalization
of the Fisher information matrix, similar to how there is not a unique
quantum generalization of the relative entropy or the R\'enyi relative
entropy. In this paper, I derive information matrices arising from
the log-Euclidean, $\alpha$-$z$, and geometric R\'enyi relative
entropies, with the main technical tool for doing so being the method
of divided differences for calculating matrix derivatives. Interestingly,
for all non-negative values of the R\'enyi parameter $\alpha$, the
log-Euclidean R\'enyi relative entropy leads to the Kubo--Mori information
matrix, and the geometric R\'enyi relative entropy leads to the right-logarithmic
derivative Fisher information matrix. Thus, the resulting information
matrices obey the data-processing inequality for all non-negative
values of the R\'enyi parameter $\alpha$ even though the original
quantities do not. Additionally, I derive and establish basic properties
of $\alpha$-$z$ information matrices resulting from the $\alpha$-$z$
R\'enyi relative entropies, while leaving it open to determine all
values of $\alpha$ and $z$ for which the $\alpha$-$z$ information
matrices obey the data-processing inequality. For parameterized thermal
states and time-evolved states, I establish formulas for their $\alpha$-$z$
information matrices and hybrid quantum--classical algorithms for
estimating them, with applications in quantum Boltzmann machine learning.
\end{abstract}
\begin{quote}
\textit{This paper is dedicated to Professor Fumio Hiai on the occasion
of his forthcoming 80$^{\text{th}}$ birthday. His seminal insights
in matrix analysis \cite{Hiai2010,Hiai2014}, mathematical physics
\cite{Hiai2018,Hiai2019,Hiai2021}, and quantum information theory
\cite{Hiai1991,Hiai2011,Mosonyi2011} have left indelible impacts
on these fields and have had a profound influence on younger generations
of quantum information scientists.}
\end{quote}
\tableofcontents{}

\section{Introduction}

\subsection{Background}

Fisher information is a core concept in information science \cite[Section~11.10]{Cover2005}.
It arises in a number of contexts, perhaps most prominently in estimation
theory \cite{Korostelev2011} but also in machine learning and optimization
in the form of the natural gradient descent method \cite{Amari1998}.
Additionally, it is the centerpiece of the field of information geometry
\cite{Amari2016,Nielsen2022}, because the Fisher information matrix
defines a Riemannian metric on the parameter space underlying a parameterized
family of probability distributions. In fact, up to a multiplicative
constant, it is the unique Riemannian metric for which the data-processing
inequality holds \cite{Cencov1982}. The fact that it goes by the
moniker ``information'' suggests that every information theorist should
know something about it.

The Fisher information concept has been generalized to the quantum
case \cite{Helstrom1967,Helstrom1969,Braunstein1994,Holevo2011} with
the original goal of developing a theory of estimation relevant for
quantum information science. Since then, it has been broadly applied
in a variety of contexts, including high energy physics \cite{Miyaji2015,Lashkari2016,Banerjee2018,May2018},
condensed matter \cite{CamposVenuti2007,Zanardi2007,Gu2010,Hauke2016,Carollo2020},
quantum machine learning and optimization \cite{Stokes2020,Sbahi2022,Koczor2022,Sohail2024,Patel2024,Minervini2025,Minervini2025a},
and quantum resource theories \cite{Tan2021,Marvian2022,Yamaguchi2025}.
Signifying its fundamental role in quantum information science, several
reviews of quantum Fisher information have appeared to date \cite{Bengtsson2006,Hayashi2017,Liu2019,Sidhu2020,Jarzyna2020,Katariya2021,Meyer2021,Sbahi2022,Scandi2024}.

One of the simplest ways for an information theorist to connect with
the Fisher information concept is through smooth divergences that
measure the distinguishability of nearby probability distributions.
More precisely, the Fisher information matrix is equal to the Hessian
matrix in a Taylor expansion of a smooth divergence evaluated at nearby
probability distributions. Remarkably, the Čencov (Chentsov) theorem
states that, up to a constant prefactor, for every smooth divergence,
the Fisher information matrix is the unique matrix that arises in
this way \cite{Cencov1982}. Thus, in the classical case, we can speak
of a single notion of Fisher information matrix.

Similar to the classical case, a quantum generalization of the Fisher
information matrix arises through smooth divergences that measure
the distinguishability of nearby quantum states. As before, a quantum
generalization of the Fisher information matrix is equal to the Hessian
matrix in a Taylor expansion of a smooth divergence evaluated at nearby
quantum states. However, in distinction to the classical case, there
is not a unique quantum Fisher information matrix, and instead there
are an infinite number of possibilities, similar to how there is not
a unique quantum generalization of the relative entropy or the R\'enyi
relative entropy. There is a characterization of information matrices
that arise in the aforementioned way \cite{Morozova1991,Petz1996}
(see also \cite[Theorem~14.1]{Bengtsson2006}), which is recalled
as Theorem~\ref{thm:q-fisher-info-char} below.

\subsection{Summary of results}

In this paper, I explore quantum generalizations of the Fisher information
matrix that arise from the log-Euclidean, geometric, and $\alpha$-$z$
R\'enyi relative entropies (defined in \eqref{eq:log-euclidean-renyi},
\eqref{eq:geometric-Renyi-def}, and \eqref{eq:alpha-z-renyi-def},
respectively). These are called the log-Euclidean information matrix,
the geometric information matrix, and the $\alpha$-$z$ information
matrix, and are defined in \eqref{eq:log-euclidean-fisher-def}, \eqref{eq:geometric-Renyi-fisher-def},
and \eqref{eq:a-z-fisher-info-def}, respectively.

The main technical results of this paper are as follows:
\begin{itemize}
\item For all $\alpha\in\left(0,1\right)\cup\left(1,\infty\right)$, the
log-Euclidean information matrix is equal to the Kubo--Mori information
matrix (Theorem~\ref{thm:log-Euclidean-information-matrix}), the
latter defined in \eqref{eq:kubo-mori-def}, being the information
matrix that arises from the standard (Umegaki) relative entropy.
\item For all $\alpha\in\left(0,1\right)\cup\left(1,\infty\right)$, the
geometric information matrix is equal to the right logarithmic derivative
(RLD) information matrix (Theorem~\ref{thm:geometric-Renyi-to-RLD}),
the latter defined in \eqref{eq:RLD-def} and well studied in the
literature on quantum estimation theory (see, e.g., \cite[Section~V-E]{Sidhu2020}).
I also prove that the same is true when considering the information
matrix that arises from the Belavkin--Staszewski relative entropy
(Theorem~\ref{thm:belavski-RLD}). These derivations generalize those
from the single-parameter case \cite{Matsumoto2010,Matsumoto2013,Matsumoto2018,Katariya2021}.
\item For all $\alpha\in\left(0,1\right)\cup\left(1,\infty\right)$ and
$z>0$, I derive a formula for the $\alpha$-$z$ information matrix
(Theorem~\ref{thm:fisher-info-from-alpha-z}). This formula, applicable
for the multiparameter case, generalizes the formula reported in \cite[Eq.~(2.12)]{May2018}
for the single-parameter case and that in \cite[Eqs.~(132)--(133)]{Ciaglia2018}
for the Bloch-sphere qubit case. Additionally, the formula demonstrates
that, for all $z>0$, the $\alpha$-$z$ information matrix converges
to the Kubo--Mori information matrix in the limit as $\alpha\to1$.
It also does so for all $\alpha\in\left(0,1\right)\cup\left(1,\infty\right)$
in the limit as $z\to\infty$.
\item The Petz-- and sandwiched R\'enyi information matrices arise as
special cases of the $\alpha$-$z$ information matrix when $z=1$
and $z=\alpha$ (Corollary~\ref{cor:Petz-Renyi-special-case} and
Corollary~\ref{cor:sandwiched-Renyi-special-case}), respectively,
which correspond to the cases in which the underlying R\'enyi relative
entropy is set to be the Petz-- and sandwiched R\'enyi relative
entropies, respectively. I establish integral representations for
these information matrices when $\alpha\in\left(0,1\right)$ (Proposition
\ref{prop:integral-rep-Petz} and Proposition \ref{prop:integral-rep-sandwiched})
and derive simple formulas for them for $\alpha=2$.
\item I establish ordering relations for the Petz-- and sandwiched R\'enyi
information matrices. Namely, as a function of the R\'enyi parameter
$\alpha$, the Petz--R\'enyi information matrix is monotone decreasing
on $\alpha\in\left(0,\frac{1}{2}\right]$ and monotone increasing
on $\alpha\in\left[\frac{1}{2},\infty\right)$ (Theorem~\ref{thm:ordering-Petz-Renyi}),
and the sandwiched R\'enyi information matrix is monotone increasing
on $\alpha\in\left(0,\infty\right)$ (Theorem~\ref{thm:ordering-sandwiched-Renyi}).
\item For all $\alpha\in\left(0,1\right)$ and $z>0$, I derive a formula
for the $\alpha$-$z$ information matrix when the underlying family
of states consists of parameterized thermal states (Theorem~\ref{thm:QBM-a-z-formula}).
Specifically, for $\rho(\theta)\coloneqq\frac{e^{-H(\theta)}}{\Tr\left[e^{-H(\theta)}\right]}$
and $H(\theta)\coloneqq\sum_{j=1}^{L}\theta_{j}H_{j}$, with $\theta_{j}\in\mathbb{R}$
and $H_{j}$ Hermitian for all $j\in\left\{ 1,\ldots,L\right\} $,
the following formula holds for the $\alpha$-$z$ information matrix:
\begin{equation}
\left[I_{\alpha,z}(\theta)\right]_{i,j}=\frac{1}{2}\left\langle \left\{ \Phi_{q_{\alpha,z},\theta}\!\left(H_{i}\right),H_{j}\right\} \right\rangle _{\rho(\theta)}-\left\langle H_{i}\right\rangle _{\rho(\theta)}\left\langle H_{j}\right\rangle _{\rho(\theta)},\label{eq:a-z-formula-thermal-intro}
\end{equation}
where the quantum channel $\Phi_{q_{\alpha,z},\theta}$ is given by
\begin{equation}
\Phi_{q_{\alpha,z},\theta}\!\left(X\right)\coloneqq\int_{-\infty}^{\infty}dt\:q_{\alpha,z}(t)\,e^{-itH(\theta)}Xe^{itH(\theta)},
\end{equation}
and $q_{\alpha,z}\coloneqq p*p_{\alpha,z}$ is a probability density
function equal to the convolution of the probability density functions
$p$ and $p_{\alpha,z}$, each defined on $t\in\mathbb{R}$ as
\begin{equation}
p(t)\coloneqq\frac{2}{\pi}\ln\!\left|\coth\!\left(\frac{\pi t}{2}\right)\right|,\qquad p_{\alpha,z}(t)\coloneqq\frac{z}{2\pi\alpha\left(1-\alpha\right)}\ln\!\left(1+\left(\frac{\sin(\pi\alpha)}{\sinh(\pi zt)}\right)^{2}\right).\label{eq:high-peak-tent-1}
\end{equation}
\item For all $\alpha\in\left(0,1\right)$ and $z>0$, I derive a formula
for the $\alpha$-$z$ information matrix when the underlying parameterized
family consists of time-evolved states (Theorem~\ref{thm:time-evolved-a-z-formula}).
Specifically, for $\sigma(\phi)\coloneqq e^{-iH(\phi)}\rho e^{iH(\phi)}$,
where $H(\phi)\coloneqq\sum_{j=1}^{L}\phi_{j}H_{j}$, with $\phi_{j}\in\mathbb{R}$
and $H_{j}$ Hermitian for all $j\in\left\{ 1,\ldots,L\right\} $,
and $\rho\coloneqq\frac{e^{-G}}{\Tr\left[e^{-G}\right]}$ with $G$
Hermitian, the following formula holds for the $\alpha$-$z$ information
matrix:
\begin{equation}
\left[I_{\alpha,z}(\phi)\right]_{i,j}=\left\langle \left[\Phi_{p_{\alpha,z}}\!\left(\Psi_{\phi}\!\left(H_{i}\right)\right),\left[G,\Psi_{\phi}(H_{j})\right]\right]\right\rangle _{\rho},\label{eq:a-z-formula-time-evolved-intro}
\end{equation}
where the quantum channels $\Psi_{\phi}$ and $\Phi_{p_{\alpha,z}}$
are defined as
\begin{align}
\Psi_{\phi}(X) & \coloneqq\int_{0}^{1}dt\ e^{iH(\phi)t}Xe^{-iH(\phi)t},\\
\Phi_{p_{\alpha,z}}\!\left(X\right) & \coloneqq\int_{-\infty}^{\infty}dt\:p_{\alpha,z}(t)e^{-iGt}Xe^{iGt},
\end{align}
 and $p_{\alpha,z}$ is the probability density function defined in
\eqref{eq:high-peak-tent-1}.
\item I evaluate the $\alpha$-$z$, Kubo--Mori, and RLD information matrices
for parameterized families of classical-quantum states, showing that
they decompose as a sum of a classical part and a quantum part (Theorem~\ref{thm:cq-decomposition}),
thus generalizing the previous finding from \cite[Proposition~7]{Katariya2021}
for the single-parameter case. A direct consequence of this and the
data-processing inequality is that these information matrices are
convex (Corollary~\ref{cor:convexity}).
\end{itemize}
The first two results mentioned above are reminiscent of the classical
Čencov theorem. That is, for all $\alpha\in\left(0,1\right)\cup\left(1,\infty\right)$,
the log-Euclidean information matrices collapse to the same information
matrix, namely, the Kubo--Mori one, and for all $\alpha\in\left(0,1\right)\cup\left(1,\infty\right)$,
the geometric information matrices collapse to the same information
matrix, namely, the  RLD Fisher information matrix. Thus, the underlying
R\'enyi relative entropies belong to two distinct families in this
sense. Additionally, in spite of the fact that the log-Euclidean R\'enyi
relative entropy does not obey the data-processing inequality for
$\alpha>1$, the resulting log-Euclidean information matrix does.
Similarly, even though the geometric R\'enyi relative entropy does
not obey the data-processing inequality for $\alpha>2$, the resulting
RLD Fisher information matrix does.

The result in \eqref{eq:a-z-formula-thermal-intro} is a broad generalization
of the findings reported in \cite[Theorems~1 and 2]{Patel2024} and
\cite[Theorem~14]{Minervini2025}, such that these earlier findings
are now special cases of Theorem~\ref{thm:QBM-a-z-formula}. Specifically,
\begin{itemize}
\item \cite[Theorem~1]{Patel2024} follows from Theorem~\ref{thm:QBM-a-z-formula}
with $\alpha=z=\frac{1}{2}$ ,
\item \cite[Theorem~2]{Patel2024} follows from Theorem~\ref{thm:QBM-a-z-formula}
with $\alpha\to1$ and $z>0$ (or $\alpha>0$ and $z\to\infty$) ,
\item \cite[Theorem~14]{Minervini2025} follows from Theorem~\ref{thm:QBM-a-z-formula}
with $\alpha=\frac{1}{2}$ and $z=1$, and I note here that Theorem~\ref{thm:QBM-a-z-formula}
provides a simpler formula than that given in \cite[Theorem~14]{Minervini2025}. 
\end{itemize}
Given that parameterized thermal states are also known as quantum
Boltzmann machines \cite{Amin2018,Kieferova2017,Benedetti2017}, the
result stated in \eqref{eq:a-z-formula-thermal-intro} is applicable
to quantum Boltzmann machine learning. In particular, if one wishes
to perform quantum optimization using natural gradient descent with
quantum Boltzmann machines and the geometry induced by the $\alpha$-$z$
R\'enyi relative entropy, then the formula in \eqref{eq:a-z-formula-thermal-intro}
is applicable. To estimate the elements of the $\alpha$-$z$ information
matrix by means of a hybrid quantum-classical algorithm, one can employ
a procedure similar to that outlined in \cite[Figure~4(a)]{Minervini2025},
with the underlying probability density replaced by $q_{\alpha,z}$,
as defined in \eqref{eq:q-a-z-prob-dens}. Given the connections between
the $\alpha$-$z$ information matrix and high energy physics, as
considered in \cite{May2018}, the formula in \eqref{eq:a-z-formula-thermal-intro}
and the corresponding hybrid quantum-classical algorithm could find
applications there as well.

Similar to what was stated above for parameterized thermal states,
the result in \eqref{eq:a-z-formula-time-evolved-intro}, regarding
time-evolved states, is also a broad generalization of the findings
reported in \cite[Theorems~11, 15, and 19]{Minervini2025}, such that
these earlier findings are now special cases of Theorem~\ref{thm:time-evolved-a-z-formula}.
Specifically,
\begin{itemize}
\item \cite[Theorem~11]{Minervini2025} follows from Theorem~\ref{thm:time-evolved-a-z-formula}
with $\alpha=z=\frac{1}{2}$.
\item \cite[Theorem~15]{Minervini2025} follows from Theorem~\ref{thm:time-evolved-a-z-formula}
with $\alpha=\frac{1}{2}$ and $z=1$, and I note here that Theorem~\ref{thm:time-evolved-a-z-formula}
provides a simpler formula than that given in \cite[Theorem~15]{Minervini2025}. 
\item \cite[Theorem~19]{Minervini2025} follows from Theorem~\ref{thm:time-evolved-a-z-formula}
with $\alpha\to1$ and $z>0$ (or $\alpha>0$ and $z\to\infty$).
\end{itemize}
Given that time-evolved states are also known as quantum evolution
machines \cite{Minervini2025}, the result stated in \eqref{eq:a-z-formula-time-evolved-intro}
is also applicable to quantum machine learning. In particular, if
one wishes to perform quantum optimization using natural gradient
descent with quantum evolution machines and the geometry induced by
the $\alpha$-$z$ R\'enyi relative entropy, then the formula in
\eqref{eq:a-z-formula-time-evolved-intro} is applicable. To estimate
the elements of the $\alpha$-$z$ information matrix by means of
a hybrid quantum-classical algorithm, one can employ a procedure similar
to that outlined in \cite[Figure~2(b)]{Minervini2025}, with the underlying
probability density $p$ therein replaced by $p_{\alpha,z}$, as defined
in \eqref{eq:high-peak-tent-1}. 

The results in \eqref{eq:a-z-formula-thermal-intro} and \eqref{eq:a-z-formula-time-evolved-intro}
are also reminiscent of the classical Čencov theorem. Indeed, when
considering the $\alpha$-$z$ information matrices of parameterized
thermal states and time-evolved states for $\alpha\in\left(0,1\right)$
and $z>0$, the only $\alpha$-$z$ dependence that these expressions
have is with respect to the underlying probability density, which
is $q_{\alpha,z}$ in \eqref{eq:a-z-formula-thermal-intro} and $p_{\alpha,z}$
in \eqref{eq:a-z-formula-time-evolved-intro}. Thus, these results
represent a near unification of the $\alpha$-$z$ information matrices
for these classes of states.

\subsection{Paper organization}

This paper is structured as follows. Section~\ref{sec:Preliminaries}
reviews preliminary material needed to understand the rest of the
paper, including classical Fisher information and quantum generalizations
thereof. Section~\ref{sec:log-Euclid-to-KM} presents the first result
stated above, that the log-Euclidean information matrix is equal to
the Kubo--Mori information matrix (Theorem~\ref{thm:log-Euclidean-information-matrix}).
Section~\ref{sec:RLD-Fisher-information-from-geometric} presents
the second result stated above, that the geometric information matrix
is equal to the RLD Fisher information matrix (Theorem~\ref{thm:geometric-Renyi-to-RLD}).
Section~\ref{sec:a-z-information-matrices} presents the formula
for the $\alpha$-$z$ information matrix (Theorem~\ref{thm:fisher-info-from-alpha-z}),
Section~\ref{sec:Special-cases-a-z} presents special cases of the
$\alpha$-$z$ information matrices, and Section~\ref{sec:Orderings-and-relations}
presents the ordering relations for the Petz-- and sandwiched R\'enyi
information matrices. Section~\ref{sec:a-z-Information-matrices-thermal-states}
presents the formula for the $\alpha$-$z$ information matrix of
parameterized thermal states (Theorem~\ref{thm:QBM-a-z-formula}),
and Section~\ref{sec:Information-matrices-time-evolved} presents
the formula for the $\alpha$-$z$ information matrix of time-evolved
states (Theorem~\ref{thm:time-evolved-a-z-formula}). Section~\ref{sec:c-q-states}
provides the decomposition formula for information matrices evaluated
on classical-quantum states. Finally, Section~\ref{sec:Conclusion}
concludes with a summary of the results and points to directions for
future work. Appendix~\ref{app:Review:-Matrix-derivatives} provides
a detailed review of matrix derivatives.

\section{Preliminaries}

\label{sec:Preliminaries}Throughout the paper, I use the notation
\begin{equation}
\mathbb{R}_{+}\coloneqq\left(0,\infty\right).
\end{equation}
The rest of this section provides an overview of classical Fisher
information (Section~\ref{subsec:Classical-Fisher-information})
and quantum generalizations of Fisher information (Section~\ref{subsec:Quantum-Fisher}),
from the perspective of smooth divergences. 

\subsection{Classical Fisher information}

\label{subsec:Classical-Fisher-information}One of the simplest ways
for an information theorist to connect with the Fisher information
concept is through smooth divergences that measure the distinguishability
of nearby probability distributions. Examples of smooth divergences
include the relative entropy and the R\'enyi relative entropies,
respectively defined for $\alpha\in\left(0,1\right)\cup\left(1,\infty\right)$
and probability distributions $p$ and $q$ over a finite alphabet
$\mathcal{X}$ as follows:
\begin{align}
D(p\|q) & \coloneqq\sum_{x\in\mathcal{X}}p(x)\ln\!\left(\frac{p(x)}{q(x)}\right),\\
D_{\alpha}(p\|q) & \coloneqq\frac{1}{\alpha-1}\ln\sum_{x\in\mathcal{X}}p(x)^{\alpha}q(x)^{1-\alpha}.
\end{align}
To see this in more detail, suppose that $\left(p_{\theta}\right)_{\theta\in\Theta}$
is a second-order differentiable parameterized family of probability
distributions over a finite alphabet $\mathcal{X}$, where $\Theta\subseteq\mathbb{R}^{L}$
is open and $L\in\mathbb{N}$, and suppose that $\boldsymbol{D}$
is a smooth divergence. By the latter, I mean that
\begin{enumerate}
\item $\boldsymbol{D}$ is second-order differentiable, so that $\frac{\partial^{2}}{\partial\varepsilon_{i}\partial\varepsilon_{j}}\boldsymbol{D}(p_{\theta}\|p_{\theta+\varepsilon})$
exists for all $\theta,\varepsilon\in\Theta$.
\item $\boldsymbol{D}$ obeys the data-processing inequality, so that the
following inequality holds for every classical channel $N\equiv\left(N(y|x)\right)_{x\in\mathcal{X},y\in\mathcal{Y}}$
and every pair $\left(p,q\right)$ of probability distributions:
\begin{equation}
\boldsymbol{D}(p\|q)\geq\boldsymbol{D}(N(p)\|N(q)),
\end{equation}
where $N(p)$ denotes the probability distribution $\left(\sum_{x\in\mathcal{X}}N(y|x)p(x)\right)_{y\in\mathcal{Y}}$,
with a similar meaning for $N(q)$.
\item $\boldsymbol{D}$ is faithful, so that the following holds for probability
distributions $p$ and $q$:
\begin{equation}
\boldsymbol{D}(p\|q)=0\quad\iff\quad p=q.
\end{equation}
\end{enumerate}
Taken together, data processing and faithfulness imply that
\begin{equation}
\boldsymbol{D}(p\|q)\geq0
\end{equation}
for every pair $\left(p,q\right)$ of probability distributions, so
that the minimum value of $\boldsymbol{D}$ is equal to zero. Under
all three assumptions, the following Taylor expansion holds for $\varepsilon\in\Theta$
such that $\left\Vert \varepsilon\right\Vert \ll1$:
\begin{align}
\boldsymbol{D}(p_{\theta}\|p_{\theta+\varepsilon}) & =\boldsymbol{D}(p_{\theta}\|p_{\theta})+\varepsilon^{T}(\nabla\boldsymbol{D})(\theta)+\frac{1}{2}\varepsilon^{T}I_{\boldsymbol{D}}(\theta)\varepsilon+o\!\left(\left\Vert \varepsilon\right\Vert ^{2}\right)\label{eq:smoothness-divergence}\\
 & =\frac{1}{2}\varepsilon^{T}I_{\boldsymbol{D}}(\theta)\varepsilon+o\!\left(\left\Vert \varepsilon\right\Vert ^{2}\right),\label{eq:taylor-expand-smooth-divergence}
\end{align}
where $(\nabla\boldsymbol{D})(\theta)$ denotes the gradient vector
and $I_{\boldsymbol{D}}(\theta)$ the Hessian matrix, defined in terms
of their elements as follows: 
\begin{align}
\left[(\nabla\boldsymbol{D})(\theta)\right]_{i} & \coloneqq\left.\frac{\partial}{\partial\varepsilon_{i}}\boldsymbol{D}(p_{\theta}\|p_{\theta+\varepsilon})\right|_{\varepsilon=0},\\
\left[I_{\boldsymbol{D}}(\theta)\right]_{i,j} & \coloneqq\left.\frac{\partial^{2}}{\partial\varepsilon_{i}\partial\varepsilon_{j}}\boldsymbol{D}(p_{\theta}\|p_{\theta+\varepsilon})\right|_{\varepsilon=0}.
\end{align}
The equality in \eqref{eq:smoothness-divergence} follows from the
smoothness assumption, and the equality in \eqref{eq:taylor-expand-smooth-divergence}
follows from all three assumptions. Indeed, $\boldsymbol{D}(p_{\theta}\|p_{\theta})=0$
by faithfulness and $(\nabla\boldsymbol{D})(\theta)=0$ because we
have expanded the function $\varepsilon\mapsto\boldsymbol{D}(p_{\theta}\|p_{\theta+\varepsilon})$
about a critical point $\varepsilon=0$ (as previously stated, $\boldsymbol{D}(p_{\theta}\|p_{\theta+\varepsilon})$
takes its minimum value at $\varepsilon=0$). The Hessian matrix $I_{\boldsymbol{D}}(\theta)$
is also known as ``information matrix'' in this context. From the
equality in \eqref{eq:taylor-expand-smooth-divergence} and assumed
properties of the smooth divergence $\boldsymbol{D}$, one can deduce
that $I_{\boldsymbol{D}}(\theta)$ is positive semi-definite and obeys
the data-processing inequality (as a matrix inequality) for all $\theta\in\Theta$
(see, e.g., \cite[Proposition~4]{Minervini2025}).

Remarkably, the Čencov theorem states that, up to a constant prefactor,
the Fisher information matrix is the unique matrix such that \eqref{eq:taylor-expand-smooth-divergence}
holds \cite{Cencov1982}. That is, for every smooth divergence $\boldsymbol{D}$,
there exists a constant $\kappa>0$ such that the following equality
holds:
\begin{equation}
\left.\frac{\partial^{2}}{\partial\varepsilon_{i}\partial\varepsilon_{j}}\boldsymbol{D}(p_{\theta}\|p_{\theta+\varepsilon})\right|_{\varepsilon=0}=\kappa\left[I_{F}(\theta)\right]_{i,j},
\end{equation}
where the Fisher information matrix $I_{F}(\theta)$ is defined in
terms of its elements as 
\begin{equation}
\left[I_{F}(\theta)\right]_{i,j}\coloneqq\sum_{x\in\mathcal{X}}\frac{1}{p(x)}\left(\frac{\partial}{\partial\theta_{i}}p_{\theta}(x)\right)\left(\frac{\partial}{\partial\theta_{j}}p_{\theta}(x)\right).\label{eq:classical-fisher-information-matrix}
\end{equation}
As a special case, it is known that the following equality holds for
all $\alpha\in\left(0,1\right)\cup\left(1,\infty\right)$ when choosing
the smooth divergence $\boldsymbol{D}$ to be the R\'enyi relative
entropy $D_{\alpha}$ \cite[Eq.~(50)]{Erven2014}:
\begin{equation}
\frac{1}{\alpha}\left.\frac{\partial^{2}}{\partial\varepsilon_{i}\partial\varepsilon_{j}}D_{\alpha}(p_{\theta}\|p_{\theta+\varepsilon})\right|_{\varepsilon=0}=\left[I_{F}(\theta)\right]_{i,j}.\label{eq:classical-renyi-fisher}
\end{equation}
Additionally, the following equality holds \cite[Eq.~(49)]{Erven2014}:
\begin{equation}
\left.\frac{\partial^{2}}{\partial\varepsilon_{i}\partial\varepsilon_{j}}D(p_{\theta}\|p_{\theta+\varepsilon})\right|_{\varepsilon=0}=\left[I_{F}(\theta)\right]_{i,j},\label{eq:classical-rel-ent-fisher}
\end{equation}
which is consistent with the well known fact that $\lim_{\alpha\to1}D_{\alpha}(p\|q)=D(p\|q)$
for every pair $\left(p,q\right)$ of probability distributions. See
Appendix \ref{app:classical-fisher-from-renyi} for brief proofs of
\eqref{eq:classical-renyi-fisher} and \eqref{eq:classical-rel-ent-fisher},
which are helpful guidelines for how analogous proofs will proceed
for the more complicated case of quantum generalizations of Fisher
information. 

\subsection{Quantum generalizations of Fisher information}

\label{subsec:Quantum-Fisher}Several of the developments above can
be generalized to the quantum case, with the main exception being
that there is no longer a unique quantum generalization of Fisher
information, similar to how there are not unique generalizations of
relative entropy and R\'enyi relative entropy in quantum information
theory, and instead there are several sensible generalizations (see,
e.g., \cite[Chapter~7]{KW2024book}).

Let us briefly review quantum generalizations of the concepts from
Section~\ref{subsec:Classical-Fisher-information}. Now suppose that
$\left(\rho(\theta)\right)_{\theta\in\Theta}$ is a second-order differentiable
parameterized family of density operators acting on finite-dimensional
Hilbert space, where $\Theta\subseteq\mathbb{R}^{L}$ is open and
$L\in\mathbb{N}$, and suppose that $\boldsymbol{D}$ is a smooth
quantum divergence. By the latter, I mean that
\begin{enumerate}
\item $\boldsymbol{D}$ is second-order differentiable, so that $\frac{\partial^{2}}{\partial\varepsilon_{i}\partial\varepsilon_{j}}\boldsymbol{D}(\rho(\theta)\|\rho(\theta+\varepsilon))$
exists for all $\theta,\varepsilon\in\Theta$.
\item $\boldsymbol{D}$ obeys the data-processing inequality, so that the
following inequality holds for every quantum channel $\mathcal{N}$
(a completely positive, trace-preserving superoperator) and every
pair $\left(\rho,\sigma\right)$ of states:
\begin{equation}
\boldsymbol{D}(\rho\|\sigma)\geq\boldsymbol{D}(\mathcal{N}(\rho)\|\mathcal{N}(\sigma)).
\end{equation}
\item $\boldsymbol{D}$ is faithful, so that the following holds for states
$\rho$ and $\sigma$:
\begin{equation}
\boldsymbol{D}(\rho\|\sigma)=0\quad\iff\quad\rho=\sigma.
\end{equation}
\end{enumerate}
As in the classical case, under all three assumptions, the following
Taylor expansion holds for $\varepsilon\in\Theta$ such that $\left\Vert \varepsilon\right\Vert \ll1$:
\begin{align}
\boldsymbol{D}(\rho(\theta)\|\rho(\theta+\varepsilon)) & =\frac{1}{2}\varepsilon^{T}I_{\boldsymbol{D}}(\theta)\varepsilon+o\!\left(\left\Vert \varepsilon\right\Vert ^{2}\right)\label{eq:smoothness-divergence-1}
\end{align}
where $I_{\boldsymbol{D}}(\theta)$ denotes the Hessian matrix, defined
in terms of its elements as follows: 
\begin{align}
\left[I_{\boldsymbol{D}}(\theta)\right]_{i,j} & \coloneqq\left.\frac{\partial^{2}}{\partial\varepsilon_{i}\partial\varepsilon_{j}}\boldsymbol{D}(\rho(\theta)\|\rho(\theta+\varepsilon))\right|_{\varepsilon=0}.\label{eq:fisher-info-from-div-quantum}
\end{align}
The reasoning for \eqref{eq:smoothness-divergence-1} is the same
as that given for \eqref{eq:taylor-expand-smooth-divergence} (see,
e.g., \cite[Eq.~(36)]{Minervini2025}). Furthermore, the information
matrix $I_{\boldsymbol{D}}(\theta)$ is positive semi-definite and
obeys the data-processing inequality (see, e.g., \cite[Proposition~4]{Minervini2025}).
Additionally, additivity of the information matrix $I_{\boldsymbol{D}}(\theta)$
with respect to tensor-product families of states follows if the underlying
smooth divergence is additive. Indeed, for a tensor-product family
$\left(\rho(\theta)\otimes\sigma(\theta)\right)_{\theta\in\Theta}$,
\begin{align}
\left[I_{\boldsymbol{D}}\!\left(\theta;\left(\rho(\theta)\otimes\sigma(\theta)\right)_{\theta\in\Theta}\right)\right]_{i,j} & \coloneqq\left.\frac{\partial^{2}}{\partial\varepsilon_{i}\partial\varepsilon_{j}}\boldsymbol{D}(\rho(\theta)\otimes\sigma(\theta)\|\rho(\theta+\varepsilon)\otimes\sigma(\theta+\varepsilon))\right|_{\varepsilon=0}\\
 & =\left.\frac{\partial^{2}}{\partial\varepsilon_{i}\partial\varepsilon_{j}}\left[\boldsymbol{D}(\rho(\theta)\|\rho(\theta+\varepsilon))+\boldsymbol{D}(\sigma(\theta)\|\sigma(\theta+\varepsilon))\right]\right|_{\varepsilon=0}\\
 & =\left.\frac{\partial^{2}}{\partial\varepsilon_{i}\partial\varepsilon_{j}}\boldsymbol{D}(\rho(\theta)\|\rho(\theta+\varepsilon))\right|_{\varepsilon=0}+\left.\frac{\partial^{2}}{\partial\varepsilon_{i}\partial\varepsilon_{j}}\boldsymbol{D}(\sigma(\theta)\|\sigma(\theta+\varepsilon))\right|_{\varepsilon=0}\\
 & =\left[I_{\boldsymbol{D}}\!\left(\theta;\left(\rho(\theta)\right)_{\theta\in\Theta}\right)\right]_{i,j}+\left[I_{\boldsymbol{D}}\!\left(\theta;\left(\sigma(\theta)\right)_{\theta\in\Theta}\right)\right]_{i,j},
\end{align}
where the second equality follows from the assumption that $\boldsymbol{D}$
is additive on tensor-product states.

As stated previously, the main distinction between the classical and
quantum cases is that there is no longer a unique quantum generalization
of Fisher information. One of the main reasons for this is that a
state $\rho(\theta)$ and each derivative $\frac{\partial}{\partial\theta_{i}}\rho(\theta)$
need not commute. Regardless, there is still a remarkable characterization
due to \cite{Morozova1991,Petz1996} (see also \cite[Theorem~14.1]{Bengtsson2006}),
which I now recall.
\begin{thm}[\cite{Morozova1991,Petz1996}]
\label{thm:q-fisher-info-char}Let $\left(\rho(\theta)\right)_{\theta\in\Theta}$
be a second-order differentiable parameterized family of positive
definite states, where $\Theta\subseteq\mathbb{R}^{L}$ is open and
$L\in\mathbb{N}$. For every smooth quantum divergence $\boldsymbol{D}$,
there exists a function $\zeta\colon\mathbb{R}_{+}\times\mathbb{R}_{+}\to\mathbb{R}_{+}$
satisfying properties 1--4 below, such that the following equality
holds:
\begin{equation}
\left.\frac{\partial^{2}}{\partial\varepsilon_{i}\partial\varepsilon_{j}}\boldsymbol{D}(\rho(\theta)\|\rho(\theta+\varepsilon))\right|_{\varepsilon=0}=\left[I_{\zeta}(\theta)\right]_{i,j},\label{eq:q-fisher-info-petz}
\end{equation}
where the quantum information matrix $I_{\zeta}(\theta)$ is defined
in terms of its elements as 
\begin{equation}
\left[I_{\zeta}(\theta)\right]_{i,j}\coloneqq\sum_{k,\ell}\zeta(\lambda_{k},\lambda_{\ell})\Tr\!\left[\Pi_{k}\left(\partial_{i}\rho(\theta)\right)\Pi_{\ell}\left(\partial_{j}\rho(\theta)\right)\right],
\end{equation}
$\partial_{i}\equiv\frac{\partial}{\partial\theta_{i}}$, $\rho(\theta)$
has a spectral decomposition as $\rho(\theta)=\sum_{k}\lambda_{k}\Pi_{k}$,
(which suppresses the dependence of each eigenvalue $\lambda_{k}$
and each eigenprojection $\Pi_{k}$ on $\theta$), and the function
$\zeta$ satisfies the following:
\begin{enumerate}
\item There exists a constant $\kappa>0$ such that $\zeta(x,x)=\frac{\kappa}{x}$
for all $x>0$.
\item $\zeta$ is symmetric, i.e., $\zeta(x,y)=\zeta(y,x)$ for all $x,y>0$.
\item $\zeta$ satisfies $\zeta(sx,sy)=s^{-1}\zeta(x,y)$ for all $s,x,y>0$.
\item The function $t\mapsto f(t)\coloneqq\frac{1}{\zeta(t,1)}$ is operator
monotone on $t\in\left(0,\infty\right)$.
\end{enumerate}
\end{thm}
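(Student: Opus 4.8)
The plan is to recognize the right-hand side of \eqref{eq:q-fisher-info-petz} as a \emph{monotone metric} evaluated on the tangent vectors $\partial_{i}\rho(\theta),\partial_{j}\rho(\theta)$, and then to appeal to the Morozova--Chentsov--Petz classification of such metrics \cite{Morozova1991,Petz1996}. So the first task is to extract from the three defining properties of a smooth quantum divergence the fact that the Hessian in \eqref{eq:fisher-info-from-div-quantum} is a monotone metric, and the second task is to unpack what that classification says.

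Fix $\theta$ and write $\rho=\rho(\theta)$. Since $\boldsymbol{D}$ is faithful and obeys data processing, $\sigma\mapsto\boldsymbol{D}(\rho\|\sigma)$ attains its minimum value, $0$, at $\sigma=\rho$, so its first derivative vanishes there. Writing $\boldsymbol{D}(\rho\|\rho(\theta+\varepsilon))$ as the composition of $\sigma\mapsto\boldsymbol{D}(\rho\|\sigma)$ with $\varepsilon\mapsto\rho(\theta+\varepsilon)$ and applying the chain rule, this vanishing kills the term involving $\partial_{i}\partial_{j}\rho$, leaving $\left.\partial_{\varepsilon_{i}}\partial_{\varepsilon_{j}}\boldsymbol{D}(\rho\|\rho(\theta+\varepsilon))\right|_{\varepsilon=0}=\langle\partial_{i}\rho,\partial_{j}\rho\rangle_{\rho}$, where $\langle A,B\rangle_{\rho}$ denotes the Hessian of $\sigma\mapsto\boldsymbol{D}(\rho\|\sigma)$ at $\sigma=\rho$: a real symmetric bilinear form on the real space of Hermitian operators, positive semidefinite by the Taylor expansion \eqref{eq:smoothness-divergence-1} and positive definite on trace-zero directions by faithfulness. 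One then promotes data processing to monotonicity of this form: for any channel $\mathcal{N}$ we have $\boldsymbol{D}(\mathcal{N}(\rho)\|\mathcal{N}(\sigma))\le\boldsymbol{D}(\rho\|\sigma)$ with equality at $\sigma=\rho$, and differentiating twice gives $\langle\mathcal{N}(A),\mathcal{N}(A)\rangle_{\mathcal{N}(\rho)}\le\langle A,A\rangle_{\rho}$. Thus $\rho\mapsto\langle\cdot,\cdot\rangle_{\rho}$ is a monotone metric on positive definite states, which extends by a standard dilation/embedding argument (appending ancillas, passing to block-diagonal states with a classical flag) to the cone of positive definite operators, where the scaling $\rho\mapsto s\rho$, $A\mapsto sA$ yields the homogeneity property~3.

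Next, invariance under unitary channels $X\mapsto UXU^{\dagger}$ forces $\langle UAU^{\dagger},UBU^{\dagger}\rangle_{U\rho U^{\dagger}}=\langle A,B\rangle_{\rho}$, so the form depends on $\rho$ only through its spectral data. Diagonalizing $\rho=\sum_{k}\lambda_{k}\Pi_{k}$, writing $A=\sum_{k,\ell}\Pi_{k}A\Pi_{\ell}$, and using covariance under the diagonal phase unitaries to decouple the ``matrix-element'' directions, one obtains a function $\zeta$ with $\langle A,B\rangle_{\rho}=\sum_{k,\ell}\zeta(\lambda_{k},\lambda_{\ell})\Tr[\Pi_{k}A\Pi_{\ell}B]$, i.e.\ exactly the right-hand side of \eqref{eq:q-fisher-info-petz}, with strict positivity $\zeta\colon\mathbb{R}_{+}\times\mathbb{R}_{+}\to\mathbb{R}_{+}$ coming from positive-definiteness of the metric. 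Restricting to commuting (classical) families and comparing with the classical Čencov theorem \cite{Cencov1982} pins down $\zeta(x,x)=\kappa/x$ for some $\kappa>0$ (property~1); equality of mixed partials together with reality of $\boldsymbol{D}$ gives $\zeta(x,y)=\zeta(y,x)$ (property~2); and the cone-scaling above gives property~3.

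The remaining assertion, property~4---that $f(t)\coloneqq1/\zeta(t,1)$ is operator monotone on $(0,\infty)$---is the genuinely quantum content and the main obstacle; it is invisible to classical reductions and to unitary covariance alone and requires data processing under a sufficiently rich family of channels. Following Morozova--Chentsov and Petz, one reduces monotonicity under all channels to monotonicity under a well-chosen subclass (conditional expectations and pinchings on $2\times2$ blocks, together with suitable embeddings), rephrases the resulting scalar inequalities as the statement that $(x,y)\mapsto\zeta(x,y)$ is the Morozova--Chentsov function of an operator mean in the Kubo--Ando sense, and then invokes Löwner's theorem to conclude that $1/\zeta(t,1)$ lies in the Löwner class of operator monotone functions. (The converse direction---that every such $\zeta$ actually arises from some smooth $\boldsymbol{D}$---is obtained by exhibiting an explicit quasi-entropy built from an operator convex function with the prescribed $\zeta$, but only the stated direction is needed here.)
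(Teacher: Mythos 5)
The paper does not actually prove Theorem~\ref{thm:q-fisher-info-char}: it is recalled verbatim from \cite{Morozova1991,Petz1996} (see also the pointer to \cite[Theorem~14.1]{Bengtsson2006}) and used as an imported fact, so there is no in-paper proof to compare against. Judged on its own terms, your sketch follows the standard Morozova--Chentsov--Petz route and is correct in outline: faithfulness plus data processing make $\varepsilon=0$ a critical point so the Hessian defines a positive bilinear form; differentiating the data-processing inequality twice at the common minimum yields monotonicity of that form under channels; covariance under unitaries in the commutant of $\rho(\theta)$ decouples the spectral blocks and produces the kernel $\zeta(\lambda_{k},\lambda_{\ell})$; the classical (commuting) reduction together with the \v{C}encov theorem fixes $\zeta(x,x)=\kappa/x$; symmetry of mixed partials shows only the symmetrization of $\zeta$ is determined, so property~2 can be imposed without loss of generality; and the flagged block-diagonal embedding gives the homogeneity in property~3.

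Two caveats are worth flagging. First, the decisive step --- property~4, that $t\mapsto1/\zeta(t,1)$ is operator monotone --- is not derived in your argument but deferred wholesale to ``following Morozova--Chentsov and Petz,'' i.e.\ to the very works the theorem is attributed to; as a self-contained proof this is the one genuine gap, since the reduction of monotonicity under all channels to $2\times2$ pinchings and the passage through Kubo--Ando means and L\"owner's theorem is where all the work lies. Second, a few intermediate claims are stated more smoothly than they can be executed: within a degenerate eigenspace one additionally needs invariance under unitaries acting inside each $\Pi_{k}$ to force the block form to be proportional to the trace form $\Tr[\Pi_{k}A\Pi_{\ell}B]$, and the extension of the metric from normalized states to the positive cone (needed for property~3) requires checking that the flagged embedding is compatible with the smoothness hypotheses on $\boldsymbol{D}$. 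Since the statement is presented in the paper as a citation rather than a result to be proved, these omissions do not undermine anything downstream, but they should be acknowledged as such rather than presented as complete.
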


\section{Kubo--Mori information matrix from log-Euclidean R\'enyi relative
entropies}

\label{sec:log-Euclid-to-KM}The main result of this section is Theorem~\ref{thm:log-Euclidean-information-matrix},
which asserts that the information matrix resulting from the log-Euclidean
R\'enyi relative entropy is equal to the Kubo--Mori information
matrix, for all values of the R\'enyi parameter $\alpha$.

The log-Euclidean R\'enyi relative entropy of positive definite states
$\rho$ and $\sigma$ is defined for all $\alpha\in\left(0,1\right)\cup\left(1,\infty\right)$
as
\begin{equation}
D_{\alpha}^{\flat}(\rho\|\sigma)\coloneqq\frac{1}{\alpha-1}\ln\Tr[\exp(\alpha\ln\rho+(1-\alpha)\ln\sigma)].\label{eq:log-euclidean-renyi}
\end{equation}
See \cite[Section~4]{Audenaert2015} and \cite[Eq.~(17)]{Mosonyi2017}.
It is equal to the standard (Umegaki) quantum relative entropy \cite{Umegaki1962}
in the limit $\alpha\to1$:
\begin{equation}
\lim_{\alpha\to1}D_{\alpha}^{\flat}(\rho\|\sigma)=D(\rho\|\sigma)\coloneqq\Tr\!\left[\rho\left(\ln\rho-\ln\sigma\right)\right].
\end{equation}
The data-processing inequality holds for $D_{\alpha}^{\flat}$ if
$\alpha\in(0,1)$, and it does not hold if $\alpha>1$, as a special
case of Fact~\ref{fact:a-z-data-proc} with $z\to\infty$.

Let $\left(\rho(\theta)\right)_{\theta\in\mathbb{R}^{L}}$ be a parameterized
family of positive definite states, where $\theta$ is an $L$-dimensional
real parameter vector. The log-Euclidean information matrix is defined
in terms of \eqref{eq:log-euclidean-renyi} and for all $\alpha\in\left(0,1\right)\cup\left(1,\infty\right)$
as follows:
\begin{equation}
\left[I_{\alpha}^{\flat}(\theta)\right]_{i,j}\coloneqq\frac{1}{\alpha}\left(\left.\frac{\partial^{2}}{\partial\varepsilon_{i}\partial\varepsilon_{j}}D_{\alpha}^{\flat}(\rho(\theta)\|\rho(\theta+\varepsilon))\right|_{\varepsilon=0}\right).\label{eq:log-euclidean-fisher-def}
\end{equation}

\begin{defn}[Kubo--Mori information matrix]
The Kubo--Mori information matrix $I^{\KM}(\theta)$ is defined
in terms of its matrix elements as
\begin{equation}
\left[I_{\KM}(\theta)\right]_{i,j}\coloneqq\left(\left.\frac{\partial^{2}}{\partial\varepsilon_{i}\partial\varepsilon_{j}}D(\rho(\theta)\|\rho(\theta+\varepsilon))\right|_{\varepsilon=0}\right)\label{eq:kubo-mori-def}
\end{equation}
and it has the following alternative expressions:
\begin{align}
\left[I_{\KM}(\theta)\right]_{i,j} & =\Tr\!\left[\left(\partial_{i}\rho(\theta)\right)\left(\partial_{j}\ln\rho(\theta)\right)\right],\label{eq:kubo-mori-elements}\\
 & =\int_{0}^{\infty}dt\ \Tr\!\left[\left(\partial_{i}\rho(\theta)\right)\left(\rho(\theta)+tI\right)^{-1}\left(\partial_{j}\rho(\theta)\right)\left(\rho(\theta)+tI\right)^{-1}\right]\label{eq:kubo-mori-elements-2}\\
 & =\sum_{k,\ell}f_{\ln}^{\left[1\right]}(\lambda_{k},\lambda_{\ell})\Tr\!\left[\left(\partial_{i}\rho(\theta)\right)\Pi_{k}\left(\partial_{j}\rho(\theta)\right)\Pi_{\ell}\right],\label{eq:kubo-mori-elements-3}
\end{align}
where $\partial_{i}\equiv\frac{\partial}{\partial\theta_{i}}$, a
spectral decomposition of $\rho(\theta)$ is given by $\rho(\theta)=\sum_{k}\lambda_{k}\Pi_{k}$,
and the first divided difference $f_{\ln}^{\left[1\right]}(x,y)$
of the logarithm function $x\mapsto\ln x$ is defined for all $x,y>0$
as
\begin{equation}
f_{\ln}^{\left[1\right]}(x,y)\coloneqq\begin{cases}
\frac{\ln x-\ln y}{x-y} & :x\neq y\\
\frac{1}{x} & :x=y
\end{cases}.
\end{equation}
\end{defn}

The equalities in \eqref{eq:kubo-mori-elements-2} and \eqref{eq:kubo-mori-elements-3}
follow from, e.g., \cite[Eqs.~(B9), (B12), (B22), (B23)]{Sbahi2022},
and the first equality follows from applying Proposition \ref{prop:derivative-log}
to \eqref{eq:kubo-mori-elements-2}. See also Theorem~\ref{thm:divided-difference-matrix-deriv}
for concluding \eqref{eq:kubo-mori-elements-3} from \eqref{eq:kubo-mori-elements}.
\begin{thm}
\label{thm:log-Euclidean-information-matrix}The log-Euclidean information
matrix in \eqref{eq:log-euclidean-fisher-def} is equal to the Kubo--Mori
information matrix for all $\alpha\in\left(0,1\right)\cup\left(1,\infty\right)$:
\begin{align}
I_{\alpha}^{\flat}(\theta) & =I_{\KM}(\theta).\label{eq:log-euclidean-fisher-elements}
\end{align}
\end{thm}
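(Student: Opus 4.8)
The plan is to expand $D_{\alpha}^{\flat}(\rho(\theta)\|\rho(\theta+\varepsilon))$ to second order in $\varepsilon$ directly from the definition \eqref{eq:log-euclidean-renyi} and show that the quadratic form that appears is $\alpha$ times the Kubo--Mori form \eqref{eq:kubo-mori-elements}. Write $\rho \coloneqq \rho(\theta)$ and $\sigma \coloneqq \rho(\theta+\varepsilon)$, and set $L \coloneqq \ln \rho$, $M \coloneqq \ln \sigma$, so that the argument of the exponential is $A(\varepsilon) \coloneqq \alpha L + (1-\alpha)M$. At $\varepsilon = 0$ we have $\sigma = \rho$, hence $M = L$, $A(0) = L$, and $\exp(A(0)) = \rho$, so $\Tr[\exp(A(0))] = 1$. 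The first step is therefore to Taylor-expand $g(\varepsilon) \coloneqq \Tr[\exp(\alpha L + (1-\alpha)M(\varepsilon))]$, where $M(\varepsilon) = \ln \rho(\theta+\varepsilon)$, around $\varepsilon = 0$ to second order, and then feed the result into $\frac{1}{\alpha-1}\ln g(\varepsilon)$ using $\ln(1+u) = u - \tfrac{1}{2}u^2 + o(u^2)$.

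The key simplification is that the first-order derivative of $\Tr[\exp(\cdot)]$ is clean: by the cyclicity of the trace, $\partial_{\varepsilon_j} \Tr[\exp(A(\varepsilon))] = \Tr[\exp(A(\varepsilon))\, \partial_{\varepsilon_j} A(\varepsilon)]$, with no Duhamel/divided-difference correction, because $\exp(A)$ commutes with itself. At $\varepsilon = 0$ this gives $\partial_{\varepsilon_j} g|_0 = (1-\alpha)\Tr[\rho\, \partial_{\varepsilon_j} M|_0]$. Now I would use the standard identity $\Tr[\rho\, \partial_j \ln \rho] = \Tr[\partial_j \rho] = \partial_j \Tr[\rho] = 0$ (valid since $\rho(\theta+\varepsilon)$ is a family of states), which kills the linear term, consistent with the general remarks around \eqref{eq:smoothness-divergence-1}. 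For the second-order term, differentiating once more introduces two contributions: one from differentiating $\exp(A)$ — which, via the Duhamel formula / first divided difference of $\exp$, at $\varepsilon = 0$ reduces again by cyclicity of the trace to $(1-\alpha)^2 \Tr[\rho\, (\partial_i M)(\partial_j M)]$-type and mixed terms — and one from the second derivative $\partial_i\partial_j M$, which again gives a term of the form $(1-\alpha)\Tr[\rho\, \partial_i\partial_j M|_0]$ that vanishes after symmetrization because $\Tr[\rho\, \partial_i\partial_j \ln\rho]$ combined with the $-\tfrac12 u^2$ term from $\ln(1+u)$ must reproduce the Hessian of $D$ at $\alpha \to 1$; more precisely I expect the $\partial_i\partial_j M$ pieces and part of the $(\partial_i M)(\partial_j M)$ pieces to combine into $\partial_i\partial_j \Tr[\exp(\cdot)]$-bookkeeping that one can cross-check against the known Kubo--Mori expansion \eqref{eq:kubo-mori-elements}. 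Collecting terms, the prefactor $\frac{1}{\alpha-1}$ against an overall $(1-\alpha)$ (after the $u - \tfrac12 u^2$ bookkeeping) should produce exactly $-1$ times something, and a further factor of $\alpha$ will emerge, so that dividing by $\alpha$ in \eqref{eq:log-euclidean-fisher-def} yields precisely \eqref{eq:kubo-mori-elements}; alternatively, I can shortcut this by invoking the classical-type computation in Appendix~\ref{app:classical-fisher-from-renyi} applied to the commuting spectral data, but in the quantum case I must be careful because $M(\varepsilon)$ and $L$ do not commute for $\varepsilon \ne 0$, only at $\varepsilon = 0$.

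The cleanest route, and the one I would actually write, avoids reproving a Duhamel expansion from scratch: note that $A(\varepsilon) = L + (1-\alpha)(M(\varepsilon) - L)$, and $M(\varepsilon) - L = O(\|\varepsilon\|)$, so $\exp(A(\varepsilon)) = \exp(L + (1-\alpha)(M(\varepsilon)-L))$ is a perturbation of $\rho = \exp(L)$ by an $O(\|\varepsilon\|)$ Hermitian perturbation $(1-\alpha)(M(\varepsilon)-L)$. Thus I can directly apply the second-order perturbation expansion of $B \mapsto \Tr[\exp(L + B)]$ around $B = 0$, which reads $\Tr[\exp(L+B)] = \Tr[\rho] + \Tr[\rho B] + \tfrac12 \sum_{k,\ell} f_{\exp}^{[1]}(\ln\lambda_k, \ln\lambda_\ell)\Tr[\Pi_k B \Pi_\ell B] + o(\|B\|^2)$ — essentially the content of the matrix-derivative machinery reviewed in Appendix~\ref{app:Review:-Matrix-derivatives} / Theorem~\ref{thm:divided-difference-matrix-deriv}. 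With $B = (1-\alpha)(M(\varepsilon)-L)$ and $M(\varepsilon) - L = \sum_i \varepsilon_i (\partial_i \ln\rho) + \tfrac12\sum_{i,j}\varepsilon_i\varepsilon_j (\partial_i\partial_j \ln\rho) + o(\|\varepsilon\|^2)$, I extract the $\varepsilon_i\varepsilon_j$ coefficient, plug into $\frac{1}{\alpha-1}\ln(1 + (\cdots))$, use $\Tr[\rho\,\partial_i\ln\rho] = 0$ and $\sum_k \ln\lambda_k \Tr[\Pi_k (\partial_i\ln\rho)\Pi_\ell(\cdots)]$-style contractions together with the identity $f_{\exp}^{[1]}(\ln\lambda_k,\ln\lambda_\ell) \cdot (\ln\lambda_k - \ln\lambda_\ell) = \lambda_k - \lambda_\ell$, which converts the divided difference of $\exp$ at logarithmic arguments into the divided difference of $\ln$ at the $\lambda$'s, i.e. into $f_{\ln}^{[1]}(\lambda_k,\lambda_\ell)$ of \eqref{eq:kubo-mori-elements-3}. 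I anticipate the main obstacle to be precisely this bookkeeping: tracking how the $\partial_i\partial_j \ln\rho$ term, the $-\tfrac12 u^2$ term from the outer logarithm, and the quadratic Duhamel term all combine, and verifying that the $(1-\alpha)$-dependence cancels against $\frac{1}{\alpha-1}$ and the $\tfrac{1}{\alpha}$ normalization to leave an $\alpha$-independent answer equal to \eqref{eq:kubo-mori-elements}; organizing this so the cancellations are transparent (e.g. by first checking the scalar/commuting case against \eqref{eq:classical-renyi-fisher} and then lifting via the divided-difference formalism) is where the care is needed, but there is no analytic difficulty beyond it.
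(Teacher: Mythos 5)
Your proposal is correct, and the route you settle on in your final paragraph is in substance the paper's route, just organized differently. The paper computes the Hessian by iterated differentiation: it applies the quotient rule to $\frac{1}{\alpha-1}\ln\Tr[\omega(\alpha,\theta,\varepsilon)]$, kills the cross term via $\Tr[\rho\,\partial_i\ln\rho]=0$, and then uses Duhamel's formula to show the key collapse $\left.\partial_{\varepsilon_i}\omega\right|_{\varepsilon=0}=(1-\alpha)\,\partial_{\theta_i}\rho(\theta)$, arriving at $\alpha\Tr[(\partial_i\rho)(\partial_j\ln\rho)]$. You instead expand $\Tr[\exp(L+B)]$ to second order in the operator perturbation $B=(1-\alpha)(\ln\rho(\theta+\varepsilon)-\ln\rho(\theta))$ and read off the $\varepsilon_i\varepsilon_j$ coefficient; this buys you one small simplification, namely that the $-\tfrac12 u^2$ correction from the outer logarithm is manifestly $o(\|\varepsilon\|^2)$ because $\Tr[\rho B]$ has no linear-in-$\varepsilon$ part, so the cancellation the paper handles via the quotient-rule cross term is automatic. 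The only step you leave implicit is how the two surviving contributions, $-\Tr[\rho\,\partial_i\partial_j\ln\rho]$ and $-(1-\alpha)\sum_{k,\ell}f_{\exp}^{[1]}(\ln\lambda_k,\ln\lambda_\ell)\Tr[\Pi_k(\partial_i\ln\rho)\Pi_\ell(\partial_j\ln\rho)]$, combine into $\alpha\Tr[(\partial_i\rho)(\partial_j\ln\rho)]$: you need both the identity $\sum_{k,\ell}f_{\exp}^{[1]}(\ln\lambda_k,\ln\lambda_\ell)\Pi_k(\partial_i\ln\rho)\Pi_\ell=\partial_i\rho$ (Theorem~\ref{thm:divided-difference-matrix-deriv} applied to $e^{\ln\rho}$) and the identity $-\Tr[\rho\,\partial_i\partial_j\ln\rho]=\Tr[(\partial_i\rho)(\partial_j\ln\rho)]$, which follows by differentiating $\Tr[\rho\,\partial_j\ln\rho]=0$ in $\theta_i$ and which the paper imports from \cite[Eq.~(E87)]{Minervini2025}. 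With that made explicit, your argument closes; your first two paragraphs are hedged and should be discarded in favor of the third.
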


\begin{proof}
Defining
\begin{equation}
\omega(\alpha,\theta,\varepsilon)\coloneqq\exp(\alpha\ln\rho(\theta)+(1-\alpha)\ln\rho(\theta+\varepsilon)),
\end{equation}
consider that
\begin{align}
\frac{\partial^{2}}{\partial\varepsilon_{i}\partial\varepsilon_{j}}D_{\alpha}^{\flat}(\rho(\theta)\|\rho(\theta+\varepsilon)) & =\frac{\partial^{2}}{\partial\varepsilon_{i}\partial\varepsilon_{j}}\left(\frac{1}{\alpha-1}\ln\Tr[\omega(\alpha,\theta,\varepsilon)]\right)\\
 & =\frac{1}{\alpha-1}\frac{\partial}{\partial\varepsilon_{i}}\left(\frac{\partial}{\partial\varepsilon_{j}}\ln\Tr[\omega(\alpha,\theta,\varepsilon)]\right)\\
 & =\frac{1}{\alpha-1}\frac{\partial}{\partial\varepsilon_{i}}\left(\frac{\frac{\partial}{\partial\varepsilon_{j}}\Tr[\omega(\alpha,\theta,\varepsilon)]}{\Tr[\omega(\alpha,\theta,\varepsilon)]}\right).\label{eq:proof-step-1}
\end{align}
Now consider that
\begin{align}
\frac{\partial}{\partial\varepsilon_{j}}\Tr[\omega(\alpha,\theta,\varepsilon)] & =\frac{\partial}{\partial\varepsilon_{j}}\Tr[\exp(\alpha\ln\rho(\theta)+(1-\alpha)\ln\rho(\theta+\varepsilon))]\label{eq:proof-step-3}\\
 & =\Tr\!\left[\begin{array}{c}
\exp(\alpha\ln\rho(\theta)+(1-\alpha)\ln\rho(\theta+\varepsilon))\times\\
\frac{\partial}{\partial\varepsilon_{j}}\left(\alpha\ln\rho(\theta)+(1-\alpha)\ln\rho(\theta+\varepsilon)\right)
\end{array}\right]\\
 & =(1-\alpha)\Tr\!\left[\omega(\alpha,\theta,\varepsilon)\left(\frac{\partial}{\partial\varepsilon_{j}}\ln\rho(\theta+\varepsilon)\right)\right],\label{eq:proof-step-2}
\end{align}
where the penultimate equality follows from Corollary~\ref{cor:derivative-in-trace}.
Then, plugging \eqref{eq:proof-step-2} into \eqref{eq:proof-step-1},
we find that 
\begin{align}
 & \frac{\partial^{2}}{\partial\varepsilon_{i}\partial\varepsilon_{j}}D_{\alpha}^{\flat}(\rho(\theta)\|\rho(\theta+\varepsilon))\nonumber \\
 & =-\frac{\partial}{\partial\varepsilon_{i}}\left(\frac{\Tr\!\left[\omega(\alpha,\theta,\varepsilon)\left(\frac{\partial}{\partial\varepsilon_{j}}\ln\rho(\theta+\varepsilon)\right)\right]}{\Tr[\omega(\alpha,\theta,\varepsilon)]}\right)\\
 & =\frac{\left(\frac{\partial}{\partial\varepsilon_{i}}\Tr[\omega(\alpha,\theta,\varepsilon)]\right)\Tr\!\left[\omega(\alpha,\theta,\varepsilon)\left(\frac{\partial}{\partial\varepsilon_{j}}\ln\rho(\theta+\varepsilon)\right)\right]}{\left(\Tr[\omega(\alpha,\theta,\varepsilon)]\right)^{2}}\nonumber \\
 & \qquad-\frac{\frac{\partial}{\partial\varepsilon_{i}}\Tr\!\left[\omega(\alpha,\theta,\varepsilon)\left(\frac{\partial}{\partial\varepsilon_{j}}\ln\rho(\theta+\varepsilon)\right)\right]}{\Tr[\omega(\alpha,\theta,\varepsilon)]}\\
 & =\left(1-\alpha\right)\frac{\Tr\!\left[\omega(\alpha,\theta,\varepsilon)\left(\frac{\partial}{\partial\varepsilon_{i}}\ln\rho(\theta+\varepsilon)\right)\right]\Tr\!\left[\omega(\alpha,\theta,\varepsilon)\left(\frac{\partial}{\partial\varepsilon_{j}}\ln\rho(\theta+\varepsilon)\right)\right]}{\left(\Tr[\omega(\alpha,\theta,\varepsilon)]\right)^{2}}\nonumber \\
 & \qquad-\frac{\Tr\!\left[\left(\frac{\partial}{\partial\varepsilon_{i}}\omega(\alpha,\theta,\varepsilon)\right)\left(\frac{\partial}{\partial\varepsilon_{j}}\ln\rho(\theta+\varepsilon)\right)\right]}{\Tr[\omega(\alpha,\theta,\varepsilon)]}\nonumber \\
 & \qquad-\frac{\Tr\!\left[\omega(\alpha,\theta,\varepsilon)\left(\frac{\partial^{2}}{\partial\varepsilon_{i}\partial\varepsilon_{j}}\ln\rho(\theta+\varepsilon)\right)\right]}{\Tr[\omega(\alpha,\theta,\varepsilon)]}.
\end{align}
The last equality follows from \eqref{eq:proof-step-3}--\eqref{eq:proof-step-2}
and the product rule for derivatives. By observing that
\begin{align}
\omega(\alpha,\theta,0) & \coloneqq\exp(\alpha\ln\rho(\theta)+(1-\alpha)\ln\rho(\theta))\\
 & =\rho(\theta),\\
\left.\frac{\partial}{\partial\varepsilon_{j}}\ln\rho(\theta+\varepsilon)\right|_{\varepsilon=0} & =\frac{\partial}{\partial\theta_{j}}\ln\rho(\theta),\\
\left.\frac{\partial^{2}}{\partial\varepsilon_{i}\partial\varepsilon_{j}}\ln\rho(\theta+\varepsilon)\right|_{\varepsilon=0} & =\frac{\partial^{2}}{\partial\theta_{i}\partial\theta_{j}}\ln\rho(\theta),
\end{align}
it follows that
\begin{align}
 & \left.\frac{\partial^{2}}{\partial\varepsilon_{i}\partial\varepsilon_{j}}D_{\alpha}^{\flat}(\rho(\theta)\|\rho(\theta+\varepsilon))\right|_{\varepsilon=0}\nonumber \\
 & =\left(1-\alpha\right)\frac{\Tr\!\left[\omega(\alpha,\theta,0)\left(\left.\frac{\partial}{\partial\varepsilon_{i}}\ln\rho(\theta+\varepsilon)\right|_{\varepsilon=0}\right)\right]\Tr\!\left[\omega(\alpha,\theta,0)\left(\left.\frac{\partial}{\partial\varepsilon_{j}}\ln\rho(\theta+\varepsilon)\right|_{\varepsilon=0}\right)\right]}{\left(\Tr[\omega(\alpha,\theta,0)]\right)^{2}}\nonumber \\
 & \qquad-\frac{\Tr\!\left[\left(\left.\frac{\partial}{\partial\varepsilon_{i}}\omega(\alpha,\theta,\varepsilon)\right|_{\varepsilon=0}\right)\left(\left.\frac{\partial}{\partial\varepsilon_{j}}\ln\rho(\theta+\varepsilon)\right|_{\varepsilon=0}\right)\right]}{\Tr[\omega(\alpha,\theta,0)]}\nonumber \\
 & \qquad-\frac{\Tr\!\left[\omega(\alpha,\theta,0)\left(\left.\frac{\partial^{2}}{\partial\varepsilon_{i}\partial\varepsilon_{j}}\ln\rho(\theta+\varepsilon)\right|_{\varepsilon=0}\right)\right]}{\Tr[\omega(\alpha,\theta,0)]}\\
 & =\left(1-\alpha\right)\Tr\!\left[\rho(\theta)\left(\frac{\partial}{\partial\theta_{i}}\ln\rho(\theta)\right)\right]\Tr\!\left[\rho(\theta)\left(\frac{\partial}{\partial\theta_{j}}\ln\rho(\theta)\right)\right]\nonumber \\
 & \qquad-\Tr\!\left[\left(\left.\frac{\partial}{\partial\varepsilon_{i}}\omega(\alpha,\theta,\varepsilon)\right|_{\varepsilon=0}\right)\left(\frac{\partial}{\partial\theta_{j}}\ln\rho(\theta)\right)\right]-\Tr\!\left[\rho(\theta)\left(\frac{\partial^{2}}{\partial\theta_{i}\partial\theta_{j}}\ln\rho(\theta)\right)\right]\\
 & =-\Tr\!\left[\left(\left.\frac{\partial}{\partial\varepsilon_{i}}\omega(\alpha,\theta,\varepsilon)\right|_{\varepsilon=0}\right)\left(\frac{\partial}{\partial\theta_{j}}\ln\rho(\theta)\right)\right]-\Tr\!\left[\rho(\theta)\left(\frac{\partial^{2}}{\partial\theta_{i}\partial\theta_{j}}\ln\rho(\theta)\right)\right],\label{eq:log-euclidean-fisher-almost}
\end{align}
where the final equality follows because
\begin{equation}
\Tr\!\left[\rho(\theta)\left(\frac{\partial}{\partial\theta_{i}}\ln\rho(\theta)\right)\right]=\Tr\!\left[\rho(\theta)\left(\frac{\partial}{\partial\theta_{j}}\ln\rho(\theta)\right)\right]=0,
\end{equation}
 due to \cite[Eqs.~(E78)–(E84)]{Minervini2025}. Now consider, from
Proposition \ref{prop:deriv-exp}, that
\begin{align}
 & \frac{\partial}{\partial\varepsilon_{i}}\omega(\alpha,\theta,\varepsilon)\nonumber \\
 & =\frac{\partial}{\partial\varepsilon_{i}}\exp(\alpha\ln\rho(\theta)+(1-\alpha)\ln\rho(\theta+\varepsilon))\\
 & =\int_{0}^{1}dt\ e^{t[\alpha\ln\rho(\theta)+(1-\alpha)\ln\rho(\theta+\varepsilon)]}\left[\frac{\partial}{\partial\varepsilon_{i}}\left(\alpha\ln\rho(\theta)+(1-\alpha)\ln\rho(\theta+\varepsilon)\right)\right]\times\nonumber \\
 & \qquad\qquad e^{\left(1-t\right)[\alpha\ln\rho(\theta)+(1-\alpha)\ln\rho(\theta+\varepsilon)]}\\
 & =\left(1-\alpha\right)\int_{0}^{1}dt\ \left[\omega(\alpha,\theta,\varepsilon)\right]^{t}\left[\frac{\partial}{\partial\varepsilon_{i}}\ln\rho(\theta+\varepsilon)\right]\left[\omega(\alpha,\theta,\varepsilon)\right]^{1-t}.
\end{align}
Then we find that
\begin{align}
\left.\frac{\partial}{\partial\varepsilon_{i}}\omega(\alpha,\theta,\varepsilon)\right|_{\varepsilon=0} & =\left(1-\alpha\right)\int_{0}^{1}dt\ \left[\omega(\alpha,\theta,0)\right]^{t}\left[\left.\frac{\partial}{\partial\varepsilon_{i}}\ln\rho(\theta+\varepsilon)\right|_{\varepsilon=0}\right]\left[\omega(\alpha,\theta,0)\right]^{1-t}\\
 & =\left(1-\alpha\right)\int_{0}^{1}dt\ \rho(\theta)^{t}\left[\frac{\partial}{\partial\theta_{i}}\ln\rho(\theta)\right]\rho(\theta)^{1-t}\label{eq:omega-at-0-simplify-1}\\
 & =\left(1-\alpha\right)\frac{\partial}{\partial\theta_{i}}\rho(\theta).\label{eq:omega-at-0-simplify-final}
\end{align}
where the last equality follows from Proposition \ref{prop:deriv-exp}
because
\begin{align}
\frac{\partial}{\partial\theta_{i}}\rho(\theta) & =\frac{\partial}{\partial\theta_{i}}\left(e^{\ln\rho(\theta)}\right)\\
 & =\int_{0}^{1}dt\ \rho(\theta)^{t}\left[\frac{\partial}{\partial\theta_{i}}\ln\rho(\theta)\right]\rho(\theta)^{1-t}.
\end{align}
Substituting \eqref{eq:omega-at-0-simplify-final} into \eqref{eq:log-euclidean-fisher-almost},
we conclude that
\begin{align}
 & \left.\frac{\partial^{2}}{\partial\varepsilon_{i}\partial\varepsilon_{j}}D_{\alpha}^{\flat}(\rho(\theta)\|\rho(\theta+\varepsilon))\right|_{\varepsilon=0}\nonumber \\
 & =\left(\alpha-1\right)\Tr\!\left[\left(\frac{\partial}{\partial\theta_{i}}\rho(\theta)\right)\left(\frac{\partial}{\partial\theta_{j}}\ln\rho(\theta)\right)\right]-\Tr\!\left[\rho(\theta)\left(\frac{\partial^{2}}{\partial\theta_{i}\partial\theta_{j}}\ln\rho(\theta)\right)\right]\\
 & =\left(\alpha-1\right)\Tr\!\left[\left(\frac{\partial}{\partial\theta_{i}}\rho(\theta)\right)\left(\frac{\partial}{\partial\theta_{j}}\ln\rho(\theta)\right)\right]+\Tr\!\left[\left(\frac{\partial}{\partial\theta_{i}}\rho(\theta)\right)\left(\frac{\partial}{\partial\theta_{j}}\ln\rho(\theta)\right)\right]\\
 & =\alpha\Tr\!\left[\left(\frac{\partial}{\partial\theta_{i}}\rho(\theta)\right)\left(\frac{\partial}{\partial\theta_{j}}\ln\rho(\theta)\right)\right],\label{eq:log-euclidean-proof-almost-done}
\end{align}
where the second equality follows from \cite[Eq.~(E87)]{Minervini2025}.
Applying Proposition \ref{prop:derivative-log} and Theorem~\ref{thm:divided-difference-matrix-deriv},
we conclude that
\begin{align}
 & \Tr\!\left[\left(\frac{\partial}{\partial\theta_{i}}\rho(\theta)\right)\left(\frac{\partial}{\partial\theta_{j}}\ln\rho(\theta)\right)\right]\nonumber \\
 & =\int_{0}^{\infty}dt\ \Tr\!\left[\left(\frac{\partial}{\partial\theta_{i}}\rho(\theta)\right)\left(\rho(\theta)+tI\right)^{-1}\left(\frac{\partial}{\partial\theta_{j}}\rho(\theta)\right)\left(\rho(\theta)+tI\right)^{-1}\right]\\
 & =\sum_{k,\ell}f_{\ln}^{\left[1\right]}(\lambda_{k},\lambda_{\ell})\Tr\!\left[\left(\frac{\partial}{\partial\theta_{i}}\rho(\theta)\right)\Pi_{k}\left(\frac{\partial}{\partial\theta_{j}}\rho(\theta)\right)\Pi_{\ell}\right],
\end{align}
thus completing the proof after dividing \eqref{eq:log-euclidean-proof-almost-done}
by $\alpha$.
\end{proof}
\begin{rem}
As stated previously, the log-Euclidean R\'enyi relative entropy
$D_{\alpha}^{\flat}$ obeys the data-processing inequality for all
$\alpha\in\left(0,1\right)$, and it does not when $\alpha>1$. However,
as indicated by Theorem~\ref{thm:log-Euclidean-information-matrix},
the log-Euclidean information matrix is equal to the Kubo--Mori information
matrix for all $\alpha\in\left(0,1\right)\cup\left(1,\infty\right)$.
As such, the log-Euclidean information matrix obeys the data-processing
inequality for all $\alpha\in\left(0,1\right)\cup\left(1,\infty\right)$
even though the log-Euclidean R\'enyi relative entropy obeys it only
for $\alpha\in\left(0,1\right)$.
\end{rem}

\section{RLD Fisher information matrix from geometric relative entropies}

\label{sec:RLD-Fisher-information-from-geometric}The main results
of this section are Theorem~\ref{thm:geometric-Renyi-to-RLD}, which
asserts that the information matrix resulting from the geometric R\'enyi
relative entropy is equal to the RLD Fisher information matrix, for
all values of the R\'enyi parameter $\alpha$, and Theorem~\ref{thm:belavski-RLD},
which asserts that the information matrix resulting from the Belavkin--Staszewski
relative entropy is also equal to the RLD Fisher information matrix.

\subsection{RLD Fisher information matrix from geometric R\'enyi relative entropies}

The geometric R\'enyi relative entropy is defined for all $\alpha\in\left(0,1\right)\cup\left(1,\infty\right)$
and positive definite states $\rho$ and $\sigma$ as follows \cite{Matsumoto2013,Matsumoto2018}:
\begin{align}
\widehat{D}_{\alpha}(\rho\|\sigma) & \coloneqq\frac{1}{\alpha-1}\ln\Tr\!\left[\sigma\left(\sigma^{-\frac{1}{2}}\rho\sigma^{-\frac{1}{2}}\right)^{\alpha}\right]\label{eq:geometric-Renyi-def}\\
 & =\frac{1}{\alpha-1}\ln\Tr\!\left[\rho\left(\rho^{-\frac{1}{2}}\sigma\rho^{-\frac{1}{2}}\right)^{1-\alpha}\right],
\end{align}
where the second equality follows from \cite[Proposition~7.39]{KW2024book}.
The data-processing inequality holds for $\widehat{D}_{\alpha}$ if
$\alpha\in(0,1)\cup(1,2]$, and it does not for $\alpha>2$ \cite[Example~3.36]{Mosonyi2024}.

The elements of the geometric information matrix induced by this relative
entropy are defined for a second-order differentiable parameterized
family $\left(\rho(\theta)\right)_{\theta\in\mathbb{R}^{L}}$ of positive
definite states as
\begin{equation}
\left[\widehat{I}_{\alpha}(\theta)\right]_{i,j}\coloneqq\frac{1}{\alpha}\left(\left.\frac{\partial^{2}}{\partial\varepsilon_{i}\partial\varepsilon_{j}}\widehat{D}_{\alpha}(\rho(\theta)\|\rho(\theta+\varepsilon))\right|_{\varepsilon=0}\right).\label{eq:geometric-Renyi-fisher-def}
\end{equation}

The elements of the right-logarithmic derivative (RLD) information
matrix are given by
\begin{align}
\left[I_{\RLD}(\theta)\right]_{i,j} & \coloneqq\Re\left[\Tr\!\left[\left(\partial_{i}\rho(\theta)\right)\rho(\theta)^{-1}\left(\partial_{j}\rho(\theta)\right)\right]\right]\label{eq:RLD-def}\\
 & =\frac{1}{2}\Tr\!\left[\left\{ \partial_{i}\rho(\theta),\partial_{j}\rho(\theta)\right\} \rho(\theta)^{-1}\right],
\end{align}
where $\partial_{i}\equiv\frac{\partial}{\partial\theta_{i}}$. This
is actually the real part of what is commonly called the RLD information
matrix (see, e.g., \cite[Eq.~(176)]{Sidhu2020}), but for brevity
and simplicity, I refer to it here and throughout as the RLD information
matrix.
\begin{thm}
\label{thm:geometric-Renyi-to-RLD}Let $\left(\rho(\theta)\right)_{\theta\in\mathbb{R}^{L}}$
be a second-order differentiable parameterized family of positive
definite states. The geometric information matrix $\widehat{I}_{\alpha}(\theta)$
in \eqref{eq:geometric-Renyi-fisher-def} is equal to the RLD information
matrix for all $\alpha\in\left(0,1\right)\cup\left(1,\infty\right)$:
\begin{equation}
\widehat{I}_{\alpha}(\theta)=I_{\RLD}(\theta).\label{eq:GRF-theorem-equality}
\end{equation}
\end{thm}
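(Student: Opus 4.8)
The plan is to compute the Hessian directly from the second expression for $\widehat{D}_\alpha$ in \eqref{eq:geometric-Renyi-def}, namely $\widehat{D}_\alpha(\rho\|\sigma)=\frac{1}{\alpha-1}\ln\Tr[\rho(\rho^{-1/2}\sigma\rho^{-1/2})^{1-\alpha}]$, which is convenient because the argument $\rho(\theta)$ that is held fixed under the $\varepsilon$-derivatives sits outside the matrix power while the varying argument $\rho(\theta+\varepsilon)$ sits inside. Abbreviating $\rho\coloneqq\rho(\theta)$, $\sigma_\varepsilon\coloneqq\rho(\theta+\varepsilon)$, and $T_\varepsilon\coloneqq\rho^{-1/2}\sigma_\varepsilon\rho^{-1/2}$, one has $T_0=I$, so that $g(\varepsilon)\coloneqq\Tr[\rho\,T_\varepsilon^{1-\alpha}]$ obeys $g(0)=\Tr[\rho]=1$ and $\widehat{D}_\alpha(\rho\|\sigma_\varepsilon)=\frac{1}{\alpha-1}\ln g(\varepsilon)$. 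Since $\frac{\partial^2}{\partial\varepsilon_i\partial\varepsilon_j}\ln g=\frac{\partial_i\partial_j g}{g}-\frac{(\partial_i g)(\partial_j g)}{g^2}$, the whole computation reduces to evaluating $\partial_j g$ and $\partial_i\partial_j g$ at $\varepsilon=0$.

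For the first derivatives I would invoke the divided-difference representation of matrix-function derivatives (Theorem~\ref{thm:divided-difference-matrix-deriv}): because $T_0=I$ has the single eigenvalue $1$, the first divided difference of $x\mapsto x^{1-\alpha}$ collapses to the ordinary derivative at $1$, giving $\partial_j T_\varepsilon^{1-\alpha}\big|_{\varepsilon=0}=(1-\alpha)\,\partial_j T_\varepsilon\big|_{\varepsilon=0}=(1-\alpha)\rho^{-1/2}(\partial_j\rho)\rho^{-1/2}$. Hence $\partial_j g\big|_{\varepsilon=0}=(1-\alpha)\Tr[\rho\,\rho^{-1/2}(\partial_j\rho)\rho^{-1/2}]=(1-\alpha)\Tr[\partial_j\rho]=0$ because $\Tr[\rho(\theta)]=1$. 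Thus the $(\partial_i g)(\partial_j g)/g^2$ term drops out at $\varepsilon=0$ and only $\partial_i\partial_j g\big|_{\varepsilon=0}$ survives.

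For the second derivative of $g$ the cleanest route is to Taylor-expand the matrix power about $I$: writing $T_\varepsilon=I+A_\varepsilon$ with $A_\varepsilon\coloneqq\rho^{-1/2}(\sigma_\varepsilon-\rho)\rho^{-1/2}\to 0$, and using that $A_\varepsilon$ commutes with itself, the scalar binomial series yields $T_\varepsilon^{1-\alpha}=I+(1-\alpha)A_\varepsilon-\tfrac{\alpha(1-\alpha)}{2}A_\varepsilon^2+o(\lVert A_\varepsilon\rVert^2)$. Expanding $A_\varepsilon=\sum_k\varepsilon_k\,\rho^{-1/2}(\partial_k\rho)\rho^{-1/2}+\tfrac12\sum_{k,\ell}\varepsilon_k\varepsilon_\ell\,\rho^{-1/2}(\partial_k\partial_\ell\rho)\rho^{-1/2}+o(\lVert\varepsilon\rVert^2)$, the $\varepsilon_i\varepsilon_j$-coefficient of $T_\varepsilon^{1-\alpha}$ has two pieces: one proportional to $\rho^{-1/2}(\partial_i\partial_j\rho)\rho^{-1/2}$, which dies after $\Tr[\rho\,\cdot\,]$ because $\Tr[\partial_i\partial_j\rho]=\partial_i\partial_j\Tr[\rho]=0$; and one from $A_\varepsilon^2$, equal after symmetrization to $-\tfrac{\alpha(1-\alpha)}{2}\big(\rho^{-1/2}(\partial_i\rho)\rho^{-1}(\partial_j\rho)\rho^{-1/2}+(i\leftrightarrow j)\big)$. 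Taking $\Tr[\rho\,\cdot\,]$ of the latter and using cyclicity gives $\partial_i\partial_j g\big|_{\varepsilon=0}=-\alpha(1-\alpha)\,\Re\Tr[(\partial_i\rho)\rho^{-1}(\partial_j\rho)]=\alpha(\alpha-1)[I_{\RLD}(\theta)]_{i,j}$, where the real part emerges because $\Tr[(\partial_j\rho)\rho^{-1}(\partial_i\rho)]$ is the complex conjugate of $\Tr[(\partial_i\rho)\rho^{-1}(\partial_j\rho)]$ (all three operators being Hermitian). Combining, $\frac{\partial^2}{\partial\varepsilon_i\partial\varepsilon_j}\widehat{D}_\alpha(\rho(\theta)\|\rho(\theta+\varepsilon))\big|_{\varepsilon=0}=\frac{1}{\alpha-1}\cdot\alpha(\alpha-1)[I_{\RLD}(\theta)]_{i,j}=\alpha\,[I_{\RLD}(\theta)]_{i,j}$, and dividing by $\alpha$ as in \eqref{eq:geometric-Renyi-fisher-def} yields \eqref{eq:GRF-theorem-equality}; the computation is uniform in $\alpha\in(0,1)\cup(1,\infty)$ since the factor $\alpha-1$ cancels.

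The step I expect to require the most care is the passage from the abstract derivative of the matrix power $T_\varepsilon^{1-\alpha}$ to its explicit second-order expansion about $\varepsilon=0$: one must either justify the binomial series for the non-integer power of the near-identity positive-definite matrix $T_\varepsilon$ together with its termwise differentiability in $\varepsilon$, or equivalently use the first- and second-order divided-difference formulas for matrix-function derivatives (as reviewed in Appendix~\ref{app:Review:-Matrix-derivatives}) and check that they collapse at the degenerate spectrum of $T_0=I$. Everything afterward is bookkeeping: identifying which of the two $\varepsilon_i\varepsilon_j$ contributions survives contraction against $\rho$, and recognizing the surviving term as the real part in the definition \eqref{eq:RLD-def} of $I_{\RLD}$.
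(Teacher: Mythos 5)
Your proposal is correct and follows essentially the same route as the paper's proof: both reduce to expanding $\left(I+\rho(\theta)^{-\frac{1}{2}}\Delta_{\varepsilon}\rho(\theta)^{-\frac{1}{2}}\right)^{1-\alpha}$ about the identity, kill the first-order contribution via $\Tr\!\left[\Delta_{\varepsilon}\right]=0$, and extract the RLD form from the quadratic (binomial coefficient $-\alpha\left(1-\alpha\right)/2$) term, with the factor $\alpha-1$ cancelling uniformly in $\alpha$. The only substantive difference is bookkeeping at the step you yourself flag: the paper retains the full convergent binomial series and bounds the $n\geq3$ tail explicitly (via H\"older and submultiplicativity, after verifying $\left\Vert \rho(\theta)^{-\frac{1}{2}}\Delta_{\varepsilon}\rho(\theta)^{-\frac{1}{2}}\right\Vert <1$), whereas you truncate with an $o\!\left(\left\Vert A_{\varepsilon}\right\Vert ^{2}\right)$ remainder, which suffices once one notes that $g$ is $C^{2}$ so a remainder that is $o\!\left(\left\Vert \varepsilon\right\Vert ^{2}\right)$ contributes nothing to the Hessian at $\varepsilon=0$.
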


\begin{proof}
Consider that
\begin{align}
 & \frac{\partial^{2}}{\partial\varepsilon_{i}\partial\varepsilon_{j}}\widehat{D}_{\alpha}(\rho(\theta)\|\rho(\theta+\varepsilon))\nonumber \\
 & =\frac{\partial^{2}}{\partial\varepsilon_{i}\partial\varepsilon_{j}}\frac{1}{\alpha-1}\ln\Tr\!\left[\rho(\theta)\left(\rho(\theta)^{-\frac{1}{2}}\rho(\theta+\varepsilon)\rho(\theta)^{-\frac{1}{2}}\right)^{1-\alpha}\right]\\
 & =\frac{1}{\alpha-1}\frac{\partial}{\partial\varepsilon_{i}}\frac{\partial}{\partial\varepsilon_{j}}\ln\Tr\!\left[\rho(\theta)\left(\rho(\theta)^{-\frac{1}{2}}\rho(\theta+\varepsilon)\rho(\theta)^{-\frac{1}{2}}\right)^{1-\alpha}\right]\\
 & =\frac{1}{\alpha-1}\frac{\partial}{\partial\varepsilon_{i}}\left(\frac{\frac{\partial}{\partial\varepsilon_{j}}\Tr\!\left[\rho(\theta)\left(\rho(\theta)^{-\frac{1}{2}}\rho(\theta+\varepsilon)\rho(\theta)^{-\frac{1}{2}}\right)^{1-\alpha}\right]}{\Tr\!\left[\rho(\theta)\left(\rho(\theta)^{-\frac{1}{2}}\rho(\theta+\varepsilon)\rho(\theta)^{-\frac{1}{2}}\right)^{1-\alpha}\right]}\right)\\
 & =-\frac{1}{\alpha-1}\left(\frac{\begin{array}{c}
\frac{\partial}{\partial\varepsilon_{i}}\Tr\!\left[\rho(\theta)\left(\rho(\theta)^{-\frac{1}{2}}\rho(\theta+\varepsilon)\rho(\theta)^{-\frac{1}{2}}\right)^{1-\alpha}\right]\times\\
\frac{\partial}{\partial\varepsilon_{j}}\Tr\!\left[\rho(\theta)\left(\rho(\theta)^{-\frac{1}{2}}\rho(\theta+\varepsilon)\rho(\theta)^{-\frac{1}{2}}\right)^{1-\alpha}\right]
\end{array}}{\left(\Tr\!\left[\rho(\theta)\left(\rho(\theta)^{-\frac{1}{2}}\rho(\theta+\varepsilon)\rho(\theta)^{-\frac{1}{2}}\right)^{1-\alpha}\right]\right)^{2}}\right)\nonumber \\
 & \qquad+\frac{1}{\alpha-1}\left(\frac{\frac{\partial^{2}}{\partial\varepsilon_{i}\partial\varepsilon_{j}}\Tr\!\left[\rho(\theta)\left(\rho(\theta)^{-\frac{1}{2}}\rho(\theta+\varepsilon)\rho(\theta)^{-\frac{1}{2}}\right)^{1-\alpha}\right]}{\Tr\!\left[\rho(\theta)\left(\rho(\theta)^{-\frac{1}{2}}\rho(\theta+\varepsilon)\rho(\theta)^{-\frac{1}{2}}\right)^{1-\alpha}\right]}\right).
\end{align}
Then we find that
\begin{align}
 & \left.\frac{\partial^{2}}{\partial\varepsilon_{i}\partial\varepsilon_{j}}\widehat{D}_{\alpha}(\rho(\theta)\|\rho(\theta+\varepsilon))\right|_{\varepsilon=0}\nonumber \\
 & =-\frac{1}{\alpha-1}\left(\frac{\begin{array}{c}
\left.\frac{\partial}{\partial\varepsilon_{i}}\Tr\!\left[\rho(\theta)\left(\rho(\theta)^{-\frac{1}{2}}\rho(\theta+\varepsilon)\rho(\theta)^{-\frac{1}{2}}\right)^{1-\alpha}\right]\right|_{\varepsilon=0}\times\\
\left.\frac{\partial}{\partial\varepsilon_{j}}\Tr\!\left[\rho(\theta)\left(\rho(\theta)^{-\frac{1}{2}}\rho(\theta+\varepsilon)\rho(\theta)^{-\frac{1}{2}}\right)^{1-\alpha}\right]\right|_{\varepsilon=0}
\end{array}}{\left(\Tr\!\left[\rho(\theta)\left(\rho(\theta)^{-\frac{1}{2}}\rho(\theta)\rho(\theta)^{-\frac{1}{2}}\right)^{1-\alpha}\right]\right)^{2}}\right)\nonumber \\
 & \qquad+\frac{1}{\alpha-1}\left(\frac{\left.\frac{\partial^{2}}{\partial\varepsilon_{i}\partial\varepsilon_{j}}\Tr\!\left[\rho(\theta)\left(\rho(\theta)^{-\frac{1}{2}}\rho(\theta+\varepsilon)\rho(\theta)^{-\frac{1}{2}}\right)^{1-\alpha}\right]\right|_{\varepsilon=0}}{\Tr\!\left[\rho(\theta)\left(\rho(\theta)^{-\frac{1}{2}}\rho(\theta)\rho(\theta)^{-\frac{1}{2}}\right)^{1-\alpha}\right]}\right)\\
 & =-\frac{1}{\alpha-1}\left(\begin{array}{c}
\left.\frac{\partial}{\partial\varepsilon_{i}}\Tr\!\left[\rho(\theta)\left(\rho(\theta)^{-\frac{1}{2}}\rho(\theta+\varepsilon)\rho(\theta)^{-\frac{1}{2}}\right)^{1-\alpha}\right]\right|_{\varepsilon=0}\times\\
\left.\frac{\partial}{\partial\varepsilon_{j}}\Tr\!\left[\rho(\theta)\left(\rho(\theta)^{-\frac{1}{2}}\rho(\theta+\varepsilon)\rho(\theta)^{-\frac{1}{2}}\right)^{1-\alpha}\right]\right|_{\varepsilon=0}
\end{array}\right)\nonumber \\
 & \qquad+\frac{1}{\alpha-1}\left(\left.\frac{\partial^{2}}{\partial\varepsilon_{i}\partial\varepsilon_{j}}\Tr\!\left[\rho(\theta)\left(\rho(\theta)^{-\frac{1}{2}}\rho(\theta+\varepsilon)\rho(\theta)^{-\frac{1}{2}}\right)^{1-\alpha}\right]\right|_{\varepsilon=0}\right).\label{eq:geometric-renyi-fisher-proof-1st-block}
\end{align}
Let us now prove that
\begin{multline}
\left.\frac{\partial}{\partial\varepsilon_{i}}\Tr\!\left[\rho(\theta)\left(\rho(\theta)^{-\frac{1}{2}}\rho(\theta+\varepsilon)\rho(\theta)^{-\frac{1}{2}}\right)^{1-\alpha}\right]\right|_{\varepsilon=0}\\
=\left.\frac{\partial}{\partial\varepsilon_{j}}\Tr\!\left[\rho(\theta)\left(\rho(\theta)^{-\frac{1}{2}}\rho(\theta+\varepsilon)\rho(\theta)^{-\frac{1}{2}}\right)^{1-\alpha}\right]\right|_{\varepsilon=0}=0.\label{eq:1st-deriv-zero-geometric-renyi}
\end{multline}
To this end, define
\begin{equation}
\Delta_{\varepsilon}\equiv\rho(\theta+\varepsilon)-\rho(\theta).\label{eq:Delta-eps-op-def}
\end{equation}
By the assumption that the positive operator-valued function $\theta\mapsto\rho(\theta)$
is second-order differentiable, it follows that it is continuous.
In particular, for all $\theta\in\mathbb{R}^{L}$ and $\delta>0$,
there exists $\gamma>0$ such that for all $\varepsilon\in\mathbb{R}^{L}$
satisfying $\left\Vert \varepsilon\right\Vert \leq\gamma$, the inequality
$\left\Vert \rho(\theta+\varepsilon)-\rho(\theta)\right\Vert \leq\delta$
holds. Let us choose $\delta\in\left(0,\lambda_{\min}(\rho(\theta))\right)$,
where $\lambda_{\min}(\rho(\theta))$ denotes the minimum eigenvalue
of $\rho(\theta)$. Then by continuity, there exists $\gamma>0$,
such that for all $\varepsilon\in\mathbb{R}^{L}$ satisfying $\left\Vert \varepsilon\right\Vert \leq\gamma$,
the inequality $\left\Vert \rho(\theta+\varepsilon)-\rho(\theta)\right\Vert \leq\delta$
holds. Proceeding with this choice of $\gamma$ and restricting to
$\varepsilon\in\mathbb{R}^{L}$ satisfying $\left\Vert \varepsilon\right\Vert \leq\gamma$,
consider that
\begin{align}
 & \frac{\partial}{\partial\varepsilon_{i}}\Tr\!\left[\rho(\theta)\left(\rho(\theta)^{-\frac{1}{2}}\rho(\theta+\varepsilon)\rho(\theta)^{-\frac{1}{2}}\right)^{1-\alpha}\right]\nonumber \\
 & =\frac{\partial}{\partial\varepsilon_{i}}\Tr\!\left[\rho(\theta)\left(\rho(\theta)^{-\frac{1}{2}}\left(\Delta_{\varepsilon}+\rho(\theta)\right)\rho(\theta)^{-\frac{1}{2}}\right)^{1-\alpha}\right]\\
 & =\frac{\partial}{\partial\varepsilon_{i}}\Tr\!\left[\rho(\theta)\left(I+\rho(\theta)^{-\frac{1}{2}}\Delta_{\varepsilon}\rho(\theta)^{-\frac{1}{2}}\right)^{1-\alpha}\right]\\
 & =\frac{\partial}{\partial\varepsilon_{i}}\Tr\!\left[\rho(\theta)\left[\sum_{n=0}^{\infty}{1-\alpha \choose n}\left(\rho(\theta)^{-\frac{1}{2}}\Delta_{\varepsilon}\rho(\theta)^{-\frac{1}{2}}\right)^{n}\right]\right]\\
 & =\frac{\partial}{\partial\varepsilon_{i}}\left(\sum_{n=0}^{\infty}{1-\alpha \choose n}\Tr\!\left[\rho(\theta)\left(\rho(\theta)^{-\frac{1}{2}}\Delta_{\varepsilon}\rho(\theta)^{-\frac{1}{2}}\right)^{n}\right]\right)\\
 & =\frac{\partial}{\partial\varepsilon_{i}}\left(\begin{array}{c}
1+\left(1-\alpha\right)\Tr\!\left[\Delta_{\varepsilon}\right]+\frac{\left(1-\alpha\right)\left(-\alpha\right)}{2}\Tr\!\left[\Delta_{\varepsilon}\rho(\theta)^{-1}\Delta_{\varepsilon}\right]\\
+\sum_{n=3}^{\infty}{1-\alpha \choose n}\Tr\!\left[\left(\Delta_{\varepsilon}\rho(\theta)^{-1}\right)^{n-1}\Delta_{\varepsilon}\right]
\end{array}\right)\\
 & =\frac{\partial}{\partial\varepsilon_{i}}\left(\begin{array}{c}
\frac{\left(1-\alpha\right)\left(-\alpha\right)}{2}\Tr\!\left[\Delta_{\varepsilon}\rho(\theta)^{-1}\Delta_{\varepsilon}\right]+\sum_{n=3}^{\infty}{1-\alpha \choose n}\Tr\!\left[\left(\Delta_{\varepsilon}\rho(\theta)^{-1}\right)^{n-1}\Delta_{\varepsilon}\right]\end{array}\right).\label{eq:geometric-renyi-series-expansion}
\end{align}
The third equality follows from the binomial series expansion
\begin{equation}
\left(1+x\right)^{1-\alpha}=\sum_{n=0}^{\infty}{1-\alpha \choose n}x^{n},
\end{equation}
which converges for all $x\in\mathbb{R}$ satisfying $\left|x\right|<1$.
To see that the series
\begin{equation}
\sum_{n=0}^{\infty}{1-\alpha \choose n}\left(\rho(\theta)^{-\frac{1}{2}}\Delta_{\varepsilon}\rho(\theta)^{-\frac{1}{2}}\right)^{n}
\end{equation}
indeed converges to
\begin{equation}
\left(I+\rho(\theta)^{-\frac{1}{2}}\Delta_{\varepsilon}\rho(\theta)^{-\frac{1}{2}}\right)^{1-\alpha},
\end{equation}
consider that, by the choice of $\gamma$ mentioned above, the following
inequality holds for all $\varepsilon\in\mathbb{R}^{L}$ satisfying
$\left\Vert \varepsilon\right\Vert \leq\gamma$:
\begin{align}
\left\Vert \rho(\theta)^{-\frac{1}{2}}\Delta_{\varepsilon}\rho(\theta)^{-\frac{1}{2}}\right\Vert  & =\left\Vert \rho(\theta)^{-\frac{1}{2}}\left(\rho(\theta+\varepsilon)-\rho(\theta)\right)\rho(\theta)^{-\frac{1}{2}}\right\Vert \label{eq:Delta-eps-sandwich-small-1}\\
 & \leq\left\Vert \rho(\theta)^{-\frac{1}{2}}\right\Vert \left\Vert \rho(\theta+\varepsilon)-\rho(\theta)\right\Vert \left\Vert \rho(\theta)^{-\frac{1}{2}}\right\Vert \\
 & =\left\Vert \rho(\theta)^{-1}\right\Vert \left\Vert \rho(\theta+\varepsilon)-\rho(\theta)\right\Vert \\
 & =\frac{\left\Vert \rho(\theta+\varepsilon)-\rho(\theta)\right\Vert }{\lambda_{\min}(\rho(\theta))}\\
 & <1,\label{eq:Delta-eps-sandwich-small-last}
\end{align}
where the first inequality follows from submultiplicativity of the
spectral norm. The last equality in \eqref{eq:geometric-renyi-series-expansion}
follows because $\Tr\!\left[\Delta_{\varepsilon}\right]=0$ and $\frac{\partial}{\partial\varepsilon_{i}}(1)=0$.
Now consider that
\begin{equation}
\left.\frac{\partial}{\partial\varepsilon_{i}}\left(\begin{array}{c}
\frac{\left(1-\alpha\right)\left(-\alpha\right)}{2}\Tr\!\left[\Delta_{\varepsilon}\rho(\theta)^{-1}\Delta_{\varepsilon}\right]\\
+\sum_{n=3}^{\infty}{1-\alpha \choose n}\Tr\!\left[\left(\Delta_{\varepsilon}\rho(\theta)^{-1}\right)^{n-1}\Delta_{\varepsilon}\right]
\end{array}\right)\right|_{\varepsilon=0}=0.
\end{equation}
To see this, observe that applying the product rule to the first term
and evaluating it at $\varepsilon=0$ gives
\begin{multline}
\left.\left(\frac{\partial}{\partial\varepsilon_{i}}\Tr\!\left[\Delta_{\varepsilon}\rho(\theta)^{-1}\Delta_{\varepsilon}\right]\right)\right|_{\varepsilon=0}=\\
\left.\left(\Tr\!\left[\left(\frac{\partial}{\partial\varepsilon_{i}}\Delta_{\varepsilon}\right)\rho(\theta)^{-1}\Delta_{\varepsilon}\right]+\Tr\!\left[\Delta_{\varepsilon}\rho(\theta)^{-1}\left(\frac{\partial}{\partial\varepsilon_{i}}\Delta_{\varepsilon}\right)\right]\right)\right|_{\varepsilon=0}=0,
\end{multline}
which follows because $\lim_{\varepsilon\to0}\Delta_{\varepsilon}=0$.
Similar reasoning implies that
\begin{equation}
\left.\frac{\partial}{\partial\varepsilon_{i}}\Tr\!\left[\left(\Delta_{\varepsilon}\rho(\theta)^{-1}\right)^{n-1}\Delta_{\varepsilon}\right]\right|_{\varepsilon=0}=0
\end{equation}
for all $n\geq3$. We thus conclude \eqref{eq:1st-deriv-zero-geometric-renyi}.

Then the expression in \eqref{eq:geometric-renyi-fisher-proof-1st-block}
simplifies as follows:
\begin{multline}
\left.\frac{\partial^{2}}{\partial\varepsilon_{i}\partial\varepsilon_{j}}\widehat{D}_{\alpha}(\rho(\theta)\|\rho(\theta+\varepsilon))\right|_{\varepsilon=0}=\\
\frac{1}{\alpha-1}\left(\left.\frac{\partial^{2}}{\partial\varepsilon_{i}\partial\varepsilon_{j}}\Tr\!\left[\rho(\theta)\left(\rho(\theta)^{-\frac{1}{2}}\rho(\theta+\varepsilon)\rho(\theta)^{-\frac{1}{2}}\right)^{1-\alpha}\right]\right|_{\varepsilon=0}\right).\label{eq:geometric-renyi-fisher-reduction-to-2nd-deriv}
\end{multline}
To evaluate this last term, we can employ the same binomial series
expansion from \eqref{eq:geometric-renyi-series-expansion} to conclude
that
\begin{align}
 & \left.\left(\frac{\partial^{2}}{\partial\varepsilon_{i}\partial\varepsilon_{j}}\Tr\!\left[\rho(\theta)\left(\rho(\theta)^{-\frac{1}{2}}\rho(\theta+\varepsilon)\rho(\theta)^{-\frac{1}{2}}\right)^{1-\alpha}\right]\right)\right|_{\varepsilon=0}\nonumber \\
 & =\left.\left(\begin{array}{c}
\frac{\left(1-\alpha\right)\left(-\alpha\right)}{2}\frac{\partial^{2}}{\partial\varepsilon_{i}\partial\varepsilon_{j}}\Tr\!\left[\Delta_{\varepsilon}\rho(\theta)^{-1}\Delta_{\varepsilon}\right]\\
+\sum_{n=3}^{\infty}{1-\alpha \choose n}\frac{\partial^{2}}{\partial\varepsilon_{i}\partial\varepsilon_{j}}\Tr\!\left[\left(\Delta_{\varepsilon}\rho(\theta)^{-1}\right)^{n-1}\Delta_{\varepsilon}\right]
\end{array}\right)\right|_{\varepsilon=0}\\
 & =\frac{\alpha\left(\alpha-1\right)}{2}\left.\frac{\partial^{2}}{\partial\varepsilon_{i}\partial\varepsilon_{j}}\Tr\!\left[\Delta_{\varepsilon}\rho(\theta)^{-1}\Delta_{\varepsilon}\right]\right|_{\varepsilon=0}.\label{eq:GRF-almost-done}
\end{align}
The last equality follows because, for all terms for which $n\geq3$,
applying the product rule to $\frac{\partial^{2}}{\partial\varepsilon_{i}\partial\varepsilon_{j}}\Tr\!\left[\left(\Delta_{\varepsilon}\rho(\theta)^{-1}\right)^{n-1}\Delta_{\varepsilon}\right]$
leaves an expression featuring a sum of terms, each of which contains
at least one $\Delta_{\varepsilon}$ remaining, which in turn evaluates
to zero after taking the limit $\varepsilon\to0$. Thus, for all $n\geq3$,
\begin{equation}
\left.\frac{\partial^{2}}{\partial\varepsilon_{i}\partial\varepsilon_{j}}\Tr\!\left[\left(\Delta_{\varepsilon}\rho(\theta)^{-1}\right)^{n-1}\Delta_{\varepsilon}\right]\right|_{\varepsilon=0}=0.\label{eq:geo-renyi-higher-terms-vanish-1}
\end{equation}
For example, for $n=3$, we find that
\begin{align}
 & \frac{\partial^{2}}{\partial\varepsilon_{i}\partial\varepsilon_{j}}\Tr\!\left[\left(\Delta_{\varepsilon}\rho(\theta)^{-1}\right)^{2}\Delta_{\varepsilon}\right]\nonumber \\
 & =\frac{\partial^{2}}{\partial\varepsilon_{i}\partial\varepsilon_{j}}\Tr\!\left[\Delta_{\varepsilon}\rho(\theta)^{-1}\Delta_{\varepsilon}\rho(\theta)^{-1}\Delta_{\varepsilon}\right]\\
 & =\frac{\partial}{\partial\varepsilon_{i}}\left(\frac{\partial}{\partial\varepsilon_{j}}\Tr\!\left[\Delta_{\varepsilon}\rho(\theta)^{-1}\Delta_{\varepsilon}\rho(\theta)^{-1}\Delta_{\varepsilon}\right]\right)\\
 & =\frac{\partial}{\partial\varepsilon_{i}}\left(\begin{array}{c}
\Tr\!\left[\left(\frac{\partial}{\partial\varepsilon_{j}}\Delta_{\varepsilon}\right)\rho(\theta)^{-1}\Delta_{\varepsilon}\rho(\theta)^{-1}\Delta_{\varepsilon}\right]+\Tr\!\left[\Delta_{\varepsilon}\rho(\theta)^{-1}\left(\frac{\partial}{\partial\varepsilon_{j}}\Delta_{\varepsilon}\right)\rho(\theta)^{-1}\Delta_{\varepsilon}\right]\\
+\Tr\!\left[\Delta_{\varepsilon}\rho(\theta)^{-1}\Delta_{\varepsilon}\rho(\theta)^{-1}\left(\frac{\partial}{\partial\varepsilon_{j}}\Delta_{\varepsilon}\right)\right]
\end{array}\right)\\
 & =\frac{\partial}{\partial\varepsilon_{i}}\Tr\!\left[\left(\frac{\partial}{\partial\varepsilon_{j}}\Delta_{\varepsilon}\right)\rho(\theta)^{-1}\Delta_{\varepsilon}\rho(\theta)^{-1}\Delta_{\varepsilon}\right]+\frac{\partial}{\partial\varepsilon_{i}}\Tr\!\left[\Delta_{\varepsilon}\rho(\theta)^{-1}\left(\frac{\partial}{\partial\varepsilon_{j}}\Delta_{\varepsilon}\right)\rho(\theta)^{-1}\Delta_{\varepsilon}\right]\nonumber \\
 & \qquad+\frac{\partial}{\partial\varepsilon_{i}}\Tr\!\left[\Delta_{\varepsilon}\rho(\theta)^{-1}\Delta_{\varepsilon}\rho(\theta)^{-1}\left(\frac{\partial}{\partial\varepsilon_{j}}\Delta_{\varepsilon}\right)\right]\\
 & =\Tr\!\left[\left(\frac{\partial^{2}}{\partial\varepsilon_{i}\partial\varepsilon_{j}}\Delta_{\varepsilon}\right)\rho(\theta)^{-1}\Delta_{\varepsilon}\rho(\theta)^{-1}\Delta_{\varepsilon}\right]+\Tr\!\left[\left(\frac{\partial}{\partial\varepsilon_{j}}\Delta_{\varepsilon}\right)\rho(\theta)^{-1}\left(\frac{\partial}{\partial\varepsilon_{i}}\Delta_{\varepsilon}\right)\rho(\theta)^{-1}\Delta_{\varepsilon}\right]\nonumber \\
 & \qquad+\Tr\!\left[\left(\frac{\partial}{\partial\varepsilon_{j}}\Delta_{\varepsilon}\right)\rho(\theta)^{-1}\Delta_{\varepsilon}\rho(\theta)^{-1}\left(\frac{\partial}{\partial\varepsilon_{i}}\Delta_{\varepsilon}\right)\right]+\Tr\!\left[\left(\frac{\partial}{\partial\varepsilon_{i}}\Delta_{\varepsilon}\right)\rho(\theta)^{-1}\left(\frac{\partial}{\partial\varepsilon_{j}}\Delta_{\varepsilon}\right)\rho(\theta)^{-1}\Delta_{\varepsilon}\right]\nonumber \\
 & \qquad+\Tr\!\left[\Delta_{\varepsilon}\rho(\theta)^{-1}\left(\frac{\partial^{2}}{\partial\varepsilon_{i}\partial\varepsilon_{j}}\Delta_{\varepsilon}\right)\rho(\theta)^{-1}\Delta_{\varepsilon}\right]+\Tr\!\left[\Delta_{\varepsilon}\rho(\theta)^{-1}\left(\frac{\partial}{\partial\varepsilon_{j}}\Delta_{\varepsilon}\right)\rho(\theta)^{-1}\left(\frac{\partial}{\partial\varepsilon_{i}}\Delta_{\varepsilon}\right)\right]\nonumber \\
 & \qquad+\Tr\!\left[\left(\frac{\partial}{\partial\varepsilon_{i}}\Delta_{\varepsilon}\right)\rho(\theta)^{-1}\Delta_{\varepsilon}\rho(\theta)^{-1}\left(\frac{\partial}{\partial\varepsilon_{j}}\Delta_{\varepsilon}\right)\right]+\Tr\!\left[\Delta_{\varepsilon}\rho(\theta)^{-1}\left(\frac{\partial}{\partial\varepsilon_{i}}\Delta_{\varepsilon}\right)\rho(\theta)^{-1}\left(\frac{\partial}{\partial\varepsilon_{j}}\Delta_{\varepsilon}\right)\right]\nonumber \\
 & \qquad+\Tr\!\left[\Delta_{\varepsilon}\rho(\theta)^{-1}\Delta_{\varepsilon}\rho(\theta)^{-1}\left(\frac{\partial^{2}}{\partial\varepsilon_{i}\partial\varepsilon_{j}}\Delta_{\varepsilon}\right)\right].\label{eq:higher-derivs-zero-geometric-renyi}
\end{align}
Thus, in the final expression in \eqref{eq:higher-derivs-zero-geometric-renyi},
each term contains $\Delta_{\varepsilon}$, and after taking the $\varepsilon\to0$
limit, each term evaluates to zero. This analysis in fact implies,
by means of the H\"older and triangle inequalities, that
\begin{multline}
\left|\frac{\partial^{2}}{\partial\varepsilon_{i}\partial\varepsilon_{j}}\Tr\!\left[\left(\Delta_{\varepsilon}\rho(\theta)^{-1}\right)^{2}\Delta_{\varepsilon}\right]\right|\leq\\
3d\left\Vert \rho(\theta)^{-1}\right\Vert ^{2}\left\Vert \Delta_{\varepsilon}\right\Vert \left(\left\Vert \Delta_{\varepsilon}\right\Vert \left\Vert \frac{\partial^{2}}{\partial\varepsilon_{i}\partial\varepsilon_{j}}\Delta_{\varepsilon}\right\Vert +2\left\Vert \frac{\partial}{\partial\varepsilon_{i}}\Delta_{\varepsilon}\right\Vert \left\Vert \frac{\partial}{\partial\varepsilon_{j}}\Delta_{\varepsilon}\right\Vert \right),
\end{multline}
where $d$ is the dimension of the underlying Hilbert space. As such,
taking the limit $\varepsilon\to0$ implies that $\lim_{\varepsilon\to0}\left\Vert \Delta_{\varepsilon}\right\Vert =0$,
so that the term on the left-hand side vanishes. For general $n\geq3$,
this upper bound becomes
\begin{multline}
\left|\frac{\partial^{2}}{\partial\varepsilon_{i}\partial\varepsilon_{j}}\Tr\!\left[\left(\Delta_{\varepsilon}\rho(\theta)^{-1}\right)^{n}\Delta_{\varepsilon}\right]\right|\leq\\
nd\left\Vert \rho(\theta)^{-1}\right\Vert ^{n}\left\Vert \Delta_{\varepsilon}\right\Vert ^{n-2}\left(\left\Vert \Delta_{\varepsilon}\right\Vert \left\Vert \frac{\partial^{2}}{\partial\varepsilon_{i}\partial\varepsilon_{j}}\Delta_{\varepsilon}\right\Vert +\left(n-1\right)\left\Vert \frac{\partial}{\partial\varepsilon_{i}}\Delta_{\varepsilon}\right\Vert \left\Vert \frac{\partial}{\partial\varepsilon_{j}}\Delta_{\varepsilon}\right\Vert \right),\label{eq:geo-renyi-higher-terms-vanish-last}
\end{multline}
so that each of the terms indeed vanish in the limit $\varepsilon\to0$
as claimed.

It thus remains to evaluate \eqref{eq:GRF-almost-done}. To this end,
consider that
\begin{align}
 & \left.\frac{\partial^{2}}{\partial\varepsilon_{i}\partial\varepsilon_{j}}\Tr\!\left[\Delta_{\varepsilon}\rho(\theta)^{-1}\Delta_{\varepsilon}\right]\right|_{\varepsilon=0}\nonumber \\
 & =\left.\left(\begin{array}{c}
\Tr\!\left[\left(\frac{\partial^{2}}{\partial\varepsilon_{i}\partial\varepsilon_{j}}\Delta_{\varepsilon}\right)\rho(\theta)^{-1}\Delta_{\varepsilon}\right]+\Tr\!\left[\Delta_{\varepsilon}\rho(\theta)^{-1}\left(\frac{\partial^{2}}{\partial\varepsilon_{i}\partial\varepsilon_{j}}\Delta_{\varepsilon}\right)\right]\\
+\Tr\!\left[\left(\frac{\partial}{\partial\varepsilon_{i}}\Delta_{\varepsilon}\right)\rho(\theta)^{-1}\left(\frac{\partial}{\partial\varepsilon_{j}}\Delta_{\varepsilon}\right)\right]+\Tr\!\left[\left(\frac{\partial}{\partial\varepsilon_{j}}\Delta_{\varepsilon}\right)\rho(\theta)^{-1}\left(\frac{\partial}{\partial\varepsilon_{i}}\Delta_{\varepsilon}\right)\right]
\end{array}\right)\right|_{\varepsilon=0}\label{eq:geo-to-RLD-final-proof-1}\\
 & =\Tr\!\left[\left(\frac{\partial}{\partial\theta_{i}}\rho(\theta)\right)\rho(\theta)^{-1}\left(\frac{\partial}{\partial\theta_{j}}\rho(\theta)\right)\right]+\Tr\!\left[\left(\frac{\partial}{\partial\theta_{j}}\rho(\theta)\right)\rho(\theta)^{-1}\left(\frac{\partial}{\partial\theta_{i}}\rho(\theta)\right)\right]\\
 & =\Tr\!\left[\left\{ \frac{\partial}{\partial\theta_{j}}\rho(\theta),\frac{\partial}{\partial\theta_{i}}\rho(\theta)\right\} \rho(\theta)^{-1}\right].\label{eq:GRF-proof-done}
\end{align}
Finally, putting together the expressions from \eqref{eq:geometric-Renyi-fisher-def},
\eqref{eq:geometric-renyi-fisher-reduction-to-2nd-deriv}, \eqref{eq:GRF-almost-done},
and \eqref{eq:GRF-proof-done}, we conclude the desired equality in
\eqref{eq:GRF-theorem-equality}.
\end{proof}
\begin{rem}
As stated previously, the geometric R\'enyi relative entropy $\widehat{D}_{\alpha}$
obeys the data-processing inequality for all $\alpha\in\left(0,1\right)\cup\left(1,2\right]$,
and it does not when $\alpha>2$. However, as indicated by Theorem~\ref{thm:geometric-Renyi-to-RLD},
the geometric information matrix is equal to the RLD Fisher information
matrix for all $\alpha\in\left(0,1\right)\cup\left(1,\infty\right)$.
As such, the geometric information matrix obeys the data-processing
inequality for all $\alpha\in\left(0,1\right)\cup\left(1,\infty\right)$
even though the geometric R\'enyi relative entropy obeys it only
for $\alpha\in\left(0,1\right)\cup\left(1,2\right]$.
\end{rem}

\subsection{RLD Fisher information matrix from Belavkin--Staszewski relative
entropy}

The Belavkin--Staszewski relative entropy is defined for positive
definite states $\rho$ and $\sigma$ as follows \cite{Belavkin1982}:
\begin{equation}
\widehat{D}(\rho\|\sigma)\coloneqq\Tr\!\left[\rho\ln\!\left(\rho^{\frac{1}{2}}\sigma^{-1}\rho^{\frac{1}{2}}\right)\right].
\end{equation}
The data-processing inequality holds for $\widehat{D}$ \cite{Hiai1991},
and the Belavkin--Staszewski relative entropy is known to be the
following limit of the geometric R\'enyi relative entropy \cite{Matsumoto2013,Matsumoto2018}
(see also \cite[Proposition~79]{Katariya2021}):
\begin{equation}
\lim_{\alpha\to1}\widehat{D}_{\alpha}(\rho\|\sigma)=\widehat{D}(\rho\|\sigma).
\end{equation}

Let us define the Belavkin--Staszewski information matrix to be the
information matrix induced by this relative entropy. That is, for
a second-order differentiable parameterized family $\left(\rho(\theta)\right)_{\theta\in\mathbb{R}^{L}}$
of positive definite states, it is defined as
\begin{equation}
\left[\widehat{I}(\theta)\right]_{i,j}\coloneqq\left(\left.\frac{\partial^{2}}{\partial\varepsilon_{i}\partial\varepsilon_{j}}\widehat{D}(\rho(\theta)\|\rho(\theta+\varepsilon))\right|_{\varepsilon=0}\right).\label{eq:BS-fisher-def}
\end{equation}

\begin{thm}
\label{thm:belavski-RLD}Let $\left(\rho(\theta)\right)_{\theta\in\mathbb{R}^{L}}$
be a second-order differentiable parameterized family of positive
definite states. The Belavkin--Staszewski information matrix $\widehat{I}(\theta)$,
defined in \eqref{eq:BS-fisher-def}, is equal to the RLD information
matrix:
\begin{equation}
\widehat{I}(\theta)=I_{\RLD}(\theta).\label{eq:BSF-theorem-equality}
\end{equation}
\end{thm}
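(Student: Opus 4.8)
The plan is to mimic the proof of Theorem~\ref{thm:geometric-Renyi-to-RLD}, exploiting an algebraic identity that is special to the Belavkin--Staszewski case and that makes the calculation cleaner than its geometric R\'enyi counterpart. With $\Delta_{\varepsilon}$ as in \eqref{eq:Delta-eps-op-def} and $X_{\varepsilon}\coloneqq\rho(\theta)^{-\frac{1}{2}}\Delta_{\varepsilon}\rho(\theta)^{-\frac{1}{2}}$, the first step is to note that $\rho(\theta+\varepsilon)=\rho(\theta)^{\frac{1}{2}}\left(I+X_{\varepsilon}\right)\rho(\theta)^{\frac{1}{2}}$, hence $\rho(\theta+\varepsilon)^{-1}=\rho(\theta)^{-\frac{1}{2}}\left(I+X_{\varepsilon}\right)^{-1}\rho(\theta)^{-\frac{1}{2}}$, which gives the exact identity
\begin{equation}
\rho(\theta)^{\frac{1}{2}}\rho(\theta+\varepsilon)^{-1}\rho(\theta)^{\frac{1}{2}}=\left(I+X_{\varepsilon}\right)^{-1},
\end{equation}
and therefore
\begin{equation}
\widehat{D}(\rho(\theta)\|\rho(\theta+\varepsilon))=\Tr\!\left[\rho(\theta)\ln\!\left(\left(I+X_{\varepsilon}\right)^{-1}\right)\right]=-\Tr\!\left[\rho(\theta)\ln\!\left(I+X_{\varepsilon}\right)\right].
\end{equation}
Thus, in contrast to the geometric R\'enyi case, only a single logarithmic series will appear, and the R\'enyi parameter is absent from the outset.

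Next, I would expand $\ln\!\left(I+X_{\varepsilon}\right)=\sum_{n=1}^{\infty}\frac{(-1)^{n+1}}{n}X_{\varepsilon}^{n}$, valid once $\left\Vert X_{\varepsilon}\right\Vert<1$, which holds for all sufficiently small $\left\Vert\varepsilon\right\Vert$ by exactly the continuity argument used in \eqref{eq:Delta-eps-sandwich-small-1}--\eqref{eq:Delta-eps-sandwich-small-last}. Using cyclicity of the trace, the reduction $\rho(\theta)\left(\rho(\theta)^{-\frac{1}{2}}\Delta_{\varepsilon}\rho(\theta)^{-\frac{1}{2}}\right)^{n}\mapsto\left(\Delta_{\varepsilon}\rho(\theta)^{-1}\right)^{n-1}\Delta_{\varepsilon}$ under the trace, and $\Tr[\Delta_{\varepsilon}]=0$, this yields
\begin{equation}
\widehat{D}(\rho(\theta)\|\rho(\theta+\varepsilon))=\frac{1}{2}\Tr\!\left[\Delta_{\varepsilon}\rho(\theta)^{-1}\Delta_{\varepsilon}\right]-\sum_{n=3}^{\infty}\frac{(-1)^{n+1}}{n}\Tr\!\left[\left(\Delta_{\varepsilon}\rho(\theta)^{-1}\right)^{n-1}\Delta_{\varepsilon}\right].
\end{equation}
This is structurally identical to the expansion obtained in the proof of Theorem~\ref{thm:geometric-Renyi-to-RLD} (cf.\ \eqref{eq:geometric-renyi-series-expansion}), with the quadratic coefficient $\frac{(1-\alpha)(-\alpha)}{2}$ replaced by $\frac{1}{2}$, which reflects that $\widehat{D}$ is the $\alpha\to1$ limit of $\widehat{D}_{\alpha}$.

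From this point the argument runs exactly as in the proof of Theorem~\ref{thm:geometric-Renyi-to-RLD}: every first $\varepsilon$-derivative vanishes at $\varepsilon=0$ because each surviving term retains a factor of $\Delta_{\varepsilon}\to0$; in the mixed second derivative only the $n=2$ term contributes, and within it only the term in which $\partial_{\varepsilon_{i}}$ and $\partial_{\varepsilon_{j}}$ act on distinct copies of $\Delta_{\varepsilon}$ survives; and the operator-norm estimates \eqref{eq:geo-renyi-higher-terms-vanish-1}--\eqref{eq:geo-renyi-higher-terms-vanish-last} legitimize discarding the $n\geq3$ tail and differentiating the series term by term. The outcome is
\begin{equation}
\left.\frac{\partial^{2}}{\partial\varepsilon_{i}\partial\varepsilon_{j}}\widehat{D}(\rho(\theta)\|\rho(\theta+\varepsilon))\right|_{\varepsilon=0}=\frac{1}{2}\left.\frac{\partial^{2}}{\partial\varepsilon_{i}\partial\varepsilon_{j}}\Tr\!\left[\Delta_{\varepsilon}\rho(\theta)^{-1}\Delta_{\varepsilon}\right]\right|_{\varepsilon=0}=\frac{1}{2}\Tr\!\left[\left\{\partial_{i}\rho(\theta),\partial_{j}\rho(\theta)\right\}\rho(\theta)^{-1}\right],
\end{equation}
which is $\left[I_{\RLD}(\theta)\right]_{i,j}$ by \eqref{eq:RLD-def}, establishing \eqref{eq:BSF-theorem-equality}. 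I expect the main (indeed only real) obstacle to be the analytic bookkeeping for the infinite series --- interchanging the sum with the two $\varepsilon$-derivatives and confirming that the $n\geq3$ terms contribute nothing to the Hessian --- but since that step in the geometric R\'enyi proof made no use of the value of $\alpha$, it carries over here essentially verbatim. A small sanity check is that the definition in \eqref{eq:BS-fisher-def} carries no $\frac{1}{\alpha}$-type prefactor, consistent with $\frac{1}{\alpha}\to1$ as $\alpha\to1$ in \eqref{eq:geometric-Renyi-fisher-def}, so the Hessian lands precisely on $I_{\RLD}(\theta)$ without any rescaling.
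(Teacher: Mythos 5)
Your proposal is correct and follows essentially the same route as the paper's own proof: reduce $\widehat{D}(\rho(\theta)\|\rho(\theta+\varepsilon))$ to $-\Tr\!\left[\rho(\theta)\ln\!\left(I+\rho(\theta)^{-\frac{1}{2}}\Delta_{\varepsilon}\rho(\theta)^{-\frac{1}{2}}\right)\right]$, expand the logarithm in a series valid for small $\left\Vert\varepsilon\right\Vert$, observe that only the quadratic term survives the Hessian at $\varepsilon=0$, and reuse the norm estimates from the geometric R\'enyi proof to discard the $n\geq3$ tail. Your final bookkeeping, $\frac{1}{2}\left.\frac{\partial^{2}}{\partial\varepsilon_{i}\partial\varepsilon_{j}}\Tr\!\left[\Delta_{\varepsilon}\rho(\theta)^{-1}\Delta_{\varepsilon}\right]\right|_{\varepsilon=0}=\frac{1}{2}\Tr\!\left[\left\{\partial_{i}\rho(\theta),\partial_{j}\rho(\theta)\right\}\rho(\theta)^{-1}\right]=\left[I_{\RLD}(\theta)\right]_{i,j}$, is the correct landing point, with the factor of $\frac{1}{2}$ tracked accurately.
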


\begin{proof}
Employing the definition $\Delta_{\varepsilon}\equiv\rho(\theta+\varepsilon)-\rho(\theta)$
as in \eqref{eq:Delta-eps-op-def}, consider that
\begin{align}
 & \widehat{D}(\rho(\theta)\|\rho(\theta+\varepsilon))\nonumber \\
 & =\Tr\!\left[\rho(\theta)\ln\!\left(\rho(\theta)^{\frac{1}{2}}\rho(\theta+\varepsilon)^{-1}\rho(\theta)^{\frac{1}{2}}\right)\right]\\
 & =\Tr\!\left[\rho(\theta)\ln\!\left(\left(\rho(\theta)^{-\frac{1}{2}}\rho(\theta+\varepsilon)\rho(\theta)^{-\frac{1}{2}}\right)^{-1}\right)\right]\\
 & =-\Tr\!\left[\rho(\theta)\ln\!\left(\rho(\theta)^{-\frac{1}{2}}\rho(\theta+\varepsilon)\rho(\theta)^{-\frac{1}{2}}\right)\right]\\
 & =-\Tr\!\left[\rho(\theta)\ln\!\left(\rho(\theta)^{-\frac{1}{2}}\left[\Delta_{\varepsilon}+\rho(\theta)\right]\rho(\theta)^{-\frac{1}{2}}\right)\right]\\
 & =-\Tr\!\left[\rho(\theta)\ln\!\left(I+\rho(\theta)^{-\frac{1}{2}}\Delta_{\varepsilon}\rho(\theta)^{-\frac{1}{2}}\right)\right]\\
 & =-\Tr\!\left[\rho(\theta)\sum_{n=1}^{\infty}\frac{\left(-1\right)^{n+1}}{n}\left(\rho(\theta)^{-\frac{1}{2}}\Delta_{\varepsilon}\rho(\theta)^{-\frac{1}{2}}\right)^{n}\right]\label{eq:log-series-expansion}\\
 & =\Tr\!\left[\rho(\theta)\sum_{n=1}^{\infty}\frac{\left(-1\right)^{n}}{n}\left(\rho(\theta)^{-\frac{1}{2}}\Delta_{\varepsilon}\rho(\theta)^{-\frac{1}{2}}\right)^{n}\right]\\
 & =\sum_{n=1}^{\infty}\frac{\left(-1\right)^{n}}{n}\Tr\!\left[\rho(\theta)\left(\rho(\theta)^{-\frac{1}{2}}\Delta_{\varepsilon}\rho(\theta)^{-\frac{1}{2}}\right)^{n}\right]\\
 & =-\Tr\!\left[\rho(\theta)\left(\rho(\theta)^{-\frac{1}{2}}\Delta_{\varepsilon}\rho(\theta)^{-\frac{1}{2}}\right)\right]+\frac{1}{2}\Tr\!\left[\rho(\theta)\left(\rho(\theta)^{-\frac{1}{2}}\Delta_{\varepsilon}\rho(\theta)^{-\frac{1}{2}}\right)^{2}\right]\nonumber \\
 & \qquad+\sum_{n=3}^{\infty}\frac{\left(-1\right)^{n}}{n}\Tr\!\left[\rho(\theta)\left(\rho(\theta)^{-\frac{1}{2}}\Delta_{\varepsilon}\rho(\theta)^{-\frac{1}{2}}\right)^{n}\right]\\
 & =\frac{1}{2}\Tr\!\left[\Delta_{\varepsilon}\rho(\theta)^{-1}\Delta_{\varepsilon}\right]+\sum_{n=3}^{\infty}\frac{\left(-1\right)^{n}}{n}\Tr\!\left[\rho(\theta)\left(\rho(\theta)^{-\frac{1}{2}}\Delta_{\varepsilon}\rho(\theta)^{-\frac{1}{2}}\right)^{n}\right].
\end{align}
The equality in \eqref{eq:log-series-expansion} follows from the
series expansion $\ln(1+x)=\sum_{n=1}^{\infty}\frac{\left(-1\right)^{n+1}}{n}x^{n}$,
which converges for $\left|x\right|<1$. The fact that $\left\Vert \rho(\theta)^{-\frac{1}{2}}\Delta_{\varepsilon}\rho(\theta)^{-\frac{1}{2}}\right\Vert <1$
holds is justified by the same arguments given in \eqref{eq:Delta-eps-sandwich-small-1}--\eqref{eq:Delta-eps-sandwich-small-last}.
So then
\begin{align}
 & \left.\left(\frac{\partial^{2}}{\partial\varepsilon_{i}\partial\varepsilon_{j}}\widehat{D}(\rho(\theta)\|\rho(\theta+\varepsilon))\right)\right|_{\varepsilon=0}\nonumber \\
 & =\frac{1}{2}\left.\left(\frac{\partial^{2}}{\partial\varepsilon_{i}\partial\varepsilon_{j}}\Tr\!\left[\Delta_{\varepsilon}\rho(\theta)^{-1}\Delta_{\varepsilon}\right]\right)\right|_{\varepsilon=0}\nonumber \\
 & \qquad+\sum_{n=3}^{\infty}\frac{\left(-1\right)^{n}}{n}\left.\left(\frac{\partial^{2}}{\partial\varepsilon_{i}\partial\varepsilon_{j}}\Tr\!\left[\rho(\theta)\left(\rho(\theta)^{-\frac{1}{2}}\Delta_{\varepsilon}\rho(\theta)^{-\frac{1}{2}}\right)^{n}\right]\right)\right|_{\varepsilon=0}\label{eq:BS-conv-proof-almost-done}\\
 & =\Tr\!\left[\left\{ \frac{\partial}{\partial\theta_{j}}\rho(\theta),\frac{\partial}{\partial\theta_{i}}\rho(\theta)\right\} \rho(\theta)^{-1}\right].
\end{align}
The second term in \eqref{eq:BS-conv-proof-almost-done} vanishes
for the same reasons given in \eqref{eq:geo-renyi-higher-terms-vanish-1}--\eqref{eq:geo-renyi-higher-terms-vanish-last}.
The final equality follows from \eqref{eq:geo-to-RLD-final-proof-1}--\eqref{eq:GRF-proof-done}.
\end{proof}
\begin{rem}
In the special case of a single parameter, Theorem~\ref{thm:belavski-RLD}
was established by \cite[Section~6.4]{Matsumoto2013,Matsumoto2018}
(see also \cite[Proposition~53]{Katariya2021}).
\end{rem}

\section{$\alpha$-$z$ Information matrices from $\alpha$-$z$ R\'enyi
relative entropies}

\label{sec:a-z-information-matrices}The main result of this section
is Theorem~\ref{thm:fisher-info-from-alpha-z}, which provides a
formula for the information matrix resulting from the $\alpha$-$z$
R\'enyi relative entropies.

Recall that the $\alpha$-$z$ R\'enyi relative entropies are defined
for all $\alpha\in\left(0,1\right)\cup\left(1,\infty\right)$ and
$z>0$ and all positive definite states $\rho$ and $\sigma$ as follows
\cite{Audenaert2015}: 
\begin{align}
D_{\alpha,z}(\rho\|\sigma) & \coloneqq\frac{1}{\alpha-1}\ln\Tr\!\left[\left(\sigma^{\frac{1-\alpha}{2z}}\rho^{\frac{\alpha}{z}}\sigma^{\frac{1-\alpha}{2z}}\right)^{z}\right]\label{eq:alpha-z-renyi-def}\\
 & =\frac{1}{\alpha-1}\ln\Tr\!\left[\left(\rho^{\frac{\alpha}{2z}}\sigma^{\frac{1-\alpha}{z}}\rho^{\frac{\alpha}{2z}}\right)^{z}\right].
\end{align}

The data-processing inequality is the statement that the following
inequality holds for all states $\rho$ and $\sigma$ and every channel
$\mathcal{N}$:
\begin{equation}
D_{\alpha,z}(\rho\|\sigma)\geq D_{\alpha,z}(\mathcal{N}(\rho)\|\mathcal{N}(\sigma)).
\end{equation}
The full range of $\alpha,z>0$ for which the data-processing inequality
holds was established in \cite[Theorem~1.2]{Zhang2020} and is recalled
as Fact \ref{fact:a-z-data-proc} below:
\begin{fact}
\label{fact:a-z-data-proc}The data-processing inequality holds for
$D_{\alpha,z}$ if and only if one of the following conditions holds
\cite[Theorem~1.2]{Zhang2020}:
\begin{itemize}
\item $0<\alpha<1$ and $z\geq\max\left\{ \alpha,1-\alpha\right\} $,
\item $\alpha>1$ and $\alpha-1\leq z\leq\alpha\leq2z$.
\end{itemize}
\end{fact}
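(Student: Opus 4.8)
The plan is to reduce Fact~\ref{fact:a-z-data-proc} to a sharp joint concavity/convexity statement for a trace functional and then invoke its resolution. Write $Q_{\alpha,z}(\rho\|\sigma)\coloneqq\Tr[(\sigma^{\frac{1-\alpha}{2z}}\rho^{\frac{\alpha}{z}}\sigma^{\frac{1-\alpha}{2z}})^{z}]$, so that $D_{\alpha,z}=\frac{1}{\alpha-1}\ln Q_{\alpha,z}$. Since $\ln$ is increasing and $\frac{1}{\alpha-1}$ changes sign at $\alpha=1$, the data-processing inequality for $D_{\alpha,z}$ is equivalent to $Q_{\alpha,z}$ being monotone nondecreasing under channels for $\alpha\in(0,1)$ and monotone nonincreasing under channels for $\alpha>1$. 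Introduce $p\coloneqq\alpha/z$; for $\alpha<1$ set $q\coloneqq(1-\alpha)/z>0$, and for $\alpha>1$ set $q'\coloneqq(\alpha-1)/z>0$. Then $Q_{\alpha,z}(\rho\|\sigma)=\Psi_{p,q,z}(\rho,\sigma)$, where $\Psi_{p,q,s}(A,B)\coloneqq\Tr[(B^{q/2}A^{p}B^{q/2})^{s}]$ (with $B^{-q'/2}$ in place of $B^{q/2}$ when $\alpha>1$), and the power satisfies $z(p+q)=1$, resp.\ $z(p-q')=1$; hence the relevant case is always the ``boundary'' exponent $s=1/(p+q)$, resp.\ $s=1/(p-q')$.

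Next I would establish that DPI for $D_{\alpha,z}$ is equivalent to joint concavity of $\Psi_{p,q,z}$ when $\alpha<1$ and joint convexity when $\alpha>1$. The direction from concavity/convexity to DPI is standard: by the Stinespring dilation, every channel is an isometric embedding followed by a partial trace; $Q_{\alpha,z}$ is invariant under isometric embeddings, and monotonicity of $Q_{\alpha,z}$ under partial trace follows from the assumed concavity (for $\alpha<1$) or convexity (for $\alpha>1$) together with a unitary twirl over the traced-out system, using $Q_{\alpha,z}(\rho\otimes\tau\|\sigma\otimes\tau)=Q_{\alpha,z}(\rho\|\sigma)$ for a state $\tau$. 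Conversely, applying DPI to the partial trace over a classical register acting on $\rho_{AB}=\sum_i\lambda_i\rho_i\otimes|i\rangle\langle i|$ and $\sigma_{AB}=\sum_i\lambda_i\sigma_i\otimes|i\rangle\langle i|$, and using additivity $Q_{\alpha,z}(\rho_{AB}\|\sigma_{AB})=\sum_i\lambda_i Q_{\alpha,z}(\rho_i\|\sigma_i)$, turns DPI (after exponentiating and tracking the sign of $\frac{1}{\alpha-1}$) into exactly the statement that $Q_{\alpha,z}$ is jointly concave ($\alpha<1$) or jointly convex ($\alpha>1$). So the full characterization amounts to pinning down the parameter ranges for which $\Psi_{p,q,1/(p+q)}$ is jointly concave and $\Psi_{p,-q',1/(p-q')}$ is jointly convex.

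I would then match these ranges to the stated conditions. For $\alpha\in(0,1)$, $\Psi_{p,q,1/(p+q)}$ is jointly concave precisely when $0\le p\le1$ and $0\le q\le1$, i.e.\ $z\ge\alpha$ and $z\ge1-\alpha$, which is $z\ge\max\{\alpha,1-\alpha\}$. For $\alpha>1$, $\Psi_{p,-q',1/(p-q')}$ is jointly convex precisely when $1\le p\le2$ and $0<q'\le1$, i.e.\ $z\le\alpha$, $\alpha\le2z$, and $z\ge\alpha-1$, which is $\alpha-1\le z\le\alpha\le2z$. The sufficiency half comes from the Lieb concavity theorem --- concavity of $(A,B)\mapsto\Tr[K^{*}A^{p}KB^{q}]$ for $p,q\ge0$, $p+q\le1$ --- promoted to the $s$th-power form via a variational representation of $\Tr[X^{s}]$ as an extremum of affine functionals (Epstein's method / the Carlen--Frank--Lieb minimax), possibly after an antisymmetric tensor-power trick, together with the corresponding operator-convexity inputs (Ando--Hiai) for the $\alpha>1$ convex cases. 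The necessity half (failure of DPI outside the stated range) follows because the converse reduction above shows DPI forces joint concavity/convexity of $\Psi$, and this fails outside these ranges: one expands $\Psi_{p,q,s}(A+tX,B+tY)$ to second order about a commuting pair $(A,B)$ --- the Hessian being built from divided differences of the power function, much like the information-matrix formulas developed in this paper --- and picks noncommuting directions $X,Y$ forcing the quadratic form to have the wrong sign.

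The main obstacle is the sharp concavity/convexity of $\Psi_{p,q,s}$ at the boundary exponent. The ``interior'' regime $s<1/(p+q)$ is reachable by complex interpolation (Stein--Hirschman, as used by Beigi) or Epstein-style analytic continuation, but the extreme cases --- in particular the entire convexity range for $\alpha>1$ --- are exactly the content of the Carlen--Frank--Lieb conjecture and were settled only by Zhang. A self-contained proof would therefore have to reproduce that argument, which proceeds by homogeneity reductions and a careful differentiation in the parameters to bring the general statement down to the already-known Wigner--Yanase--Dyson/Lieb endpoints; I would instead cite \cite{Zhang2020} for this input and devote the write-up to the (routine) reductions above and to the parameter bookkeeping that extracts the two cases in Fact~\ref{fact:a-z-data-proc} from it.
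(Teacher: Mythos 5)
The paper states this result as a Fact recalled verbatim from \cite[Theorem~1.2]{Zhang2020} and offers no proof of its own (it only remarks that the $0<\alpha<1$ case was known earlier from \cite[Theorem~2.1]{Hiai2013} via \cite[Theorem~1]{Audenaert2015}), so your deferral of the hard core to Zhang matches exactly what the paper does. Your surrounding reduction is the standard and correct one: DPI for $D_{\alpha,z}$ is equivalent (via Stinespring, tensor-stability of $Q_{\alpha,z}$, and the direct-sum decomposition over a classical register) to joint concavity (for $\alpha<1$) or joint convexity (for $\alpha>1$) of $\Psi_{p,q,s}$ at the boundary exponent $s=z=1/(p+q)$, and your parameter bookkeeping $p=\alpha/z$, $q=\pm(1-\alpha)/z$ correctly recovers the two stated ranges, so there is no gap.
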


It should be stated that the data-processing inequality for $0<\alpha<1$
and $z\geq\max\left\{ \alpha,1-\alpha\right\} $ follows from \cite[Theorem~2.1]{Hiai2013},
as observed in \cite[Theorem~1]{Audenaert2015}.

The $\alpha$-$z$ information matrix is defined for a parameterized
family $\left(\rho(\theta)\right)_{\theta\in\mathbb{R}^{L}}$ of positive
definite states as follows:
\begin{equation}
\left[I_{\alpha,z}(\theta)\right]_{i,j}\coloneqq\frac{1}{\alpha}\left.\frac{\partial^{2}}{\partial\varepsilon_{i}\partial\varepsilon_{j}}D_{\alpha,z}(\rho(\theta)\|\rho(\theta+\varepsilon))\right|_{\varepsilon=0}.\label{eq:a-z-fisher-info-def}
\end{equation}

\begin{thm}
\label{thm:fisher-info-from-alpha-z}For a parameterized family $\left(\rho(\theta)\right)_{\theta\in\mathbb{R}^{L}}$
of second-order differentiable, positive definite states and for all
$\alpha\in\left(0,1\right)\cup\left(1,\infty\right)$ and $z>0$,
the following equality holds:
\begin{equation}
\left[I_{\alpha,z}(\theta)\right]_{i,j}=\sum_{k,\ell}\zeta_{\alpha,z}(\lambda_{k},\lambda_{\ell})\Tr\!\left[\Pi_{k}\left(\partial_{i}\rho(\theta)\right)\Pi_{\ell}\left(\partial_{j}\rho(\theta)\right)\right],\label{eq:a-z-fisher-formula}
\end{equation}
where $\partial_{i}\equiv\frac{\partial}{\partial\theta_{i}}$, the
spectral decomposition of $\rho(\theta)$ is given by $\rho(\theta)=\sum_{k}\lambda_{k}\Pi_{k}$,
and for all $x,y>0$,
\begin{equation}
\zeta_{\alpha,z}(x,y)\coloneqq\begin{cases}
\frac{z}{\alpha\left(1-\alpha\right)}\left(\frac{x^{\frac{1-\alpha}{z}}-y^{\frac{1-\alpha}{z}}}{x-y}\right)\left(\frac{x^{\frac{\alpha}{z}}-y^{\frac{\alpha}{z}}}{x^{\frac{1}{z}}-y^{\frac{1}{z}}}\right) & :x\neq y\\
\frac{1}{x} & :x=y
\end{cases}.\label{eq:limit-for-a-z-eigenval-func}
\end{equation}
\end{thm}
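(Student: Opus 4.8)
The plan is to follow the template of the proofs of Theorems~\ref{thm:log-Euclidean-information-matrix} and \ref{thm:geometric-Renyi-to-RLD}: separate out the logarithm, show the gradient vanishes at $\varepsilon=0$, reduce to the Hessian of the inner trace functional, and evaluate that Hessian with the divided-difference calculus of Theorem~\ref{thm:divided-difference-matrix-deriv}. Concretely, set $\Psi(\varepsilon)\coloneqq\Tr\!\big[\big(\rho(\theta)^{\frac{\alpha}{2z}}\rho(\theta+\varepsilon)^{\frac{1-\alpha}{z}}\rho(\theta)^{\frac{\alpha}{2z}}\big)^{z}\big]$, so that $D_{\alpha,z}(\rho(\theta)\|\rho(\theta+\varepsilon))=\frac{1}{\alpha-1}\ln\Psi(\varepsilon)$ and $\Psi(0)=\Tr[(\rho(\theta)^{1/z})^{z}]=1$. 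First I would show $\partial_{\varepsilon_i}\Psi|_{\varepsilon=0}=0$: after differentiating $\rho(\theta+\varepsilon)^{(1-\alpha)/z}$ via the first-divided-difference formula and using cyclicity together with $\rho(\theta)^{\frac{\alpha}{2z}}\rho(\theta)^{\frac{z-1}{z}}\rho(\theta)^{\frac{\alpha}{2z}}=\rho(\theta)^{\frac{z-1+\alpha}{z}}$, the eigenvalue exponents telescope ($\frac{1-\alpha}{z}-1+\frac{z-1+\alpha}{z}=0$), so the whole expression collapses to a multiple of $\Tr[\partial_i\rho(\theta)]=0$. With the gradient vanishing, $\partial_{\varepsilon_i}\partial_{\varepsilon_j}\ln\Psi|_{0}=\partial_{\varepsilon_i}\partial_{\varepsilon_j}\Psi|_{0}$, hence $[I_{\alpha,z}(\theta)]_{i,j}=\frac{1}{\alpha(\alpha-1)}\,\partial_{\varepsilon_i}\partial_{\varepsilon_j}\Psi|_{\varepsilon=0}$.

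For the Hessian I would Taylor-expand $\Psi$ to second order in $\varepsilon$, writing $\Delta_\varepsilon\coloneqq\rho(\theta+\varepsilon)-\rho(\theta)=\sum_i\varepsilon_i\partial_i\rho(\theta)+\tfrac12\sum_{i,j}\varepsilon_i\varepsilon_j\,\partial_i\partial_j\rho(\theta)+o(\|\varepsilon\|^2)$. Since $\rho(\theta)$ and $\Delta_\varepsilon$ need not commute, one cannot use the binomial trick of the geometric-R\'enyi proof; instead I would combine (i) the expansion of $\sigma\mapsto\sigma^{(1-\alpha)/z}$ about $\rho(\theta)$ to second order in $\Delta_\varepsilon$, governed by the first and second divided differences of $t\mapsto t^{(1-\alpha)/z}$, with (ii) the expansion of $X\mapsto\Tr[X^z]$ about $\rho(\theta)^{1/z}$, whose first variation is $z\Tr[X^{z-1}(\cdot)]$ and whose second variation in a direction $H$ is $z\sum_{a,b}\frac{\lambda_a^{(z-1)/z}-\lambda_b^{(z-1)/z}}{\lambda_a^{1/z}-\lambda_b^{1/z}}\Tr[\Pi_a H\Pi_b H]$ (the eigenprojections of $\rho(\theta)^{1/z}$ being those of $\rho(\theta)$). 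Assembling these via the product rule and the sandwiching by $\rho(\theta)^{\alpha/(2z)}$, every term carrying a factor of $\partial_i\partial_j\rho(\theta)$ collapses by the same exponent-telescoping that killed the gradient, again leaving $\Tr[\partial_i\partial_j\rho(\theta)]=0$ (cleaner than the log-Euclidean case, where such a term survived and had to be paired with another). After cyclic rearrangement and symmetrization in $i\leftrightarrow j$, what remains has the form $\sum_{k,\ell}c_{\alpha,z}(\lambda_k,\lambda_\ell)\Tr[\Pi_k(\partial_i\rho(\theta))\Pi_\ell(\partial_j\rho(\theta))]$, with $c_{\alpha,z}(x,y)$ an explicit combination of $\frac{x^{(1-\alpha)/z}-y^{(1-\alpha)/z}}{x-y}$, the confluent second divided differences of $t\mapsto t^{(1-\alpha)/z}$ at nodes $(x,x,y)$ and $(y,y,x)$, the factor $z\frac{x^{(z-1)/z}-y^{(z-1)/z}}{x^{1/z}-y^{1/z}}$, and the powers $x^{\alpha/(2z)},y^{\alpha/(2z)}$.

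The crux is then the elementary identity $c_{\alpha,z}(x,y)=\alpha(\alpha-1)\,\zeta_{\alpha,z}(x,y)$ for $x\neq y$; dividing by $\alpha(\alpha-1)$ yields \eqref{eq:a-z-fisher-formula}. I would prove it after the substitution $u=x^{1/z}$, $v=y^{1/z}$, which turns $\zeta_{\alpha,z}$ into $\frac{z}{\alpha(1-\alpha)}\,\frac{u^{1-\alpha}-v^{1-\alpha}}{u^z-v^z}\cdot\frac{u^\alpha-v^\alpha}{u-v}$ and the prefactors into $u^{\alpha/2},v^{\alpha/2}$; writing the confluent second divided differences out explicitly and clearing the common denominator $(u^z-v^z)^2$ reduces everything to a rational identity in $u,v$ (which I have checked at $z=1$, where the $\Tr[X^z]$ second-variation term vanishes identically, and in the confluent limit $v\to u$, which gives $c_{\alpha,z}(x,x)=\alpha(\alpha-1)/x$ and hence the Kubo--Mori value $1/x$ after division). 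The $x=y$ case of \eqref{eq:limit-for-a-z-eigenval-func} is exactly this limit, consistent with the $\alpha\to1$ and $z\to\infty$ behavior recorded in the introduction. The main obstacle will be the bookkeeping---tracking which quadratic-in-$\varepsilon$ terms survive, and with the correct $i\leftrightarrow j$ symmetrization---and then the confluent-divided-difference algebra in the last step, where the overall sign (flipped by the $\frac{1}{\alpha-1}$ prefactor) and the many fractional exponents must be handled carefully. Because $c_{\alpha,z}$ depends only on the pair $(x,y)$, a useful independent check is to match the coefficient against the known single-parameter formula \cite[Eq.~(2.12)]{May2018}.
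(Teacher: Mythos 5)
Your opening reduction---showing the gradient of the trace functional vanishes at $\varepsilon=0$ so that the Hessian of $\ln\Psi$ equals the Hessian of $\Psi$---matches the paper's first step (Lemma~\ref{lem:first-deriv-zero} and \eqref{eq:first-reduction-a-z}), and your exponent-telescoping argument for why the $\partial_i\partial_j\rho(\theta)$ terms die is sound. Where you genuinely diverge is in how the Hessian of $\Psi$ is computed. You propose composing two second-order expansions---the second Fr\'echet derivative of $\sigma\mapsto\sigma^{(1-\alpha)/z}$ (requiring confluent second divided differences of the power function) plugged into the first variation of $X\mapsto\Tr[X^{z}]$, plus the second variation of $\Tr[X^{z}]$ applied to the squared first variation. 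The paper deliberately avoids second divided differences altogether: it differentiates the identity $\Tr\!\left[\rho(\theta)^{\frac{z-1+\alpha}{z}}\bigl(\partial_{j}\rho(\theta)^{\frac{1-\alpha}{z}}\bigr)\right]=0$ with respect to $\theta_{i}$ to obtain \eqref{eq:no-2nd-deriv}, which converts the troublesome term containing $\partial_{i}\partial_{j}\rho(\theta)^{\frac{1-\alpha}{z}}$ into a product of two \emph{first} divided differences; the other term is handled by the integral representation of Proposition~\ref{prop:power-function-all-powers} rather than by an abstract second-variation formula. Your route is essentially the one sketched in \cite[Appendix~A]{May2018}, which the paper explicitly contrasts with and claims to simplify. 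Both routes terminate in the same kind of rational identity in $x^{1/z},y^{1/z}$ (the paper's version is \eqref{eq:a-z-eigenvals-simplified}, proved in Appendix~\ref{app:proof-algebra-a-z-eigenvals-simplify}); what your approach costs is having to state and justify the second-order divided-difference calculus, which the paper's Appendix~\ref{app:Review:-Matrix-derivatives} develops only to first order, and then to verify your identity $c_{\alpha,z}(x,y)=\alpha(\alpha-1)\zeta_{\alpha,z}(x,y)$ in full generality rather than only at $z=1$ and in the confluent limit. That identity is the crux and is currently asserted rather than derived; until it is written out (your $u=x^{1/z}$, $v=y^{1/z}$ substitution is a reasonable way to do so), the proposal is a viable but incomplete alternative rather than a finished proof.
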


\begin{proof}
Our first reduction is to prove that
\begin{multline}
\left.\left(\frac{\partial^{2}}{\partial\varepsilon_{i}\partial\varepsilon_{j}}D_{\alpha,z}(\rho(\theta)\|\rho(\theta+\varepsilon))\right)\right|_{\varepsilon=0}=\\
\frac{1}{\alpha-1}\left.\left(\frac{\partial^{2}}{\partial\varepsilon_{i}\partial\varepsilon_{j}}\Tr\!\left[\left(\rho(\theta)^{\frac{\alpha}{2z}}\rho(\theta+\varepsilon)^{\frac{1-\alpha}{z}}\rho(\theta)^{\frac{\alpha}{2z}}\right)^{z}\right]\right)\right|_{\varepsilon=0}.\label{eq:first-reduction-a-z}
\end{multline}
To this end, consider that
\begin{align}
 & \frac{\partial^{2}}{\partial\varepsilon_{i}\partial\varepsilon_{j}}D_{\alpha,z}(\rho(\theta)\|\rho(\theta+\varepsilon))\nonumber \\
 & =\frac{1}{\alpha-1}\frac{\partial}{\partial\varepsilon_{i}}\left[\frac{\partial}{\partial\varepsilon_{j}}\ln\Tr\!\left[\left(\rho(\theta)^{\frac{\alpha}{2z}}\rho(\theta+\varepsilon)^{\frac{1-\alpha}{z}}\rho(\theta)^{\frac{\alpha}{2z}}\right)^{z}\right]\right]\\
 & =\frac{1}{\alpha-1}\frac{\partial}{\partial\varepsilon_{i}}\left[\frac{\frac{\partial}{\partial\varepsilon_{j}}\Tr\!\left[\left(\rho(\theta)^{\frac{\alpha}{2z}}\rho(\theta+\varepsilon)^{\frac{1-\alpha}{z}}\rho(\theta)^{\frac{\alpha}{2z}}\right)^{z}\right]}{\Tr\!\left[\left(\rho(\theta)^{\frac{\alpha}{2z}}\rho(\theta+\varepsilon)^{\frac{1-\alpha}{z}}\rho(\theta)^{\frac{\alpha}{2z}}\right)^{z}\right]}\right]\\
 & =-\frac{1}{\alpha-1}\frac{\left[\begin{array}{c}
\frac{\partial}{\partial\varepsilon_{i}}\Tr\!\left[\left(\rho(\theta)^{\frac{\alpha}{2z}}\rho(\theta+\varepsilon)^{\frac{1-\alpha}{z}}\rho(\theta)^{\frac{\alpha}{2z}}\right)^{z}\right]\times\\
\frac{\partial}{\partial\varepsilon_{j}}\Tr\!\left[\left(\rho(\theta)^{\frac{\alpha}{2z}}\rho(\theta+\varepsilon)^{\frac{1-\alpha}{z}}\rho(\theta)^{\frac{\alpha}{2z}}\right)^{z}\right]
\end{array}\right]}{\left(\Tr\!\left[\left(\rho(\theta)^{\frac{\alpha}{2z}}\rho(\theta+\varepsilon)^{\frac{1-\alpha}{z}}\rho(\theta)^{\frac{\alpha}{2z}}\right)^{z}\right]\right)^{2}}\nonumber \\
 & \qquad+\frac{1}{\alpha-1}\left[\frac{\frac{\partial^{2}}{\partial\varepsilon_{i}\partial\varepsilon_{j}}\Tr\!\left[\left(\rho(\theta)^{\frac{\alpha}{2z}}\rho(\theta+\varepsilon)^{\frac{1-\alpha}{z}}\rho(\theta)^{\frac{\alpha}{2z}}\right)^{z}\right]}{\Tr\!\left[\left(\rho(\theta)^{\frac{\alpha}{2z}}\rho(\theta+\varepsilon)^{\frac{1-\alpha}{z}}\rho(\theta)^{\frac{\alpha}{2z}}\right)^{z}\right]}\right].
\end{align}
Then it follows that
\begin{align}
 & \left.\frac{\partial^{2}}{\partial\varepsilon_{i}\partial\varepsilon_{j}}D_{\alpha,z}(\rho(\theta)\|\rho(\theta+\varepsilon))\right|_{\varepsilon=0}\nonumber \\
 & =-\frac{1}{\alpha-1}\frac{\left[\begin{array}{c}
\left.\frac{\partial}{\partial\varepsilon_{i}}\Tr\!\left[\left(\rho(\theta)^{\frac{\alpha}{2z}}\rho(\theta+\varepsilon)^{\frac{1-\alpha}{z}}\rho(\theta)^{\frac{\alpha}{2z}}\right)^{z}\right]\right|_{\varepsilon=0}\times\\
\left.\frac{\partial}{\partial\varepsilon_{j}}\Tr\!\left[\left(\rho(\theta)^{\frac{\alpha}{2z}}\rho(\theta+\varepsilon)^{\frac{1-\alpha}{z}}\rho(\theta)^{\frac{\alpha}{2z}}\right)^{z}\right]\right|_{\varepsilon=0}
\end{array}\right]}{\left(\Tr\!\left[\left(\rho(\theta)^{\frac{\alpha}{2z}}\rho(\theta)^{\frac{1-\alpha}{z}}\rho(\theta)^{\frac{\alpha}{2z}}\right)^{z}\right]\right)^{2}}\nonumber \\
 & \qquad+\frac{1}{\alpha-1}\left[\frac{\left.\frac{\partial^{2}}{\partial\varepsilon_{i}\partial\varepsilon_{j}}\Tr\!\left[\left(\rho(\theta)^{\frac{\alpha}{2z}}\rho(\theta+\varepsilon)^{\frac{1-\alpha}{z}}\rho(\theta)^{\frac{\alpha}{2z}}\right)^{z}\right]\right|_{\varepsilon=0}}{\Tr\!\left[\left(\rho(\theta)^{\frac{\alpha}{2z}}\rho(\theta)^{\frac{1-\alpha}{z}}\rho(\theta)^{\frac{\alpha}{2z}}\right)^{z}\right]}\right]\\
 & =-\frac{1}{\alpha-1}\left[\begin{array}{c}
\left.\frac{\partial}{\partial\varepsilon_{i}}\Tr\!\left[\left(\rho(\theta)^{\frac{\alpha}{2z}}\rho(\theta+\varepsilon)^{\frac{1-\alpha}{z}}\rho(\theta)^{\frac{\alpha}{2z}}\right)^{z}\right]\right|_{\varepsilon=0}\times\\
\left.\frac{\partial}{\partial\varepsilon_{j}}\Tr\!\left[\left(\rho(\theta)^{\frac{\alpha}{2z}}\rho(\theta+\varepsilon)^{\frac{1-\alpha}{z}}\rho(\theta)^{\frac{\alpha}{2z}}\right)^{z}\right]\right|_{\varepsilon=0}
\end{array}\right]\nonumber \\
 & \qquad+\frac{1}{\alpha-1}\left[\left.\frac{\partial^{2}}{\partial\varepsilon_{i}\partial\varepsilon_{j}}\Tr\!\left[\left(\rho(\theta)^{\frac{\alpha}{2z}}\rho(\theta+\varepsilon)^{\frac{1-\alpha}{z}}\rho(\theta)^{\frac{\alpha}{2z}}\right)^{z}\right]\right|_{\varepsilon=0}\right].
\end{align}
Lemma \ref{lem:first-deriv-zero} implies that
\begin{equation}
\left.\frac{\partial}{\partial\varepsilon_{i}}\Tr\!\left[\left(\rho(\theta)^{\frac{\alpha}{2z}}\rho(\theta+\varepsilon)^{\frac{1-\alpha}{z}}\rho(\theta)^{\frac{\alpha}{2z}}\right)^{z}\right]\right|_{\varepsilon=0}=0
\end{equation}
for all $i$, so that \eqref{eq:first-reduction-a-z} follows.

Now proceeding to analyze \eqref{eq:first-reduction-a-z}, consider
that
\begin{align}
 & \left.\frac{\partial^{2}}{\partial\varepsilon_{i}\partial\varepsilon_{j}}D_{\alpha,z}(\rho(\theta)\|\rho(\theta+\varepsilon))\right|_{\varepsilon=0}\nonumber \\
 & =\frac{1}{\alpha-1}\left[\left.\frac{\partial^{2}}{\partial\varepsilon_{i}\partial\varepsilon_{j}}\Tr\!\left[\left(\rho(\theta)^{\frac{\alpha}{2z}}\rho(\theta+\varepsilon)^{\frac{1-\alpha}{z}}\rho(\theta)^{\frac{\alpha}{2z}}\right)^{z}\right]\right|_{\varepsilon=0}\right]\\
 & =\frac{z}{\alpha-1}\left.\left(\frac{\partial}{\partial\varepsilon_{i}}\Tr\!\left[\begin{array}{c}
\left(\rho(\theta)^{\frac{\alpha}{2z}}\rho(\theta+\varepsilon)^{\frac{1-\alpha}{z}}\rho(\theta)^{\frac{\alpha}{2z}}\right)^{z-1}\times\\
\rho(\theta)^{\frac{\alpha}{2z}}\left(\frac{\partial}{\partial\varepsilon_{j}}\rho(\theta+\varepsilon)^{\frac{1-\alpha}{z}}\right)\rho(\theta)^{\frac{\alpha}{2z}}
\end{array}\right]\right)\right|_{\varepsilon=0}\\
 & =\frac{z}{\alpha-1}\left.\Tr\!\left[\begin{array}{c}
\left(\frac{\partial}{\partial\varepsilon_{i}}\left(\rho(\theta)^{\frac{\alpha}{2z}}\rho(\theta+\varepsilon)^{\frac{1-\alpha}{z}}\rho(\theta)^{\frac{\alpha}{2z}}\right)^{z-1}\right)\times\\
\rho(\theta)^{\frac{\alpha}{2z}}\left(\frac{\partial}{\partial\varepsilon_{j}}\rho(\theta+\varepsilon)^{\frac{1-\alpha}{z}}\right)\rho(\theta)^{\frac{\alpha}{2z}}
\end{array}\right]\right|_{\varepsilon=0}\nonumber \\
 & \qquad+\frac{z}{\alpha-1}\left.\Tr\!\left[\begin{array}{c}
\left(\left(\rho(\theta)^{\frac{\alpha}{2z}}\rho(\theta+\varepsilon)^{\frac{1-\alpha}{z}}\rho(\theta)^{\frac{\alpha}{2z}}\right)^{z-1}\right)\times\\
\rho(\theta)^{\frac{\alpha}{2z}}\left(\frac{\partial^{2}}{\partial\varepsilon_{i}\partial\varepsilon_{j}}\rho(\theta+\varepsilon)^{\frac{1-\alpha}{z}}\right)\rho(\theta)^{\frac{\alpha}{2z}}
\end{array}\right]\right|_{\varepsilon=0}\\
 & =\frac{z}{\alpha-1}\Tr\!\left[\begin{array}{c}
\left(\left.\frac{\partial}{\partial\varepsilon_{i}}\left(\rho(\theta)^{\frac{\alpha}{2z}}\rho(\theta+\varepsilon)^{\frac{1-\alpha}{z}}\rho(\theta)^{\frac{\alpha}{2z}}\right)^{z-1}\right|_{\varepsilon=0}\right)\times\\
\rho(\theta)^{\frac{\alpha}{2z}}\left(\frac{\partial}{\partial\theta_{j}}\rho(\theta)^{\frac{1-\alpha}{z}}\right)\rho(\theta)^{\frac{\alpha}{2z}}
\end{array}\right]\nonumber \\
 & \qquad+\frac{z}{\alpha-1}\Tr\!\left[\begin{array}{c}
\left(\left(\rho(\theta)^{\frac{\alpha}{2z}}\rho(\theta)^{\frac{1-\alpha}{z}}\rho(\theta)^{\frac{\alpha}{2z}}\right)^{z-1}\right)\times\\
\rho(\theta)^{\frac{\alpha}{2z}}\left(\frac{\partial^{2}}{\partial\theta_{i}\partial\theta_{j}}\rho(\theta)^{\frac{1-\alpha}{z}}\right)\rho(\theta)^{\frac{\alpha}{2z}}
\end{array}\right]\\
 & =\frac{z}{\alpha-1}\Tr\!\left[\begin{array}{c}
\left(\left.\frac{\partial}{\partial\varepsilon_{i}}\left(\rho(\theta)^{\frac{\alpha}{2z}}\rho(\theta+\varepsilon)^{\frac{1-\alpha}{z}}\rho(\theta)^{\frac{\alpha}{2z}}\right)^{z-1}\right|_{\varepsilon=0}\right)\times\\
\rho(\theta)^{\frac{\alpha}{2z}}\left(\frac{\partial}{\partial\theta_{j}}\rho(\theta)^{\frac{1-\alpha}{z}}\right)\rho(\theta)^{\frac{\alpha}{2z}}
\end{array}\right]\nonumber \\
 & \qquad+\frac{z}{\alpha-1}\Tr\!\left[\rho(\theta)^{\frac{z-1+\alpha}{z}}\left(\frac{\partial^{2}}{\partial\theta_{i}\partial\theta_{j}}\rho(\theta)^{\frac{1-\alpha}{z}}\right)\right].\label{eq:alpha-z-proof-progress}
\end{align}
For the penultimate equality, we applied the following:
\begin{align}
\left.\frac{\partial}{\partial\varepsilon_{j}}\rho(\theta+\varepsilon)^{\frac{1-\alpha}{z}}\right|_{\varepsilon=0} & =\frac{\partial}{\partial\theta_{j}}\rho(\theta)^{\frac{1-\alpha}{z}},\\
\left.\frac{\partial^{2}}{\partial\varepsilon_{i}\partial\varepsilon_{j}}\rho(\theta+\varepsilon)^{\frac{1-\alpha}{z}}\right|_{\varepsilon=0} & =\frac{\partial^{2}}{\partial\theta_{i}\partial\theta_{j}}\rho(\theta)^{\frac{1-\alpha}{z}}.
\end{align}

Let us handle each of the terms in \eqref{eq:alpha-z-proof-progress}
individually. Beginning with the first term in \eqref{eq:alpha-z-proof-progress}
and applying Proposition \ref{prop:power-function-all-powers} with
the substitutions $r\to z-1$, $\frac{\partial}{\partial x}\to\frac{\partial}{\partial\varepsilon_{i}}$,
and $A(x)\to\rho(\theta)^{\frac{\alpha}{2z}}\rho(\theta+\varepsilon)^{\frac{1-\alpha}{z}}\rho(\theta)^{\frac{\alpha}{2z}}$,
consider that
\begin{align}
 & \left.\frac{\partial}{\partial\varepsilon_{i}}\left(\rho(\theta)^{\frac{\alpha}{2z}}\rho(\theta+\varepsilon)^{\frac{1-\alpha}{z}}\rho(\theta)^{\frac{\alpha}{2z}}\right)^{z-1}\right|_{\varepsilon=0}\nonumber \\
 & =\left.\left(z-1\right)\int_{0}^{1}dt\int_{0}^{\infty}ds\ \left[\begin{array}{c}
\frac{\left(\rho(\theta)^{\frac{\alpha}{2z}}\rho(\theta+\varepsilon)^{\frac{1-\alpha}{z}}\rho(\theta)^{\frac{\alpha}{2z}}\right)^{\left(z-1\right)t}}{\rho(\theta)^{\frac{\alpha}{2z}}\rho(\theta+\varepsilon)^{\frac{1-\alpha}{z}}\rho(\theta)^{\frac{\alpha}{2z}}+sI}\times\\
\left(\frac{\partial}{\partial\varepsilon_{i}}\left(\rho(\theta)^{\frac{\alpha}{2z}}\rho(\theta+\varepsilon)^{\frac{1-\alpha}{z}}\rho(\theta)^{\frac{\alpha}{2z}}\right)\right)\times\\
\frac{\left(\rho(\theta)^{\frac{\alpha}{2z}}\rho(\theta+\varepsilon)^{\frac{1-\alpha}{z}}\rho(\theta)^{\frac{\alpha}{2z}}\right)^{\left(z-1\right)\left(1-t\right)}}{\rho(\theta)^{\frac{\alpha}{2z}}\rho(\theta+\varepsilon)^{\frac{1-\alpha}{z}}\rho(\theta)^{\frac{\alpha}{2z}}+sI}
\end{array}\right]\right|_{\varepsilon=0}\\
 & =\left(z-1\right)\int_{0}^{1}dt\int_{0}^{\infty}ds\ \left[\begin{array}{c}
\frac{\left(\rho(\theta)^{\frac{\alpha}{2z}}\rho(\theta)^{\frac{1-\alpha}{z}}\rho(\theta)^{\frac{\alpha}{2z}}\right)^{\left(z-1\right)t}}{\rho(\theta)^{\frac{\alpha}{2z}}\rho(\theta)^{\frac{1-\alpha}{z}}\rho(\theta)^{\frac{\alpha}{2z}}+sI}\times\\
\rho(\theta)^{\frac{\alpha}{2z}}\left.\left(\frac{\partial}{\partial\varepsilon_{i}}\left(\rho(\theta+\varepsilon)^{\frac{1-\alpha}{z}}\right)\right)\right|_{\varepsilon=0}\rho(\theta)^{\frac{\alpha}{2z}}\times\\
\frac{\left(\rho(\theta)^{\frac{\alpha}{2z}}\rho(\theta)^{\frac{1-\alpha}{z}}\rho(\theta)^{\frac{\alpha}{2z}}\right)^{\left(z-1\right)\left(1-t\right)}}{\rho(\theta)^{\frac{\alpha}{2z}}\rho(\theta)^{\frac{1-\alpha}{z}}\rho(\theta)^{\frac{\alpha}{2z}}+sI}
\end{array}\right]\\
 & =\left(z-1\right)\int_{0}^{1}dt\int_{0}^{\infty}ds\ \left[\begin{array}{c}
\frac{\rho(\theta)^{\left(\frac{z-1}{z}\right)t}}{\rho(\theta)^{\frac{1}{z}}+sI}\rho(\theta)^{\frac{\alpha}{2z}}\times\\
\left.\left(\frac{\partial}{\partial\varepsilon_{i}}\left(\rho(\theta+\varepsilon)^{\frac{1-\alpha}{z}}\right)\right)\right|_{\varepsilon=0}\times\\
\rho(\theta)^{\frac{\alpha}{2z}}\frac{\rho(\theta)^{\left(\frac{z-1}{z}\right)\left(1-t\right)}}{\rho(\theta)^{\frac{1}{z}}+sI}
\end{array}\right]\\
 & =\left(z-1\right)\int_{0}^{1}dt\int_{0}^{\infty}ds\ \left[\begin{array}{c}
\frac{\rho(\theta)^{\left(\frac{z-1}{z}\right)t}}{\rho(\theta)^{\frac{1}{z}}+sI}\rho(\theta)^{\frac{\alpha}{2z}}\left(\frac{\partial}{\partial\theta_{i}}\rho(\theta)^{\frac{1-\alpha}{z}}\right)\times\\
\rho(\theta)^{\frac{\alpha}{2z}}\frac{\rho(\theta)^{\left(\frac{z-1}{z}\right)\left(1-t\right)}}{\rho(\theta)^{\frac{1}{z}}+sI}
\end{array}\right].\label{eq:alpha-z-proof-progress-1st-term}
\end{align}
In the last line, we applied the identity
\begin{equation}
\left.\left(\frac{\partial}{\partial\varepsilon_{i}}\left(\rho(\theta+\varepsilon)^{\frac{1-\alpha}{z}}\right)\right)\right|_{\varepsilon=0}=\frac{\partial}{\partial\theta_{i}}\rho(\theta)^{\frac{1-\alpha}{z}}.
\end{equation}
Substituting \eqref{eq:alpha-z-proof-progress-1st-term} into the
first term of \eqref{eq:alpha-z-proof-progress}, we find that
\begin{align}
 & \Tr\!\left[\begin{array}{c}
\left.\frac{\partial}{\partial\varepsilon_{i}}\left(\rho(\theta)^{\frac{\alpha}{2z}}\rho(\theta+\varepsilon)^{\frac{1-\alpha}{z}}\rho(\theta)^{\frac{\alpha}{2z}}\right)^{z-1}\right|_{\varepsilon=0}\times\\
\rho(\theta)^{\frac{\alpha}{2z}}\left(\frac{\partial}{\partial\theta_{j}}\rho(\theta)^{\frac{1-\alpha}{z}}\right)\rho(\theta)^{\frac{\alpha}{2z}}
\end{array}\right]\nonumber \\
 & =\left(z-1\right)\int_{0}^{1}dt\int_{0}^{\infty}ds\ \Tr\!\left[\begin{array}{c}
\frac{\rho(\theta)^{\left(\frac{z-1}{z}\right)t}}{\rho(\theta)^{\frac{1}{z}}+sI}\rho(\theta)^{\frac{\alpha}{2z}}\left(\frac{\partial}{\partial\theta_{i}}\rho(\theta)^{\frac{1-\alpha}{z}}\right)\times\\
\rho(\theta)^{\frac{\alpha}{2z}}\frac{\rho(\theta)^{\left(\frac{z-1}{z}\right)\left(1-t\right)}}{\rho(\theta)^{\frac{1}{z}}+sI}\times\\
\rho(\theta)^{\frac{\alpha}{2z}}\left(\frac{\partial}{\partial\theta_{j}}\rho(\theta)^{\frac{1-\alpha}{z}}\right)\rho(\theta)^{\frac{\alpha}{2z}}
\end{array}\right]\\
 & =\left(z-1\right)\int_{0}^{1}dt\int_{0}^{\infty}ds\ \Tr\!\left[\begin{array}{c}
\frac{\rho(\theta)^{\left(\frac{z-1}{z}\right)t+\frac{\alpha}{z}}}{\rho(\theta)^{\frac{1}{z}}+sI}\left(\frac{\partial}{\partial\theta_{i}}\rho(\theta)^{\frac{1-\alpha}{z}}\right)\times\\
\frac{\rho(\theta)^{\left(\frac{z-1}{z}\right)\left(1-t\right)+\frac{\alpha}{z}}}{\rho(\theta)^{\frac{1}{z}}+sI}\left(\frac{\partial}{\partial\theta_{j}}\rho(\theta)^{\frac{1-\alpha}{z}}\right)
\end{array}\right].
\end{align}
Now substituting the spectral decomposition $\rho(\theta)=\sum_{k}\lambda_{k}\Pi_{k}$,
consider that
\begin{align}
 & \left(z-1\right)\int_{0}^{1}dt\:\int_{0}^{\infty}ds\:\Tr\!\left[\begin{array}{c}
\frac{\rho(\theta)^{\left(\frac{z-1}{z}\right)t+\frac{\alpha}{z}}}{\rho(\theta)^{\frac{1}{z}}+sI}\left(\frac{\partial}{\partial\theta{}_{i}}\rho(\theta)^{\frac{1-\alpha}{z}}\right)\times\\
\frac{\rho(\theta)^{\left(\frac{z-1}{z}\right)\left(1-t\right)+\frac{\alpha}{z}}}{\rho(\theta)^{\frac{1}{z}}+sI}\left(\frac{\partial}{\partial\theta_{j}}\rho(\theta)^{\frac{1-\alpha}{z}}\right)
\end{array}\right]\nonumber \\
 & =\left(z-1\right)\int_{0}^{1}dt\:\int_{0}^{\infty}ds\:\Tr\!\left[\begin{array}{c}
\sum_{k}\frac{\lambda_{k}^{\left(\frac{z-1}{z}\right)t+\frac{\alpha}{z}}}{\lambda_{k}^{\frac{1}{z}}+s}\Pi_{k}\left(\frac{\partial}{\partial\theta{}_{i}}\rho(\theta)^{\frac{1-\alpha}{z}}\right)\times\\
\sum_{\ell}\frac{\lambda_{\ell}^{\left(\frac{z-1}{z}\right)\left(1-t\right)+\frac{\alpha}{z}}}{\lambda_{\ell}^{\frac{1}{z}}+s}\Pi_{\ell}\left(\frac{\partial}{\partial\theta_{j}}\rho(\theta)^{\frac{1-\alpha}{z}}\right)
\end{array}\right]\\
 & =\left(z-1\right)\sum_{k,\ell}\int_{0}^{1}dt\:\int_{0}^{\infty}ds\:\left(\frac{\lambda_{k}^{\left(\frac{z-1}{z}\right)t+\frac{\alpha}{z}}}{\lambda_{k}^{\frac{1}{z}}+s}\right)\left(\frac{\lambda_{\ell}^{\left(\frac{z-1}{z}\right)\left(1-t\right)+\frac{\alpha}{z}}}{\lambda_{\ell}^{\frac{1}{z}}+s}\right)\times\nonumber \\
 & \qquad\Tr\!\left[\Pi_{k}\left(\frac{\partial}{\partial\theta{}_{i}}\rho(\theta)^{\frac{1-\alpha}{z}}\right)\Pi_{\ell}\left(\frac{\partial}{\partial\theta_{j}}\rho(\theta)^{\frac{1-\alpha}{z}}\right)\right]\\
 & =\left(z-1\right)\sum_{k,\ell}\left(\lambda_{k}\lambda_{\ell}\right)^{\frac{\alpha}{z}}\int_{0}^{1}dt\:\lambda_{k}^{\left(\frac{z-1}{z}\right)t}\lambda_{\ell}^{\left(\frac{z-1}{z}\right)\left(1-t\right)}\int_{0}^{\infty}ds\:\left(\frac{1}{\lambda_{k}^{\frac{1}{z}}+s}\right)\left(\frac{1}{\lambda_{\ell}^{\frac{1}{z}}+s}\right)\times\nonumber \\
 & \qquad\Tr\!\left[\Pi_{k}\left(\frac{\partial}{\partial\theta{}_{i}}\rho(\theta)^{\frac{1-\alpha}{z}}\right)\Pi_{\ell}\left(\frac{\partial}{\partial\theta_{j}}\rho(\theta)^{\frac{1-\alpha}{z}}\right)\right]\\
 & =\left(z-1\right)\sum_{k,\ell}\left(\lambda_{k}\lambda_{\ell}\right)^{\frac{\alpha}{z}}\frac{1}{\frac{z-1}{z}}\left(\frac{\lambda_{k}^{\frac{z-1}{z}}-\lambda_{\ell}^{\frac{z-1}{z}}}{\ln\lambda_{k}-\ln\lambda_{\ell}}\right)\left(\frac{\ln\lambda_{k}^{\frac{1}{z}}-\ln\lambda_{\ell}^{\frac{1}{z}}}{\lambda_{k}^{\frac{1}{z}}-\lambda_{\ell}^{\frac{1}{z}}}\right)\times\nonumber \\
 & \qquad\Tr\!\left[\Pi_{k}\left(\frac{\partial}{\partial\theta{}_{i}}\rho(\theta)^{\frac{1-\alpha}{z}}\right)\Pi_{\ell}\left(\frac{\partial}{\partial\theta_{j}}\rho(\theta)^{\frac{1-\alpha}{z}}\right)\right]\\
 & =\sum_{k,\ell}\left(\lambda_{k}\lambda_{\ell}\right)^{\frac{\alpha}{z}}\left(\frac{\lambda_{k}^{\frac{z-1}{z}}-\lambda_{\ell}^{\frac{z-1}{z}}}{\lambda_{k}^{\frac{1}{z}}-\lambda_{\ell}^{\frac{1}{z}}}\right)\Tr\!\left[\Pi_{k}\left(\frac{\partial}{\partial\theta{}_{i}}\rho(\theta)^{\frac{1-\alpha}{z}}\right)\Pi_{\ell}\left(\frac{\partial}{\partial\theta_{j}}\rho(\theta)^{\frac{1-\alpha}{z}}\right)\right].
\end{align}
For the penultimate equality, we applied the following integrals,
which hold for $x,y>0$:
\begin{align}
\int_{0}^{1}dt\:x^{\left(\frac{z-1}{z}\right)t}y^{\left(\frac{z-1}{z}\right)\left(1-t\right)} & =\frac{1}{\frac{z-1}{z}}\left(\frac{x^{\frac{z-1}{z}}-y^{\frac{z-1}{z}}}{\ln x-\ln y}\right),\\
\int_{0}^{\infty}ds\:\left(\frac{1}{x^{\frac{1}{z}}+s}\right)\left(\frac{1}{y^{\frac{1}{z}}+s}\right) & =\frac{\ln x^{\frac{1}{z}}-\ln y^{\frac{1}{z}}}{x^{\frac{1}{z}}-y^{\frac{1}{z}}},
\end{align}
and we have left it implicit above that one evaluates these expressions
in the limit $x\to y$ when $x=y$. Now applying Theorem~\ref{thm:divided-difference-matrix-deriv}
to evaluate $\frac{\partial}{\partial\theta{}_{i}}\rho(\theta)^{\frac{1-\alpha}{z}}$
and $\frac{\partial}{\partial\theta_{j}}\rho(\theta)^{\frac{1-\alpha}{z}}$,
consider that 
\begin{align}
 & \sum_{k,\ell}\left(\lambda_{k}\lambda_{\ell}\right)^{\frac{\alpha}{z}}\left(\frac{\lambda_{k}^{\frac{z-1}{z}}-\lambda_{\ell}^{\frac{z-1}{z}}}{\lambda_{k}^{\frac{1}{z}}-\lambda_{\ell}^{\frac{1}{z}}}\right)\Tr\!\left[\Pi_{k}\left(\frac{\partial}{\partial\theta{}_{i}}\rho(\theta)^{\frac{1-\alpha}{z}}\right)\Pi_{\ell}\left(\frac{\partial}{\partial\theta_{j}}\rho(\theta)^{\frac{1-\alpha}{z}}\right)\right]\nonumber \\
 & =\sum_{k,\ell}\left(\lambda_{k}\lambda_{\ell}\right)^{\frac{\alpha}{z}}\left(\frac{\lambda_{k}^{\frac{z-1}{z}}-\lambda_{\ell}^{\frac{z-1}{z}}}{\lambda_{k}^{\frac{1}{z}}-\lambda_{\ell}^{\frac{1}{z}}}\right)\times\nonumber \\
 & \qquad\Tr\!\left[\begin{array}{c}
\Pi_{k}\left(\sum_{m,n}\left(\frac{\lambda_{m}^{\frac{1-\alpha}{z}}-\lambda_{n}^{\frac{1-\alpha}{z}}}{\lambda_{m}-\lambda_{n}}\right)\Pi_{m}\left(\frac{\partial}{\partial\theta_{i}}\rho(\theta)\right)\Pi_{n}\right)\times\\
\Pi_{\ell}\left(\sum_{p,r}\left(\frac{\lambda_{p}^{\frac{1-\alpha}{z}}-\lambda_{r}^{\frac{1-\alpha}{z}}}{\lambda_{p}-\lambda_{r}}\right)\Pi_{p}\left(\frac{\partial}{\partial\theta_{j}}\rho(\theta)\right)\Pi_{r}\right)
\end{array}\right]\\
 & =\sum_{k,\ell,m,n,p,r}\left(\lambda_{k}\lambda_{\ell}\right)^{\frac{\alpha}{z}}\left(\frac{\lambda_{k}^{\frac{z-1}{z}}-\lambda_{\ell}^{\frac{z-1}{z}}}{\lambda_{k}^{\frac{1}{z}}-\lambda_{\ell}^{\frac{1}{z}}}\right)\left(\frac{\lambda_{m}^{\frac{1-\alpha}{z}}-\lambda_{n}^{\frac{1-\alpha}{z}}}{\lambda_{m}-\lambda_{n}}\right)\times\nonumber \\
 & \qquad\left(\frac{\lambda_{p}^{\frac{1-\alpha}{z}}-\lambda_{r}^{\frac{1-\alpha}{z}}}{\lambda_{p}-\lambda_{r}}\right)\Tr\!\left[\Pi_{k}\Pi_{m}\left(\frac{\partial}{\partial\theta_{i}}\rho(\theta)\right)\Pi_{n}\Pi_{\ell}\Pi_{p}\left(\frac{\partial}{\partial\theta_{j}}\rho(\theta)\right)\Pi_{r}\right]\\
 & =\sum_{k,\ell}\left(\lambda_{k}\lambda_{\ell}\right)^{\frac{\alpha}{z}}\left(\frac{\lambda_{k}^{\frac{z-1}{z}}-\lambda_{\ell}^{\frac{z-1}{z}}}{\lambda_{k}^{\frac{1}{z}}-\lambda_{\ell}^{\frac{1}{z}}}\right)\left(\frac{\lambda_{k}^{\frac{1-\alpha}{z}}-\lambda_{\ell}^{\frac{1-\alpha}{z}}}{\lambda_{k}-\lambda_{\ell}}\right)\left(\frac{\lambda_{\ell}^{\frac{1-\alpha}{z}}-\lambda_{k}^{\frac{1-\alpha}{z}}}{\lambda_{\ell}-\lambda_{k}}\right)\times\nonumber \\
 & \qquad\Tr\!\left[\Pi_{k}\left(\frac{\partial}{\partial\theta_{i}}\rho(\theta)\right)\Pi_{\ell}\left(\frac{\partial}{\partial\theta_{j}}\rho(\theta)\right)\right]\\
 & =\sum_{k,\ell}\left(\lambda_{k}\lambda_{\ell}\right)^{\frac{\alpha}{z}}\left(\frac{\lambda_{k}^{\frac{z-1}{z}}-\lambda_{\ell}^{\frac{z-1}{z}}}{\lambda_{k}^{\frac{1}{z}}-\lambda_{\ell}^{\frac{1}{z}}}\right)\left(\frac{\lambda_{k}^{\frac{1-\alpha}{z}}-\lambda_{\ell}^{\frac{1-\alpha}{z}}}{\lambda_{k}-\lambda_{\ell}}\right)^{2}\times\nonumber \\
 & \qquad\Tr\!\left[\Pi_{k}\left(\frac{\partial}{\partial\theta_{i}}\rho(\theta)\right)\Pi_{\ell}\left(\frac{\partial}{\partial\theta_{j}}\rho(\theta)\right)\right].\label{eq:first-term-a-z-simplified}
\end{align}

Let us now simplify the second term in \eqref{eq:alpha-z-proof-progress}.
Before doing so, consider that
\begin{multline}
\frac{\partial}{\partial\theta_{i}}\Tr\!\left[\rho(\theta)^{\frac{z-1+\alpha}{z}}\left(\frac{\partial}{\partial\theta_{j}}\rho(\theta)^{\frac{1-\alpha}{z}}\right)\right]=\Tr\!\left[\left(\frac{\partial}{\partial\theta_{i}}\rho(\theta)^{\frac{z-1+\alpha}{z}}\right)\left(\frac{\partial}{\partial\theta_{j}}\rho(\theta)^{\frac{1-\alpha}{z}}\right)\right]\\
+\Tr\!\left[\rho(\theta)^{\frac{z-1+\alpha}{z}}\left(\frac{\partial^{2}}{\partial\theta_{i}\partial\theta_{j}}\rho(\theta)^{\frac{1-\alpha}{z}}\right)\right].\label{eq:no-2nd-deriv-trick-1}
\end{multline}
Now observe that
\begin{align}
 & \Tr\!\left[\rho(\theta)^{\frac{z-1+\alpha}{z}}\left(\frac{\partial}{\partial\theta_{j}}\rho(\theta)^{\frac{1-\alpha}{z}}\right)\right]\nonumber \\
 & =\Tr\!\left[\sum_{k}\lambda_{k}^{\frac{z-1+\alpha}{z}}\Pi_{k}\left(\sum_{m,\ell}\left(\frac{\lambda_{m}^{\frac{1-\alpha}{z}}-\lambda_{\ell}^{\frac{1-\alpha}{z}}}{\lambda_{m}-\lambda_{\ell}}\right)\Pi_{m}\left(\frac{\partial}{\partial\theta_{j}}\rho(\theta)\right)\Pi_{\ell}\right)\right]\label{eq:no-2nd-deriv-trick-2}\\
 & =\sum_{k,m,\ell}\lambda_{k}^{\frac{z-1+\alpha}{z}}\left(\frac{\lambda_{m}^{\frac{1-\alpha}{z}}-\lambda_{\ell}^{\frac{1-\alpha}{z}}}{\lambda_{m}-\lambda_{\ell}}\right)\Tr\!\left[\Pi_{k}\Pi_{m}\left(\frac{\partial}{\partial\theta_{j}}\rho(\theta)\right)\Pi_{\ell}\right]\\
 & =\sum_{k}\lambda_{k}^{\frac{z-1+\alpha}{z}}\left(\frac{1-\alpha}{z}\right)\lambda_{k}^{\frac{1-\alpha}{z}-1}\Tr\!\left[\Pi_{k}\left(\frac{\partial}{\partial\theta_{j}}\rho(\theta)\right)\right]\\
 & =\left(\frac{1-\alpha}{z}\right)\sum_{k}\Tr\!\left[\Pi_{k}\left(\frac{\partial}{\partial\theta_{j}}\rho(\theta)\right)\right]\\
 & =\left(\frac{1-\alpha}{z}\right)\Tr\!\left[\frac{\partial}{\partial\theta_{j}}\rho(\theta)\right]\\
 & =\left(\frac{1-\alpha}{z}\right)\frac{\partial}{\partial\theta_{j}}\Tr\!\left[\rho(\theta)\right]\\
 & =0.\label{eq:no-2nd-deriv-trick-3}
\end{align}
By combining \eqref{eq:no-2nd-deriv-trick-1} and \eqref{eq:no-2nd-deriv-trick-2}--\eqref{eq:no-2nd-deriv-trick-3},
we thus conclude that
\begin{equation}
\Tr\!\left[\rho(\theta)^{\frac{z-1+\alpha}{z}}\left(\frac{\partial^{2}}{\partial\theta_{i}\partial\theta_{j}}\rho(\theta)^{\frac{1-\alpha}{z}}\right)\right]=-\Tr\!\left[\left(\frac{\partial}{\partial\theta_{i}}\rho(\theta)^{\frac{z-1+\alpha}{z}}\right)\left(\frac{\partial}{\partial\theta_{j}}\rho(\theta)^{\frac{1-\alpha}{z}}\right)\right].\label{eq:no-2nd-deriv}
\end{equation}
Then
\begin{align}
 & \Tr\!\left[\rho(\theta)^{\frac{z-1+\alpha}{z}}\left(\frac{\partial^{2}}{\partial\theta_{i}\partial\theta_{j}}\rho(\theta)^{\frac{1-\alpha}{z}}\right)\right]\nonumber \\
 & =-\Tr\!\left[\left(\frac{\partial}{\partial\theta_{i}}\rho(\theta)^{\frac{z-1+\alpha}{z}}\right)\left(\frac{\partial}{\partial\theta_{j}}\rho(\theta)^{\frac{1-\alpha}{z}}\right)\right]\\
 & =-\Tr\!\left[\begin{array}{c}
\left(\sum_{k,\ell}\left(\frac{\lambda_{k}^{\frac{z-1+\alpha}{z}}-\lambda_{\ell}^{\frac{z-1+\alpha}{z}}}{\lambda_{k}-\lambda_{\ell}}\right)\Pi_{k}\left(\frac{\partial}{\partial\theta_{i}}\rho(\theta)\right)\Pi_{\ell}\right)\times\\
\left(\sum_{m,n}\left(\frac{\lambda_{m}^{\frac{1-\alpha}{z}}-\lambda_{n}^{\frac{1-\alpha}{z}}}{\lambda_{m}-\lambda_{n}}\right)\Pi_{m}\left(\frac{\partial}{\partial\theta_{j}}\rho(\theta)\right)\Pi_{n}\right)
\end{array}\right]\\
 & =-\sum_{k,\ell,m,n}\left(\frac{\lambda_{k}^{\frac{z-1+\alpha}{z}}-\lambda_{\ell}^{\frac{z-1+\alpha}{z}}}{\lambda_{k}-\lambda_{\ell}}\right)\left(\frac{\lambda_{m}^{\frac{1-\alpha}{z}}-\lambda_{n}^{\frac{1-\alpha}{z}}}{\lambda_{m}-\lambda_{n}}\right)\times\nonumber \\
 & \qquad\qquad\Tr\!\left[\Pi_{k}\left(\frac{\partial}{\partial\theta_{i}}\rho(\theta)\right)\Pi_{\ell}\Pi_{m}\left(\frac{\partial}{\partial\theta_{j}}\rho(\theta)\right)\Pi_{n}\right]\\
 & =-\sum_{k,\ell}\left(\frac{\lambda_{k}^{\frac{z-1+\alpha}{z}}-\lambda_{\ell}^{\frac{z-1+\alpha}{z}}}{\lambda_{k}-\lambda_{\ell}}\right)\left(\frac{\lambda_{\ell}^{\frac{1-\alpha}{z}}-\lambda_{k}^{\frac{1-\alpha}{z}}}{\lambda_{\ell}-\lambda_{k}}\right)\Tr\!\left[\Pi_{k}\left(\frac{\partial}{\partial\theta_{i}}\rho(\theta)\right)\Pi_{\ell}\left(\frac{\partial}{\partial\theta_{j}}\rho(\theta)\right)\right]\\
 & =-\sum_{k,\ell}\left(\frac{\lambda_{k}^{\frac{z-1+\alpha}{z}}-\lambda_{\ell}^{\frac{z-1+\alpha}{z}}}{\lambda_{k}-\lambda_{\ell}}\right)\left(\frac{\lambda_{k}^{\frac{1-\alpha}{z}}-\lambda_{\ell}^{\frac{1-\alpha}{z}}}{\lambda_{k}-\lambda_{\ell}}\right)\Tr\!\left[\Pi_{k}\left(\frac{\partial}{\partial\theta_{i}}\rho(\theta)\right)\Pi_{\ell}\left(\frac{\partial}{\partial\theta_{j}}\rho(\theta)\right)\right].\label{eq:second-term-a-z-simplified}
\end{align}
Combining both terms in \eqref{eq:alpha-z-proof-progress}, using
\eqref{eq:first-term-a-z-simplified} and \eqref{eq:second-term-a-z-simplified}
(while omitting the prefactor $\frac{z}{\alpha-1}$ for now), we conclude
that
\begin{align}
 & \Tr\!\left[\begin{array}{c}
\left(\left.\frac{\partial}{\partial\varepsilon_{i}}\left(\rho(\theta)^{\frac{\alpha}{2z}}\rho(\theta+\varepsilon)^{\frac{1-\alpha}{z}}\rho(\theta)^{\frac{\alpha}{2z}}\right)^{z-1}\right|_{\varepsilon=0}\right)\times\\
\rho(\theta)^{\frac{\alpha}{2z}}\left(\frac{\partial}{\partial\theta_{j}}\rho(\theta)^{\frac{1-\alpha}{z}}\right)\rho(\theta)^{\frac{\alpha}{2z}}
\end{array}\right]+\Tr\!\left[\rho(\theta)^{\frac{z-1+\alpha}{z}}\left(\frac{\partial^{2}}{\partial\theta_{i}\partial\theta_{j}}\rho(\theta)^{\frac{1-\alpha}{z}}\right)\right]\nonumber \\
 & =\sum_{k,\ell}\left(\lambda_{k}\lambda_{\ell}\right)^{\frac{\alpha}{z}}\left(\frac{\lambda_{k}^{\frac{z-1}{z}}-\lambda_{\ell}^{\frac{z-1}{z}}}{\lambda_{k}^{\frac{1}{z}}-\lambda_{\ell}^{\frac{1}{z}}}\right)\left(\frac{\lambda_{k}^{\frac{1-\alpha}{z}}-\lambda_{\ell}^{\frac{1-\alpha}{z}}}{\lambda_{k}-\lambda_{\ell}}\right)^{2}\Tr\!\left[\Pi_{k}\left(\frac{\partial}{\partial\theta_{i}}\rho(\theta)\right)\Pi_{\ell}\left(\frac{\partial}{\partial\theta_{j}}\rho(\theta)\right)\right]\nonumber \\
 & \qquad-\sum_{k,\ell}\left(\frac{\lambda_{k}^{\frac{z-1+\alpha}{z}}-\lambda_{\ell}^{\frac{z-1+\alpha}{z}}}{\lambda_{k}-\lambda_{\ell}}\right)\left(\frac{\lambda_{k}^{\frac{1-\alpha}{z}}-\lambda_{\ell}^{\frac{1-\alpha}{z}}}{\lambda_{k}-\lambda_{\ell}}\right)\Tr\!\left[\Pi_{k}\left(\frac{\partial}{\partial\theta_{i}}\rho(\theta)\right)\Pi_{\ell}\left(\frac{\partial}{\partial\theta_{j}}\rho(\theta)\right)\right]\\
 & =\sum_{k,\ell}\left[\begin{array}{c}
\left(\lambda_{k}\lambda_{\ell}\right)^{\frac{\alpha}{z}}\left(\frac{\lambda_{k}^{\frac{z-1}{z}}-\lambda_{\ell}^{\frac{z-1}{z}}}{\lambda_{k}^{\frac{1}{z}}-\lambda_{\ell}^{\frac{1}{z}}}\right)\left(\frac{\lambda_{k}^{\frac{1-\alpha}{z}}-\lambda_{\ell}^{\frac{1-\alpha}{z}}}{\lambda_{k}-\lambda_{\ell}}\right)^{2}\\
-\left(\frac{\lambda_{k}^{\frac{z-1+\alpha}{z}}-\lambda_{\ell}^{\frac{z-1+\alpha}{z}}}{\lambda_{k}-\lambda_{\ell}}\right)\left(\frac{\lambda_{k}^{\frac{1-\alpha}{z}}-\lambda_{\ell}^{\frac{1-\alpha}{z}}}{\lambda_{k}-\lambda_{\ell}}\right)
\end{array}\right]\Tr\!\left[\Pi_{k}\left(\frac{\partial}{\partial\theta_{i}}\rho(\theta)\right)\Pi_{\ell}\left(\frac{\partial}{\partial\theta_{j}}\rho(\theta)\right)\right].\label{eq:alpha-z-proof-almost-done-almost}
\end{align}
Appendix \ref{app:proof-algebra-a-z-eigenvals-simplify} provides
a long sequence of algebraic steps proving that
\begin{multline}
\left(\lambda_{k}\lambda_{\ell}\right)^{\frac{\alpha}{z}}\left(\frac{\lambda_{k}^{\frac{z-1}{z}}-\lambda_{\ell}^{\frac{z-1}{z}}}{\lambda_{k}^{\frac{1}{z}}-\lambda_{\ell}^{\frac{1}{z}}}\right)\left(\frac{\lambda_{k}^{\frac{1-\alpha}{z}}-\lambda_{\ell}^{\frac{1-\alpha}{z}}}{\lambda_{k}-\lambda_{\ell}}\right)^{2}-\left(\frac{\lambda_{k}^{\frac{z-1+\alpha}{z}}-\lambda_{\ell}^{\frac{z-1+\alpha}{z}}}{\lambda_{k}-\lambda_{\ell}}\right)\left(\frac{\lambda_{k}^{\frac{1-\alpha}{z}}-\lambda_{\ell}^{\frac{1-\alpha}{z}}}{\lambda_{k}-\lambda_{\ell}}\right)\\
=-\left(\frac{\lambda_{k}^{\frac{1-\alpha}{z}}-\lambda_{\ell}^{\frac{1-\alpha}{z}}}{\lambda_{k}-\lambda_{\ell}}\right)\left(\frac{\lambda_{k}^{\frac{\alpha}{z}}-\lambda_{\ell}^{\frac{\alpha}{z}}}{\lambda_{k}^{\frac{1}{z}}-\lambda_{\ell}^{\frac{1}{z}}}\right).\label{eq:a-z-eigenvals-simplified}
\end{multline}
After combining with \eqref{eq:alpha-z-proof-almost-done-almost}
and incorporating the prefactor $\frac{z}{\alpha-1}$, the proof of
\eqref{eq:a-z-fisher-formula} is concluded. For a proof of the case
$x=y$ in \eqref{eq:limit-for-a-z-eigenval-func}, see Appendix \ref{app:limit-for-a-z-eigenval-func}.
\end{proof}
\begin{lem}
\label{lem:first-deriv-zero}The following equality holds for all
$\alpha,z>0$:
\begin{equation}
\left.\frac{\partial}{\partial\varepsilon_{j}}\Tr\!\left[\left(\rho(\theta)^{\frac{\alpha}{2z}}\rho(\theta+\varepsilon)^{\frac{1-\alpha}{z}}\rho(\theta)^{\frac{\alpha}{2z}}\right)^{z}\right]\right|_{\varepsilon=0}=0.
\end{equation}
\end{lem}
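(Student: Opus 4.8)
The plan is to differentiate directly, since no shortcut is available: one might hope to argue that $\varepsilon=0$ is a critical point of $\varepsilon\mapsto D_{\alpha,z}(\rho(\theta)\|\rho(\theta+\varepsilon))$, but this would rely on the data-processing inequality (or at least nonnegativity of $D_{\alpha,z}$), which fails outside the range in Fact~\ref{fact:a-z-data-proc}, whereas the lemma must hold for all $\alpha,z>0$. So instead I would set $A_{\varepsilon}\coloneqq\rho(\theta)^{\frac{\alpha}{2z}}\rho(\theta+\varepsilon)^{\frac{1-\alpha}{z}}\rho(\theta)^{\frac{\alpha}{2z}}$, which is positive definite and Hermitian for every $\varepsilon$ (being a congruence transform of the positive definite operator $\rho(\theta+\varepsilon)^{\frac{1-\alpha}{z}}$), so that $\varepsilon\mapsto\Tr[A_{\varepsilon}^{z}]$ is smooth. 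Applying Proposition~\ref{prop:power-function-all-powers} (with $r\to z$) together with cyclicity of the trace — the same simplification of the integral representation that underlies Corollary~\ref{cor:derivative-in-trace} — gives the chain rule $\frac{\partial}{\partial\varepsilon_{j}}\Tr[A_{\varepsilon}^{z}]=z\Tr[A_{\varepsilon}^{z-1}(\partial_{\varepsilon_{j}}A_{\varepsilon})]$.

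Next I would evaluate at $\varepsilon=0$. Since the powers of $\rho(\theta)$ commute, $A_{0}=\rho(\theta)^{\frac{\alpha}{2z}+\frac{1-\alpha}{z}+\frac{\alpha}{2z}}=\rho(\theta)^{1/z}$, hence $A_{0}^{z-1}=\rho(\theta)^{\frac{z-1}{z}}$; also $\left.\partial_{\varepsilon_{j}}A_{\varepsilon}\right|_{\varepsilon=0}=\rho(\theta)^{\frac{\alpha}{2z}}\big(\partial_{\theta_{j}}\rho(\theta)^{\frac{1-\alpha}{z}}\big)\rho(\theta)^{\frac{\alpha}{2z}}$. Using cyclicity to collect the commuting powers of $\rho(\theta)$ then yields
\begin{equation}
\left.\frac{\partial}{\partial\varepsilon_{j}}\Tr\!\left[\left(\rho(\theta)^{\frac{\alpha}{2z}}\rho(\theta+\varepsilon)^{\frac{1-\alpha}{z}}\rho(\theta)^{\frac{\alpha}{2z}}\right)^{z}\right]\right|_{\varepsilon=0}=z\,\Tr\!\left[\rho(\theta)^{\frac{z-1+\alpha}{z}}\left(\frac{\partial}{\partial\theta_{j}}\rho(\theta)^{\frac{1-\alpha}{z}}\right)\right].
\end{equation}

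Finally I would evaluate this trace with the spectral calculus. Writing $\rho(\theta)=\sum_{k}\lambda_{k}\Pi_{k}$ and expanding $\partial_{\theta_{j}}\rho(\theta)^{\frac{1-\alpha}{z}}$ via Theorem~\ref{thm:divided-difference-matrix-deriv}, multiplication by $\rho(\theta)^{\frac{z-1+\alpha}{z}}$ under the trace forces all projection indices to coincide (since $\Pi_{k}\Pi_{\ell}=\delta_{k\ell}\Pi_{k}$), leaving only the diagonal divided-difference values $\frac{1-\alpha}{z}\lambda_{k}^{\frac{1-\alpha}{z}-1}$; the eigenvalue powers then telescope, $\lambda_{k}^{\frac{z-1+\alpha}{z}}\lambda_{k}^{\frac{1-\alpha}{z}-1}=\lambda_{k}^{0}=1$, so the right-hand side collapses to $z\cdot\frac{1-\alpha}{z}\sum_{k}\Tr[\Pi_{k}(\partial_{\theta_{j}}\rho(\theta))]=(1-\alpha)\,\partial_{\theta_{j}}\Tr[\rho(\theta)]=0$, using $\Tr[\rho(\theta)]=1$. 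This is exactly the computation carried out in \eqref{eq:no-2nd-deriv-trick-2}--\eqref{eq:no-2nd-deriv-trick-3}. The only point requiring care is the chain-rule step for the (generally non-integer) matrix power inside the trace, which is routine given Proposition~\ref{prop:power-function-all-powers}; everything else is bookkeeping with commuting powers of $\rho(\theta)$.
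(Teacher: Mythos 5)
Your proposal is correct, and it diverges from the paper's proof in the final step in a way worth noting. Both arguments begin identically: reduce $\frac{\partial}{\partial\varepsilon_{j}}\Tr[A_{\varepsilon}^{z}]$ to $z\Tr[A_{\varepsilon}^{z-1}(\partial_{\varepsilon_{j}}A_{\varepsilon})]$ (the paper via Corollary~\ref{cor:derivative-in-trace}, you via Proposition~\ref{prop:power-function-all-powers} plus cyclicity — equivalent routes), and then collapse the commuting powers of $\rho(\theta)$ at $\varepsilon=0$ to arrive at $z\,\Tr\!\left[\rho(\theta)^{\frac{z-1+\alpha}{z}}\left(\partial_{\theta_{j}}\rho(\theta)^{\frac{1-\alpha}{z}}\right)\right]$. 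Where you differ is in killing this trace: the paper applies Duhamel's formula (Proposition~\ref{prop:deriv-exp}) to $\rho(\theta+\varepsilon)^{\frac{1-\alpha}{z}}=e^{\frac{1-\alpha}{z}\ln\rho(\theta+\varepsilon)}$ and reduces everything to the identity $\Tr[\rho(\theta)\,\partial_{\theta_{j}}\ln\rho(\theta)]=0$, which it imports from an external reference; you instead expand $\partial_{\theta_{j}}\rho(\theta)^{\frac{1-\alpha}{z}}$ by divided differences (Theorem~\ref{thm:divided-difference-matrix-deriv}), observe that the prefactor $\rho(\theta)^{\frac{z-1+\alpha}{z}}$ forces the projection indices to coincide and the eigenvalue powers to telescope, and land on $(1-\alpha)\,\partial_{\theta_{j}}\Tr[\rho(\theta)]=0$. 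Your route is exactly the computation the paper carries out later in \eqref{eq:no-2nd-deriv-trick-2}--\eqref{eq:no-2nd-deriv-trick-3} for a different purpose, so it is fully self-contained within the paper's toolkit and avoids the external citation; it also absorbs the $\alpha=1$ case automatically (the paper treats it separately), since the factor $1-\alpha$ annihilates the expression regardless. Your opening observation — that one cannot shortcut the lemma by declaring $\varepsilon=0$ a critical point of the divergence, because nonnegativity of $D_{\alpha,z}$ is unavailable outside the data-processing range while the lemma must hold for all $\alpha,z>0$ — is a correct and relevant justification for doing the computation directly.
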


\begin{proof}
If $1-\alpha=0$, then
\begin{align}
\left.\frac{\partial}{\partial\varepsilon_{j}}\Tr\!\left[\left(\rho(\theta)^{\frac{\alpha}{2z}}\rho(\theta+\varepsilon)^{\frac{1-\alpha}{z}}\rho(\theta)^{\frac{\alpha}{2z}}\right)^{z}\right]\right|_{\varepsilon=0} & =\left.\frac{\partial}{\partial\varepsilon_{j}}\Tr\!\left[\left(\rho(\theta)^{\frac{\alpha}{2z}}\rho(\theta)^{\frac{\alpha}{2z}}\right)^{z}\right]\right|_{\varepsilon=0}\\
 & =0.
\end{align}
So suppose that $1-\alpha\neq0$, and consider that
\begin{align}
 & \frac{\partial}{\partial\varepsilon_{j}}\Tr\!\left[\left(\rho(\theta)^{\frac{\alpha}{2z}}\rho(\theta+\varepsilon)^{\frac{1-\alpha}{z}}\rho(\theta)^{\frac{\alpha}{2z}}\right)^{z}\right]\nonumber \\
 & =z\Tr\!\left[\left(\rho(\theta)^{\frac{\alpha}{2z}}\rho(\theta+\varepsilon)^{\frac{1-\alpha}{z}}\rho(\theta)^{\frac{\alpha}{2z}}\right)^{z-1}\frac{\partial}{\partial\varepsilon_{j}}\left(\rho(\theta)^{\frac{\alpha}{2z}}\rho(\theta+\varepsilon)^{\frac{1-\alpha}{z}}\rho(\theta)^{\frac{\alpha}{2z}}\right)\right]\\
 & =z\Tr\!\left[\left(\rho(\theta)^{\frac{\alpha}{2z}}\rho(\theta+\varepsilon)^{\frac{1-\alpha}{z}}\rho(\theta)^{\frac{\alpha}{2z}}\right)^{z-1}\rho(\theta)^{\frac{\alpha}{2z}}\left(\frac{\partial}{\partial\varepsilon_{j}}\rho(\theta+\varepsilon)^{\frac{1-\alpha}{z}}\right)\rho(\theta)^{\frac{\alpha}{2z}}\right]\\
 & =z\Tr\!\left[\rho(\theta)^{\frac{\alpha}{2z}}\left(\rho(\theta)^{\frac{\alpha}{2z}}\rho(\theta+\varepsilon)^{\frac{1-\alpha}{z}}\rho(\theta)^{\frac{\alpha}{2z}}\right)^{z-1}\rho(\theta)^{\frac{\alpha}{2z}}\left(\frac{\partial}{\partial\varepsilon_{j}}\rho(\theta+\varepsilon)^{\frac{1-\alpha}{z}}\right)\right],
\end{align}
where the first equality follows from Corollary~\ref{cor:derivative-in-trace}.
Now applying Proposition \ref{prop:deriv-exp}, consider that
\begin{align}
 & \frac{\partial}{\partial\varepsilon_{j}}\rho(\theta+\varepsilon)^{\frac{1-\alpha}{z}}\nonumber \\
 & =\frac{\partial}{\partial\varepsilon_{j}}e^{\left(\frac{1-\alpha}{z}\right)\ln\rho(\theta+\varepsilon)}\\
 & =\int_{0}^{1}dt\,e^{\left(\frac{1-\alpha}{z}\right)t\ln\rho(\theta+\varepsilon)}\left(\frac{\partial}{\partial\varepsilon_{j}}\left(\left(\frac{1-\alpha}{z}\right)\ln\rho(\theta+\varepsilon)\right)\right)e^{\left(\frac{1-\alpha}{z}\right)\left(1-t\right)\ln\rho(\theta+\varepsilon)}\\
 & =\frac{1-\alpha}{z}\int_{0}^{1}dt\,\rho(\theta+\varepsilon)^{\left(\frac{1-\alpha}{z}\right)t}\left(\frac{\partial}{\partial\varepsilon_{j}}\ln\rho(\theta+\varepsilon)\right)\rho(\theta+\varepsilon)^{\left(\frac{1-\alpha}{z}\right)\left(1-t\right)}.
\end{align}
It then follows that
\begin{align}
 & \frac{\partial}{\partial\varepsilon_{j}}\Tr\!\left[\left(\rho(\theta)^{\frac{\alpha}{2z}}\rho(\theta+\varepsilon)^{\frac{1-\alpha}{z}}\rho(\theta)^{\frac{\alpha}{2z}}\right)^{z}\right]\nonumber \\
 & =\left(1-\alpha\right)\int_{0}^{1}dt\,\Tr\!\left[\begin{array}{c}
\rho(\theta)^{\frac{\alpha}{2z}}\left(\rho(\theta)^{\frac{\alpha}{2z}}\rho(\theta+\varepsilon)^{\frac{1-\alpha}{z}}\rho(\theta)^{\frac{\alpha}{2z}}\right)^{z-1}\rho(\theta)^{\frac{\alpha}{2z}}\times\\
\rho(\theta+\varepsilon)^{\left(\frac{1-\alpha}{z}\right)t}\left(\frac{\partial}{\partial\varepsilon_{j}}\ln\rho(\theta+\varepsilon)\right)\rho(\theta+\varepsilon)^{\left(\frac{1-\alpha}{z}\right)\left(1-t\right)}
\end{array}\right]
\end{align}
We then find that
\begin{align}
 & \frac{1}{1-\alpha}\left.\frac{\partial}{\partial\varepsilon_{i}}\Tr\!\left[\left(\rho(\theta)^{\frac{\alpha}{2z}}\rho(\theta+\varepsilon)^{\frac{1-\alpha}{z}}\rho(\theta)^{\frac{\alpha}{2z}}\right)^{z}\right]\right|_{\varepsilon=0}\nonumber \\
 & =\left.\int_{0}^{1}dt\,\Tr\!\left[\begin{array}{c}
\rho(\theta)^{\frac{\alpha}{2z}}\left(\rho(\theta)^{\frac{\alpha}{2z}}\rho(\theta+\varepsilon)^{\frac{1-\alpha}{z}}\rho(\theta)^{\frac{\alpha}{2z}}\right)^{z-1}\rho(\theta)^{\frac{\alpha}{2z}}\times\\
\rho(\theta+\varepsilon)^{\left(\frac{1-\alpha}{z}\right)t}\left(\frac{\partial}{\partial\varepsilon_{j}}\ln\rho(\theta+\varepsilon)\right)\rho(\theta+\varepsilon)^{\left(\frac{1-\alpha}{z}\right)\left(1-t\right)}
\end{array}\right]\right|_{\varepsilon=0}\\
 & =\int_{0}^{1}dt\,\Tr\!\left[\begin{array}{c}
\rho(\theta)^{\frac{\alpha}{2z}}\left(\rho(\theta)^{\frac{\alpha}{2z}}\rho(\theta)^{\frac{1-\alpha}{z}}\rho(\theta)^{\frac{\alpha}{2z}}\right)^{z-1}\rho(\theta)^{\frac{\alpha}{2z}}\times\\
\rho(\theta)^{\left(\frac{1-\alpha}{z}\right)t}\left.\left(\frac{\partial}{\partial\varepsilon_{j}}\ln\rho(\theta+\varepsilon)\right)\right|_{\varepsilon=0}\rho(\theta)^{\left(\frac{1-\alpha}{z}\right)\left(1-t\right)}
\end{array}\right]\\
 & =\int_{0}^{1}dt\,\Tr\!\left[\begin{array}{c}
\rho(\theta)^{\frac{\alpha}{2z}}\rho(\theta)^{\frac{1-\alpha}{z}\left(1-t\right)}\rho(\theta)^{\frac{z-1}{z}}\rho(\theta)^{\frac{1-\alpha}{z}t}\rho(\theta)^{\frac{\alpha}{2z}}\times\\
\left(\frac{\partial}{\partial\theta_{j}}\ln\rho(\theta)\right)
\end{array}\right]\\
 & =\Tr\!\left[\rho(\theta)^{\frac{\alpha}{2z}}\rho(\theta)^{\frac{1-\alpha}{z}}\rho(\theta)^{\frac{z-1}{z}}\rho(\theta)^{\frac{\alpha}{2z}}\left(\frac{\partial}{\partial\theta_{j}}\ln\rho(\theta)\right)\right]\\
 & =\Tr\!\left[\rho(\theta)\left(\frac{\partial}{\partial\theta_{j}}\ln\rho(\theta)\right)\right]\\
 & =0.
\end{align}
The last equality follows from (E78)--(E84) of \cite{Minervini2025}.
\end{proof}
\begin{rem}
For the special case of a single parameter, Theorem~\ref{thm:fisher-info-from-alpha-z}
was stated in \cite{May2018}. Therein, a proof was not given and
instead a brief suggestion for establishing the proof was provided
in \cite[Appendix~A]{May2018}, with the authors indicating that second
derivatives of matrix powers are needed to carry out the calculation.
In contrast, in the approach given above, there is no need to compute
a second derivative, due to the equality established in \eqref{eq:no-2nd-deriv}.
As such, the proof given above is presumably simpler than the approach
described in \cite[Appendix~A]{May2018}.
\begin{rem}
For two pure states $\psi\equiv|\psi\rangle\!\langle\psi|$ and $\phi\equiv|\phi\rangle\!\langle\phi|$
and $\alpha\in\left(0,1\right)$, the $\alpha$-$z$ R\'enyi relative
entropy reduces as follows:
\begin{align}
D_{\alpha,z}(\psi\|\phi) & \coloneqq\frac{1}{\alpha-1}\ln\Tr\!\left[\left(\phi^{\frac{1-\alpha}{2z}}\psi^{\frac{\alpha}{z}}\phi^{\frac{1-\alpha}{2z}}\right)^{z}\right]\\
 & =\frac{1}{\alpha-1}\ln\Tr\!\left[\left(|\phi\rangle\langle\phi|\psi\rangle\langle\psi|\phi\rangle\langle\phi|\right)^{z}\right]\\
 & =\frac{1}{\alpha-1}\ln\Tr\!\left[\left(|\phi\rangle\left|\langle\phi|\psi\rangle\right|^{2}\langle\phi|\right)^{z}\right]\\
 & =\frac{1}{\alpha-1}\ln\left|\langle\phi|\psi\rangle\right|^{2z}\Tr\!\left[\left(|\phi\rangle\!\langle\phi|\right)^{z}\right]\\
 & =\frac{1}{\alpha-1}\ln\left|\langle\phi|\psi\rangle\right|^{2z}\Tr\!\left[|\phi\rangle\!\langle\phi|\right]\\
 & =\frac{1}{\alpha-1}\ln\left|\langle\phi|\psi\rangle\right|^{2z}\\
 & =\frac{z}{1-\alpha}\left(-\ln\left|\langle\phi|\psi\rangle\right|^{2}\right)\\
 & =\frac{z}{1-\alpha}\left(-\ln F(\psi,\phi)\right),
\end{align}
where $F(\psi,\phi)\coloneqq\left|\langle\phi|\psi\rangle\right|^{2}$
is the fidelity. As expected, for $\psi\neq\phi$, the limit $D_{\alpha,z}(\psi\|\phi)\to\infty$
holds as $\alpha\to1$ for fixed $z>0$ or $D_{\alpha,z}(\psi\|\phi)\to\infty$
as $z\to\infty$ for fixed $\alpha\in\left(0,1\right)$, consistent
with the fact that $D_{\alpha,z}$ converges to the standard (Umegaki)
relative entropy in these limits and this latter quantity being equal
to $\infty$ for two distinct pure states. Thus, for a parameterized
family $\left(\psi(\theta)\right)_{\theta\in\mathbb{R}^{L}}$ of pure
states, the $\alpha$-$z$ information matrix reduces to
\begin{align}
\left[I_{\alpha,z}(\theta)\right]_{i,j} & \coloneqq\frac{1}{\alpha}\left.\frac{\partial^{2}}{\partial\varepsilon_{i}\partial\varepsilon_{j}}D_{\alpha,z}(\psi(\theta)\|\psi(\theta+\varepsilon))\right|_{\varepsilon=0}.\\
 & =\frac{z}{\alpha\left(1-\alpha\right)}\left.\frac{\partial^{2}}{\partial\varepsilon_{i}\partial\varepsilon_{j}}\left(-\ln F(\psi(\theta),\psi(\theta+\varepsilon))\right)\right|_{\varepsilon=0}\\
 & =\frac{2z}{\alpha\left(1-\alpha\right)}\Re\left[\langle\partial_{i}\psi(\theta)|\left(I-|\psi(\theta)\rangle\!\langle\psi(\theta)|\right)|\partial_{j}\psi(\theta)\rangle\right],
\end{align}
consistent with the known result \cite[Theorem~2.5]{Liu2019} that
\begin{equation}
\left.\frac{\partial^{2}}{\partial\varepsilon_{i}\partial\varepsilon_{j}}\left(-2\ln F(\psi(\theta),\psi(\theta+\varepsilon))\right)\right|_{\varepsilon=0}=4\Re\left[\langle\partial_{i}\psi(\theta)|\left(I-|\psi(\theta)\rangle\!\langle\psi(\theta)|\right)|\partial_{j}\psi(\theta)\rangle\right].\label{eq:fisher-bures-pure-states}
\end{equation}
Appendix~\ref{app:Fisher-Bures-pure} reviews the proof of \eqref{eq:fisher-bures-pure-states}.
\end{rem}

\end{rem}

\begin{cor}
\label{cor:operator-monotone-a-z}As a consequence of Theorem~\ref{thm:q-fisher-info-char}
and Theorem~\ref{thm:fisher-info-from-alpha-z}, the following function
is operator monotone on $x\in\left(0,\infty\right)$ for the values
of $\alpha$ and $z$ stated in Fact \ref{fact:a-z-data-proc}:
\begin{equation}
x\mapsto\frac{\alpha\left(1-\alpha\right)}{z}\frac{\left(x-1\right)\left(x^{\frac{1}{z}}-1\right)}{\left(x^{\frac{1-\alpha}{z}}-1\right)\left(x^{\frac{\alpha}{z}}-1\right)}.
\end{equation}
\end{cor}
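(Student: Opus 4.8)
The plan is to read the statement off from Theorem~\ref{thm:q-fisher-info-char} once one knows that $D_{\alpha,z}$ is a smooth quantum divergence for the relevant parameter values, with Theorem~\ref{thm:fisher-info-from-alpha-z} supplying the explicit eigenvalue function. First I would fix $(\alpha,z)$ in one of the two ranges listed in Fact~\ref{fact:a-z-data-proc} and check that $D_{\alpha,z}$ satisfies the three requirements of a smooth quantum divergence from Section~\ref{subsec:Quantum-Fisher}. Second-order differentiability on positive definite states is immediate from the smoothness of $X\mapsto X^{p}$ and $X\mapsto\ln\Tr[X]$ on their domains; this is exactly the regularity already used in the proof of Theorem~\ref{thm:fisher-info-from-alpha-z}. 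The data-processing inequality holds precisely by Fact~\ref{fact:a-z-data-proc}. Faithfulness is standard: $D_{\alpha,z}(\rho\|\rho)=\frac{1}{\alpha-1}\ln\Tr[\rho]=0$, and the converse implication follows from non-negativity together with the equality case for data processing.

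Next I would apply Theorem~\ref{thm:q-fisher-info-char} with $\boldsymbol{D}=D_{\alpha,z}$, producing a function $\zeta$ obeying properties~1--4 there such that the Hessian $\frac{\partial^{2}}{\partial\varepsilon_{i}\partial\varepsilon_{j}}D_{\alpha,z}(\rho(\theta)\|\rho(\theta+\varepsilon))\big|_{\varepsilon=0}$ equals $[I_{\zeta}(\theta)]_{i,j}$ for every second-order differentiable family of positive definite states. By the definition \eqref{eq:a-z-fisher-info-def} together with Theorem~\ref{thm:fisher-info-from-alpha-z}, that same Hessian equals $\alpha\sum_{k,\ell}\zeta_{\alpha,z}(\lambda_{k},\lambda_{\ell})\Tr[\Pi_{k}(\partial_{i}\rho(\theta))\Pi_{\ell}(\partial_{j}\rho(\theta))]$. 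Since a symmetric function on $\mathbb{R}_{+}\times\mathbb{R}_{+}$ is determined pointwise by the bilinear form it induces through this formula --- one realizes an arbitrary pair of distinct positive eigenvalues and arbitrary traceless Hermitian derivatives already with qubit families, and passes from the probability simplex to all of $\mathbb{R}_{+}\times\mathbb{R}_{+}$ using homogeneity (property~3) --- this identifies $\zeta=\alpha\,\zeta_{\alpha,z}$.

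Finally, property~4 of Theorem~\ref{thm:q-fisher-info-char} asserts that $t\mapsto 1/\zeta(t,1)$ is operator monotone on $(0,\infty)$, so that $t\mapsto 1/\zeta_{\alpha,z}(t,1)=\alpha/\zeta(t,1)$ is operator monotone as well, since operator monotone functions are closed under multiplication by positive constants. Reading off $\zeta_{\alpha,z}(t,1)$ from \eqref{eq:limit-for-a-z-eigenval-func}, one gets for $t\neq 1$
\begin{equation}
\frac{1}{\zeta_{\alpha,z}(t,1)}=\frac{\alpha\left(1-\alpha\right)}{z}\cdot\frac{\left(t-1\right)\left(t^{\frac{1}{z}}-1\right)}{\left(t^{\frac{1-\alpha}{z}}-1\right)\left(t^{\frac{\alpha}{z}}-1\right)},
\end{equation}
while the value at $t=1$ is $1/\zeta_{\alpha,z}(1,1)=1$, which agrees with the $t\to1$ limit of the right-hand side (this continuity being part of the content of Appendix~\ref{app:limit-for-a-z-eigenval-func}); this is exactly the function in the statement.

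The only step that is not pure bookkeeping is the identification $\zeta=\alpha\,\zeta_{\alpha,z}$, which rests on the uniqueness of the representing function in Theorem~\ref{thm:q-fisher-info-char} --- that distinct symmetric, homogeneous eigenvalue functions induce distinct information matrices. If one prefers not to pin down the constant, it suffices to observe that Theorems~\ref{thm:q-fisher-info-char} and~\ref{thm:fisher-info-from-alpha-z} together force $1/\zeta_{\alpha,z}(t,1)$ to be a fixed positive multiple of the operator monotone function $1/\zeta(t,1)$, which already gives the conclusion; everything else is routine.
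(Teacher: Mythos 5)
Your proposal is correct and follows exactly the route the paper intends: the corollary is stated without a separate proof precisely because it is the combination of Theorem~\ref{thm:q-fisher-info-char} (which forces $t\mapsto1/\zeta(t,1)$ to be operator monotone whenever the divergence satisfies data processing) with the explicit identification $\zeta=\alpha\,\zeta_{\alpha,z}$ from Theorem~\ref{thm:fisher-info-from-alpha-z}, plus closure of operator monotonicity under positive scalar multiples. Your additional care about uniqueness of the representing function and the homogeneity argument for extending from the simplex to $\mathbb{R}_{+}\times\mathbb{R}_{+}$ fills in details the paper leaves implicit, but it is the same argument.
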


\section{Special cases of $\alpha$-$z$ information matrices}

\label{sec:Special-cases-a-z}In this section, I consider special
cases of the $\alpha$-$z$ information matrices, as previously done
as well in \cite{May2018,Ciaglia2018}, which amounts to evaluating
them for particular values or in various limits. In particular, the
Kubo--Mori information matrix arises for all $\alpha\in\left(0,1\right)\cup\left(1,\infty\right)$
in the limit $z\to\infty$ or for all $z>0$ in the limit $\alpha\to1$
(Section~\ref{subsec:Kubo=002013Mori-information-matrix-from-a-z}).
The sandwiched R\'enyi information matrix arises when $z=\alpha$
(Section~\ref{subsec:Sandwiched-Rnyi-information-from-a-z}), and
the Petz--R\'enyi information matrix arises when $z=1$ (Section~\ref{subsec:Petz=002013Rnyi-information-matrices-from-a-z}).
All information matrices considered in this section are evaluated
for a second-order differentiable family of positive definite states,
$\left(\rho(\theta)\right)_{\theta\in\mathbb{R}^{L}}$.

\subsection{Kubo--Mori information matrix}

\label{subsec:Kubo=002013Mori-information-matrix-from-a-z}
\begin{prop}
For all $x,y>0$ such that $x\neq y$ and for all $\alpha\in\left(0,1\right)\cup\left(1,\infty\right)$,
\begin{equation}
\lim_{z\to\infty}\zeta_{\alpha,z}(x,y)=\frac{\ln x-\ln y}{x-y}.\label{eq:z-infty-limit-prop-statement}
\end{equation}
Thus, for all $\alpha\in\left(0,1\right)\cup\left(1,\infty\right)$,
\begin{equation}
\lim_{z\to\infty}I_{\alpha,z}(\theta)=I_{\KM}(\theta).\label{eq:a-z-to-KM-z-infty}
\end{equation}
\end{prop}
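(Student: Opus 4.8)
The plan is to first establish the pointwise eigenvalue-function limit \eqref{eq:z-infty-limit-prop-statement} and then obtain the matrix convergence \eqref{eq:a-z-to-KM-z-infty} by passing the limit through the finite spectral sum in \eqref{eq:a-z-fisher-formula}. For the pointwise statement, I would fix $x,y>0$ with $x\neq y$, substitute $w\coloneqq 1/z$ (so $w\to 0^{+}$ as $z\to\infty$), and rewrite \eqref{eq:limit-for-a-z-eigenval-func} for the case $x\neq y$, after multiplying and dividing each power difference by $w$, as
\begin{equation}
\zeta_{\alpha,z}(x,y)=\frac{1}{\alpha(1-\alpha)}\cdot\frac{1}{x-y}\cdot\frac{\left(\dfrac{x^{(1-\alpha)w}-y^{(1-\alpha)w}}{w}\right)\left(\dfrac{x^{\alpha w}-y^{\alpha w}}{w}\right)}{\dfrac{x^{w}-y^{w}}{w}}.
\end{equation}
This rewriting is the one piece of bookkeeping I would do carefully, since it is where the prefactor $z$ and the factor $1/(x^{1/z}-y^{1/z})$ must be fully absorbed before any limit is taken.

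Next, for any fixed real $a$ the map $w\mapsto x^{aw}-y^{aw}=e^{aw\ln x}-e^{aw\ln y}$ is smooth and vanishes at $w=0$, so by the definition of the derivative (equivalently, L'Hôpital),
\begin{equation}
\lim_{w\to 0}\frac{x^{aw}-y^{aw}}{w}=a\left(\ln x-\ln y\right).
\end{equation}
Taking $a\in\{1-\alpha,\alpha,1\}$ and combining yields
\begin{equation}
\lim_{z\to\infty}\zeta_{\alpha,z}(x,y)=\frac{1}{\alpha(1-\alpha)}\cdot\frac{1}{x-y}\cdot\frac{(1-\alpha)(\ln x-\ln y)\cdot\alpha(\ln x-\ln y)}{\ln x-\ln y}=\frac{\ln x-\ln y}{x-y},
\end{equation}
which is \eqref{eq:z-infty-limit-prop-statement}, and the right-hand side equals $f_{\ln}^{[1]}(x,y)$. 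The only mildly delicate factor is the $0/0$ quotient $\frac{x^{\alpha w}-y^{\alpha w}}{x^{w}-y^{w}}$; I would handle it by writing $u\coloneqq x^{w}$, $v\coloneqq y^{w}$ and recognizing it as the first divided difference $\frac{u^{\alpha}-v^{\alpha}}{u-v}$, which extends continuously to $(0,\infty)^{2}$ with diagonal value $\alpha t^{\alpha-1}$; since $(u,v)\to(1,1)$, this factor tends to $\alpha$, and multiplying by $\frac{x^{w}-y^{w}}{w}\to\ln x-\ln y$ reproduces the middle contribution $\alpha(\ln x-\ln y)$ without any awkward double use of L'Hôpital.

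For the matrix statement, I would invoke \eqref{eq:a-z-fisher-formula},
\begin{equation}
\left[I_{\alpha,z}(\theta)\right]_{i,j}=\sum_{k,\ell}\zeta_{\alpha,z}(\lambda_{k},\lambda_{\ell})\Tr\!\left[\Pi_{k}\left(\partial_{i}\rho(\theta)\right)\Pi_{\ell}\left(\partial_{j}\rho(\theta)\right)\right],
\end{equation}
in which the $\lambda_{k}$, $\Pi_{k}$, and $\partial_{i}\rho(\theta)$ are independent of $z$. Since the sum is finite, the limit $z\to\infty$ passes termwise: for $\lambda_{k}=\lambda_{\ell}$ one has $\zeta_{\alpha,z}(\lambda_{k},\lambda_{\ell})=1/\lambda_{k}=f_{\ln}^{[1]}(\lambda_{k},\lambda_{\ell})$ for every $z$, and for $\lambda_{k}\neq\lambda_{\ell}$ the pointwise limit gives $\zeta_{\alpha,z}(\lambda_{k},\lambda_{\ell})\to f_{\ln}^{[1]}(\lambda_{k},\lambda_{\ell})$. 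Therefore
\begin{equation}
\lim_{z\to\infty}\left[I_{\alpha,z}(\theta)\right]_{i,j}=\sum_{k,\ell}f_{\ln}^{[1]}(\lambda_{k},\lambda_{\ell})\Tr\!\left[\Pi_{k}\left(\partial_{i}\rho(\theta)\right)\Pi_{\ell}\left(\partial_{j}\rho(\theta)\right)\right]=\left[I_{\KM}(\theta)\right]_{i,j}
\end{equation}
by \eqref{eq:kubo-mori-elements-3}, proving \eqref{eq:a-z-to-KM-z-infty}. There is no substantial obstacle here: the argument rests on the elementary limit $\frac{x^{aw}-y^{aw}}{w}\to a(\ln x-\ln y)$ and the continuity of the first divided difference of $t\mapsto t^{\alpha}$ near $t=1$, while the matrix limit is immediate from finiteness of the spectral sum and the $z$-independence of its geometric data; the only care points are the initial rewriting of $\zeta_{\alpha,z}$ and the treatment of the single $0/0$ factor, both flagged above.
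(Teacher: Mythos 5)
Your proposal is correct and follows essentially the same route as the paper's proof: both substitute $h=1/z$ and reduce everything to the elementary limit $\lim_{h\to 0}\frac{x^{ah}-y^{ah}}{h}=a\left(\ln x-\ln y\right)$, then pass the limit through the finite spectral sum using \eqref{eq:kubo-mori-elements-3}. The only cosmetic difference is that you distribute the factor $1/w$ across all three power differences, whereas the paper groups the last two into the single ratio $\frac{x^{\alpha h}-y^{\alpha h}}{x^{h}-y^{h}}$ and evaluates it by L'H\^opital to get $\alpha$; both are valid and give the same computation.
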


\begin{proof}
Consider that
\begin{align}
\lim_{z\to\infty}\zeta_{\alpha,z}(x,y) & =\lim_{z\to\infty}\frac{z}{\alpha\left(1-\alpha\right)}\left(\frac{x^{\frac{1-\alpha}{z}}-y^{\frac{1-\alpha}{z}}}{x-y}\right)\left(\frac{x^{\frac{\alpha}{z}}-y^{\frac{\alpha}{z}}}{x^{\frac{1}{z}}-y^{\frac{1}{z}}}\right)\\
 & =\lim_{z\to\infty}\frac{1}{\alpha\left(1-\alpha\right)}\left(\frac{1}{x-y}\right)\left(\frac{x^{\frac{1-\alpha}{z}}-y^{\frac{1-\alpha}{z}}}{\frac{1}{z}}\right)\left(\frac{x^{\frac{\alpha}{z}}-y^{\frac{\alpha}{z}}}{x^{\frac{1}{z}}-y^{\frac{1}{z}}}\right)\\
 & =\frac{1}{\alpha\left(1-\alpha\right)}\left(\frac{1}{x-y}\right)\left[\lim_{z\to\infty}\left(\frac{x^{\frac{1-\alpha}{z}}-y^{\frac{1-\alpha}{z}}}{\frac{1}{z}}\right)\right]\left[\lim_{z\to\infty}\left(\frac{x^{\frac{\alpha}{z}}-y^{\frac{\alpha}{z}}}{x^{\frac{1}{z}}-y^{\frac{1}{z}}}\right)\right]\\
 & =\frac{1}{\alpha\left(1-\alpha\right)}\left(\frac{1}{x-y}\right)\left[\lim_{h\to0}\left(\frac{x^{\left(1-\alpha\right)h}-y^{\left(1-\alpha\right)h}}{h}\right)\right]\left[\lim_{h\to0}\left(\frac{x^{\alpha h}-y^{\alpha h}}{x^{h}-y^{h}}\right)\right].\label{eq:z-infty-limit-proof-1}
\end{align}
Now consider that
\begin{align}
\lim_{h\to0}\frac{x^{\left(1-\alpha\right)h}-y^{\left(1-\alpha\right)h}}{h} & =\lim_{h\to0}\frac{x^{\left(1-\alpha\right)h}}{h}-\lim_{h\to0}\frac{y^{\left(1-\alpha\right)h}}{h}\\
 & =\left(1-\alpha\right)\ln x-\left(1-\alpha\right)\ln y\\
 & =\left(1-\alpha\right)\left(\ln x-\ln y\right)\label{eq:z-infty-limit-proof-2}
\end{align}
and
\begin{align}
\lim_{h\to0}\frac{x^{\alpha h}-y^{\alpha h}}{x^{h}-y^{h}} & =\lim_{h\to0}\frac{x^{\alpha h}\alpha\ln x-y^{\alpha h}\alpha\ln y}{x^{h}\ln x-y^{h}\ln y}\\
 & =\alpha\lim_{h\to0}\frac{x^{\alpha h}\ln x-y^{\alpha h}\ln y}{x^{h}\ln x-y^{h}\ln y}\\
 & =\alpha\left(\frac{\ln x-\ln y}{\ln x-\ln y}\right)\\
 & =\alpha.\label{eq:z-infty-limit-proof-3}
\end{align}
Then putting together \eqref{eq:z-infty-limit-proof-1}, \eqref{eq:z-infty-limit-proof-2},
and \eqref{eq:z-infty-limit-proof-3}, we conclude \eqref{eq:z-infty-limit-prop-statement}.
The equality in \eqref{eq:a-z-to-KM-z-infty} follows from \eqref{eq:z-infty-limit-prop-statement}
and \eqref{eq:kubo-mori-elements-3}.
\end{proof}
Consistent with \cite[Proposition~3]{Lin2015}, the following holds:
\begin{prop}
For all $x,y,z>0$ such that $x\neq y$,
\begin{equation}
\lim_{\alpha\to1}\zeta_{\alpha,z}(x,y)=\frac{\ln x-\ln y}{x-y}.\label{eq:alpha-1-limit-prop-statement}
\end{equation}
Thus, for all $z>0$,
\begin{equation}
\lim_{\alpha\to1}I_{\alpha,z}(\theta)=I_{\KM}(\theta).\label{eq:a-z-to-KM-a-1}
\end{equation}
\end{prop}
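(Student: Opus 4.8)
The plan is to mimic the proof of \eqref{eq:z-infty-limit-prop-statement}: factor $\zeta_{\alpha,z}(x,y)$ into a product of pieces each of which has an easily computed limit as $\alpha\to1$. Specifically, for $x\neq y$ and $\alpha\in\left(0,1\right)\cup\left(1,\infty\right)$, writing $\frac{z}{\alpha\left(1-\alpha\right)}=\frac{1}{\alpha}\cdot\frac{z}{1-\alpha}$ and pulling $\frac{z}{1-\alpha}$ into the first bracketed factor of \eqref{eq:limit-for-a-z-eigenval-func}, one has
\[
\zeta_{\alpha,z}(x,y)=\frac{1}{\alpha}\left(\frac{x^{\frac{1-\alpha}{z}}-y^{\frac{1-\alpha}{z}}}{\frac{1-\alpha}{z}}\right)\frac{1}{x-y}\left(\frac{x^{\frac{\alpha}{z}}-y^{\frac{\alpha}{z}}}{x^{\frac{1}{z}}-y^{\frac{1}{z}}}\right).
\]

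Then I would evaluate the limit of each factor separately. The factor $\frac{1}{\alpha}$ tends to $1$. For the second factor, substituting $h\coloneqq\frac{1-\alpha}{z}$, which tends to $0$ as $\alpha\to1$ for fixed $z>0$, gives $\frac{x^{h}-y^{h}}{h}=\frac{x^{h}-1}{h}-\frac{y^{h}-1}{h}$, which converges to $\ln x-\ln y$ using the elementary limit $\lim_{h\to0}\frac{a^{h}-1}{h}=\ln a$ valid for all $a>0$. The factor $\frac{1}{x-y}$ is constant. Finally, the last factor is continuous in $\alpha$ at $\alpha=1$, where both numerator and denominator equal $x^{\frac{1}{z}}-y^{\frac{1}{z}}\neq0$ (since $x\neq y$), so it tends to $1$. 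Multiplying these four limits yields \eqref{eq:alpha-1-limit-prop-statement}.

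To conclude \eqref{eq:a-z-to-KM-a-1}, I would plug the pointwise limit into the formula \eqref{eq:a-z-fisher-formula}. The spectral decomposition $\rho(\theta)=\sum_{k}\lambda_{k}\Pi_{k}$ does not depend on $\alpha$, so each summand $\zeta_{\alpha,z}(\lambda_{k},\lambda_{\ell})\Tr\!\left[\Pi_{k}\left(\partial_{i}\rho(\theta)\right)\Pi_{\ell}\left(\partial_{j}\rho(\theta)\right)\right]$ converges: for $\lambda_{k}\neq\lambda_{\ell}$ by the first part of the proposition, and for $\lambda_{k}=\lambda_{\ell}$ trivially, since $\zeta_{\alpha,z}(x,x)=\frac{1}{x}=f_{\ln}^{\left[1\right]}(x,x)$ independently of $\alpha$. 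As the sum over $k,\ell$ is finite, termwise convergence gives $\lim_{\alpha\to1}\left[I_{\alpha,z}(\theta)\right]_{i,j}=\sum_{k,\ell}f_{\ln}^{\left[1\right]}(\lambda_{k},\lambda_{\ell})\Tr\!\left[\Pi_{k}\left(\partial_{i}\rho(\theta)\right)\Pi_{\ell}\left(\partial_{j}\rho(\theta)\right)\right]$, which equals $\left[I_{\KM}(\theta)\right]_{i,j}$ by \eqref{eq:kubo-mori-elements-3}.

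I do not anticipate a genuine obstacle here; the argument is a routine limit computation, strictly easier than that of the $z\to\infty$ case, since the last factor of $\zeta_{\alpha,z}$ is manifestly continuous at $\alpha=1$. The only minor point worth spelling out is the substitution $h=\frac{1-\alpha}{z}$ that reduces the second factor to a difference of two standard exponential-derivative limits, together with the observation that the diagonal ($x=y$) values of $\zeta_{\alpha,z}$ are already $\alpha$-independent and so require no separate limiting argument for the matrix statement.
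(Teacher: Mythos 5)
Your proposal is correct and follows essentially the same route as the paper: both factor $\zeta_{\alpha,z}(x,y)$ so that the only nontrivial limit is $\lim_{\alpha\to1}\frac{x^{\frac{1-\alpha}{z}}-y^{\frac{1-\alpha}{z}}}{1-\alpha}$ (which you evaluate via the elementary limit $\lim_{h\to0}\frac{a^{h}-1}{h}=\ln a$ where the paper uses L'H\^{o}pital's rule, a cosmetic difference), and both then deduce the matrix statement from \eqref{eq:kubo-mori-elements-3}. Your explicit remarks on termwise convergence of the finite sum and on the $\alpha$-independence of the diagonal values $\zeta_{\alpha,z}(x,x)=\frac{1}{x}$ only make more careful what the paper leaves implicit.
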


\begin{proof}
Consider that
\begin{align}
\lim_{\alpha\to1}\zeta_{\alpha,z}(x,y) & =\lim_{\alpha\to1}\frac{z}{\alpha\left(1-\alpha\right)}\left(\frac{x^{\frac{1-\alpha}{z}}-y^{\frac{1-\alpha}{z}}}{x-y}\right)\left(\frac{x^{\frac{\alpha}{z}}-y^{\frac{\alpha}{z}}}{x^{\frac{1}{z}}-y^{\frac{1}{z}}}\right)\\
 & =\left(\frac{z}{x-y}\right)\lim_{\alpha\to1}\frac{1}{\alpha}\left(\frac{x^{\frac{1-\alpha}{z}}-y^{\frac{1-\alpha}{z}}}{1-\alpha}\right)\left(\frac{x^{\frac{\alpha}{z}}-y^{\frac{\alpha}{z}}}{x^{\frac{1}{z}}-y^{\frac{1}{z}}}\right)\\
 & =\left(\frac{z}{x-y}\right)\left(\lim_{\alpha\to1}\frac{1}{\alpha}\right)\left(\lim_{\alpha\to1}\frac{x^{\frac{1-\alpha}{z}}-y^{\frac{1-\alpha}{z}}}{1-\alpha}\right)\left(\lim_{\alpha\to1}\frac{x^{\frac{\alpha}{z}}-y^{\frac{\alpha}{z}}}{x^{\frac{1}{z}}-y^{\frac{1}{z}}}\right)\\
 & =\left(\frac{z}{x-y}\right)\left(\lim_{\alpha\to1}\frac{x^{\frac{1-\alpha}{z}}-y^{\frac{1-\alpha}{z}}}{1-\alpha}\right)\left(\frac{x^{\frac{1}{z}}-y^{\frac{1}{z}}}{x^{\frac{1}{z}}-y^{\frac{1}{z}}}\right)\\
 & =\left(\frac{z}{x-y}\right)\left(\lim_{\alpha\to1}\frac{x^{\frac{1-\alpha}{z}}-y^{\frac{1-\alpha}{z}}}{1-\alpha}\right)\\
 & =\left(\frac{z}{x-y}\right)\left(\lim_{\alpha\to1}\frac{x^{\frac{1-\alpha}{z}}\left(-\frac{1}{z}\ln x\right)-y^{\frac{1-\alpha}{z}}\left(-\frac{1}{z}\ln y\right)}{-1}\right)\\
 & =\left(\frac{z}{x-y}\right)\frac{1}{z}\left(\ln x-\ln y\right)\\
 & =\frac{\ln x-\ln y}{x-y}.\label{eq:alpha-1-limit-proof-1}
\end{align}
The equality in \eqref{eq:a-z-to-KM-a-1} follows from \eqref{eq:alpha-1-limit-prop-statement}
and \eqref{eq:kubo-mori-elements-3}.
\end{proof}

\subsection{Petz--R\'enyi information matrices}

\label{subsec:Petz=002013Rnyi-information-matrices-from-a-z}The Petz--R\'enyi
relative entropy is defined for positive definite states $\rho$ and
$\sigma$ and $\alpha\in\left(0,1\right)\cup\left(1,\infty\right)$
as \cite{Petz1985,Petz1986}
\begin{equation}
\overline{D}_{\alpha}(\rho\|\sigma)\coloneqq\frac{1}{\alpha-1}\ln\Tr\!\left[\rho^{\alpha}\sigma^{1-\alpha}\right].
\end{equation}
It obeys the data-processing inequality for $\alpha\in\left(0,1\right)\cup\left(1,2\right]$
\cite{Petz1985,Petz1986}. It is a special case of the $\alpha$-$z$
R\'enyi relative entropy when $z=1$.

The elements of the Petz--R\'enyi information matrix, considered
previously in \cite{Hasegawa1993}, are defined for all $\alpha\in\left(0,1\right)\cup\left(1,\infty\right)$
as
\begin{align}
\left[\overline{I}_{\alpha}(\theta)\right]_{i,j} & \coloneqq\frac{1}{\alpha}\left.\frac{\partial^{2}}{\partial\varepsilon_{i}\partial\varepsilon_{j}}\overline{D}_{\alpha}(\rho(\theta)\|\rho(\theta+\varepsilon))\right|_{\varepsilon=0}\\
 & =\frac{1}{\alpha}\left.\frac{\partial^{2}}{\partial\varepsilon_{i}\partial\varepsilon_{j}}D_{\alpha,1}(\rho(\theta)\|\rho(\theta+\varepsilon))\right|_{\varepsilon=0}
\end{align}
Given that, for all $x,y>0$ such that $x\neq y$,
\begin{align}
\lim_{z\to1}\zeta_{\alpha,z}(x,y) & =\lim_{z\to1}\frac{z}{\alpha\left(1-\alpha\right)}\left(\frac{x^{\frac{1-\alpha}{z}}-y^{\frac{1-\alpha}{z}}}{x-y}\right)\left(\frac{x^{\frac{\alpha}{z}}-y^{\frac{\alpha}{z}}}{x^{\frac{1}{z}}-y^{\frac{1}{z}}}\right)\\
 & =\frac{1}{\alpha\left(1-\alpha\right)}\left(\frac{x^{1-\alpha}-y^{1-\alpha}}{x-y}\right)\left(\frac{x^{\alpha}-y^{\alpha}}{x-y}\right)\\
 & =\frac{1}{\alpha\left(1-\alpha\right)}\frac{\left(x^{\alpha}-y^{\alpha}\right)\left(x^{1-\alpha}-y^{1-\alpha}\right)}{\left(x-y\right)^{2}},
\end{align}
we can conclude the following corollary of Theorem~\ref{thm:fisher-info-from-alpha-z}:
\begin{cor}
\label{cor:Petz-Renyi-special-case}For all $\alpha\in\left(0,1\right)\cup\left(1,\infty\right)$,
the elements of the Petz--R\'enyi information matrix are as follows:
\begin{equation}
\left[\overline{I}_{\alpha}(\theta)\right]_{i,j}=\sum_{k,\ell}\overline{\zeta}_{\alpha}(\lambda_{k},\lambda_{\ell})\Tr\!\left[\Pi_{k}\left(\partial_{i}\rho(\theta)\right)\Pi_{\ell}\left(\partial_{j}\rho(\theta)\right)\right],
\end{equation}
where $\partial_{i}\equiv\frac{\partial}{\partial\theta_{i}}$, the
spectral decomposition of $\rho(\theta)$ is given by $\rho(\theta)=\sum_{k}\lambda_{k}\Pi_{k}$,
and for all $x,y>0$,
\begin{equation}
\overline{\zeta}_{\alpha}(x,y)\coloneqq\begin{cases}
\frac{1}{\alpha\left(1-\alpha\right)}\frac{\left(x^{\alpha}-y^{\alpha}\right)\left(x^{1-\alpha}-y^{1-\alpha}\right)}{\left(x-y\right)^{2}} & :x\neq y\\
\frac{1}{x} & :x=y
\end{cases}.\label{eq:limit-for-a-z-eigenval-func-1}
\end{equation}
\end{cor}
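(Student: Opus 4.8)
The plan is to obtain this as a direct specialization of Theorem~\ref{thm:fisher-info-from-alpha-z} at $z=1$. The key point, already recorded above, is that the Petz--R\'enyi relative entropy $\overline{D}_\alpha$ coincides with the $\alpha$-$z$ R\'enyi relative entropy $D_{\alpha,z}$ at $z=1$ (indeed $\Tr[\sigma^{\frac{1-\alpha}{2}}\rho^\alpha\sigma^{\frac{1-\alpha}{2}}]=\Tr[\rho^\alpha\sigma^{1-\alpha}]$ by cyclicity of the trace), so that $\overline{I}_\alpha(\theta)=I_{\alpha,1}(\theta)$ for every $\alpha\in\left(0,1\right)\cup\left(1,\infty\right)$. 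Since Theorem~\ref{thm:fisher-info-from-alpha-z} is valid for all $z>0$ --- in particular at $z=1$ --- substituting $z=1$ into \eqref{eq:a-z-fisher-formula} immediately yields
\begin{equation*}
\left[\overline{I}_\alpha(\theta)\right]_{i,j}=\sum_{k,\ell}\zeta_{\alpha,1}(\lambda_k,\lambda_\ell)\Tr\!\left[\Pi_k\left(\partial_i\rho(\theta)\right)\Pi_\ell\left(\partial_j\rho(\theta)\right)\right],
\end{equation*}
and the entire task reduces to checking that $\zeta_{\alpha,1}=\overline{\zeta}_\alpha$.

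For $x\neq y$ this is elementary: putting $z=1$ in the off-diagonal expression in \eqref{eq:limit-for-a-z-eigenval-func}, namely $\frac{z}{\alpha(1-\alpha)}\left(\frac{x^{(1-\alpha)/z}-y^{(1-\alpha)/z}}{x-y}\right)\left(\frac{x^{\alpha/z}-y^{\alpha/z}}{x^{1/z}-y^{1/z}}\right)$, collapses the factor $x^{1/z}-y^{1/z}$ to $x-y$ and gives $\frac{1}{\alpha(1-\alpha)}\frac{(x^\alpha-y^\alpha)(x^{1-\alpha}-y^{1-\alpha})}{(x-y)^2}$, which is precisely $\overline{\zeta}_\alpha(x,y)$ from \eqref{eq:limit-for-a-z-eigenval-func-1}. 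This is the same computation displayed just before the statement of the corollary.

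For the diagonal case $x=y$ I would simply invoke the definition \eqref{eq:limit-for-a-z-eigenval-func}, which sets $\zeta_{\alpha,z}(x,x)=\tfrac{1}{x}$ for \emph{every} $z>0$, so in particular $\overline{\zeta}_\alpha(x,x)=\tfrac{1}{x}$ as claimed. As a check that the two branches of \eqref{eq:limit-for-a-z-eigenval-func-1} assemble into a continuous function on $\mathbb{R}_+\times\mathbb{R}_+$, one applies $\lim_{x\to y}\frac{x^\beta-y^\beta}{x-y}=\beta y^{\beta-1}$ to each difference quotient, so that $\lim_{x\to y}\overline{\zeta}_\alpha(x,y)=\frac{1}{\alpha(1-\alpha)}\cdot\alpha y^{\alpha-1}\cdot(1-\alpha)y^{-\alpha}=\tfrac{1}{y}$. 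There is no real obstacle here: the statement is an immediate corollary of Theorem~\ref{thm:fisher-info-from-alpha-z}, and the only content is the bookkeeping confirming that the $z=1$ instance of $\zeta_{\alpha,z}$ agrees with $\overline{\zeta}_\alpha$ in both the off-diagonal and diagonal regimes.
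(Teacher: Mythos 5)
Your proposal is correct and follows essentially the same route as the paper: the paper likewise obtains the corollary by specializing Theorem~\ref{thm:fisher-info-from-alpha-z} to $z=1$ (after noting that $\overline{D}_{\alpha}=D_{\alpha,1}$, so $\overline{I}_{\alpha}(\theta)=I_{\alpha,1}(\theta)$ by definition) and verifying that $\zeta_{\alpha,1}(x,y)$ collapses to $\overline{\zeta}_{\alpha}(x,y)$ for $x\neq y$, with the diagonal case handled by the definition in \eqref{eq:limit-for-a-z-eigenval-func}. Your additional continuity check at $x=y$ is a harmless extra confirmation of what the paper defers to Appendix~\ref{app:limit-for-a-z-eigenval-func}.
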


Interestingly, the elements of the Petz--R\'enyi information matrix
$\overline{I}_{\alpha}(\theta)$ are $\alpha$-dependent, in contrast
to the log-Euclidean and geometric information matrices (recall Theorem~\ref{thm:log-Euclidean-information-matrix}
and Theorem~\ref{thm:geometric-Renyi-to-RLD}). I explore this point
further in Section~\ref{sec:Orderings-and-relations}.

For $\alpha\in\left(0,1\right)$, Proposition \ref{prop:integral-rep-Petz}
gives an integral representation for the elements of the Petz--R\'enyi
information matrix, which is basis independent, and for $\alpha=2$,
Corollary~\ref{cor:Petz-Renyi-a-2-special} states that the Petz--R\'enyi
information matrix is equal to the RLD information matrix.
\begin{prop}
\label{prop:integral-rep-Petz}For all $\alpha\in\left(0,1\right)$,
the following integral representation holds for the Petz--R\'enyi
information matrix:
\begin{multline}
\left[\overline{I}_{\alpha}(\theta)\right]_{i,j}=\\
\frac{\sin^{2}(\alpha\pi)}{\alpha\left(1-\alpha\right)\pi^{2}}\int_{0}^{\infty}\int_{0}^{\infty}ds\ dt\ s^{\alpha}t^{1-\alpha}\Tr\!\left[\begin{array}{c}
\left(\rho(\theta)+sI\right)^{-1}\left(\rho(\theta)+tI\right)^{-1}\left(\partial_{i}\rho(\theta)\right)\times\\
\left(\rho(\theta)+sI\right)^{-1}\left(\rho(\theta)+tI\right)^{-1}\left(\partial_{j}\rho(\theta)\right)
\end{array}\right].
\end{multline}
\end{prop}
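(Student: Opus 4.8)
The plan is to reduce the statement to a scalar integral identity: by Corollary~\ref{cor:Petz-Renyi-special-case} the matrix $\overline{I}_{\alpha}(\theta)$ is already expressed through the eigenvalue function $\overline{\zeta}_{\alpha}$ defined in \eqref{eq:limit-for-a-z-eigenval-func-1}, so it suffices to show that the proposed double integral reproduces $\overline{\zeta}_{\alpha}(\lambda_{k},\lambda_{\ell})$ for each pair of eigenvalues. The first ingredient I would record is the elementary lemma that, for every $\beta\in\left(0,1\right)$ and all $x,y>0$,
\begin{equation}
\int_{0}^{\infty}\frac{s^{\beta}}{\left(s+x\right)\left(s+y\right)}\,ds=\frac{\pi}{\sin\!\left(\pi\beta\right)}\cdot\frac{x^{\beta}-y^{\beta}}{x-y},
\end{equation}
with the right-hand side read as $\beta x^{\beta-1}$ when $x=y$. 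This follows from the rearrangement $\frac{s^{\beta}}{s+x}=s^{\beta-1}-\frac{x\,s^{\beta-1}}{s+x}$ together with the partial fractions $\frac{1}{\left(s+x\right)\left(s+y\right)}=\frac{1}{y-x}\!\left(\frac{1}{s+x}-\frac{1}{s+y}\right)$, which reduce the integral to $\frac{1}{y-x}\!\left(y\,J_{\beta}(y)-x\,J_{\beta}(x)\right)$ where $J_{\beta}(x)\coloneqq\int_{0}^{\infty}\frac{s^{\beta-1}}{s+x}\,ds=\frac{\pi}{\sin\!\left(\pi\beta\right)}x^{\beta-1}$ is the classical Mellin integral (valid precisely for $\beta\in\left(0,1\right)$, its integrand being integrable at both endpoints exactly in that range); the coincident case $x=y$ follows by continuity.

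Next I would substitute the spectral decomposition $\rho(\theta)=\sum_{k}\lambda_{k}\Pi_{k}$ into the proposed integrand. Since $\left(\rho(\theta)+sI\right)^{-1}\!\left(\rho(\theta)+tI\right)^{-1}=\sum_{k}\frac{\Pi_{k}}{\left(\lambda_{k}+s\right)\left(\lambda_{k}+t\right)}$, the trace appearing under the double integral equals
\begin{equation}
\sum_{k,\ell}\frac{1}{\left(\lambda_{k}+s\right)\left(\lambda_{\ell}+s\right)}\cdot\frac{1}{\left(\lambda_{k}+t\right)\left(\lambda_{\ell}+t\right)}\,\Tr\!\left[\Pi_{k}\left(\partial_{i}\rho(\theta)\right)\Pi_{\ell}\left(\partial_{j}\rho(\theta)\right)\right].
\end{equation}
The sum over $k,\ell$ is finite, so it may be pulled outside the integral; for each fixed pair the resulting scalar double integral has nonnegative integrand that factors as $s^{\alpha}t^{1-\alpha}\cdot\frac{1}{\left(s+\lambda_{k}\right)\left(s+\lambda_{\ell}\right)}\cdot\frac{1}{\left(t+\lambda_{k}\right)\left(t+\lambda_{\ell}\right)}$, so by Tonelli's theorem it splits into the product $\left(\int_{0}^{\infty}\frac{s^{\alpha}\,ds}{\left(s+\lambda_{k}\right)\left(s+\lambda_{\ell}\right)}\right)\!\left(\int_{0}^{\infty}\frac{t^{1-\alpha}\,dt}{\left(t+\lambda_{k}\right)\left(t+\lambda_{\ell}\right)}\right)$. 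Applying the lemma once with $\beta=\alpha$ and once with $\beta=1-\alpha$, and using $\sin\!\left(\pi\left(1-\alpha\right)\right)=\sin\!\left(\pi\alpha\right)$, turns this product into $\frac{\pi^{2}}{\sin^{2}\!\left(\pi\alpha\right)}\cdot\frac{\left(\lambda_{k}^{\alpha}-\lambda_{\ell}^{\alpha}\right)\left(\lambda_{k}^{1-\alpha}-\lambda_{\ell}^{1-\alpha}\right)}{\left(\lambda_{k}-\lambda_{\ell}\right)^{2}}$; multiplying by the prefactor $\frac{\sin^{2}\!\left(\alpha\pi\right)}{\alpha\left(1-\alpha\right)\pi^{2}}$ yields exactly $\overline{\zeta}_{\alpha}(\lambda_{k},\lambda_{\ell})$, including the case $\lambda_{k}=\lambda_{\ell}$, where the two factors and the prefactor combine to $1/\lambda_{k}$. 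Comparing with Corollary~\ref{cor:Petz-Renyi-special-case} closes the argument.

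I expect the only genuine subtlety to be the scalar lemma -- namely pinning down the constant $\pi/\sin\!\left(\pi\beta\right)$ and observing that convergence of both $\int_{0}^{\infty}\frac{s^{\alpha}}{\left(s+x\right)\left(s+y\right)}\,ds$ and $\int_{0}^{\infty}\frac{t^{1-\alpha}}{\left(t+x\right)\left(t+y\right)}\,dt$ forces $\alpha\in\left(0,1\right)$, which is precisely why the Proposition is restricted to that range (and which also shows the right-hand side of the Proposition is well defined there). Everything else -- the spectral expansion, interchanging the finite sum with the integral, and factoring the double integral via Tonelli -- is routine once the lemma is in hand; as an independent check, the lemma can also be verified by a keyhole-contour computation whose residues at $-x$ and $-y$ give the same right-hand side.
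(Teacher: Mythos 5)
Your proposal is correct and follows essentially the same route as the paper: both hinge on the scalar identity $\int_{0}^{\infty}\frac{s^{\beta}}{\left(s+x\right)\left(s+y\right)}\,ds=\frac{\pi}{\sin\left(\pi\beta\right)}\,\frac{x^{\beta}-y^{\beta}}{x-y}$ applied with $\beta=\alpha$ and $\beta=1-\alpha$ to convert $\overline{\zeta}_{\alpha}(\lambda_{k},\lambda_{\ell})$ into a product of two integrals, followed by resumming the spectral projections into resolvents. The only cosmetic differences are that you run the computation from the integral toward $\overline{\zeta}_{\alpha}$ rather than the reverse, and you derive the scalar identity by partial fractions from the Mellin integral where the paper invokes its Appendix~B representation of $x^{r}$; both are fine.
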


\begin{proof}
Using the integral representation from \eqref{eq:integral-rep-div-diff-x-r},
it follows that, for $x,y>0$ such that $x\neq y$,
\begin{align}
\overline{\zeta}_{\alpha}(x,y) & =\frac{1}{\alpha\left(1-\alpha\right)}\frac{\left(x^{\alpha}-y^{\alpha}\right)\left(x^{1-\alpha}-y^{1-\alpha}\right)}{\left(x-y\right)^{2}}\\
 & =\frac{\sin\!\left(\alpha\pi\right)}{\alpha\pi}\frac{\sin\!\left(\left(1-\alpha\right)\pi\right)}{\left(1-\alpha\right)\pi}\int_{0}^{\infty}ds\ \frac{s^{\alpha}}{\left(x+s\right)\left(y+s\right)}\int_{0}^{\infty}dt\ \frac{t^{1-\alpha}}{\left(x+t\right)\left(y+t\right)}\\
 & =\frac{\sin^{2}(\alpha\pi)}{\alpha\left(1-\alpha\right)\pi^{2}}\int_{0}^{\infty}\int_{0}^{\infty}ds\ dt\ \frac{s^{\alpha}t^{1-\alpha}}{\left(x+s\right)\left(x+t\right)\left(y+s\right)\left(y+t\right)},
\end{align}
which implies, after defining $g(\alpha)\equiv\frac{\sin^{2}(\alpha\pi)}{\alpha\left(1-\alpha\right)\pi^{2}}$,
that
\begin{align}
 & \left[\overline{I}_{\alpha}(\theta)\right]_{i,j}\nonumber \\
 & =\sum_{k,\ell}\overline{\zeta}_{\alpha}(\lambda_{k},\lambda_{\ell})\Tr\!\left[\Pi_{k}\left(\partial_{i}\rho(\theta)\right)\Pi_{\ell}\left(\partial_{j}\rho(\theta)\right)\right]\\
 & =\sum_{k,\ell}g(\alpha)\int_{0}^{\infty}\int_{0}^{\infty}ds\ dt\ \frac{s^{\alpha}t^{1-\alpha}}{\left(\lambda_{k}+s\right)\left(\lambda_{k}+t\right)\left(\lambda_{\ell}+s\right)\left(\lambda_{\ell}+t\right)}\Tr\!\left[\Pi_{k}\left(\partial_{i}\rho(\theta)\right)\Pi_{\ell}\left(\partial_{j}\rho(\theta)\right)\right]\\
 & =g(\alpha)\int_{0}^{\infty}\int_{0}^{\infty}ds\ dt\ s^{\alpha}t^{1-\alpha}\sum_{k,\ell}\frac{\Tr\!\left[\Pi_{k}\left(\partial_{i}\rho(\theta)\right)\Pi_{\ell}\left(\partial_{j}\rho(\theta)\right)\right]}{\left(\lambda_{k}+s\right)\left(\lambda_{k}+t\right)\left(\lambda_{\ell}+s\right)\left(\lambda_{\ell}+t\right)}\\
 & =g(\alpha)\int_{0}^{\infty}\int_{0}^{\infty}ds\ dt\ s^{\alpha}t^{1-\alpha}\Tr\!\left[\begin{array}{c}
\sum_{k}\left(\frac{1}{\left(\lambda_{k}+s\right)\left(\lambda_{k}+t\right)}\right)\Pi_{k}\left(\partial_{i}\rho(\theta)\right)\times\\
\sum_{\ell}\left(\frac{1}{\left(\lambda_{\ell}+s\right)\left(\lambda_{\ell}+t\right)}\right)\Pi_{\ell}\left(\partial_{j}\rho(\theta)\right)
\end{array}\right]\\
 & =g(\alpha)\int_{0}^{\infty}\int_{0}^{\infty}ds\ dt\ s^{\alpha}t^{1-\alpha}\Tr\!\left[\begin{array}{c}
\left(\rho(\theta)+sI\right)^{-1}\left(\rho(\theta)+tI\right)^{-1}\left(\partial_{i}\rho(\theta)\right)\times\\
\left(\rho(\theta)+sI\right)^{-1}\left(\rho(\theta)+tI\right)^{-1}\left(\partial_{j}\rho(\theta)\right)
\end{array}\right],
\end{align}
thus concluding the proof.
\end{proof}
\begin{cor}
\label{cor:Petz-Renyi-a-2-special}For $\alpha=2$, the following
equality holds:
\begin{equation}
\overline{I}_{2}(\theta)=I_{\RLD}(\theta).
\end{equation}
\end{cor}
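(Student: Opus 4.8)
The plan is to start from the explicit formula for the Petz--R\'enyi information matrix in Corollary~\ref{cor:Petz-Renyi-special-case} and specialize the eigenvalue function $\overline{\zeta}_{\alpha}$ to $\alpha=2$. First I would compute $\overline{\zeta}_{2}(x,y)$ for $x\neq y$ directly from \eqref{eq:limit-for-a-z-eigenval-func-1}: using $x^{2}-y^{2}=(x-y)(x+y)$ and $x^{-1}-y^{-1}=-(x-y)/(xy)$, the ratio $(x^{2}-y^{2})(x^{-1}-y^{-1})/(x-y)^{2}$ collapses to $-(x+y)/(xy)$, and multiplying by the prefactor $\frac{1}{\alpha(1-\alpha)}=-\frac{1}{2}$ gives $\overline{\zeta}_{2}(x,y)=\frac{x+y}{2xy}=\frac{1}{2}\!\left(\frac{1}{x}+\frac{1}{y}\right)$. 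Since this same expression evaluates to $\frac{1}{x}$ when $x=y$, the identity $\overline{\zeta}_{2}(x,y)=\frac{1}{2}\!\left(x^{-1}+y^{-1}\right)$ holds for all $x,y>0$, matching the diagonal value in \eqref{eq:limit-for-a-z-eigenval-func-1}.

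Next I would substitute this into the sum $\left[\overline{I}_{2}(\theta)\right]_{i,j}=\sum_{k,\ell}\overline{\zeta}_{2}(\lambda_{k},\lambda_{\ell})\Tr\!\left[\Pi_{k}(\partial_{i}\rho(\theta))\Pi_{\ell}(\partial_{j}\rho(\theta))\right]$ and split it into the two terms coming from $\frac{1}{2\lambda_{k}}$ and $\frac{1}{2\lambda_{\ell}}$. Applying completeness $\sum_{\ell}\Pi_{\ell}=I$ in the first term and $\sum_{k}\Pi_{k}=I$ in the second, together with $\sum_{k}\lambda_{k}^{-1}\Pi_{k}=\rho(\theta)^{-1}$, these reduce to $\frac{1}{2}\Tr\!\left[\rho(\theta)^{-1}(\partial_{i}\rho(\theta))(\partial_{j}\rho(\theta))\right]$ and $\frac{1}{2}\Tr\!\left[(\partial_{i}\rho(\theta))\rho(\theta)^{-1}(\partial_{j}\rho(\theta))\right]$, respectively.

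Finally, I would identify the resulting quantity with $I_{\RLD}(\theta)$. By cyclicity of the trace, $\Tr\!\left[\rho(\theta)^{-1}(\partial_{i}\rho(\theta))(\partial_{j}\rho(\theta))\right]=\Tr\!\left[(\partial_{j}\rho(\theta))(\partial_{i}\rho(\theta))\rho(\theta)^{-1}\right]$ and $\Tr\!\left[(\partial_{i}\rho(\theta))\rho(\theta)^{-1}(\partial_{j}\rho(\theta))\right]=\Tr\!\left[(\partial_{i}\rho(\theta))(\partial_{j}\rho(\theta))\rho(\theta)^{-1}\right]$, so the sum of the two terms equals $\frac{1}{2}\Tr\!\left[\left\{\partial_{i}\rho(\theta),\partial_{j}\rho(\theta)\right\}\rho(\theta)^{-1}\right]$, which is precisely the second expression for $\left[I_{\RLD}(\theta)\right]_{i,j}$ in \eqref{eq:RLD-def}. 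This completes the argument. There is no genuine obstacle: the only step needing a little care is the elementary algebraic simplification of $\overline{\zeta}_{2}$ and the check that it agrees with the diagonal value $1/x$, both verified above.
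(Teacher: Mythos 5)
Your proposal is correct and follows essentially the same route as the paper's proof: evaluate $\overline{\zeta}_{2}(x,y)=\frac{1}{2}\left(x^{-1}+y^{-1}\right)$ by the same factorization of $x^{2}-y^{2}$ and $x^{-1}-y^{-1}$, split the double sum into the $\lambda_{k}^{-1}$ and $\lambda_{\ell}^{-1}$ pieces, resum via completeness of the eigenprojections, and recognize the anticommutator form of $I_{\RLD}(\theta)$. The only cosmetic difference is that the paper phrases the evaluation of $\overline{\zeta}_{\alpha}$ at $\alpha=2$ as a limit, which is the identical computation.
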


\begin{proof}
Considering that, for all $x,y>0$ such that $x\neq y$,
\begin{align}
\lim_{\alpha\to2}\overline{\zeta}_{\alpha}(x,y) & =\lim_{\alpha\to2}\frac{1}{\alpha\left(1-\alpha\right)}\frac{\left(x^{\alpha}-y^{\alpha}\right)\left(x^{1-\alpha}-y^{1-\alpha}\right)}{\left(x-y\right)^{2}}\\
 & =-\frac{1}{2}\frac{\left(x^{2}-y^{2}\right)\left(x^{-1}-y^{-1}\right)}{\left(x-y\right)^{2}}\\
 & =-\frac{1}{2}\frac{\left(x+y\right)\left(x-y\right)x^{-1}y^{-1}\left(y-x\right)}{\left(x-y\right)^{2}}\\
 & =\frac{1}{2}\left(x+y\right)x^{-1}y^{-1}\\
 & =\frac{1}{2}\left(\frac{1}{x}+\frac{1}{y}\right),
\end{align}
it follows that
\begin{align}
\left[\overline{I}_{2}(\theta)\right]_{i,j} & =\sum_{k,\ell}\frac{1}{2}\left(\frac{1}{\lambda_{k}}+\frac{1}{\lambda_{\ell}}\right)\Tr\!\left[\Pi_{k}\left(\partial_{i}\rho(\theta)\right)\Pi_{\ell}\left(\partial_{j}\rho(\theta)\right)\right]\\
 & =\frac{1}{2}\sum_{k,\ell}\frac{1}{\lambda_{k}}\Tr\!\left[\Pi_{k}\left(\partial_{i}\rho(\theta)\right)\Pi_{\ell}\left(\partial_{j}\rho(\theta)\right)\right]\nonumber \\
 & \qquad+\frac{1}{2}\sum_{k,\ell}\frac{1}{\lambda_{\ell}}\Tr\!\left[\Pi_{k}\left(\partial_{i}\rho(\theta)\right)\Pi_{\ell}\left(\partial_{j}\rho(\theta)\right)\right]\\
 & =\frac{1}{2}\Tr\!\left[\left(\sum_{k}\frac{1}{\lambda_{k}}\Pi_{k}\right)\left(\partial_{i}\rho(\theta)\right)\left(\sum_{\ell}\Pi_{\ell}\right)\left(\partial_{j}\rho(\theta)\right)\right]\nonumber \\
 & \qquad+\frac{1}{2}\Tr\!\left[\left(\sum_{k}\Pi_{k}\right)\left(\partial_{i}\rho(\theta)\right)\left(\sum_{\ell}\frac{1}{\lambda_{\ell}}\Pi_{\ell}\right)\left(\partial_{j}\rho(\theta)\right)\right]\\
 & =\frac{1}{2}\Tr\!\left[\rho(\theta)^{-1}\left(\partial_{i}\rho(\theta)\right)\left(\partial_{j}\rho(\theta)\right)\right]+\frac{1}{2}\Tr\!\left[\left(\partial_{i}\rho(\theta)\right)\rho(\theta)^{-1}\left(\partial_{j}\rho(\theta)\right)\right]\\
 & =\frac{1}{2}\Tr\!\left[\left\{ \partial_{i}\rho(\theta),\partial_{j}\rho(\theta)\right\} \rho(\theta)^{-1}\right]\\
 & =\left[I_{\RLD}(\theta)\right]_{i,j},
\end{align}
thus concluding the proof.
\end{proof}
\begin{rem}
Corollary~\ref{cor:Petz-Renyi-a-2-special} is consistent with the
fact that $\overline{D}_{2}(\rho\|\sigma)=\widehat{D}_{2}(\rho\|\sigma)$
and Theorem~\ref{thm:geometric-Renyi-to-RLD}.
\end{rem}

\subsection{Sandwiched R\'enyi information matrices}

\label{subsec:Sandwiched-Rnyi-information-from-a-z}The sandwiched
R\'enyi relative entropy is defined for positive definite states
$\rho$ and $\sigma$ and $\alpha\in\left(0,1\right)\cup\left(1,\infty\right)$
as \cite{MuellerLennert2013,Wilde2014}
\begin{equation}
\widetilde{D}_{\alpha}(\rho\|\sigma)\coloneqq\frac{1}{\alpha-1}\ln\Tr\!\left[\left(\sigma^{\frac{1-\alpha}{2\alpha}}\rho\sigma^{\frac{1-\alpha}{2\alpha}}\right)^{\alpha}\right].
\end{equation}
It obeys the data-processing inequality for $\alpha\in\left[\frac{1}{2},1\right)\cup\left(1,\infty\right)$
\cite{Frank2013} (see also \cite{Wilde2018}). It is a special case
of the $\alpha$-$z$ R\'enyi relative entropy when $z=\alpha$.

The elements of the sandwiched R\'enyi information matrix, considered
previously in \cite{Takahashi2017}, are defined for all $\alpha\in\left(0,1\right)\cup\left(1,\infty\right)$
as
\begin{align}
\left[\widetilde{I}_{\alpha}(\theta)\right]_{i,j} & \coloneqq\frac{1}{\alpha}\left.\frac{\partial^{2}}{\partial\varepsilon_{i}\partial\varepsilon_{j}}\widetilde{D}_{\alpha}(\rho(\theta)\|\rho(\theta+\varepsilon))\right|_{\varepsilon=0}\\
 & =\frac{1}{\alpha}\left.\frac{\partial^{2}}{\partial\varepsilon_{i}\partial\varepsilon_{j}}D_{\alpha,\alpha}(\rho(\theta)\|\rho(\theta+\varepsilon))\right|_{\varepsilon=0}
\end{align}
Given that, for all $x,y>0$ such that $x\neq y$,
\begin{align}
\lim_{z\to\alpha}\zeta_{\alpha,z}(x,y) & =\lim_{z\to\alpha}\frac{z}{\alpha\left(1-\alpha\right)}\left(\frac{x^{\frac{1-\alpha}{z}}-y^{\frac{1-\alpha}{z}}}{x-y}\right)\left(\frac{x^{\frac{\alpha}{z}}-y^{\frac{\alpha}{z}}}{x^{\frac{1}{z}}-y^{\frac{1}{z}}}\right)\\
 & =\frac{1}{1-\alpha}\left(\frac{x^{\frac{1-\alpha}{\alpha}}-y^{\frac{1-\alpha}{\alpha}}}{x-y}\right)\left(\frac{x^{\frac{\alpha}{\alpha}}-y^{\frac{\alpha}{\alpha}}}{x^{\frac{1}{z}}-y^{\frac{1}{z}}}\right)\\
 & =\frac{1}{1-\alpha}\left(\frac{x^{\frac{1-\alpha}{\alpha}}-y^{\frac{1-\alpha}{\alpha}}}{x^{\frac{1}{\alpha}}-y^{\frac{1}{\alpha}}}\right),
\end{align}
we conclude the following corollary of Theorem~\ref{thm:fisher-info-from-alpha-z},
related to what was previously reported as \cite[Lemma~4]{Takahashi2017}:
\begin{cor}
\label{cor:sandwiched-Renyi-special-case}For all $\alpha\in\left(0,1\right)\cup\left(1,\infty\right)$,
the elements of the sandwiched R\'enyi information matrix are as
follows:
\begin{equation}
\left[\widetilde{I}_{\alpha}(\theta)\right]_{i,j}=\sum_{k,\ell}\widetilde{\zeta}_{\alpha}(\lambda_{k},\lambda_{\ell})\Tr\!\left[\Pi_{k}\left(\partial_{i}\rho(\theta)\right)\Pi_{\ell}\left(\partial_{j}\rho(\theta)\right)\right],
\end{equation}
where $\partial_{i}\equiv\frac{\partial}{\partial\theta_{i}}$, the
spectral decomposition of $\rho(\theta)$ is given by $\rho(\theta)=\sum_{k}\lambda_{k}\Pi_{k}$,
and for all $x,y>0$,
\begin{equation}
\widetilde{\zeta}_{\alpha}(x,y)\coloneqq\begin{cases}
\frac{1}{1-\alpha}\left(\frac{x^{\frac{1-\alpha}{\alpha}}-y^{\frac{1-\alpha}{\alpha}}}{x^{\frac{1}{\alpha}}-y^{\frac{1}{\alpha}}}\right) & :x\neq y\\
\frac{1}{x} & :x=y
\end{cases}.\label{eq:limit-for-sandwiched-Renyi-eigenval-func}
\end{equation}
\end{cor}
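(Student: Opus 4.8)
The plan is to obtain Corollary~\ref{cor:sandwiched-Renyi-special-case} as the $z=\alpha$ specialization of Theorem~\ref{thm:fisher-info-from-alpha-z}. The first step is to record that the sandwiched R\'enyi relative entropy equals the $\alpha$-$z$ R\'enyi relative entropy at $z=\alpha$: substituting $z=\alpha$ into \eqref{eq:alpha-z-renyi-def} gives $\frac{1}{\alpha-1}\ln\Tr[(\sigma^{\frac{1-\alpha}{2\alpha}}\rho\,\sigma^{\frac{1-\alpha}{2\alpha}})^{\alpha}]=\widetilde{D}_{\alpha}(\rho\|\sigma)$, so by \eqref{eq:a-z-fisher-info-def} the sandwiched R\'enyi information matrix satisfies $\widetilde{I}_{\alpha}(\theta)=I_{\alpha,\alpha}(\theta)$. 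Since $\alpha>0$, the value $z=\alpha$ lies in the admissible range $z>0$ of Theorem~\ref{thm:fisher-info-from-alpha-z}, so that result applies verbatim with $z$ replaced by $\alpha$, yielding $[\widetilde{I}_{\alpha}(\theta)]_{i,j}=\sum_{k,\ell}\zeta_{\alpha,\alpha}(\lambda_{k},\lambda_{\ell})\Tr[\Pi_{k}(\partial_{i}\rho(\theta))\Pi_{\ell}(\partial_{j}\rho(\theta))]$.

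The second step is purely algebraic: to show $\zeta_{\alpha,\alpha}=\widetilde{\zeta}_{\alpha}$. For $x\neq y$, substituting $z=\alpha$ into \eqref{eq:limit-for-a-z-eigenval-func} and using $x^{\alpha/\alpha}-y^{\alpha/\alpha}=x-y$ gives
\begin{equation}
\zeta_{\alpha,\alpha}(x,y)=\frac{\alpha}{\alpha\left(1-\alpha\right)}\left(\frac{x^{\frac{1-\alpha}{\alpha}}-y^{\frac{1-\alpha}{\alpha}}}{x-y}\right)\left(\frac{x-y}{x^{\frac{1}{\alpha}}-y^{\frac{1}{\alpha}}}\right)=\frac{1}{1-\alpha}\left(\frac{x^{\frac{1-\alpha}{\alpha}}-y^{\frac{1-\alpha}{\alpha}}}{x^{\frac{1}{\alpha}}-y^{\frac{1}{\alpha}}}\right),
\end{equation}
which is $\widetilde{\zeta}_{\alpha}(x,y)$ as in \eqref{eq:limit-for-sandwiched-Renyi-eigenval-func}; here the cancellation of the factor $x-y$ is legitimate precisely because $x\neq y$. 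For $x=y$, both $\zeta_{\alpha,\alpha}(x,x)$ and $\widetilde{\zeta}_{\alpha}(x,x)$ equal $\frac{1}{x}$ by definition. Inserting $\zeta_{\alpha,\alpha}=\widetilde{\zeta}_{\alpha}$ into the formula from the first step yields the claimed expression for $[\widetilde{I}_{\alpha}(\theta)]_{i,j}$.

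I do not anticipate a genuine obstacle: in contrast to the Kubo--Mori limits in Section~\ref{subsec:Kubo=002013Mori-information-matrix-from-a-z}, no limiting argument is needed, since $z=\alpha$ can be inserted directly and the only manipulation is the cancellation of $x-y$. The single point deserving explicit mention in the writeup is the identification $\widetilde{D}_{\alpha}=D_{\alpha,\alpha}$, which makes Theorem~\ref{thm:fisher-info-from-alpha-z} literally applicable; everything else is elementary algebra. As a consistency check, the resulting $\widetilde{\zeta}_{\alpha}$ matches the eigenvalue function recorded for the single-parameter sandwiched R\'enyi information in \cite[Lemma~4]{Takahashi2017}.
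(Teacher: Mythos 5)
Your proposal is correct and follows essentially the same route as the paper: both identify $\widetilde{D}_{\alpha}=D_{\alpha,\alpha}$, apply Theorem~\ref{thm:fisher-info-from-alpha-z} with $z=\alpha$, and reduce $\zeta_{\alpha,\alpha}$ to $\widetilde{\zeta}_{\alpha}$ by cancelling the factor $x-y$ (the paper phrases this as the limit $z\to\alpha$ of $\zeta_{\alpha,z}$, but the computation is the same direct substitution). No gaps.
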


Similar to the Petz--R\'enyi information matrix, the elements of
the sandwiched R\'enyi information matrix $\widetilde{I}_{\alpha}(\theta)$
are $\alpha$-dependent, again in contrast to the log-Euclidean and
geometric information matrices (recall Theorem~\ref{thm:log-Euclidean-information-matrix}
and Theorem~\ref{thm:geometric-Renyi-to-RLD}). I explore this point
further in Section~\ref{sec:Orderings-and-relations}.

For $\alpha\in\left(0,1\right)$, Proposition \ref{prop:integral-rep-sandwiched}
gives an integral representation for the elements of the sandwiched
R\'enyi information matrix, and for $\alpha=2$, Corollary~\ref{cor:sandwiched-Renyi-a-2-special}
gives a simple form for them. Both of the expressions given are basis
independent.
\begin{prop}
\label{prop:integral-rep-sandwiched}For all $\alpha\in\left(0,1\right)$,
the following integral representation holds for the sandwiched R\'enyi
information matrix:
\begin{multline}
\left[\widetilde{I}_{\alpha}(\theta)\right]_{i,j}=\\
\frac{\sin\!\left(\left(1-\alpha\right)\pi\right)}{\left(1-\alpha\right)\pi}\int_{0}^{\infty}dt\ t^{1-\alpha}\Tr\!\left[\left(\rho(\theta)^{\frac{1}{\alpha}}+tI\right)^{-1}\left(\partial_{i}\rho(\theta)\right)\left(\rho(\theta)^{\frac{1}{\alpha}}+tI\right)^{-1}\left(\partial_{j}\rho(\theta)\right)\right].
\end{multline}
\end{prop}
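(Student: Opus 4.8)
The plan is to follow the same route as in the proof of Proposition~\ref{prop:integral-rep-Petz}: begin from the diagonalized formula for $\widetilde{I}_{\alpha}(\theta)$ supplied by Corollary~\ref{cor:sandwiched-Renyi-special-case}, derive an integral representation for the scalar kernel $\widetilde{\zeta}_{\alpha}(x,y)$ of \eqref{eq:limit-for-sandwiched-Renyi-eigenval-func}, and then resum the eigenprojections into resolvents of $\rho(\theta)^{1/\alpha}$ under the trace.

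The key step is to notice that, after the change of variables $u=x^{1/\alpha}$ and $v=y^{1/\alpha}$, the kernel $\widetilde{\zeta}_{\alpha}$ becomes (up to a constant) a single first divided difference of a power function: for $x\neq y$,
\[
\widetilde{\zeta}_{\alpha}(x,y)=\frac{1}{1-\alpha}\frac{x^{\frac{1-\alpha}{\alpha}}-y^{\frac{1-\alpha}{\alpha}}}{x^{\frac{1}{\alpha}}-y^{\frac{1}{\alpha}}}=\frac{1}{1-\alpha}\frac{u^{1-\alpha}-v^{1-\alpha}}{u-v},
\]
and $1-\alpha\in(0,1)$ because $\alpha\in(0,1)$. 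Applying the integral representation \eqref{eq:integral-rep-div-diff-x-r} to the power function $u\mapsto u^{1-\alpha}$ (exponent $r=1-\alpha$), evaluated at $u=x^{1/\alpha}$ and $v=y^{1/\alpha}$, then yields
\[
\widetilde{\zeta}_{\alpha}(x,y)=\frac{\sin\!\left((1-\alpha)\pi\right)}{(1-\alpha)\pi}\int_{0}^{\infty}dt\ \frac{t^{1-\alpha}}{\left(x^{\frac{1}{\alpha}}+t\right)\left(y^{\frac{1}{\alpha}}+t\right)},
\]
and one checks that the same expression remains valid in the limit $x\to y$, where it reproduces $\widetilde{\zeta}_{\alpha}(x,x)=1/x$ (using $\int_{0}^{\infty}t^{r}(a+t)^{-2}\,dt=\frac{r\pi}{\sin(r\pi)}a^{r-1}$ with $r=1-\alpha$ and $a=x^{1/\alpha}$).

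Finally, I would substitute this representation into the formula from Corollary~\ref{cor:sandwiched-Renyi-special-case}, pull the integral outside the finite spectral sum, and collapse the eigenprojections via $\sum_{k}\left(\lambda_{k}^{1/\alpha}+t\right)^{-1}\Pi_{k}=\left(\rho(\theta)^{1/\alpha}+tI\right)^{-1}$, which gives
\[
\left[\widetilde{I}_{\alpha}(\theta)\right]_{i,j}=\frac{\sin\!\left((1-\alpha)\pi\right)}{(1-\alpha)\pi}\int_{0}^{\infty}dt\ t^{1-\alpha}\Tr\!\left[\left(\rho(\theta)^{\frac{1}{\alpha}}+tI\right)^{-1}\!\left(\partial_{i}\rho(\theta)\right)\left(\rho(\theta)^{\frac{1}{\alpha}}+tI\right)^{-1}\!\left(\partial_{j}\rho(\theta)\right)\right],
\]
which is the claim. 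The interchange of sum and integral is immediate since the spectral sum is finite, so the only genuine point to verify is convergence of the $t$-integral: the integrand behaves like $t^{1-\alpha}$ near $t=0$, which is integrable since $1-\alpha>0$, and like $t^{-1-\alpha}$ as $t\to\infty$, which is integrable since $\alpha>0$. There is thus no real obstacle here: once the reduction of $\widetilde{\zeta}_{\alpha}$ to a divided difference of $u\mapsto u^{1-\alpha}$ is spotted, the argument is essentially a verbatim copy of the Petz--R\'enyi case.
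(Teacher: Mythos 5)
Your proposal is correct and follows essentially the same route as the paper's own proof: both recognize $\widetilde{\zeta}_{\alpha}(x,y)$ as $\tfrac{1}{1-\alpha}$ times the first divided difference of $u\mapsto u^{1-\alpha}$ at $u=x^{1/\alpha}$, $v=y^{1/\alpha}$, apply the integral representation \eqref{eq:integral-rep-div-diff-x-r} with $r=1-\alpha$, and resum the eigenprojections into resolvents of $\rho(\theta)^{1/\alpha}$. Your added checks of the diagonal case and of the convergence of the $t$-integral are correct but not present in the paper, which leaves them implicit.
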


\begin{proof}
Using the integral representation from \eqref{eq:integral-rep-div-diff-x-r},
it follows that, for $x,y>0$ such that $x\neq y$,
\begin{align}
\widetilde{\zeta}_{\alpha}(x,y) & =\frac{1}{1-\alpha}\left(\frac{x^{\frac{1-\alpha}{\alpha}}-y^{\frac{1-\alpha}{\alpha}}}{x^{\frac{1}{\alpha}}-y^{\frac{1}{\alpha}}}\right)\\
 & =\frac{\sin\!\left(\left(1-\alpha\right)\pi\right)}{\left(1-\alpha\right)\pi}\int_{0}^{\infty}dt\ \frac{t^{1-\alpha}}{\left(x^{\frac{1}{\alpha}}+t\right)\left(y^{\frac{1}{\alpha}}+t\right)},
\end{align}
which implies that
\begin{align}
 & \left[\widetilde{I}_{\alpha}(\theta)\right]_{i,j}\nonumber \\
 & =\sum_{k,\ell}\widetilde{\zeta}_{\alpha}(\lambda_{k},\lambda_{\ell})\Tr\!\left[\Pi_{k}\left(\partial_{i}\rho(\theta)\right)\Pi_{\ell}\left(\partial_{j}\rho(\theta)\right)\right]\\
 & =\sum_{k,\ell}\frac{\sin\!\left(\left(1-\alpha\right)\pi\right)}{\left(1-\alpha\right)\pi}\int_{0}^{\infty}dt\ \frac{t^{1-\alpha}}{\left(\lambda_{k}^{\frac{1}{\alpha}}+t\right)\left(\lambda_{\ell}^{\frac{1}{\alpha}}+t\right)}\Tr\!\left[\Pi_{k}\left(\partial_{i}\rho(\theta)\right)\Pi_{\ell}\left(\partial_{j}\rho(\theta)\right)\right]\\
 & =\frac{\sin\!\left(\left(1-\alpha\right)\pi\right)}{\left(1-\alpha\right)\pi}\int_{0}^{\infty}dt\ t^{1-\alpha}\sum_{k,\ell}\frac{1}{\left(\lambda_{k}^{\frac{1}{\alpha}}+t\right)\left(\lambda_{\ell}^{\frac{1}{\alpha}}+t\right)}\Tr\!\left[\Pi_{k}\left(\partial_{i}\rho(\theta)\right)\Pi_{\ell}\left(\partial_{j}\rho(\theta)\right)\right]\\
 & =\frac{\sin\!\left(\left(1-\alpha\right)\pi\right)}{\left(1-\alpha\right)\pi}\int_{0}^{\infty}dt\ t^{1-\alpha}\Tr\!\left[\sum_{k}\left(\frac{1}{\lambda_{k}^{\frac{1}{\alpha}}+t}\right)\Pi_{k}\left(\partial_{i}\rho(\theta)\right)\sum_{\ell}\left(\frac{1}{\lambda_{\ell}^{\frac{1}{\alpha}}+t}\right)\Pi_{\ell}\left(\partial_{j}\rho(\theta)\right)\right]\\
 & =\frac{\sin\!\left(\left(1-\alpha\right)\pi\right)}{\left(1-\alpha\right)\pi}\int_{0}^{\infty}dt\ t^{1-\alpha}\Tr\!\left[\left(\rho(\theta)^{\frac{1}{\alpha}}+tI\right)^{-1}\left(\partial_{i}\rho(\theta)\right)\left(\rho(\theta)^{\frac{1}{\alpha}}+tI\right)^{-1}\left(\partial_{j}\rho(\theta)\right)\right],
\end{align}
thus concluding the proof.
\end{proof}
\begin{cor}
\label{cor:sandwiched-Renyi-a-2-special}For $\alpha=2$, the following
equality holds:
\begin{equation}
\left[\widetilde{I}_{2}(\theta)\right]_{i,j}=\Tr\!\left[\rho(\theta)^{-\frac{1}{2}}\left(\partial_{i}\rho(\theta)\right)\rho(\theta)^{-\frac{1}{2}}\left(\partial_{j}\rho(\theta)\right)\right].
\end{equation}
\end{cor}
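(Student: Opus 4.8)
The plan is to follow the same template as the proof of Corollary~\ref{cor:Petz-Renyi-a-2-special}: first specialize the eigenvalue function $\widetilde{\zeta}_{\alpha}$ from \eqref{eq:limit-for-sandwiched-Renyi-eigenval-func} to $\alpha=2$, and then substitute the resulting expression into the spectral formula from Corollary~\ref{cor:sandwiched-Renyi-special-case}. For the first step, I would set $\alpha=2$ in $\widetilde{\zeta}_{\alpha}(x,y)=\frac{1}{1-\alpha}\left(\frac{x^{\frac{1-\alpha}{\alpha}}-y^{\frac{1-\alpha}{\alpha}}}{x^{\frac{1}{\alpha}}-y^{\frac{1}{\alpha}}}\right)$, obtaining (for $x\neq y$)
\begin{equation}
\widetilde{\zeta}_{2}(x,y)=-\left(\frac{x^{-\frac{1}{2}}-y^{-\frac{1}{2}}}{x^{\frac{1}{2}}-y^{\frac{1}{2}}}\right)=-\left(\frac{\frac{y^{\frac{1}{2}}-x^{\frac{1}{2}}}{x^{\frac{1}{2}}y^{\frac{1}{2}}}}{x^{\frac{1}{2}}-y^{\frac{1}{2}}}\right)=\frac{1}{x^{\frac{1}{2}}y^{\frac{1}{2}}},
\end{equation}
and I would note that this also matches the $x=y$ branch, since $\frac{1}{x}=\frac{1}{x^{1/2}x^{1/2}}$, so that $\widetilde{\zeta}_{2}(x,y)=x^{-\frac{1}{2}}y^{-\frac{1}{2}}$ for all $x,y>0$.

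For the second step, I would plug $\widetilde{\zeta}_{2}(\lambda_{k},\lambda_{\ell})=\lambda_{k}^{-1/2}\lambda_{\ell}^{-1/2}$ into the formula of Corollary~\ref{cor:sandwiched-Renyi-special-case} and recombine the spectral sums:
\begin{align}
\left[\widetilde{I}_{2}(\theta)\right]_{i,j} & =\sum_{k,\ell}\lambda_{k}^{-\frac{1}{2}}\lambda_{\ell}^{-\frac{1}{2}}\Tr\!\left[\Pi_{k}\left(\partial_{i}\rho(\theta)\right)\Pi_{\ell}\left(\partial_{j}\rho(\theta)\right)\right]\\
 & =\Tr\!\left[\left(\sum_{k}\lambda_{k}^{-\frac{1}{2}}\Pi_{k}\right)\left(\partial_{i}\rho(\theta)\right)\left(\sum_{\ell}\lambda_{\ell}^{-\frac{1}{2}}\Pi_{\ell}\right)\left(\partial_{j}\rho(\theta)\right)\right]\\
 & =\Tr\!\left[\rho(\theta)^{-\frac{1}{2}}\left(\partial_{i}\rho(\theta)\right)\rho(\theta)^{-\frac{1}{2}}\left(\partial_{j}\rho(\theta)\right)\right],
\end{align}
which is the claimed identity, using $\rho(\theta)^{-1/2}=\sum_{k}\lambda_{k}^{-1/2}\Pi_{k}$ from the spectral decomposition.

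There is essentially no hard part: the only thing requiring a moment's care is the algebraic simplification of $\widetilde{\zeta}_{2}$, in particular correctly handling the sign from $\frac{1}{1-\alpha}=-1$ at $\alpha=2$ together with the reversed difference $x^{-1/2}-y^{-1/2}=-(x^{1/2}-y^{1/2})/(x^{1/2}y^{1/2})$. As a consistency check, I would remark that this result is in agreement with Proposition~\ref{prop:integral-rep-sandwiched} via analytic continuation and, at a higher level, with the fact that $\widetilde{\zeta}_{2}(x,y)=x^{-1/2}y^{-1/2}$ corresponds to the operator monotone function $f(t)=\sqrt{t}$ in $\frac{1}{\zeta(t,1)}$, placing $\widetilde{I}_{2}$ among the standard monotone metrics.
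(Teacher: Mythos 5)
Your proposal is correct and follows essentially the same route as the paper's proof: specialize $\widetilde{\zeta}_{\alpha}$ at $\alpha=2$ to obtain $\widetilde{\zeta}_{2}(x,y)=\frac{1}{\sqrt{xy}}$ and then recombine the spectral sums into $\rho(\theta)^{-\frac{1}{2}}$ factors. The additional remark that the $x=y$ branch is consistent is a harmless bonus not present in the paper.
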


\begin{proof}
Consider that
\begin{align}
\lim_{\alpha\to2}\widetilde{\zeta}_{\alpha}(x,y) & =\lim_{\alpha\to2}\frac{1}{1-\alpha}\left(\frac{x^{\frac{1-\alpha}{\alpha}}-y^{\frac{1-\alpha}{\alpha}}}{x^{\frac{1}{\alpha}}-y^{\frac{1}{\alpha}}}\right)\\
 & =-\left(\frac{x^{-\frac{1}{2}}-y^{-\frac{1}{2}}}{x^{\frac{1}{2}}-y^{\frac{1}{2}}}\right)\\
 & =-x^{-\frac{1}{2}}y^{-\frac{1}{2}}\left(\frac{y^{\frac{1}{2}}-x^{\frac{1}{2}}}{x^{\frac{1}{2}}-y^{\frac{1}{2}}}\right)\\
 & =\frac{1}{\sqrt{xy}}.
\end{align}
Then consider that, from Corollary~\ref{cor:sandwiched-Renyi-special-case},
\begin{align}
\left[\widetilde{I}_{2}(\theta)\right]_{i,j} & =\sum_{k,\ell}\widetilde{\zeta}_{2}(\lambda_{k},\lambda_{\ell})\Tr\!\left[\Pi_{k}\left(\partial_{i}\rho(\theta)\right)\Pi_{\ell}\left(\partial_{j}\rho(\theta)\right)\right]\\
 & =\sum_{k,\ell}\frac{1}{\sqrt{\lambda_{k}\lambda_{\ell}}}\Tr\!\left[\Pi_{k}\left(\partial_{i}\rho(\theta)\right)\Pi_{\ell}\left(\partial_{j}\rho(\theta)\right)\right]\\
 & =\Tr\!\left[\left(\sum_{k}\frac{1}{\sqrt{\lambda_{k}}}\Pi_{k}\right)\left(\partial_{i}\rho(\theta)\right)\left(\sum_{\ell}\frac{1}{\sqrt{\lambda_{\ell}}}\Pi_{\ell}\right)\left(\partial_{j}\rho(\theta)\right)\right]\\
 & =\Tr\!\left[\rho(\theta)^{-\frac{1}{2}}\left(\partial_{i}\rho(\theta)\right)\rho(\theta)^{-\frac{1}{2}}\left(\partial_{j}\rho(\theta)\right)\right],
\end{align}
thus concluding the proof.
\end{proof}

\section{Orderings and relations between $\alpha$-$z$ information matrices}

\label{sec:Orderings-and-relations}In this section, I establish ordering
relations for the Petz-- and sandwiched R\'enyi information matrices.
I begin with some orderings that are direct consequences of previous
results. After that, Section~\ref{subsec:Ordering-general-info-matrices}
provides a lemma for orderings of general information matrices, which
was mentioned in passing in \cite[Eq.~(60)]{Jarzyna2020}. Sections~\ref{subsec:Ordering-of-Petz-Renyi}
and \ref{subsec:Ordering-of-sandwiched-Renyi} then provide orderings
for the Petz-- and sandwiched R\'enyi information matrices, respectively.

Let us begin by recalling that orderings of smooth divergences imply
orderings of information matrices, as reviewed in \cite[Proposition~4]{Minervini2025}.
As such, we can conclude some orderings as direct consequences of
orderings of R\'enyi relative entropies that have been previously
established in the literature. Given that, for states $\rho$ and
$\sigma$,
\begin{equation}
\widetilde{D}_{\alpha}(\rho\|\sigma)\leq\overline{D}_{\alpha}(\rho\|\sigma)
\end{equation}
for all $\alpha>0$ \cite{Wilde2014,Datta2014} and that 
\begin{equation}
\alpha\overline{D}_{\alpha}(\rho\|\sigma)\leq\widetilde{D}_{\alpha}(\rho\|\sigma)
\end{equation}
for all $\alpha\in\left(0,1\right)$ \cite[Corollary~2.3]{Iten2017},
we can divide by $\alpha$ in both of the inequalities and apply \cite[Proposition~4]{Minervini2025}
to conclude the following matrix inequalities for the Petz-- and
sandwiched R\'enyi information matrices for all $\theta\in\mathbb{R}^{L}$:
\begin{equation}
\widetilde{I}_{\alpha}(\theta)\leq\overline{I}_{\alpha}(\theta)
\end{equation}
for all $\alpha>0$ and that
\begin{equation}
\alpha\overline{I}_{\alpha}(\theta)\leq\widetilde{I}_{\alpha}(\theta)
\end{equation}
for all $\alpha\in\left(0,1\right)$. Additionally, it is known that,
for fixed states $\rho$ and $\sigma$ and $\alpha>0$, the function
$z\mapsto D_{\alpha,z}(\rho\|\sigma)$ is monotone increasing on $\mathbb{R}_{+}$
if $\alpha\in\left(0,1\right)$ and it is monotone decreasing on $\mathbb{R}_{+}$
if $\alpha>1$ \cite[Proposition~1]{Lin2015}. A direct consequence
is the following:
\begin{cor}
For all $\theta\in\mathbb{R}^{L}$, the matrix-valued function $z\mapsto I_{\alpha,z}(\theta)$
is monotone increasing on $\mathbb{R}_{+}$ if $\alpha\in\left(0,1\right)$
and it is monotone decreasing on $\mathbb{R}_{+}$ if $\alpha>1$,
with respect to the Loewner order.
\end{cor}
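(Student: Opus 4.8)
The plan is to obtain the statement as an immediate consequence of the monotonicity of $z \mapsto D_{\alpha,z}(\rho\|\sigma)$ recorded just above from \cite[Proposition~1]{Lin2015}, together with the general principle --- used already in this section for the Petz-- and sandwiched orderings, and reviewed in \cite[Proposition~4]{Minervini2025} --- that a pointwise ordering of divergences descends to a Loewner ordering of the induced information matrices.

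Concretely, I would first fix $\alpha \in (0,1)$ and $0 < z_1 \leq z_2$. By \cite[Proposition~1]{Lin2015}, $D_{\alpha,z_1}(\rho(\theta)\|\rho(\theta+\varepsilon)) \leq D_{\alpha,z_2}(\rho(\theta)\|\rho(\theta+\varepsilon))$ for all $\varepsilon$ in a neighborhood of $0$ small enough that $\rho(\theta+\varepsilon)$ remains positive definite. Setting $f(\varepsilon) \coloneqq D_{\alpha,z_2}(\rho(\theta)\|\rho(\theta+\varepsilon)) - D_{\alpha,z_1}(\rho(\theta)\|\rho(\theta+\varepsilon))$, this says $f \geq 0$ near $0$, while $f(0) = 0$ because $D_{\alpha,z}(\rho(\theta)\|\rho(\theta)) = \frac{1}{\alpha - 1}\ln\Tr[\rho(\theta)] = 0$ for every $z > 0$. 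Each $D_{\alpha,z}$ is twice differentiable on positive definite states, so $f$ is too, and since $f$ attains a local minimum at $\varepsilon = 0$, the Hessian $\left.\frac{\partial^{2}}{\partial\varepsilon_{i}\partial\varepsilon_{j}}f(\varepsilon)\right|_{\varepsilon=0}$ is positive semi-definite. By linearity of the Hessian and the definition \eqref{eq:a-z-fisher-info-def}, this Hessian equals $\alpha\left(I_{\alpha,z_2}(\theta) - I_{\alpha,z_1}(\theta)\right)$; dividing by $\alpha > 0$ then yields $I_{\alpha,z_1}(\theta) \leq I_{\alpha,z_2}(\theta)$, which is the asserted monotone increase since $z_1 \leq z_2$ were arbitrary.

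For $\alpha > 1$, \cite[Proposition~1]{Lin2015} reverses the inequality, $D_{\alpha,z_1}(\rho\|\sigma) \geq D_{\alpha,z_2}(\rho\|\sigma)$ for $z_1 \leq z_2$, so I would run the same argument with the two divergences interchanged in the definition of $f$, concluding $I_{\alpha,z_1}(\theta) \geq I_{\alpha,z_2}(\theta)$, i.e.\ monotone decrease. Equivalently, this can be phrased as a direct invocation of \cite[Proposition~4]{Minervini2025} applied to the ordered pair of divergences $\left(\frac{1}{\alpha}D_{\alpha,z_1}, \frac{1}{\alpha}D_{\alpha,z_2}\right)$, exactly as done above for the Petz-- and sandwiched R\'enyi information matrices.

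There is no real obstacle here; the one point to keep in mind is that $D_{\alpha,z}$ need not obey the data-processing inequality for all $\alpha,z$ (recall Fact~\ref{fact:a-z-data-proc}), so one should not literally cite a statement restricted to ``smooth divergences.'' But the only facts used in the Hessian-positivity step are twice differentiability and the normalization $D_{\alpha,z}(\rho(\theta)\|\rho(\theta)) = 0$, neither of which requires data processing, so the argument goes through verbatim.
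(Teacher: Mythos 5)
Your proposal is correct and follows exactly the route the paper intends: the corollary is stated as a direct consequence of the $z$-monotonicity of $D_{\alpha,z}$ from \cite[Proposition~1]{Lin2015} combined with the principle that a pointwise ordering of divergences vanishing on the diagonal yields a Loewner ordering of their Hessians at $\varepsilon=0$, as in \cite[Proposition~4]{Minervini2025}. Your explicit second-order-condition argument, and your observation that only twice differentiability and $D_{\alpha,z}(\rho\|\rho)=0$ are needed (not data processing), correctly fill in the details the paper leaves implicit.
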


\subsection{Ordering of general information matrices}

\label{subsec:Ordering-general-info-matrices}Suppose that $\zeta\colon\mathbb{R}_{+}\times\mathbb{R}_{+}\to\mathbb{R}_{+}$
is a function satisfying items 1--3 of Theorem~\ref{thm:q-fisher-info-char},
and define $f(t)\coloneqq\frac{1}{\zeta(t,1)}$ for all $t>0$. Recall
that a general information matrix has the following form:
\begin{align}
\left[I_{\zeta}(\theta)\right]_{i,j} & =\sum_{k,\ell}\zeta(\lambda_{k},\lambda_{\ell})\Tr\!\left[\Pi_{k}\left(\partial_{i}\rho(\theta)\right)\Pi_{\ell}\left(\partial_{j}\rho(\theta)\right)\right],\\
 & =\sum_{k,\ell}\frac{1}{\lambda_{\ell}f\!\left(\frac{\lambda_{k}}{\lambda_{\ell}}\right)}\Tr\!\left[\Pi_{k}\left(\partial_{i}\rho(\theta)\right)\Pi_{\ell}\left(\partial_{j}\rho(\theta)\right)\right],\label{eq:gen-info-matrix-f-func}
\end{align}
where the second equality follows because
\begin{equation}
\zeta(\lambda_{k},\lambda_{\ell})=\frac{1}{\lambda_{\ell}}\zeta\!\left(\frac{\lambda_{k}}{\lambda_{\ell}},1\right)=\frac{1}{\lambda_{\ell}f\!\left(\frac{\lambda_{k}}{\lambda_{\ell}}\right)}.
\end{equation}

In this section, I drop the requirement for $f$ to be operator monotone
on $\left(0,\infty\right)$, in which case $I_{\zeta}(\theta)$ need
not satisfy the data-processing inequality. However, the information
matrix $I_{\zeta}(\theta)$ is still positive semi-definite, as seen
in Corollary~\ref{cor:gen-info-PSD} below, and we can still consider
ordering relations between information matrices.

Let us begin with a general statement regarding ordering of information
matrices built from functions $f_{1}$ and $f_{2}$ that satisfy an
ordering relation on $\mathbb{R}_{+}$:
\begin{lem}
\label{lem:ordering-for-info-matrices}Let $\zeta_{1},\zeta_{2}\colon\mathbb{R}_{+}\times\mathbb{R}_{+}\to\mathbb{R}_{+}$
be functions that satisfy items 1--3 of Theorem~\ref{thm:q-fisher-info-char},
and define $f_{i}(t)\coloneqq\frac{1}{\zeta_{i}(t,1)}$ for all $t>0$
and $i\in\left\{ 1,2\right\} $. Suppose that $f_{1}(t)\geq f_{2}(t)$
for all $t>0$. Then the following matrix inequality holds for all
$\theta\in\mathbb{R}^{L}$:
\begin{equation}
I_{\zeta_{1}}(\theta)\leq I_{\zeta_{2}}(\theta).
\end{equation}
\end{lem}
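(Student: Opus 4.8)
The plan is to reduce the Loewner inequality to a pointwise comparison of the coefficient functions and then to display the quadratic form of $I_{\zeta_2}(\theta) - I_{\zeta_1}(\theta)$ as a manifestly nonnegative weighted sum of squared Hilbert--Schmidt norms.

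First I would use the homogeneity property (item~3 of Theorem~\ref{thm:q-fisher-info-char}) to write, exactly as in \eqref{eq:gen-info-matrix-f-func}, $\zeta_i(\lambda_k,\lambda_\ell) = \bigl(\lambda_\ell f_i(\lambda_k/\lambda_\ell)\bigr)^{-1}$ for $i \in \{1,2\}$, where $\rho(\theta) = \sum_k \lambda_k \Pi_k$ is a spectral decomposition. Since $f_1(t) \ge f_2(t)$ for all $t > 0$ and $f_1, f_2 > 0$, this immediately gives $\zeta_1(\lambda_k,\lambda_\ell) \le \zeta_2(\lambda_k,\lambda_\ell)$ for every pair $(k,\ell)$; equivalently, $\delta(x,y) \coloneqq \zeta_2(x,y) - \zeta_1(x,y)$ is nonnegative on the spectrum of $\rho(\theta)$.

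Next, since each $I_{\zeta_i}(\theta)$ is real and symmetric (because each $\zeta_i$ is real-valued and symmetric, the latter being item~2 of Theorem~\ref{thm:q-fisher-info-char}), it suffices to test the inequality against an arbitrary real vector $c = (c_1,\dots,c_L)$. Setting $X \coloneqq \sum_{i=1}^L c_i\,\partial_i\rho(\theta)$, which is Hermitian because each $\partial_i\rho(\theta)$ is, bilinearity of the defining formula yields
\begin{equation}
\sum_{i,j} c_i c_j\left([I_{\zeta_2}(\theta)]_{i,j} - [I_{\zeta_1}(\theta)]_{i,j}\right) = \sum_{k,\ell} \delta(\lambda_k,\lambda_\ell)\,\Tr\!\left[\Pi_k X \Pi_\ell X\right].
\end{equation}
The step requiring the most care — and the one I would flag as the crux — is recognizing that $\Tr[\Pi_k X \Pi_\ell X] = \Tr\!\bigl[(\Pi_k X \Pi_\ell)(\Pi_k X \Pi_\ell)^\dagger\bigr] = \lVert \Pi_k X \Pi_\ell \rVert_2^2 \ge 0$, which uses that $\Pi_k$, $\Pi_\ell$, and $X$ are all Hermitian. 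Combined with $\delta(\lambda_k,\lambda_\ell) \ge 0$ from the previous step, the right-hand side is a sum of nonnegative terms, hence nonnegative, and this is precisely the claimed inequality $I_{\zeta_1}(\theta) \le I_{\zeta_2}(\theta)$.

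I do not expect any deeper difficulty: this is the standard Morozova--Petz-type manipulation, and the only genuine inputs are the homogeneity of the $\zeta_i$ and the Hilbert--Schmidt identity above. As a byproduct, running the same computation with the single positive function $\zeta$ in place of $\delta$ establishes positive semi-definiteness of $I_\zeta(\theta)$, i.e., Corollary~\ref{cor:gen-info-PSD}.
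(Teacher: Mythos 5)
Your proof is correct and follows essentially the same route as the paper's: reduce to the quadratic form $c^{T}I_{\zeta}(\theta)c$, absorb $c$ into the single Hermitian operator $X=\sum_{i}c_{i}\partial_{i}\rho(\theta)$, and compare the coefficients $\zeta_{1}(\lambda_{k},\lambda_{\ell})\leq\zeta_{2}(\lambda_{k},\lambda_{\ell})$ termwise against the nonnegative quantities $\Tr\!\left[\Pi_{k}X\Pi_{\ell}X\right]$. The only cosmetic difference is that you certify $\Tr\!\left[\Pi_{k}X\Pi_{\ell}X\right]=\left\Vert \Pi_{k}X\Pi_{\ell}\right\Vert _{2}^{2}\geq0$ directly via the Hilbert--Schmidt identity, whereas the paper vectorizes and writes the whole sum as $\langle W|\left[\sum_{k,\ell}\zeta(\lambda_{k},\lambda_{\ell})\left(\Pi_{k}\otimes\Pi_{\ell}^{T}\right)\right]|W\rangle$ and compares the positive semi-definite operators inside; both presentations rest on the same fact.
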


\begin{proof}
To see this, consider that the desired matrix inequality is equivalent
to 
\begin{equation}
v^{T}I_{\zeta_{1}}(\theta)v\leq v^{T}I_{\zeta_{2}}(\theta)v\label{eq:ordering-info-matrices-general}
\end{equation}
holding for all $v\in\mathbb{R}^{L}$. Defining
\begin{align}
W & \coloneqq\sum_{i}v_{i}\partial_{i}\rho(\theta),\\
|W\rangle & \coloneqq\left(W\otimes I\right)|\Gamma\rangle,\\
|\Gamma\rangle & \coloneqq\sum_{i}|i\rangle\otimes|i\rangle,
\end{align}
 consider that, for a general $\zeta$ and $f$, defined in the same
way as $\zeta_{i}$ and $f_{i}$, 
\begin{align}
v^{T}I_{\zeta}(\theta)v & =\sum_{i,j}v_{i}\left(\sum_{k,\ell}\frac{1}{\lambda_{\ell}f\!\left(\frac{\lambda_{k}}{\lambda_{\ell}}\right)}\Tr\!\left[\Pi_{k}\left(\partial_{i}\rho(\theta)\right)\Pi_{\ell}\left(\partial_{j}\rho(\theta)\right)\right]\right)v_{j}\\
 & =\sum_{k,\ell}\frac{1}{\lambda_{\ell}f\!\left(\frac{\lambda_{k}}{\lambda_{\ell}}\right)}\Tr\!\left[\Pi_{k}\left(\sum_{i}v_{i}\partial_{i}\rho(\theta)\right)\Pi_{\ell}\left(\sum_{j}v_{j}\partial_{j}\rho(\theta)\right)\right]\\
 & =\sum_{k,\ell}\frac{1}{\lambda_{\ell}f\!\left(\frac{\lambda_{k}}{\lambda_{\ell}}\right)}\Tr\!\left[\Pi_{k}W\Pi_{\ell}W\right]\\
 & =\sum_{k,\ell}\frac{1}{\lambda_{\ell}f\!\left(\frac{\lambda_{k}}{\lambda_{\ell}}\right)}\langle W|\left(\Pi_{k}\otimes\Pi_{\ell}^{T}\right)|W\rangle\\
 & =\langle W|\left[\sum_{k,\ell}\frac{1}{\lambda_{\ell}f\!\left(\frac{\lambda_{k}}{\lambda_{\ell}}\right)}\left(\Pi_{k}\otimes\Pi_{\ell}^{T}\right)\right]|W\rangle,\label{eq:alt-form-metric-petz-K}
\end{align}
where we used that
\begin{align}
\Tr\!\left[\Pi_{k}W\Pi_{\ell}W\right] & =\Tr\!\left[W\Pi_{k}W\Pi_{\ell}\right]\\
 & =\langle\Gamma|\left(W\Pi_{k}W\Pi_{\ell}\otimes I\right)|\Gamma\rangle\\
 & =\langle\Gamma|\left(W\Pi_{k}W\otimes\Pi_{\ell}^{T}\right)|\Gamma\rangle\\
 & =\langle\Gamma|\left(W\otimes I\right)\left(\Pi_{k}\otimes\Pi_{\ell}^{T}\right)\left(W\otimes I\right)|\Gamma\rangle\\
 & =\langle W|\left(\Pi_{k}\otimes\Pi_{\ell}^{T}\right)|W\rangle.
\end{align}
Then the desired inequality in \eqref{eq:ordering-info-matrices-general}
follows because
\begin{align}
v^{T}I_{\zeta_{1}}(\theta)v & =\langle W|\left[\sum_{k,\ell}\frac{1}{\lambda_{\ell}f_{1}\!\left(\frac{\lambda_{k}}{\lambda_{\ell}}\right)}\left(\Pi_{k}\otimes\Pi_{\ell}^{T}\right)\right]|W\rangle\\
 & \leq\langle W|\left[\sum_{k,\ell}\frac{1}{\lambda_{\ell}f_{2}\!\left(\frac{\lambda_{k}}{\lambda_{\ell}}\right)}\left(\Pi_{k}\otimes\Pi_{\ell}^{T}\right)\right]|W\rangle\\
 & =v^{T}I_{\zeta_{2}}(\theta)v,
\end{align}
where we used the matrix inequality
\begin{equation}
\sum_{k,\ell}\frac{1}{\lambda_{\ell}f_{1}\!\left(\frac{\lambda_{k}}{\lambda_{\ell}}\right)}\left(\Pi_{k}\otimes\Pi_{\ell}^{T}\right)\leq\sum_{k,\ell}\frac{1}{\lambda_{\ell}f_{2}\!\left(\frac{\lambda_{k}}{\lambda_{\ell}}\right)}\left(\Pi_{k}\otimes\Pi_{\ell}^{T}\right),
\end{equation}
which follows from the assumption that $f_{1}(t)\geq f_{2}(t)$ for
all $t>0$.
\end{proof}
\begin{rem}
The expression in \eqref{eq:alt-form-metric-petz-K} is a rewriting
of \cite[Eqs.~(8) \& (11)]{Petz1996} that does not make use of the
left and right multiplication superoperators.
\end{rem}

\begin{cor}
\label{cor:gen-info-PSD}Let $\zeta\colon\mathbb{R}_{+}\times\mathbb{R}_{+}\to\mathbb{R}_{+}$
be a function that satisfies items 1--3 of Theorem~\ref{thm:q-fisher-info-char},
and define $f(t)\coloneqq\frac{1}{\zeta(t,1)}$ for all $t>0$. Then
the following matrix inequality holds for all $\theta\in\mathbb{R}^{L}$:
\begin{equation}
I_{\zeta}(\theta)\geq0.
\end{equation}
\end{cor}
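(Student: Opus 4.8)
The plan is to reuse the quadratic-form rewriting that was established \emph{inside} the proof of Lemma~\ref{lem:ordering-for-info-matrices}. There it was shown that, for any $\zeta\colon\mathbb{R}_{+}\times\mathbb{R}_{+}\to\mathbb{R}_{+}$ satisfying items 1--3 of Theorem~\ref{thm:q-fisher-info-char}, with $f(t)\coloneqq\frac{1}{\zeta(t,1)}$, and for any $v\in\mathbb{R}^{L}$,
\begin{equation}
v^{T}I_{\zeta}(\theta)v=\langle W|\left[\sum_{k,\ell}\frac{1}{\lambda_{\ell}f\!\left(\frac{\lambda_{k}}{\lambda_{\ell}}\right)}\left(\Pi_{k}\otimes\Pi_{\ell}^{T}\right)\right]|W\rangle,
\end{equation}
where $W\coloneqq\sum_{i}v_{i}\partial_{i}\rho(\theta)$ and $|W\rangle\coloneqq\left(W\otimes I\right)|\Gamma\rangle$, exactly as in \eqref{eq:alt-form-metric-petz-K}. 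The first step is simply to note that this identity was derived using only properties 1--3 of $\zeta$ (the operator monotonicity of $f$ was not invoked in its derivation), so it is available verbatim in the present setting.

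Next I would observe that since $\zeta$ is assumed to take values in $\mathbb{R}_{+}=\left(0,\infty\right)$, each coefficient $\frac{1}{\lambda_{\ell}f\!\left(\lambda_{k}/\lambda_{\ell}\right)}=\zeta(\lambda_{k},\lambda_{\ell})$ is strictly positive. Moreover, $\left\{ \Pi_{k}\otimes\Pi_{\ell}^{T}\right\} _{k,\ell}$ is a family of mutually orthogonal projections (which in fact sum to the identity), so the bracketed operator is a nonnegative linear combination of mutually orthogonal projections and is therefore positive semi-definite. Hence $v^{T}I_{\zeta}(\theta)v=\langle W|\,(\text{PSD operator})\,|W\rangle\geq0$ for every $v\in\mathbb{R}^{L}$, which is precisely the claim $I_{\zeta}(\theta)\geq0$.

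There is essentially no obstacle: the corollary follows immediately from the computation already carried out for Lemma~\ref{lem:ordering-for-info-matrices}, and the only point worth spelling out is that the intermediate rewriting does not secretly rely on operator monotonicity of $f$, so that it still applies once that hypothesis is dropped.
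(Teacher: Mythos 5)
Your proposal is correct and follows essentially the same route as the paper: both reuse the quadratic-form identity \eqref{eq:alt-form-metric-petz-K} from the proof of Lemma~\ref{lem:ordering-for-info-matrices} and then observe that the bracketed operator is a sum of positive semi-definite terms $\Pi_{k}\otimes\Pi_{\ell}^{T}$ with positive coefficients. Your extra remarks (that the identity uses only items 1--3, and that the projections are mutually orthogonal) are accurate but not needed beyond what the paper already states.
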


\begin{proof}
By the same reasoning as in the proof of Lemma \ref{lem:ordering-for-info-matrices},
it follows that, for all $v\in\mathbb{R}^{L}$,
\begin{align}
v^{T}I_{\zeta_{1}}(\theta)v & =\langle W|\left[\sum_{k,\ell}\frac{1}{\lambda_{\ell}f\!\left(\frac{\lambda_{k}}{\lambda_{\ell}}\right)}\left(\Pi_{k}\otimes\Pi_{\ell}^{T}\right)\right]|W\rangle.
\end{align}
The matrix $\sum_{k,\ell}\left[\lambda_{\ell}f\!\left(\frac{\lambda_{k}}{\lambda_{\ell}}\right)\right]^{-1}\left(\Pi_{k}\otimes\Pi_{\ell}^{T}\right)$
is positive semi-definite because $\lambda_{\ell}>0$, $f\!\left(\frac{\lambda_{k}}{\lambda_{\ell}}\right)>0$
given that $\lambda_{k},\lambda_{\ell}>0$, and $\Pi_{k}\otimes\Pi_{\ell}^{T}\geq0$
for all $k$ and $\ell$. Then the inequality
\begin{equation}
\langle W|\left[\sum_{k,\ell}\frac{1}{\lambda_{\ell}f\!\left(\frac{\lambda_{k}}{\lambda_{\ell}}\right)}\left(\Pi_{k}\otimes\Pi_{\ell}^{T}\right)\right]|W\rangle\geq0
\end{equation}
holds for all $|W\rangle$, thus concluding the proof.
\end{proof}

\subsection{Ordering of Petz--R\'enyi information matrices}

\label{subsec:Ordering-of-Petz-Renyi}Recall that, for all $\alpha\in\left(0,1\right)\cup\left(1,\infty\right)$,
the Petz--R\'enyi information matrix has the form given in Corollary~\ref{cor:Petz-Renyi-special-case},
which we can rewrite as follows by employing \eqref{eq:gen-info-matrix-f-func}:
\begin{equation}
\left[\overline{I}_{\alpha}(\theta)\right]_{i,j}=\sum_{k,\ell}\frac{1}{\lambda_{\ell}\overline{f}\!\left(\frac{\lambda_{k}}{\lambda_{\ell}},\alpha\right)}\Tr\!\left[\Pi_{k}\left(\partial_{i}\rho(\theta)\right)\Pi_{\ell}\left(\partial_{j}\rho(\theta)\right)\right],
\end{equation}
where
\begin{equation}
\overline{f}(x,\alpha)\coloneqq\alpha\left(1-\alpha\right)\left(\frac{\left(x-1\right)^{2}}{\left(x^{\alpha}-1\right)\left(x^{1-\alpha}-1\right)}\right).
\end{equation}
As a consequence of Lemma \ref{lem:ordering-for-info-matrices} and
Lemma \ref{lem:key-lem-monotone-Petz-Renyi}, the following theorem
holds:
\begin{thm}
\label{thm:ordering-Petz-Renyi}For all $\theta\in\mathbb{R}^{L}$,
the matrix-valued function $\alpha\mapsto\overline{I}_{\alpha}(\theta)$
is monotone decreasing on $\alpha\in\left(0,\frac{1}{2}\right]$ and
monotone increasing on $\alpha\in\left[\frac{1}{2},\infty\right)$,
with respect to the Loewner order.
\end{thm}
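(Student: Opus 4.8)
The plan is to reduce the Loewner-order statement to a one-variable monotonicity property of the scalar function $\overline{f}(t,\alpha)=\alpha(1-\alpha)(t-1)^{2}/\bigl[(t^{\alpha}-1)(t^{1-\alpha}-1)\bigr]$ and then feed it into the ordering principle of Lemma~\ref{lem:ordering-for-info-matrices}. By Corollary~\ref{cor:Petz-Renyi-special-case} together with \eqref{eq:gen-info-matrix-f-func}, $\overline{I}_{\alpha}(\theta)$ is an information matrix whose associated $f$-function is $t\mapsto\overline{f}(t,\alpha)$, so Lemma~\ref{lem:ordering-for-info-matrices} gives: whenever $\overline{f}(t,\alpha_{2})\geq\overline{f}(t,\alpha_{1})$ for all $t>0$, one has $\overline{I}_{\alpha_{2}}(\theta)\leq\overline{I}_{\alpha_{1}}(\theta)$. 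Hence it suffices to establish the (sign-reversed) scalar claim of Lemma~\ref{lem:key-lem-monotone-Petz-Renyi}: for each fixed $t>0$, the map $\alpha\mapsto\overline{f}(t,\alpha)$ is nondecreasing on $\bigl(0,\tfrac{1}{2}\bigr]$ and nonincreasing on $\bigl[\tfrac{1}{2},\infty\bigr)$, where at $\alpha=1$ one uses the continuous extension $\overline{f}(t,1)=(t-1)/\ln t$ and sets $\overline{I}_{1}(\theta)\coloneqq I_{\KM}(\theta)$ in accordance with \eqref{eq:a-z-to-KM-a-1}. The only subtlety in this translation is the direction reversal: Loewner-\emph{decreasing} in $\alpha$ for the matrix corresponds to pointwise-\emph{increasing} in $\alpha$ for $\overline{f}$.

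To prove the scalar claim, the case $t=1$ is trivial since $\overline{f}(1,\alpha)=1$ for all $\alpha$, so fix $t\neq1$ and put $u\coloneqq\ln t\neq0$. Differentiating $\ln\overline{f}(t,\alpha)=\ln\lvert\alpha\rvert+\ln\lvert1-\alpha\rvert+2\ln\lvert t-1\rvert-\ln\lvert t^{\alpha}-1\rvert-\ln\lvert t^{1-\alpha}-1\rvert$ in $\alpha$ and using $t^{\alpha}=e^{\alpha u}$, the expression collapses --- and this is the crux of the argument --- to
\[
\frac{\partial}{\partial\alpha}\ln\overline{f}(t,\alpha)=u\bigl[\chi(\alpha u)-\chi\bigl((1-\alpha)u\bigr)\bigr],\qquad \chi(s)\coloneqq\frac{1}{s}-\frac{1}{1-e^{-s}},
\]
with the \emph{same} function $\chi$ (extended continuously by $\chi(0)=-\tfrac{1}{2}$) evaluated at the two complementary points $\alpha u$ and $(1-\alpha)u$. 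One then checks that $\chi$ is strictly decreasing on all of $\mathbb{R}$: indeed $\chi'(s)=\frac{e^{-s}}{(1-e^{-s})^{2}}-\frac{1}{s^{2}}<0$ for $s\neq0$, since, after multiplying $s^{2}e^{-s}<(1-e^{-s})^{2}$ by $e^{s}$, this is the inequality $s^{2}<4\sinh^{2}(s/2)$, i.e.\ $\lvert v\rvert<\lvert\sinh v\rvert$ with $v=s/2\neq0$. Consequently $\operatorname{sgn}\bigl(\chi(\alpha u)-\chi((1-\alpha)u)\bigr)=\operatorname{sgn}\bigl((1-2\alpha)u\bigr)$, and multiplying by $u$ gives $\operatorname{sgn}\bigl(\partial_{\alpha}\ln\overline{f}(t,\alpha)\bigr)=\operatorname{sgn}(1-2\alpha)$ for every $u\neq0$; since $\partial_{\alpha}\ln\overline{f}$ extends continuously through $\alpha=1$ (there $\chi((1-\alpha)u)\to\chi(0)$, which is finite), the monotonicity of $\overline{f}(t,\cdot)$ on $\bigl(0,\tfrac{1}{2}\bigr]$ and $\bigl[\tfrac{1}{2},\infty\bigr)$ follows, and feeding it into the reduction of the previous paragraph completes the proof.

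The main obstacle is really the first display above: recognizing that $\partial_{\alpha}\ln\overline{f}$ separates into $u\bigl[\chi(\alpha u)-\chi((1-\alpha)u)\bigr]$ with one and the same $\chi$, so that the sign is governed purely by the ordering of $\alpha u$ and $(1-\alpha)u$ relative to $\tfrac{1}{2}u$. Once that separation is found, everything downstream is elementary: the monotonicity of $\chi$ reduces to $\lvert\sinh v\rvert>\lvert v\rvert$, and the passage from the scalar inequality back to the Loewner order is immediate from Lemma~\ref{lem:ordering-for-info-matrices} (indeed from the representation \eqref{eq:alt-form-metric-petz-K} established in its proof). Should the explicit-$\chi$ route prove awkward, a fallback would be to differentiate in $\alpha$ the integral representation of $\overline{f}(t,\alpha)$ obtained from \eqref{eq:integral-rep-div-diff-x-r} (as used in Proposition~\ref{prop:integral-rep-Petz}), but I expect the $\chi$-computation to be the cleanest path.
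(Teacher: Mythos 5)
Your proof is correct and follows essentially the same route as the paper's: reduction to scalar monotonicity of $\overline{f}(t,\cdot)$ via Lemma~\ref{lem:ordering-for-info-matrices}, expressing $\partial_{\alpha}\ln\overline{f}$ as a difference of a single monotone decreasing one-variable function evaluated at the complementary arguments involving $\alpha$ and $1-\alpha$, and reducing that monotonicity to the elementary inequality $2\cosh s\geq2+s^{2}$ (your $\left|\sinh v\right|\geq\left|v\right|$). Your $\chi$ differs from the paper's function $\alpha\mapsto\frac{1}{\alpha}\left(1-\frac{\alpha\ln x}{x^{\alpha}-1}\right)$ only by an additive constant that cancels in the difference, so the two arguments coincide up to reparametrization.
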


\begin{lem}
\label{lem:key-lem-monotone-Petz-Renyi}For all $x>0$, the function
\begin{align}
\alpha & \mapsto\overline{f}(x,\alpha)\coloneqq\alpha\left(1-\alpha\right)\left(\frac{\left(x-1\right)^{2}}{\left(x^{\alpha}-1\right)\left(x^{1-\alpha}-1\right)}\right)\label{eq:petz-renyi-MC-function-f}
\end{align}
is monotone increasing on $\alpha\in\left(0,\frac{1}{2}\right]$ and
monotone decreasing on $\alpha\in\left[\frac{1}{2},\infty\right)$.
\end{lem}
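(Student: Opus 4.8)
The plan is to rewrite $\overline{f}(x,\alpha)$ so that $\alpha$ and $1-\alpha$ enter symmetrically through a single convex function, which reduces the whole statement to a one-line derivative-sign computation. The case $x=1$ is degenerate, since then $\overline{f}(1,\alpha)=1$ for all $\alpha$, so assume $x\neq 1$ and set $t\coloneqq\ln x\neq 0$. Introduce $h(u)\coloneqq\frac{e^{u}-1}{u}$ for $u\neq 0$, extended by $h(0)\coloneqq 1$; it is smooth and strictly positive on all of $\mathbb{R}$. Since $x^{\beta}-1=e^{\beta t}-1=\beta t\,h(\beta t)$ and $(x-1)^{2}=t^{2}h(t)^{2}$, a direct cancellation of the factors $\alpha(1-\alpha)t^{2}$ yields the identity
\[
\overline{f}(x,\alpha)=\frac{h(t)^{2}}{h(\alpha t)\,h((1-\alpha)t)},
\]
valid for all real $\alpha$ once $\overline{f}(x,\cdot)$ is extended continuously across $\alpha\in\{0,1\}$ (the right-hand side is already defined and smooth there). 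This form also explains why the statement covers $\alpha\geq 1$: $h$ is positive on all of $\mathbb{R}$, so no sign issue arises when $(1-\alpha)t<0$. Taking logarithms and writing $g\coloneqq\ln h$,
\[
\ln\overline{f}(x,\alpha)=2\ln h(t)-g(\alpha t)-g((1-\alpha)t).
\]

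The crux is that $g=\ln h$ is strictly convex on $\mathbb{R}$. Factor $h(u)=e^{u/2}\,\frac{\sinh(u/2)}{u/2}$, so that $g(u)=\tfrac{u}{2}+\varphi(u/2)$ with $\varphi(v)\coloneqq\ln\frac{\sinh v}{v}$. The affine term contributes nothing to $g''$, so $g''(u)=\tfrac14\varphi''(u/2)$, and it suffices to show $\varphi''>0$ on $\mathbb{R}$. The function $\varphi$ is even, hence so is $\varphi''$; for $v>0$ one computes $\varphi''(v)=\frac{1}{v^{2}}-\frac{1}{\sinh^{2}v}$, which is strictly positive because $\sinh v>v$ for $v>0$, and $\varphi''(0)=\tfrac13>0$ by the Taylor expansion $\ln\frac{\sinh v}{v}=\tfrac{v^{2}}{6}+O(v^{4})$. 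Hence $g''>0$ on $\mathbb{R}$, so $g'$ is strictly increasing on $\mathbb{R}$.

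Finally I would differentiate in $\alpha$ and read off the sign. From the displayed expression for $\ln\overline{f}$,
\[
\frac{\partial}{\partial\alpha}\ln\overline{f}(x,\alpha)=t\bigl(g'((1-\alpha)t)-g'(\alpha t)\bigr).
\]
For $\alpha<\tfrac12$ the difference $(1-\alpha)t-\alpha t=(1-2\alpha)t$ is nonzero with the same sign as $t$; since $g'$ is strictly increasing, $g'((1-\alpha)t)-g'(\alpha t)$ is then nonzero with the same sign as $t$ as well, so the product is strictly positive and $\alpha\mapsto\overline{f}(x,\alpha)$ is strictly increasing on $\left(0,\tfrac12\right]$. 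For $\alpha>\tfrac12$ the two factors have opposite signs, so the derivative is strictly negative and $\overline{f}(x,\cdot)$ is strictly decreasing on $\left[\tfrac12,\infty\right)$; this covers $\alpha\geq 1$, where $(1-\alpha)t$ is merely a negative (or positive) real at which $g'$ is still defined. Together with the constant case $x=1$, this proves the lemma. The one step that requires care is the convexity of $g=\ln h$, together with phrasing everything through the $h$-representation so that $\alpha\geq 1$ and the removable points $\alpha\in\{0,1\}$ are handled uniformly; everything else is bookkeeping.
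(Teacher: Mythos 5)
Your proof is correct. The identity $\overline{f}(x,\alpha)=h(t)^{2}/\bigl(h(\alpha t)\,h((1-\alpha)t)\bigr)$ with $t=\ln x$ and $h(u)=(e^{u}-1)/u$ checks out, the factorization $h(u)=e^{u/2}\sinh(u/2)/(u/2)$ gives $(\ln h)''(u)=\tfrac14\varphi''(u/2)$ with $\varphi''(v)=\tfrac{1}{v^{2}}-\tfrac{1}{\sinh^{2}v}>0$, and the sign analysis of $t\bigl(g'((1-\alpha)t)-g'(\alpha t)\bigr)$ is right on both sides of $\alpha=\tfrac12$. At bottom this is the same computation as the paper's: the paper also differentiates, reduces to the monotonicity of a single one-variable function, namely $\alpha\mapsto\frac{1}{\alpha}\bigl(1-\frac{\alpha\ln x}{x^{\alpha}-1}\bigr)$ (which is exactly $t-t\,g'(\alpha t)$ in your notation), and proves that monotonicity via the inequality $\cosh(\lambda\alpha)\geq1+\tfrac12(\lambda\alpha)^{2}$, which is precisely your $\sinh^{2}v>v^{2}$ after the substitution $\lambda\alpha=2v$. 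What your packaging buys is twofold: identifying the key one-variable function as $(\ln h)'$ turns the whole argument into the statement that $h$ is log-convex, which is more memorable and less error-prone than the paper's term-by-term algebra; and the $h$-representation is smooth at $\alpha\in\{0,1\}$, so the removable singularities of the original formula (where both $\alpha(1-\alpha)$ and a denominator factor vanish) and the range $\alpha\geq1$ are handled uniformly, whereas the paper leaves the continuous extension at these points implicit. Either proof would serve; yours is the cleaner write-up of the same underlying inequality.
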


\begin{proof}
The plan is to calculate the derivative of the function in \eqref{eq:petz-renyi-MC-function-f}
and show that it is non-negative for all $\alpha\in\left(0,\frac{1}{2}\right]$
and non-positive for all $\alpha\geq\frac{1}{2}$. To begin with,
consider that $\overline{f}(x,\alpha)>0$ for all $\alpha>0$ and
$x>0$ (one can verify this by considering various cases $\alpha\in\left(0,1\right)$,
$\alpha>1$, $x\in\left(0,1\right)$, and $x>1$). The derivative
of $\overline{f}(x,\alpha)$ with respect to $\alpha$ is given by
\begin{align}
 & \frac{\partial}{\partial\alpha}\overline{f}(x,\alpha)\nonumber \\
 & =\frac{\partial}{\partial\alpha}\left(\frac{\alpha\left(1-\alpha\right)\left(x-1\right)^{2}}{\left(x^{\alpha}-1\right)\left(x^{1-\alpha}-1\right)}\right)\\
 & =\frac{\frac{\partial}{\partial\alpha}\left[\alpha\left(1-\alpha\right)\right]\left(x-1\right)^{2}}{\left(x^{\alpha}-1\right)\left(x^{1-\alpha}-1\right)}-\left(\frac{\alpha\left(1-\alpha\right)\left(x-1\right)^{2}}{\left[\left(x^{\alpha}-1\right)\left(x^{1-\alpha}-1\right)\right]^{2}}\frac{\partial}{\partial\alpha}\left[\left(x^{\alpha}-1\right)\left(x^{1-\alpha}-1\right)\right]\right)\\
 & =\frac{\left(1-2\alpha\right)\left(x-1\right)^{2}}{\left(x^{\alpha}-1\right)\left(x^{1-\alpha}-1\right)}-\alpha\left(1-\alpha\right)\left(x-1\right)^{2}\frac{-x^{\alpha}\ln x+x^{1-\alpha}\ln x}{\left[\left(x^{\alpha}-1\right)\left(x^{1-\alpha}-1\right)\right]^{2}}\\
 & =\frac{\left(x-1\right)^{2}}{\left(x^{\alpha}-1\right)\left(x^{1-\alpha}-1\right)}\left[\left(1-2\alpha\right)-\alpha\left(1-\alpha\right)\frac{x^{1-\alpha}-x^{\alpha}}{\left(x^{\alpha}-1\right)\left(x^{1-\alpha}-1\right)}\ln x\right].
\end{align}
Let us analyze the term on the right in square brackets:
\begin{align}
 & \left(1-2\alpha\right)-\alpha\left(1-\alpha\right)\frac{x^{1-\alpha}-x^{\alpha}}{\left(x^{\alpha}-1\right)\left(x^{1-\alpha}-1\right)}\ln x\nonumber \\
 & =\left(1-2\alpha\right)-\alpha\left(1-\alpha\right)\frac{x^{1-\alpha}-1-\left(x^{\alpha}-1\right)}{\left(x^{\alpha}-1\right)\left(x^{1-\alpha}-1\right)}\ln x\\
 & =\left(1-2\alpha\right)-\alpha\left(1-\alpha\right)\left(\frac{1}{x^{\alpha}-1}-\frac{1}{x^{1-\alpha}-1}\right)\ln x\\
 & =1-\alpha-\alpha-\left(1-\alpha\right)\frac{\alpha\ln x}{x^{\alpha}-1}-\alpha\frac{\left(1-\alpha\right)\ln x}{x^{1-\alpha}-1}\\
 & =\left(1-\alpha\right)\left(1-\frac{\alpha\ln x}{x^{\alpha}-1}\right)-\alpha\left(1-\frac{\left(1-\alpha\right)\ln x}{x^{1-\alpha}-1}\right)\\
 & =\alpha\left(1-\alpha\right)\left[\frac{1}{\alpha}\left(1-\frac{\alpha\ln x}{x^{\alpha}-1}\right)-\frac{1}{1-\alpha}\left(1-\frac{\left(1-\alpha\right)\ln x}{x^{1-\alpha}-1}\right)\right].
\end{align}
Then it follows that
\begin{equation}
\frac{\partial}{\partial\alpha}\overline{f}(x,\alpha)=\overline{f}(x,\alpha)\left[\frac{1}{\alpha}\left(1-\frac{\alpha\ln x}{x^{\alpha}-1}\right)-\frac{1}{1-\alpha}\left(1-\frac{\left(1-\alpha\right)\ln x}{x^{1-\alpha}-1}\right)\right].
\end{equation}
Since $\overline{f}(x,\alpha)>0$ for all $x>0$ and $\alpha>0$,
it suffices to prove that
\begin{equation}
\frac{1}{\alpha}\left(1-\frac{\alpha\ln x}{x^{\alpha}-1}\right)\geq\frac{1}{1-\alpha}\left(1-\frac{\left(1-\alpha\right)\ln x}{x^{1-\alpha}-1}\right)\label{eq:revised-func-petz-renyi-inc}
\end{equation}
for all $\alpha\in\left(0,\frac{1}{2}\right]$ and that
\begin{equation}
\frac{1}{\alpha}\left(1-\frac{\alpha\ln x}{x^{\alpha}-1}\right)\leq\frac{1}{1-\alpha}\left(1-\frac{\left(1-\alpha\right)\ln x}{x^{1-\alpha}-1}\right)\label{eq:revised-func-petz-renyi-inc-1}
\end{equation}
for all $\alpha\geq\frac{1}{2}$. Since the function $\alpha\mapsto\frac{1}{\alpha}\left(1-\frac{\alpha\ln x}{x^{\alpha}-1}\right)$
is symmetric about $\alpha=\frac{1}{2}$ on $\alpha\in\left(0,1\right)$,
it suffices to prove the inequality in \eqref{eq:revised-func-petz-renyi-inc-1}
for all $\alpha\geq\frac{1}{2}$. The inequality in \eqref{eq:revised-func-petz-renyi-inc-1}
will follow if we prove that the function $\alpha\mapsto\frac{1}{\alpha}\left(1-\frac{\alpha\ln x}{x^{\alpha}-1}\right)$
is decreasing on $\alpha\in\left(-\infty,\infty\right)$. To do so,
let us evaluate the derivative of this function as follows and prove
that it is non-positive for all $x>0$:
\begin{align}
\frac{\partial}{\partial\alpha}\left(\frac{1}{\alpha}\left(1-\frac{\alpha\ln x}{x^{\alpha}-1}\right)\right) & =\frac{\partial}{\partial\alpha}\left(\frac{1}{\alpha}-\frac{\ln x}{x^{\alpha}-1}\right)\\
 & =-\frac{1}{\alpha^{2}}+\frac{\ln x}{\left(x^{\alpha}-1\right)^{2}}x^{\alpha}\ln x\\
 & =-\frac{1}{\alpha^{2}}+\left(\frac{\ln x}{x^{\alpha}-1}\right)^{2}x^{\alpha}
\end{align}
The desired inequality $\frac{\partial}{\partial\alpha}\left(\frac{1}{\alpha}\left(1-\frac{\alpha\ln x}{x^{\alpha}-1}\right)\right)\leq0$
is then equivalent to
\begin{align}
-\frac{1}{\alpha^{2}}+\left(\frac{\ln x}{x^{\alpha}-1}\right)^{2}x^{\alpha} & \leq0\\
\Longleftrightarrow\qquad\left(\frac{\ln x}{x^{\alpha}-1}\right)^{2}x^{\alpha} & \leq\frac{1}{\alpha^{2}}\\
\Longleftrightarrow\qquad\left(\ln x\right)^{2}\alpha^{2} & \leq\frac{\left(x^{\alpha}-1\right)^{2}}{x^{\alpha}}\\
 & =x^{\alpha}-2+x^{-\alpha}
\end{align}
Now set $x=e^{\lambda}$ for $\lambda\in\mathbb{R}$ and observe that
\begin{align}
\left(\ln x\right)^{2}\alpha^{2} & \leq x^{\alpha}-2+x^{-\alpha}\\
\Longleftrightarrow\qquad2+\left(\lambda\alpha\right)^{2} & \leq e^{\lambda\alpha}+e^{-\lambda\alpha}\\
\Longleftrightarrow\qquad1+\frac{1}{2}\left(\lambda\alpha\right)^{2} & \leq\cosh(\lambda\alpha)\\
 & =1+\frac{1}{2}\left(\lambda\alpha\right)^{2}+\sum_{k=2}^{\infty}\frac{\left(\lambda\alpha\right)^{2k}}{(2k)!}\\
\Longleftrightarrow\qquad0 & \leq\sum_{k=2}^{\infty}\frac{\left(\lambda\alpha\right)^{2k}}{(2k)!}.
\end{align}
Thus, the function $\alpha\mapsto\frac{1}{\alpha}\left(1-\frac{\alpha\ln x}{x^{\alpha}-1}\right)$
is indeed decreasing on $\alpha\in\left(-\infty,\infty\right)$ for
all $x>0$, concluding the proof.
\end{proof}

\subsection{Ordering of sandwiched R\'enyi information matrices}

\label{subsec:Ordering-of-sandwiched-Renyi}Recall that, for all $\alpha\in\left(0,1\right)\cup\left(1,\infty\right)$,
the sandwiched R\'enyi information matrix has the form given in Corollary~\ref{cor:sandwiched-Renyi-special-case},
which we can rewrite as follows by employing \eqref{eq:gen-info-matrix-f-func}:
\begin{equation}
\left[\widetilde{I}_{\alpha}(\theta)\right]_{i,j}=\sum_{k,\ell}\frac{1}{\lambda_{\ell}\widetilde{f}\!\left(\frac{\lambda_{k}}{\lambda_{\ell}},\alpha\right)}\Tr\!\left[\Pi_{k}\left(\partial_{i}\rho(\theta)\right)\Pi_{\ell}\left(\partial_{j}\rho(\theta)\right)\right],
\end{equation}
where
\begin{equation}
\widetilde{f}(x,\alpha)\coloneqq\left(1-\alpha\right)\left(\frac{x^{\frac{1}{\alpha}}-1}{x^{\frac{1-\alpha}{\alpha}}-1}\right).
\end{equation}
As a consequence of Lemma \ref{lem:ordering-for-info-matrices} and
Lemma \ref{lem:key-lem-monotone-Petz-Renyi}, the following theorem
holds:
\begin{thm}
\label{thm:ordering-sandwiched-Renyi}For all $\theta\in\mathbb{R}^{L}$,
the matrix-valued function $\alpha\mapsto\widetilde{I}_{\alpha}(\theta)$
is monotone increasing on $\alpha\in\left(0,\infty\right)$, with
respect to the Loewner order.
\end{thm}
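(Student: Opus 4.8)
The plan is to follow the template of Theorem~\ref{thm:ordering-Petz-Renyi}. By Lemma~\ref{lem:ordering-for-info-matrices}, applied to $\zeta_{1}=\widetilde{\zeta}_{\alpha_{1}}$ and $\zeta_{2}=\widetilde{\zeta}_{\alpha_{2}}$ with $\alpha_{1}\leq\alpha_{2}$, the desired Loewner ordering $\widetilde{I}_{\alpha_{1}}(\theta)\leq\widetilde{I}_{\alpha_{2}}(\theta)$ follows once we show that, for every fixed $x>0$, the function
\[
\alpha\mapsto\widetilde{f}(x,\alpha)=\left(1-\alpha\right)\left(\frac{x^{1/\alpha}-1}{x^{(1-\alpha)/\alpha}-1}\right)
\]
is monotone (weakly) decreasing on $\left(0,\infty\right)$; the representation of $\widetilde{I}_{\alpha}(\theta)$ from Corollary~\ref{cor:sandwiched-Renyi-special-case}, rewritten via \eqref{eq:gen-info-matrix-f-func}, then finishes the proof. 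So the whole argument reduces to this one-variable monotonicity claim.

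To prove it, I would first note that $\widetilde{f}(x,\alpha)>0$ for all $x,\alpha>0$ (checking the four regimes $\alpha\lessgtr1$, $x\lessgtr1$ separately, exactly as in the proof of Lemma~\ref{lem:key-lem-monotone-Petz-Renyi}), so it suffices to show $\frac{\partial}{\partial\alpha}\ln\widetilde{f}(x,\alpha)\leq0$. Substituting $x=e^{\mu}$ and setting $s\coloneqq\mu/\alpha$ and $r\coloneqq\left(1-\alpha\right)s=s-\mu$ puts $\widetilde{f}$ into the compact form
\[
\widetilde{f}(e^{\mu},\alpha)=\frac{r\left(e^{s}-1\right)}{s\left(e^{r}-1\right)},
\]
and, since $\mu$ is held fixed, $\frac{dr}{d\alpha}=\frac{ds}{d\alpha}=-s/\alpha$. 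Differentiating the logarithm and collecting terms yields
\[
\frac{\partial}{\partial\alpha}\ln\widetilde{f}(e^{\mu},\alpha)=-\frac{s}{\alpha}\left(\psi(r)-\psi(s)\right),\qquad\psi(t)\coloneqq\frac{1}{t}-\frac{e^{t}}{e^{t}-1}.
\]

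The analytic heart of the argument is that $\psi$ is monotone decreasing on $\mathbb{R}$, and this is already contained in the proof of Lemma~\ref{lem:key-lem-monotone-Petz-Renyi}: writing $\psi(t)=\chi(t)-1$ with $\chi(t)\coloneqq\frac{1}{t}-\frac{1}{e^{t}-1}$, the estimate established there (specializing $x=e$ in the statement that $t\mapsto\frac{1}{t}-\frac{\ln x}{x^{t}-1}$ is decreasing on $\left(-\infty,\infty\right)$ for all $x>0$) is exactly $\chi'(t)=-\frac{1}{t^{2}}+\frac{1}{4\sinh^{2}(t/2)}\leq0$, which holds because $\sinh u\geq u$ for $u\geq0$ forces $\sinh^{2}(t/2)\geq t^{2}/4$ for all $t$. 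Granting this, I would finish with a sign check on $r=(1-\alpha)s$: when $s>0$ one has $r\leq s$ for every $\alpha>0$ (whether $\alpha\in(0,1)$, $\alpha=1$, or $\alpha>1$), so $\psi(r)-\psi(s)\geq0$ and hence $\frac{\partial}{\partial\alpha}\ln\widetilde{f}\leq0$; when $s<0$ one has $r\geq s$, so $\psi(r)-\psi(s)\leq0$ and again $-\frac{s}{\alpha}\left(\psi(r)-\psi(s)\right)\leq0$; and when $s=0$ (i.e.\ $x=1$) the function $\widetilde{f}(1,\alpha)\equiv1$ is constant. Therefore $\widetilde{f}(x,\cdot)$ is nonincreasing on $\left(0,\infty\right)$ for every $x>0$, which is what was needed.

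The main obstacle is purely organizational rather than conceptual: one must carry out the sign bookkeeping across the regimes $\alpha\lessgtr1$ and $x\lessgtr1$ without error, verify the identification of the logarithmic derivative with $-\frac{s}{\alpha}\left(\psi(r)-\psi(s)\right)$, and confirm that the apparent singularities of $\widetilde{f}$ at $\alpha=1$ and $x=1$ are removable so that the differentiation is legitimate there. The one genuinely nontrivial inequality, namely the monotonicity of $\psi$, is inherited verbatim from Lemma~\ref{lem:key-lem-monotone-Petz-Renyi}, so no new estimate is required.
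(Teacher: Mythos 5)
Your proposal is correct and follows essentially the same route as the paper: both reduce via Lemma~\ref{lem:ordering-for-info-matrices} to the scalar claim that $\alpha\mapsto\widetilde{f}(x,\alpha)$ is decreasing, compute the (logarithmic) derivative, and reduce to the monotonicity on all of $\mathbb{R}$ of the single function $t\mapsto\frac{1}{t}-\frac{1}{e^{t}-1}$, which rests on the same elementary estimate $\cosh u\geq1+\frac{1}{2}u^{2}$ (equivalently $\sinh^{2}(t/2)\geq t^{2}/4$). Your variables $r=\lambda\gamma$ and $s=\lambda(\gamma+1)$ are just a rescaling of the paper's comparison of the points $\gamma=\frac{1-\alpha}{\alpha}$ and $\gamma+1=\frac{1}{\alpha}$, so no new ingredient is introduced or missing.
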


\begin{lem}
For all $x>0$, the function 
\begin{align}
\alpha & \mapsto\widetilde{f}(x,\alpha)\coloneqq\left(1-\alpha\right)\left(\frac{x^{\frac{1}{\alpha}}-1}{x^{\frac{1-\alpha}{\alpha}}-1}\right)\label{eq:sandwiched-renyi-MC-function-f}
\end{align}
is monotone decreasing on $\alpha\in\left(0,\infty\right)$.
\end{lem}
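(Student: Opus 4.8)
The plan is to reduce the assertion to the key monotonicity fact established inside the proof of Lemma~\ref{lem:key-lem-monotone-Petz-Renyi}: for every $x>0$, the function
\begin{equation}
g_{x}(\alpha)\coloneqq\frac{1}{\alpha}\left(1-\frac{\alpha\ln x}{x^{\alpha}-1}\right)=\frac{1}{\alpha}-\frac{\ln x}{x^{\alpha}-1}
\end{equation}
is monotone decreasing on $\alpha\in\left(-\infty,\infty\right)$ (with the removable singularity at $\alpha=0$ filled in). First I would dispense with the case $x=1$, where $\frac{x^{1/\alpha}-1}{x^{(1-\alpha)/\alpha}-1}\to\frac{1/\alpha}{(1-\alpha)/\alpha}=\frac{1}{1-\alpha}$ as $x\to1$, so that $\widetilde{f}(1,\alpha)=1$ is constant in $\alpha$. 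For $x\neq1$, I would substitute $u\coloneqq1/\alpha$, which is an order-reversing bijection of $\left(0,\infty\right)$ onto itself; it then suffices to prove that
\begin{equation}
h(u)\coloneqq\widetilde{f}(x,1/u)=\frac{u-1}{u}\cdot\frac{x^{u}-1}{x^{u-1}-1}
\end{equation}
is monotone increasing on $u\in\left(0,\infty\right)$.

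A short case analysis on the signs of $u-1$ and $\ln x$ shows that $h(u)>0$ for every $u\in\left(0,\infty\right)$, with only a removable singularity at $u=1$ (where $h(1)=\frac{x-1}{\ln x}$); hence $L(u)\coloneqq\ln h(u)$ is well defined and smooth on $\left(0,\infty\right)$. The heart of the proof is the identity
\begin{equation}
L'(u)=g_{x}(u-1)-g_{x}(u).\label{eq:key-identity-sandwiched-plan}
\end{equation}
To establish it, I would differentiate to get $L'(u)=\frac{1}{u-1}-\frac{1}{u}+\frac{x^{u}\ln x}{x^{u}-1}-\frac{x^{u-1}\ln x}{x^{u-1}-1}$, rewrite $\frac{x^{u}\ln x}{x^{u}-1}=\ln x+\frac{\ln x}{x^{u}-1}$ together with its analogue at $u-1$ so that the two free $\ln x$ terms cancel, and then substitute $\frac{\ln x}{x^{u}-1}=\frac{1}{u}-g_{x}(u)$ and $\frac{\ln x}{x^{u-1}-1}=\frac{1}{u-1}-g_{x}(u-1)$; the leftover $\frac{1}{u}$ and $\frac{1}{u-1}$ terms then cancel as well, yielding exactly \eqref{eq:key-identity-sandwiched-plan}.

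Since $u-1<u$ and $g_{x}$ is decreasing on $\mathbb{R}$, the right-hand side of \eqref{eq:key-identity-sandwiched-plan} is non-negative for all $u\in\left(0,\infty\right)\setminus\{1\}$, hence, by continuity, for all $u\in\left(0,\infty\right)$. Therefore $L$, and thus $h$, is non-decreasing on $\left(0,\infty\right)$, so $\widetilde{f}(x,\alpha)=h(1/\alpha)$ is non-increasing in $\alpha$, as claimed. I do not anticipate a genuine obstacle here: the only real analytic input --- monotonicity of $g_{x}$ on all of $\mathbb{R}$ --- is already available from Lemma~\ref{lem:key-lem-monotone-Petz-Renyi}, and the remaining work is bookkeeping, namely checking positivity of $h$ so that $\ln h$ is legitimate and noting that the removable singularity at $u=1$ is harmless.
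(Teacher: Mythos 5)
Your proof is correct and is in substance the same as the paper's: after the substitution $u=1/\alpha$ and passing to the logarithmic derivative, your identity $L'(u)=g_{x}(u-1)-g_{x}(u)$ is exactly the paper's factorization $\frac{\partial}{\partial\alpha}\widetilde{f}(x,\alpha)=\frac{\widetilde{f}(x,\alpha)}{\alpha^{2}}\left[\left(\ln x-g_{x}(\gamma)\right)-\left(\ln x-g_{x}(\gamma+1)\right)\right]$ with $\gamma=\frac{1-\alpha}{\alpha}=u-1$, because the paper's auxiliary function $\gamma\mapsto\frac{\ln x}{1-x^{-\gamma}}-\frac{1}{\gamma}$ is precisely $\ln x-g_{x}(\gamma)$, so the two monotonicity claims coincide. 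The only organizational difference is that you reuse the fact that $g_{x}$ is decreasing on all of $\mathbb{R}$, which is established inside the proof of Lemma~\ref{lem:key-lem-monotone-Petz-Renyi} (though not in its statement), whereas the paper re-derives the equivalent fact via the same $\cosh$ inequality; that reuse is legitimate and slightly more economical.
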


\begin{proof}
The plan is to calculate the derivative of the function in \eqref{eq:sandwiched-renyi-MC-function-f}
and show that it is non-negative for all $\alpha\in\left(0,\infty\right)$.
To begin with, consider that the inequality $\widetilde{f}(x,\alpha)>0$
holds for all $\alpha>0$ and $x>0$ (one can verify this by considering
various cases $\alpha\in\left(0,1\right)$, $\alpha>1$, $x\in\left(0,1\right)$,
and $x>1$). The derivative of $\widetilde{f}(x,\alpha)$ with respect
to $\alpha$ is given by
\begin{align}
 & \frac{\partial}{\partial\alpha}\widetilde{f}(x,\alpha)\nonumber \\
 & =\frac{\partial}{\partial\alpha}\left(\left(1-\alpha\right)\left(\frac{x^{\frac{1}{\alpha}}-1}{x^{\frac{1-\alpha}{\alpha}}-1}\right)\right)\\
 & =-\left(\frac{x^{\frac{1}{\alpha}}-1}{x^{\frac{1-\alpha}{\alpha}}-1}\right)+\left(1-\alpha\right)\left(\frac{-\frac{1}{\alpha^{2}}x^{\frac{1}{\alpha}}\ln x}{x^{\frac{1-\alpha}{\alpha}}-1}\right)+\left(1-\alpha\right)\left(\frac{x^{\frac{1}{\alpha}}-1}{\left(x^{\frac{1-\alpha}{\alpha}}-1\right)^{2}}\right)\frac{1}{\alpha^{2}}x^{\frac{1-\alpha}{\alpha}}\ln x\\
 & =-\left(\frac{x^{\frac{1}{\alpha}}-1}{x^{\frac{1-\alpha}{\alpha}}-1}\right)-\left(1-\alpha\right)\left(\frac{x^{\frac{1}{\alpha}}\ln x}{\alpha^{2}\left(x^{\frac{1-\alpha}{\alpha}}-1\right)}\right)+\left(1-\alpha\right)\left(\frac{x^{\frac{1}{\alpha}}-1}{x^{\frac{1-\alpha}{\alpha}}-1}\right)\frac{x^{\frac{1-\alpha}{\alpha}}\ln x}{\alpha^{2}\left(x^{\frac{1-\alpha}{\alpha}}-1\right)}\\
 & =\frac{\left(1-\alpha\right)}{\alpha^{2}}\left(\frac{x^{\frac{1}{\alpha}}-1}{x^{\frac{1-\alpha}{\alpha}}-1}\right)\left[-\frac{\alpha^{2}}{1-\alpha}-\frac{x^{\frac{1}{\alpha}}\ln x}{x^{\frac{1}{\alpha}}-1}+\frac{x^{\frac{1-\alpha}{\alpha}}\ln x}{x^{\frac{1-\alpha}{\alpha}}-1}\right]\\
 & =\frac{\widetilde{f}(x,\alpha)}{\alpha^{2}}\left[\frac{1}{\frac{1}{\alpha}}-\frac{1}{\frac{1-\alpha}{\alpha}}-\frac{\ln x}{1-x^{-\frac{1}{\alpha}}}+\frac{\ln x}{1-x^{-\left(\frac{1-\alpha}{\alpha}\right)}}\right]\\
 & =\frac{\widetilde{f}(x,\alpha)}{\alpha^{2}}\left[-\left(\frac{\ln x}{1-x^{-\frac{1}{\alpha}}}-\frac{1}{\frac{1}{\alpha}}\right)+\frac{\ln x}{1-x^{-\left(\frac{1-\alpha}{\alpha}\right)}}-\frac{1}{\frac{1-\alpha}{\alpha}}\right].
\end{align}
Since $\frac{\widetilde{f}(x,\alpha)}{\alpha^{2}}>0$ for all $\alpha>0$
and $x>0$, the desired inequality $\frac{\partial}{\partial\alpha}\widetilde{f}(x,\alpha)\leq0$
will follow if we prove that
\begin{align}
\frac{\ln x}{1-x^{-\left(\frac{1-\alpha}{\alpha}\right)}}-\frac{1}{\frac{1-\alpha}{\alpha}} & \leq\frac{\ln x}{1-x^{-\frac{1}{\alpha}}}-\frac{1}{\frac{1}{\alpha}}\label{eq:sandwiched-monotone-ineq}
\end{align}
for all $\alpha\in\left(0,\infty\right)$. Defining $\gamma\coloneqq\frac{1-\alpha}{\alpha}$,
it follows that $\frac{1}{\alpha}=\gamma+1$, so that the inequality
in \eqref{eq:sandwiched-monotone-ineq} is equivalent to
\begin{equation}
\frac{\ln x}{1-x^{-\gamma}}-\frac{1}{\gamma}\leq\frac{\ln x}{1-x^{-\left(\gamma+1\right)}}-\frac{1}{\gamma+1}\label{eq:sandwiched-monotone-ineq-gamma}
\end{equation}
for all $\gamma\in\left(-\infty,\infty\right)$. The inequality in
\eqref{eq:sandwiched-monotone-ineq-gamma} will follow if we prove
that the function
\begin{equation}
\gamma\mapsto\frac{\ln x}{1-x^{-\gamma}}-\frac{1}{\gamma}
\end{equation}
is increasing on $\gamma\in\left(-\infty,\infty\right)$. Note that
$\lim_{\gamma\to0}\left(\frac{\ln x}{1-x^{-\gamma}}-\frac{1}{\gamma}\right)=\frac{\ln x}{2}$.
Taking the derivative of $\frac{\ln x}{1-x^{-\gamma}}-\frac{1}{\gamma}$
with respect to $\gamma$, we find that
\begin{align}
\frac{\partial}{\partial\gamma}\left(\frac{\ln x}{1-x^{-\gamma}}-\frac{1}{\gamma}\right) & =\frac{\ln x}{\left(1-x^{-\gamma}\right)^{2}}\left(-x^{-\gamma}\ln x\right)+\frac{1}{\gamma^{2}}\\
 & =-\frac{x^{-\gamma}\left(\ln x\right)^{2}}{\left(1-x^{-\gamma}\right)^{2}}+\frac{1}{\gamma^{2}}.
\end{align}
Note that $\lim_{\gamma\to0}\left(-\frac{x^{-\gamma}\left(\ln x\right)^{2}}{\left(1-x^{-\gamma}\right)^{2}}+\frac{1}{\gamma^{2}}\right)=\frac{\left(\ln x\right)^{2}}{12}$,
so that $\frac{\partial}{\partial\gamma}\left(\frac{\ln x}{1-x^{-\gamma}}-\frac{1}{\gamma}\right)>0$
at $\gamma=0$ for all $x>0$. Setting $x=e^{\lambda}$ for $\lambda\in\mathbb{R}$,
the inequality $\frac{\partial}{\partial\gamma}\left(\frac{\ln x}{1-x^{-\gamma}}-\frac{1}{\gamma}\right)\geq0$
for all $\gamma\in\left(-\infty,0\right)\cup\left(0,\infty\right)$
is equivalent to
\begin{align}
-\frac{x^{-\gamma}\left(\ln x\right)^{2}}{\left(1-x^{-\gamma}\right)^{2}}+\frac{1}{\gamma^{2}} & \geq0\\
\Longleftrightarrow\qquad\frac{1}{\gamma^{2}} & \geq\frac{e^{-\lambda\gamma}\left(\lambda\right)^{2}}{\left(1-e^{-\lambda\gamma}\right)^{2}}\\
\Longleftrightarrow\qquad\frac{\left(1-e^{-\lambda\gamma}\right)^{2}}{e^{-\lambda\gamma}} & \geq\left(\gamma\lambda\right)^{2}\\
\Longleftrightarrow\qquad e^{\lambda\gamma}\left(1-2e^{-\lambda\gamma}+e^{-2\lambda\gamma}\right) & \geq\left(\gamma\lambda\right)^{2}\\
\Longleftrightarrow\qquad e^{\lambda\gamma}+e^{-\lambda\gamma} & \geq2+\left(\gamma\lambda\right)^{2}\\
\Longleftrightarrow\qquad\cosh(\lambda\gamma) & \geq1+\frac{1}{2}\left(\gamma\lambda\right)^{2}\\
\Longleftrightarrow\qquad1+\frac{1}{2}\left(\lambda\gamma\right)^{2}+\sum_{k=2}^{\infty}\frac{\left(\lambda\gamma\right)^{2k}}{(2k)!} & \geq1+\frac{1}{2}\left(\gamma\lambda\right)^{2}\\
\Longleftrightarrow\qquad\sum_{k=2}^{\infty}\frac{\left(\lambda\gamma\right)^{2k}}{(2k)!} & \geq0,
\end{align}
thus concluding the proof.
\end{proof}

\section{$\alpha$-$z$ Information matrices of parameterized thermal states}

\label{sec:a-z-Information-matrices-thermal-states}

In this section, I establish a formula for the $\alpha$-$z$ information
matrix of parameterized thermal states, for all $\alpha\in\left(0,1\right)$
and $z>0$ (Theorem~\ref{thm:QBM-a-z-formula}). Like the former
results of \cite{Patel2024,Minervini2025}, this formula leads to
a hybrid quantum--classical algorithm for estimating the elements
of the $\alpha$-$z$ information matrix of parameterized thermal
states, assuming that one has the ability to prepare thermal states
on a quantum computer. As such, this formula has applications in quantum
Boltzmann machine learning, namely, in natural gradient descent algorithms
for performing optimization using quantum Boltzmann machines, as put
forward in \cite{Patel2024,Minervini2025}.

Let $\left(\rho(\theta)\right)_{\theta\in\mathbb{R}^{L}}$ be a parameterized
family of positive definite states. A quantum generalization of the
Fisher information matrix is denoted by $I_{\zeta}(\theta)$, and
it has the form stated in Theorem~\ref{thm:q-fisher-info-char},
wherein the properties of the function $\zeta(x,y)$ are stated. For
convenience, let us restate the form of $I_{\zeta}(\theta)$ here:
\begin{equation}
\left[I_{\zeta}(\theta)\right]_{i,j}=\sum_{k,\ell}\zeta(\lambda_{k},\lambda_{\ell})\Tr\!\left[\Pi_{k}\left(\partial_{i}\rho(\theta)\right)\Pi_{\ell}\left(\partial_{j}\rho(\theta)\right)\right],\label{eq:gen-fisher-info-repeated}
\end{equation}
where $\rho(\theta)=\sum_{k}\lambda_{k}\Pi_{k}$ is a spectral decomposition
of $\rho(\theta)$ and $\zeta(x,y)$ is a function of $x,y>0$ that
satisfies the following properties for all $x,y,s>0$:
\begin{align}
\zeta(x,y) & =\zeta(y,x),\label{eq:zeta-func-prop-1-later}\\
\zeta(sx,sy) & =\frac{1}{s}\zeta(x,y),\\
\zeta(x,x) & =\frac{\kappa}{x}.\label{eq:zeta-kappa-normalization}
\end{align}
with $\kappa>0$ a constant. Also, the function $t\mapsto\frac{1}{\zeta(t,1)}$
is operator monotone on $t\in\left(0,\infty\right)$.

Let us now evaluate the expression in \eqref{eq:gen-fisher-info-repeated}
for parameterized thermal states. In this case, let
\begin{align}
H(\theta) & \coloneqq\sum_{j}\theta_{j}H_{j},\\
\rho(\theta) & \coloneqq\frac{e^{-H(\theta)}}{Z(\theta)},\label{eq:param-thermal-state}\\
Z(\theta) & \coloneqq\Tr\!\left[e^{-H(\theta)}\right],
\end{align}
where $\theta_{j}\in\mathbb{R}$ and $H_{j}$ is Hermitian for all
$j\in\left\{ 1,\ldots,L\right\} $. As a consequence of Lemma \ref{lem:general-fisher-thermal-states}
and Lemma \ref{lem:fourier-trans-a-z}, the following theorem holds
for the $\alpha$-$z$ information matrices of parameterized thermal
states:
\begin{thm}
\label{thm:QBM-a-z-formula}For parameterized thermal states of the
form in \eqref{eq:param-thermal-state}, the following equality holds
for all $\alpha\in\left(0,1\right)$ and $z>0$:
\begin{equation}
\left[I_{\alpha,z}(\theta)\right]_{i,j}=\frac{1}{2}\left\langle \left\{ \Phi_{q_{\alpha,z},\theta}\!\left(H_{i}\right),H_{j}\right\} \right\rangle _{\rho(\theta)}-\left\langle H_{i}\right\rangle _{\rho(\theta)}\left\langle H_{j}\right\rangle _{\rho(\theta)},
\end{equation}
where the channel $\Phi_{q_{\alpha,z},\theta}$ is given by
\begin{equation}
\Phi_{q_{\alpha,z},\theta}\!\left(X\right)\coloneqq\int_{-\infty}^{\infty}dt\:q_{\alpha,z}(t)e^{-itH(\theta)}Xe^{itH(\theta)},
\end{equation}
and $q_{\alpha,z}$ is the probability density function defined in
\eqref{eq:q-a-z-prob-dens}.
\end{thm}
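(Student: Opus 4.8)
The plan is to combine the eigenvalue‑function formula for $I_{\alpha,z}(\theta)$ established in Theorem~\ref{thm:fisher-info-from-alpha-z} with a general thermal‑state reduction (Lemma~\ref{lem:general-fisher-thermal-states}) and then identify the resulting spectral weight by means of the Fourier‑transform identity for $q_{\alpha,z}$ (Lemma~\ref{lem:fourier-trans-a-z}).

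First I would specialize the spectral formula \eqref{eq:a-z-fisher-formula} to the thermal family. Writing $H(\theta)=\sum_{k}E_{k}\Pi_{k}$ and $\rho(\theta)=\sum_{k}\lambda_{k}\Pi_{k}$ with $\lambda_{k}=e^{-E_{k}}/Z(\theta)$, the relation $\ln\rho(\theta)=-H(\theta)-\ln Z(\theta)$ together with $\partial_{i}\ln Z(\theta)=-\langle H_{i}\rangle_{\rho(\theta)}$ gives $\partial_{i}\ln\rho(\theta)=-\widetilde{H}_{i}$, where $\widetilde{H}_{i}\coloneqq H_{i}-\langle H_{i}\rangle_{\rho(\theta)}$. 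Applying Proposition~\ref{prop:deriv-exp} to $\rho(\theta)=e^{\ln\rho(\theta)}$ and evaluating $\int_{0}^{1}\lambda_{k}^{s}\lambda_{\ell}^{1-s}\,ds=1/f_{\ln}^{[1]}(\lambda_{k},\lambda_{\ell})$ yields $\Pi_{k}(\partial_{i}\rho(\theta))\Pi_{\ell}=-\big(f_{\ln}^{[1]}(\lambda_{k},\lambda_{\ell})\big)^{-1}\Pi_{k}\widetilde{H}_{i}\Pi_{\ell}$. Substituting this into \eqref{eq:a-z-fisher-formula} and collecting the resulting $(f_{\ln}^{[1]})^{-2}$ factor with the $\zeta_{\alpha,z}$‑weight gives
\[
[I_{\alpha,z}(\theta)]_{i,j}=\sum_{k,\ell}\frac{\zeta_{\alpha,z}(\lambda_{k},\lambda_{\ell})}{\big(f_{\ln}^{[1]}(\lambda_{k},\lambda_{\ell})\big)^{2}}\,\Tr\!\big[\Pi_{k}\widetilde{H}_{i}\Pi_{\ell}\widetilde{H}_{j}\big].
\]

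Next I would recognize this weight as a Fourier coefficient of the channel $\Phi_{q_{\alpha,z},\theta}$. Expanding $\Phi_{g,\theta}(X)=\sum_{k,\ell}\widehat{g}(E_{k}-E_{\ell})\,\Pi_{k}X\Pi_{\ell}$ with $\widehat{g}(\omega)\coloneqq\int_{-\infty}^{\infty}dt\,g(t)e^{-it\omega}$, and using $E_{k}-E_{\ell}=\ln\lambda_{\ell}-\ln\lambda_{k}$, one finds $\tfrac{1}{2}\langle\{\Phi_{g,\theta}(\widetilde{H}_{i}),\widetilde{H}_{j}\}\rangle_{\rho(\theta)}=\tfrac{1}{2}\sum_{k,\ell}(\lambda_{k}+\lambda_{\ell})\,\widehat{g}(\ln\lambda_{\ell}-\ln\lambda_{k})\,\Tr[\Pi_{k}\widetilde{H}_{i}\Pi_{\ell}\widetilde{H}_{j}]$. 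By the homogeneity $\zeta_{\alpha,z}(sx,sy)=s^{-1}\zeta_{\alpha,z}(x,y)$, matching the two displays reduces to the single‑variable identity
\[
\frac{\zeta_{\alpha,z}(r,1)}{\big(f_{\ln}^{[1]}(r,1)\big)^{2}}=\tfrac{1}{2}(r+1)\,\widehat{q}_{\alpha,z}(-\ln r),\qquad r>0,
\]
which is precisely Lemma~\ref{lem:fourier-trans-a-z}; its proof uses the convolution theorem $\widehat{q}_{\alpha,z}=\widehat{p}\cdot\widehat{p}_{\alpha,z}$, the identity $\widehat{p}(\omega)=\tfrac{2}{\omega}\tanh(\tfrac{\omega}{2})$ (the same transform underlying the Kubo--Mori thermal‑state formula), and an explicit evaluation of $\widehat{p}_{\alpha,z}$ from \eqref{eq:high-peak-tent-1}. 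Finally I would undo the centering: since $\Phi_{q_{\alpha,z},\theta}$ is unital and $[\rho(\theta),H(\theta)]=0$ forces $\langle\Phi_{q_{\alpha,z},\theta}(H_{i})\rangle_{\rho(\theta)}=\langle H_{i}\rangle_{\rho(\theta)}$, expanding $\{\Phi_{q_{\alpha,z},\theta}(\widetilde{H}_{i}),\widetilde{H}_{j}\}$ back in terms of $H_{i}$ and $H_{j}$ produces exactly $\tfrac{1}{2}\langle\{\Phi_{q_{\alpha,z},\theta}(H_{i}),H_{j}\}\rangle_{\rho(\theta)}-\langle H_{i}\rangle_{\rho(\theta)}\langle H_{j}\rangle_{\rho(\theta)}$, which is the claim.

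The step I expect to be the main obstacle is Lemma~\ref{lem:fourier-trans-a-z}, and within it the evaluation of $\widehat{p}_{\alpha,z}$: once the convolution theorem strips off the known factor $\widehat{p}$, one must still show that the inverse Fourier transform of the remaining rational--exponential function equals $p_{\alpha,z}(t)=\frac{z}{2\pi\alpha(1-\alpha)}\ln\!\big(1+(\sin(\pi\alpha)/\sinh(\pi z t))^{2}\big)$, which calls for a contour‑integration computation (or a careful appeal to integral tables) together with a verification that the relevant integrals converge; this is where the hypotheses $\alpha\in(0,1)$ and $z>0$ enter, in particular to guarantee that $p_{\alpha,z}$ is a genuine probability density (the prefactor $\alpha(1-\alpha)$ is negative for $\alpha>1$). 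The remaining ingredients---the matrix‑derivative identities, the interchange of the $\varepsilon$‑derivatives with the spectral sums, and the trace bookkeeping---are routine given the second‑order differentiability hypothesis and the matrix‑derivative toolkit recalled in Appendix~\ref{app:Review:-Matrix-derivatives}.
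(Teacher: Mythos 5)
Your proposal is correct and follows essentially the same route as the paper: reduce the spectral formula for thermal states to a single-variable weight, recognize that weight as $2\,\zeta_{\alpha,z}(e^{-\omega},1)\frac{(e^{-\omega}-1)^2}{\omega^2(e^{-\omega}+1)}$, factor it as $f_{\alpha,z}(\omega)\cdot\tanh(\omega/2)/(\omega/2)$, and identify its Fourier transform as the convolution $q_{\alpha,z}=p*p_{\alpha,z}$ via contour integration (the paper's Lemmas \ref{lem:general-fisher-thermal-states} and \ref{lem:fourier-trans-a-z}). Your first reduction is organized slightly differently---through $\partial_i\ln\rho(\theta)=-\widetilde{H}_i$ and the divided-difference factor $(f_{\ln}^{[1]})^{-2}$ rather than the paper's direct Duhamel expansion of $\partial_i e^{-H(\theta)}$ and explicit double integral over $t,s\in[0,1]$---but this lands on the identical matching condition, and you correctly locate the real work in the contour-integral evaluation of $\widehat{p}_{\alpha,z}$ and the role of $\alpha\in(0,1)$, $z>0$ in making $p_{\alpha,z}$ a genuine probability density.
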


As stated above, Theorem~\ref{thm:QBM-a-z-formula} follows in part
from Lemma \ref{lem:general-fisher-thermal-states}, which establishes
a formula for the information matrix $I_{\zeta}(\theta)$ of parameterized
thermal states, under the assumption that a certain Fourier transform
exists. Additionally, Theorem~\ref{thm:QBM-a-z-formula} follows
from Lemma \ref{lem:fourier-trans-a-z}, which precisely determines
the needed Fourier transform for all $\alpha\in\left(0,1\right)$
and $z>0$ when the function $\zeta(x,y)$ is set to $\zeta_{\alpha,z}\!\left(x,y\right)$,
as defined in \eqref{eq:limit-for-a-z-eigenval-func}.

\subsection{General formula for information matrices of parameterized thermal
states}
\begin{lem}
\label{lem:general-fisher-thermal-states}Let $\zeta(x,y)$ be a function
defined for $x,y>0$ satisfying the properties stated in \eqref{eq:zeta-func-prop-1-later}--\eqref{eq:zeta-kappa-normalization},
and suppose that the Fourier transform of the function
\begin{equation}
\omega\mapsto\zeta\!\left(e^{-\omega},1\right)\frac{\left(e^{-\omega}-1\right)^{2}}{\omega^{2}\left(e^{-\omega}+1\right)}
\end{equation}
exists, where $\omega\in\mathbb{R}$. Then, for parameterized thermal
states of the form in \eqref{eq:param-thermal-state}, the following
equality holds:
\begin{equation}
\left[I_{\zeta}(\theta)\right]_{i,j}=\frac{1}{2}\left\langle \left\{ \Phi_{f,\theta}\!\left(H_{i}\right),H_{j}\right\} \right\rangle _{\rho(\theta)}-\kappa\left\langle H_{i}\right\rangle _{\rho(\theta)}\left\langle H_{j}\right\rangle _{\rho(\theta)},
\end{equation}
where $I_{\zeta}(\theta)$ is defined in \eqref{eq:gen-fisher-info-repeated},
\begin{equation}
\Phi_{f,\theta}\!\left(X\right)\coloneqq\int_{-\infty}^{\infty}dt\:f(t)e^{-itH(\theta)}Xe^{itH(\theta)},
\end{equation}
and $f(t)$ is a real-valued function satisfying the following for
all $\omega\in\mathbb{R}$:
\begin{align}
\int_{-\infty}^{\infty}dt\:f(t)e^{it\omega} & =2\,\zeta\!\left(e^{-\omega},1\right)\frac{\left(e^{-\omega}-1\right)^{2}}{\omega^{2}\left(e^{-\omega}+1\right)},\\
\int_{-\infty}^{\infty}dt\:f(t) & =\kappa.\label{eq:normalization-f-fourier}
\end{align}
\end{lem}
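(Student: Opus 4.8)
The plan is to evaluate the defining sum in \eqref{eq:gen-fisher-info-repeated} directly for the thermal family and to match it, term by term in the spectral decomposition of $\rho(\theta)$, with the claimed anticommutator expression. Throughout I write $\rho\equiv\rho(\theta)$ and $\langle\cdot\rangle\equiv\langle\cdot\rangle_{\rho(\theta)}$, diagonalize $H(\theta)=\sum_{k}E_{k}\Pi_{k}$ so that $\rho(\theta)=\sum_{k}\lambda_{k}\Pi_{k}$ with $\lambda_{k}=e^{-E_{k}}/Z(\theta)$, and set $\widetilde{H}_{i}\coloneqq H_{i}-\langle H_{i}\rangle I$.

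First I would compute $\partial_{i}\rho(\theta)$. Applying Proposition~\ref{prop:deriv-exp} to $e^{-H(\theta)}$ together with $\partial_{i}Z(\theta)=-Z(\theta)\langle H_{i}\rangle$ gives $\partial_{i}\rho(\theta)=-\tfrac{1}{Z(\theta)}\int_{0}^{1}dt\:e^{-tH(\theta)}H_{i}e^{-(1-t)H(\theta)}+\rho\langle H_{i}\rangle$, and sandwiching by eigenprojections and evaluating the resulting scalar integral yields the clean identity $\Pi_{k}(\partial_{i}\rho)\Pi_{\ell}=-L_{k\ell}\,\Pi_{k}\widetilde{H}_{i}\Pi_{\ell}$, where $L_{k\ell}\coloneqq(\lambda_{k}-\lambda_{\ell})/(\ln\lambda_{k}-\ln\lambda_{\ell})$ is the logarithmic mean (with $L_{kk}=\lambda_{k}$); here one uses $E_{k}-E_{\ell}=\ln\lambda_{\ell}-\ln\lambda_{k}$, and the mean-subtraction is harmless since $\Pi_{k}\Pi_{\ell}=0$ for $k\neq\ell$ while it exactly cancels the term $\rho\langle H_{i}\rangle$ when $k=\ell$. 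Substituting into \eqref{eq:gen-fisher-info-repeated} and using cyclicity and idempotency of the projections, together with the symmetry of $L_{k\ell}$, gives
\[
\left[I_{\zeta}(\theta)\right]_{i,j}=\sum_{k,\ell}\zeta(\lambda_{k},\lambda_{\ell})\,L_{k\ell}^{2}\,\Tr\!\left[\Pi_{k}\widetilde{H}_{i}\Pi_{\ell}\widetilde{H}_{j}\right].
\]

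Next I would expand the target. Writing $\widehat{f}(\omega)\coloneqq\int_{-\infty}^{\infty}dt\:f(t)e^{it\omega}$, one has $\Pi_{k}\Phi_{f,\theta}(X)\Pi_{\ell}=\widehat{f}(E_{\ell}-E_{k})\,\Pi_{k}X\Pi_{\ell}$; since $f$ is real, $\Phi_{f,\theta}$ preserves Hermiticity, so that $\tfrac{1}{2}\langle\{\Phi_{f,\theta}(\widetilde{H}_{i}),\widetilde{H}_{j}\}\rangle=\operatorname{Re}\Tr[\rho\,\Phi_{f,\theta}(\widetilde{H}_{i})\,\widetilde{H}_{j}]$, and inserting resolutions of the identity yields
\[
\tfrac{1}{2}\langle\{\Phi_{f,\theta}(\widetilde{H}_{i}),\widetilde{H}_{j}\}\rangle=\tfrac{1}{2}\sum_{k,\ell}\left(\lambda_{k}\widehat{f}(E_{\ell}-E_{k})+\lambda_{\ell}\widehat{f}(E_{k}-E_{\ell})\right)\Tr\!\left[\Pi_{k}\widetilde{H}_{i}\Pi_{\ell}\widetilde{H}_{j}\right].
\]
The crux is then the scalar identity $\zeta(\lambda_{k},\lambda_{\ell})L_{k\ell}^{2}=\tfrac{1}{2}\big(\lambda_{k}\widehat{f}(E_{\ell}-E_{k})+\lambda_{\ell}\widehat{f}(E_{k}-E_{\ell})\big)$. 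Note $\widehat{f}$, as given by its defining relation, is real and, by the homogeneity of $\zeta$, even, so the right-hand side equals $\tfrac{1}{2}(\lambda_{k}+\lambda_{\ell})\widehat{f}(E_{k}-E_{\ell})$. Setting $\omega\coloneqq E_{k}-E_{\ell}$, so that $\lambda_{k}=\lambda_{\ell}e^{-\omega}$, the homogeneity and symmetry of $\zeta$ reduce the left-hand side to $\lambda_{\ell}\,\zeta(e^{-\omega},1)\,(e^{-\omega}-1)^{2}/\omega^{2}$, while the right-hand side becomes $\tfrac{1}{2}\lambda_{\ell}(e^{-\omega}+1)\widehat{f}(\omega)$, which collapses to the same expression precisely because the denominator factor $(e^{-\omega}+1)$ in the defining relation for $\widehat{f}$ is supplied by the anticommutator symmetrization $\tfrac{1}{2}(\lambda_{k}+\lambda_{\ell})$. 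The degenerate blocks with $k=\ell$ (i.e.\ $\omega\to0$) are handled by $\zeta(x,x)=\kappa/x$ and $\widehat{f}(0)=\kappa$.

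Finally I would pass from $\widetilde{H}$ back to $H$. From $\int_{-\infty}^{\infty}dt\:f(t)=\kappa$ we obtain $\Phi_{f,\theta}(I)=\kappa I$, hence $\Phi_{f,\theta}(\widetilde{H}_{i})=\Phi_{f,\theta}(H_{i})-\kappa\langle H_{i}\rangle I$; moreover, since $e^{itH(\theta)}$ commutes with $\rho(\theta)$, we have $\langle\Phi_{f,\theta}(H_{i})\rangle=\kappa\langle H_{i}\rangle$. Expanding $\tfrac{1}{2}\langle\{\Phi_{f,\theta}(H_{i})-\kappa\langle H_{i}\rangle I,\,H_{j}-\langle H_{j}\rangle I\}\rangle$ and simplifying then leaves exactly the correction $-\kappa\langle H_{i}\rangle\langle H_{j}\rangle$, which is the stated formula. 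I expect the main obstacle to be the bookkeeping in the scalar identity of the third step — tracking the homogeneity rescalings of $\zeta$, the reduction to the single variable $\omega$, and the cancellation of the $(e^{-\omega}+1)$ factor — together with checking that the stated hypothesis (existence of the relevant Fourier transform) genuinely produces a real-valued integrable $f$ with $\int f=\kappa$, so that $\Phi_{f,\theta}$ is a well-defined channel up to the overall factor $\kappa$.
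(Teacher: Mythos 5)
Your proposal is correct and follows essentially the same route as the paper's proof: both diagonalize in the energy eigenbasis, reduce the spectral coefficient to a function of $\omega=\mu_{k}-\mu_{\ell}$ via the homogeneity and symmetry of $\zeta$, use the $\tfrac{1}{2}(\lambda_{k}+\lambda_{\ell})$ symmetrization to supply the $(e^{-\omega}+1)$ factor, and then invoke the Fourier-transform relation to reassemble the anticommutator, with the evenness of $\widehat{f}$ and the normalization $\widehat{f}(0)=\kappa$ established exactly as you indicate. The only difference is organizational: by centering the observables up front you compress the paper's four-term expansion and its explicit double integral $\int_{0}^{1}\!\int_{0}^{1}dt\,ds\,(\lambda_{k}/\lambda_{\ell})^{t-s}$ into the single identity $\Pi_{k}(\partial_{i}\rho)\Pi_{\ell}=-L_{k\ell}\,\Pi_{k}\widetilde{H}_{i}\Pi_{\ell}$ with $L_{k\ell}$ the logarithmic mean, which is a tidier but mathematically equivalent bookkeeping.
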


\begin{proof}
Consider that 
\begin{align}
\frac{\partial}{\partial\theta_{i}}\rho(\theta) & =\frac{\partial}{\partial\theta_{i}}\left(\frac{e^{-H(\theta)}}{Z(\theta)}\right)\\
 & =\frac{\frac{\partial}{\partial\theta_{i}}e^{-H(\theta)}}{Z(\theta)}-\frac{e^{-H(\theta)}}{Z(\theta)^{2}}\frac{\partial}{\partial\theta_{i}}Z(\theta)\\
 & =\frac{1}{Z(\theta)}\left(\frac{\partial}{\partial\theta_{i}}e^{-H(\theta)}-\rho(\theta)\frac{\partial}{\partial\theta_{i}}\Tr\!\left[e^{-H(\theta)}\right]\right)\\
 & =\frac{1}{Z(\theta)}\left(\frac{\partial}{\partial\theta_{i}}e^{-H(\theta)}+\rho(\theta)\Tr\!\left[e^{-H(\theta)}\frac{\partial}{\partial\theta_{i}}H(\theta)\right]\right)\\
 & =\frac{1}{Z(\theta)}\left(\frac{\partial}{\partial\theta_{i}}e^{-H(\theta)}+\rho(\theta)\Tr\!\left[e^{-H(\theta)}H_{i}\right]\right)\\
 & =\frac{1}{Z(\theta)}\left(\frac{\partial}{\partial\theta_{i}}e^{-H(\theta)}\right)+\rho(\theta)\Tr\!\left[H_{i}\rho(\theta)\right]\\
 & =\frac{1}{Z(\theta)}\left(\frac{\partial}{\partial\theta_{i}}e^{-H(\theta)}\right)+\rho(\theta)\left\langle H_{i}\right\rangle _{\rho(\theta)},\label{eq:fisher-thermal-general-deriv-1}
\end{align}
where the fourth equality follows from Corollary~\ref{cor:derivative-in-trace}.
Now applying Proposition \ref{prop:deriv-exp}, we find that
\begin{align}
\frac{\partial}{\partial\theta_{i}}e^{-H(\theta)} & =-\int_{0}^{1}dt\ e^{-tH(\theta)}\left(\frac{\partial}{\partial\theta_{i}}H(\theta)\right)e^{-\left(1-t\right)H(\theta)}\\
 & =-\int_{0}^{1}dt\ e^{-tH(\theta)}H_{i}e^{-\left(1-t\right)H(\theta)},
\end{align}
which implies from \eqref{eq:fisher-thermal-general-deriv-1} that
\begin{align}
\frac{\partial}{\partial\theta_{i}}\rho(\theta) & =-\frac{1}{Z(\theta)}\int_{0}^{1}dt\ e^{-tH(\theta)}H_{i}e^{-\left(1-t\right)H(\theta)}+\rho(\theta)\left\langle H_{i}\right\rangle _{\rho(\theta)}\\
 & =-\int_{0}^{1}dt\ \left(\frac{e^{-H(\theta)}}{Z(\theta)}\right)^{t}H_{i}\left(\frac{e^{-H(\theta)}}{Z(\theta)}\right)^{1-t}+\rho(\theta)\left\langle H_{i}\right\rangle _{\rho(\theta)}\\
 & =-\int_{0}^{1}dt\ \rho(\theta)^{t}H_{i}\rho(\theta)^{1-t}+\rho(\theta)\left\langle H_{i}\right\rangle _{\rho(\theta)}.
\end{align}
Then it follows that
\begin{align}
\left[I_{\zeta}(\theta)\right]_{i,j} & =\sum_{k,\ell}\zeta(\lambda_{k},\lambda_{\ell})\Tr\!\left[\Pi_{k}\left(\partial_{i}\rho(\theta)\right)\Pi_{\ell}\left(\partial_{j}\rho(\theta)\right)\right]\\
 & =\sum_{k,\ell}\zeta(\lambda_{k},\lambda_{\ell})\Tr\!\left[\Pi_{k}\left(-\int_{0}^{1}dt\ \rho(\theta)^{t}H_{i}\rho(\theta)^{1-t}\right)\Pi_{\ell}\left(-\int_{0}^{1}ds\ \rho(\theta)^{s}H_{j}\rho(\theta)^{1-s}\right)\right]\nonumber \\
 & \qquad+\sum_{k,\ell}\zeta(\lambda_{k},\lambda_{\ell})\Tr\!\left[\Pi_{k}\left(-\int_{0}^{1}dt\ \rho(\theta)^{t}H_{i}\rho(\theta)^{1-t}\right)\Pi_{\ell}\left(\rho(\theta)\left\langle H_{j}\right\rangle _{\rho(\theta)}\right)\right]\nonumber \\
 & \qquad+\sum_{k,\ell}\zeta(\lambda_{k},\lambda_{\ell})\Tr\!\left[\Pi_{k}\left(\rho(\theta)\left\langle H_{i}\right\rangle _{\rho(\theta)}\right)\Pi_{\ell}\left(-\int_{0}^{1}ds\ \rho(\theta)^{s}H_{j}\rho(\theta)^{1-s}\right)\right]\nonumber \\
 & \qquad+\sum_{k,\ell}\zeta(\lambda_{k},\lambda_{\ell})\Tr\!\left[\Pi_{k}\left(\rho(\theta)\left\langle H_{i}\right\rangle _{\rho(\theta)}\right)\Pi_{\ell}\left(\rho(\theta)\left\langle H_{j}\right\rangle _{\rho(\theta)}\right)\right]\\
 & =\sum_{k,\ell}\zeta(\lambda_{k},\lambda_{\ell})\int_{0}^{1}\int_{0}^{1}dt\ ds\ \Tr\!\left[\Pi_{k}\rho(\theta)^{t}H_{i}\rho(\theta)^{1-t}\Pi_{\ell}\rho(\theta)^{s}H_{j}\rho(\theta)^{1-s}\right]\nonumber \\
 & \qquad-\left\langle H_{j}\right\rangle _{\rho(\theta)}\sum_{k,\ell}\zeta(\lambda_{k},\lambda_{\ell})\int_{0}^{1}dt\ \Tr\!\left[\Pi_{k}\rho(\theta)^{t}H_{i}\rho(\theta)^{1-t}\Pi_{\ell}\rho(\theta)\right]\nonumber \\
 & \qquad-\left\langle H_{i}\right\rangle _{\rho(\theta)}\sum_{k,\ell}\zeta(\lambda_{k},\lambda_{\ell})\int_{0}^{1}ds\ \Tr\!\left[\Pi_{k}\rho(\theta)\Pi_{\ell}\rho(\theta)^{s}H_{j}\rho(\theta)^{1-s}\right]\nonumber \\
 & \qquad+\left\langle H_{i}\right\rangle _{\rho(\theta)}\left\langle H_{j}\right\rangle _{\rho(\theta)}\sum_{k,\ell}\zeta(\lambda_{k},\lambda_{\ell})\Tr\!\left[\Pi_{k}\rho(\theta)\Pi_{\ell}\rho(\theta)\right]\\
 & =\sum_{k,\ell}\zeta(\lambda_{k},\lambda_{\ell})\int_{0}^{1}\int_{0}^{1}dt\ ds\ \Tr\!\left[\Pi_{k}\lambda_{k}^{1-s+t}H_{i}\lambda_{\ell}^{1-t+s}\Pi_{\ell}H_{j}\right]\nonumber \\
 & \qquad-\left\langle H_{j}\right\rangle _{\rho(\theta)}\sum_{k,\ell}\zeta(\lambda_{k},\lambda_{\ell})\int_{0}^{1}dt\ \Tr\!\left[\Pi_{k}\lambda_{k}^{t}H_{i}\Pi_{\ell}\lambda_{\ell}^{2-t}\right]\nonumber \\
 & \qquad-\left\langle H_{i}\right\rangle _{\rho(\theta)}\sum_{k,\ell}\zeta(\lambda_{k},\lambda_{\ell})\int_{0}^{1}ds\ \Tr\!\left[\Pi_{k}\lambda_{k}^{2-s}\Pi_{\ell}\lambda_{\ell}^{s}H_{j}\right]\nonumber \\
 & \qquad+\left\langle H_{i}\right\rangle _{\rho(\theta)}\left\langle H_{j}\right\rangle _{\rho(\theta)}\sum_{k,\ell}\zeta(\lambda_{k},\lambda_{\ell})\Tr\!\left[\Pi_{k}\Pi_{\ell}\lambda_{\ell}^{2}\right]\\
 & =\sum_{k,\ell}\zeta(\lambda_{k},\lambda_{\ell})\lambda_{k}\lambda_{\ell}\int_{0}^{1}\int_{0}^{1}dt\ ds\ \left(\frac{\lambda_{k}}{\lambda_{\ell}}\right)^{t-s}\Tr\!\left[\Pi_{k}H_{i}\Pi_{\ell}H_{j}\right]\nonumber \\
 & \qquad-\left\langle H_{j}\right\rangle _{\rho(\theta)}\sum_{k,\ell}\zeta(\lambda_{k},\lambda_{\ell})\int_{0}^{1}dt\ \lambda_{k}^{t}\lambda_{\ell}^{2-t}\Tr\!\left[\Pi_{\ell}\Pi_{k}H_{i}\right]\nonumber \\
 & \qquad-\left\langle H_{i}\right\rangle _{\rho(\theta)}\sum_{k,\ell}\zeta(\lambda_{k},\lambda_{\ell})\int_{0}^{1}ds\ \lambda_{k}^{2-s}\lambda_{\ell}^{s}\Tr\!\left[\Pi_{k}\Pi_{\ell}H_{j}\right]\nonumber \\
 & \qquad+\left\langle H_{i}\right\rangle _{\rho(\theta)}\left\langle H_{j}\right\rangle _{\rho(\theta)}\sum_{k,\ell}\zeta(\lambda_{k},\lambda_{\ell})\lambda_{\ell}^{2}\Tr\!\left[\Pi_{k}\Pi_{\ell}\right]\\
 & =\sum_{k,\ell}\zeta(\lambda_{k},\lambda_{\ell})\lambda_{k}\lambda_{\ell}\int_{0}^{1}\int_{0}^{1}dt\ ds\ \left(\frac{\lambda_{k}}{\lambda_{\ell}}\right)^{t-s}\Tr\!\left[\Pi_{k}H_{i}\Pi_{\ell}H_{j}\right]\nonumber \\
 & \qquad-\left\langle H_{j}\right\rangle _{\rho(\theta)}\sum_{k}\zeta(\lambda_{k},\lambda_{k})\int_{0}^{1}dt\ \lambda_{k}^{t}\lambda_{k}^{2-t}\Tr\!\left[\Pi_{k}H_{i}\right]\nonumber \\
 & \qquad-\left\langle H_{i}\right\rangle _{\rho(\theta)}\sum_{k}\zeta(\lambda_{k},\lambda_{k})\int_{0}^{1}ds\ \lambda_{k}^{2-s}\lambda_{k}^{s}\Tr\!\left[\Pi_{k}H_{j}\right]\nonumber \\
 & \qquad+\left\langle H_{i}\right\rangle _{\rho(\theta)}\left\langle H_{j}\right\rangle _{\rho(\theta)}\sum_{k}\zeta(\lambda_{k},\lambda_{k})\lambda_{k}^{2}\Tr\!\left[\Pi_{k}\right]\\
 & =\sum_{k,\ell}\zeta(\lambda_{k},\lambda_{\ell})\lambda_{k}\lambda_{\ell}\int_{0}^{1}\int_{0}^{1}dt\ ds\ \left(\frac{\lambda_{k}}{\lambda_{\ell}}\right)^{t-s}\Tr\!\left[\Pi_{k}H_{i}\Pi_{\ell}H_{j}\right]\nonumber \\
 & \qquad-\left\langle H_{j}\right\rangle _{\rho(\theta)}\sum_{k}\frac{\kappa}{\lambda_{k}}\lambda_{k}^{2}\Tr\!\left[\Pi_{k}H_{i}\right]\nonumber \\
 & \qquad-\left\langle H_{i}\right\rangle _{\rho(\theta)}\sum_{k}\frac{\kappa}{\lambda_{k}}\lambda_{k}^{2}\Tr\!\left[\Pi_{k}H_{j}\right]\nonumber \\
 & \qquad+\left\langle H_{i}\right\rangle _{\rho(\theta)}\left\langle H_{j}\right\rangle _{\rho(\theta)}\sum_{k}\frac{\kappa}{\lambda_{k}}\lambda_{k}^{2}\Tr\!\left[\Pi_{k}\right]\\
 & =\sum_{k,\ell}\zeta(\lambda_{k},\lambda_{\ell})\lambda_{k}\lambda_{\ell}\int_{0}^{1}\int_{0}^{1}dt\ ds\ \left(\frac{\lambda_{k}}{\lambda_{\ell}}\right)^{t-s}\Tr\!\left[\Pi_{k}H_{i}\Pi_{\ell}H_{j}\right]\nonumber \\
 & \qquad-\kappa\left\langle H_{j}\right\rangle _{\rho(\theta)}\Tr\!\left[\sum_{k}\lambda_{k}\Pi_{k}H_{i}\right]-\kappa\left\langle H_{i}\right\rangle _{\rho(\theta)}\Tr\!\left[\sum_{k}\lambda_{k}\Pi_{k}H_{j}\right]\nonumber \\
 & \qquad+\kappa\left\langle H_{i}\right\rangle _{\rho(\theta)}\left\langle H_{j}\right\rangle _{\rho(\theta)}\Tr\!\left[\sum_{k}\lambda_{k}\Pi_{k}\right]\\
 & =\sum_{k,\ell}\zeta(\lambda_{k},\lambda_{\ell})\lambda_{k}\lambda_{\ell}\int_{0}^{1}\int_{0}^{1}dt\ ds\ \left(\frac{\lambda_{k}}{\lambda_{\ell}}\right)^{t-s}\Tr\!\left[\Pi_{k}H_{i}\Pi_{\ell}H_{j}\right]\nonumber \\
 & \qquad-\kappa\left\langle H_{j}\right\rangle _{\rho(\theta)}\left\langle H_{i}\right\rangle _{\rho(\theta)}-\kappa\left\langle H_{i}\right\rangle _{\rho(\theta)}\left\langle H_{j}\right\rangle _{\rho(\theta)}+\kappa\left\langle H_{i}\right\rangle _{\rho(\theta)}\left\langle H_{j}\right\rangle _{\rho(\theta)}\\
 & =\sum_{k,\ell}\zeta(\lambda_{k},\lambda_{\ell})\lambda_{k}\lambda_{\ell}\int_{0}^{1}\int_{0}^{1}dt\ ds\ \left(\frac{\lambda_{k}}{\lambda_{\ell}}\right)^{t-s}\Tr\!\left[\Pi_{k}H_{i}\Pi_{\ell}H_{j}\right]\nonumber \\
 & \qquad-\kappa\left\langle H_{i}\right\rangle _{\rho(\theta)}\left\langle H_{j}\right\rangle _{\rho(\theta)}.
\end{align}
Thus, it remains to evaluate the first term in the last expression
above. To this end, considering that $\rho(\theta)$ and $H(\theta)$
commute, they have the same eigenprojections, so that the spectral
decomposition of $H(\theta)$ is given by $\sum_{k}\mu_{k}\Pi_{k}$
and that of $\rho(\theta)$ is given by $\rho(\theta)=\sum_{k}\lambda_{k}\Pi_{k}=\sum_{k}\frac{e^{-\mu_{k}}}{Z}\Pi_{k}$,
where $Z\equiv Z(\theta)$. Then consider that
\begin{align}
 & \sum_{k,\ell}\zeta(\lambda_{k},\lambda_{\ell})\lambda_{k}\lambda_{\ell}\int_{0}^{1}\int_{0}^{1}dt\ ds\ \left(\frac{\lambda_{k}}{\lambda_{\ell}}\right)^{t-s}\Tr\!\left[\Pi_{k}H_{i}\Pi_{\ell}H_{j}\right]\nonumber \\
 & =\sum_{k,\ell}\zeta\!\left(\frac{e^{-\mu_{k}}}{Z},\frac{e^{-\mu_{\ell}}}{Z}\right)\frac{e^{-\mu_{k}}}{Z}\frac{e^{-\mu_{\ell}}}{Z}\int_{0}^{1}\int_{0}^{1}dt\ ds\ \left(e^{-\left(\mu_{k}-\mu_{\ell}\right)}\right)^{t-s}\Tr\!\left[\Pi_{k}H_{i}\Pi_{\ell}H_{j}\right]\\
 & =\sum_{k,\ell}\zeta\!\left(e^{-\mu_{k}},e^{-\mu_{\ell}}\right)e^{-\mu_{k}}\frac{e^{-\mu_{\ell}}}{Z}\int_{0}^{1}\int_{0}^{1}dt\ ds\ e^{-t\left(\mu_{k}-\mu_{\ell}\right)}e^{s\left(\mu_{k}-\mu_{\ell}\right)}\Tr\!\left[\Pi_{k}H_{i}\Pi_{\ell}H_{j}\right]\\
 & =\sum_{k,\ell}\zeta\!\left(e^{-\left(\mu_{k}-\mu_{\ell}\right)},1\right)\frac{e^{-\mu_{k}}}{Z}\left(\int_{0}^{1}dt\ e^{-t\left(\mu_{k}-\mu_{\ell}\right)}\right)\left(\int_{0}^{1}ds\ e^{s\left(\mu_{k}-\mu_{\ell}\right)}\right)\Tr\!\left[\Pi_{k}H_{i}\Pi_{\ell}H_{j}\right].
\end{align}
Observing that
\begin{align}
\int_{0}^{1}dt\ e^{-t\left(\mu_{k}-\mu_{\ell}\right)} & =\left.\frac{e^{-t\left(\mu_{k}-\mu_{\ell}\right)}}{-\left(\mu_{k}-\mu_{\ell}\right)}\right|_{0}^{1}\\
 & =\frac{e^{-\left(\mu_{k}-\mu_{\ell}\right)}}{-\left(\mu_{k}-\mu_{\ell}\right)}-\frac{1}{-\left(\mu_{k}-\mu_{\ell}\right)}\\
 & =\frac{1-e^{-\left(\mu_{k}-\mu_{\ell}\right)}}{\mu_{k}-\mu_{\ell}},\\
\int_{0}^{1}ds\ e^{s\left(\mu_{k}-\mu_{\ell}\right)} & =\left.\frac{e^{s\left(\mu_{k}-\mu_{\ell}\right)}}{\mu_{k}-\mu_{\ell}}\right|_{0}^{1}\\
 & =\frac{e^{\mu_{k}-\mu_{\ell}}}{\mu_{k}-\mu_{\ell}}-\frac{1}{\mu_{k}-\mu_{\ell}}\\
 & =\frac{e^{\mu_{k}-\mu_{\ell}}-1}{\mu_{k}-\mu_{\ell}},
\end{align}
we conclude that
\begin{align}
 & \sum_{k,\ell}\zeta\!\left(e^{-\left(\mu_{k}-\mu_{\ell}\right)},1\right)\frac{e^{-\mu_{k}}}{Z}\left(\int_{0}^{1}dt\ e^{-t\left(\mu_{k}-\mu_{\ell}\right)}\right)\left(\int_{0}^{1}ds\ e^{s\left(\mu_{k}-\mu_{\ell}\right)}\right)\Tr\!\left[\Pi_{k}H_{i}\Pi_{\ell}H_{j}\right]\nonumber \\
 & =\sum_{k,\ell}\zeta\!\left(e^{-\left(\mu_{k}-\mu_{\ell}\right)},1\right)\frac{e^{-\mu_{k}}}{Z}\left(\frac{1-e^{-\left(\mu_{k}-\mu_{\ell}\right)}}{\mu_{k}-\mu_{\ell}}\right)\left(\frac{e^{\mu_{k}-\mu_{\ell}}-1}{\mu_{k}-\mu_{\ell}}\right)\Tr\!\left[\Pi_{k}H_{i}\Pi_{\ell}H_{j}\right]\\
 & =\sum_{k,\ell}\zeta\!\left(e^{-\left(\mu_{k}-\mu_{\ell}\right)},1\right)\frac{e^{-\mu_{k}}}{Z}e^{\mu_{k}-\mu_{\ell}}\left(\frac{1-e^{-\left(\mu_{k}-\mu_{\ell}\right)}}{\mu_{k}-\mu_{\ell}}\right)\left(\frac{1-e^{-\left(\mu_{k}-\mu_{\ell}\right)}}{\mu_{k}-\mu_{\ell}}\right)\Tr\!\left[\Pi_{k}H_{i}\Pi_{\ell}H_{j}\right]\\
 & =\sum_{k,\ell}\zeta\!\left(e^{-\left(\mu_{k}-\mu_{\ell}\right)},1\right)\frac{e^{-\mu_{\ell}}}{Z}\left(\frac{1-e^{-\left(\mu_{k}-\mu_{\ell}\right)}}{\mu_{k}-\mu_{\ell}}\right)^{2}\Tr\!\left[\Pi_{k}H_{i}\Pi_{\ell}H_{j}\right]\\
 & =\sum_{k,\ell}\zeta\!\left(e^{-\left(\mu_{k}-\mu_{\ell}\right)},1\right)\frac{e^{-\mu_{\ell}}}{Z}\left(\frac{e^{-\left(\mu_{k}-\mu_{\ell}\right)}-1}{\mu_{k}-\mu_{\ell}}\right)^{2}\left(\frac{e^{-\left(\mu_{k}-\mu_{\ell}\right)}+1}{e^{-\left(\mu_{k}-\mu_{\ell}\right)}+1}\right)\Tr\!\left[\Pi_{k}H_{i}\Pi_{\ell}H_{j}\right]\\
 & =\frac{1}{2}\sum_{k,\ell}2\,\zeta\!\left(e^{-\left(\mu_{k}-\mu_{\ell}\right)},1\right)\left(\frac{e^{-\left(\mu_{k}-\mu_{\ell}\right)}-1}{\mu_{k}-\mu_{\ell}}\right)^{2}\left(\frac{1}{e^{-\left(\mu_{k}-\mu_{\ell}\right)}+1}\right)\left(\frac{e^{-\mu_{k}}}{Z}+\frac{e^{-\mu_{\ell}}}{Z}\right)\Tr\!\left[\Pi_{k}H_{i}\Pi_{\ell}H_{j}\right].
\end{align}
Now suppose that $f(t)$ is a function satisfying the following Fourier
transform relation:
\begin{equation}
\int_{-\infty}^{\infty}dt\:f(t)e^{it\omega}=2\,\zeta\!\left(e^{-\omega},1\right)\frac{\left(e^{-\omega}-1\right)^{2}}{\omega^{2}\left(e^{-\omega}+1\right)}.
\end{equation}
To establish that $f(t)$ is real-valued, it suffices to prove that
$\omega\mapsto2\,\zeta\!\left(e^{-\omega},1\right)\frac{\left(e^{-\omega}-1\right)^{2}}{\omega^{2}\left(e^{-\omega}+1\right)}$
is an even function. To this end, consider that
\begin{align}
-\omega & \mapsto2\,\zeta\!\left(e^{-\left(-\omega\right)},1\right)\frac{\left(e^{-\left(-\omega\right)}-1\right)^{2}}{\left(-\omega\right)^{2}\left(e^{-\left(-\omega\right)}+1\right)}\\
 & =2\,\zeta\!\left(e^{\omega},1\right)\frac{\left(e^{\omega}-1\right)^{2}}{\omega^{2}\left(e^{\omega}+1\right)}\\
 & =2\,\zeta\!\left(1,e^{\omega}\right)\frac{\left(e^{\omega}-1\right)^{2}}{\omega^{2}\left(e^{\omega}+1\right)}\\
 & =2\,e^{-\omega}\zeta\!\left(e^{-\omega},1\right)\frac{\left(e^{\omega}-1\right)^{2}}{\omega^{2}\left(e^{\omega}+1\right)}\\
 & =2\,\zeta\!\left(e^{-\omega},1\right)\frac{e^{-\omega}\left(e^{\omega}-1\right)e^{-\omega}\left(e^{\omega}-1\right)}{e^{-\omega}\omega^{2}\left(e^{\omega}+1\right)}\\
 & =2\,\zeta\!\left(e^{-\omega},1\right)\frac{\left(1-e^{-\omega}\right)\left(1-e^{-\omega}\right)}{\omega^{2}\left(1+e^{-\omega}\right)}\\
 & =2\,\zeta\!\left(e^{-\omega},1\right)\frac{\left(1-e^{-\omega}\right)^{2}}{\omega^{2}\left(e^{-\omega}+1\right)}\\
 & =2\,\zeta\!\left(e^{-\omega},1\right)\frac{\left(e^{-\omega}-1\right)^{2}}{\omega^{2}\left(e^{-\omega}+1\right)},
\end{align}
thus establishing the claim. Then it follows that
\begin{align}
 & \frac{1}{2}\sum_{k,\ell}2\,\zeta\!\left(e^{-\left(\mu_{k}-\mu_{\ell}\right)},1\right)\left(\frac{e^{-\left(\mu_{k}-\mu_{\ell}\right)}-1}{\mu_{k}-\mu_{\ell}}\right)^{2}\left(\frac{1}{e^{-\left(\mu_{k}-\mu_{\ell}\right)}+1}\right)\left(\frac{e^{-\mu_{k}}}{Z}+\frac{e^{-\mu_{\ell}}}{Z}\right)\Tr\!\left[\Pi_{k}H_{i}\Pi_{\ell}H_{j}\right]\nonumber \\
 & =\frac{1}{2}\sum_{k,\ell}\left(\int_{-\infty}^{\infty}dt\:f(t)e^{it\left(\mu_{k}-\mu_{\ell}\right)}\right)\left(\frac{e^{-\mu_{k}}}{Z}+\frac{e^{-\mu_{\ell}}}{Z}\right)\Tr\!\left[\Pi_{k}H_{i}\Pi_{\ell}H_{j}\right]\\
 & =\frac{1}{2}\int_{-\infty}^{\infty}dt\:f(t)\sum_{k,\ell}\left(\frac{e^{-\mu_{k}}}{Z}+\frac{e^{-\mu_{\ell}}}{Z}\right)\Tr\!\left[\Pi_{k}e^{it\mu_{k}}H_{i}\Pi_{\ell}e^{-it\mu_{\ell}}H_{j}\right]\\
 & =\frac{1}{2}\int_{-\infty}^{\infty}dt\:f(t)\sum_{k,\ell}\Tr\!\left[\Pi_{k}\frac{e^{-\mu_{k}}}{Z}e^{it\mu_{k}}H_{i}\Pi_{\ell}e^{-it\mu_{\ell}}H_{j}\right]\nonumber \\
 & \qquad+\frac{1}{2}\int_{-\infty}^{\infty}dt\:f(t)\sum_{k,\ell}\Tr\!\left[\Pi_{k}e^{it\mu_{k}}H_{i}\Pi_{\ell}\frac{e^{-\mu_{\ell}}}{Z}e^{-it\mu_{\ell}}H_{j}\right]\\
 & =\frac{1}{2}\int_{-\infty}^{\infty}dt\:f(t)\Tr\!\left[\left(\sum_{k}\Pi_{k}\frac{e^{-\mu_{k}}}{Z}e^{it\mu_{k}}\right)H_{i}\left(\sum_{\ell}\Pi_{\ell}e^{-it\mu_{\ell}}\right)H_{j}\right]\nonumber \\
 & \qquad+\frac{1}{2}\int_{-\infty}^{\infty}dt\:f(t)\Tr\!\left[\left(\sum_{k}\Pi_{k}e^{it\mu_{k}}\right)H_{i}\left(\sum_{\ell}\Pi_{\ell}\frac{e^{-\mu_{\ell}}}{Z}e^{-it\mu_{\ell}}\right)H_{j}\right]\\
 & =\frac{1}{2}\int_{-\infty}^{\infty}dt\:f(t)\Tr\!\left[\rho(\theta)e^{itH(\theta)}H_{i}e^{-itH(\theta)}H_{j}\right]\nonumber \\
 & \qquad+\frac{1}{2}\int_{-\infty}^{\infty}dt\:f(t)\Tr\!\left[e^{itH(\theta)}H_{i}e^{-itH(\theta)}\rho(\theta)H_{j}\right]\\
 & =\frac{1}{2}\Tr\!\left[\Phi_{f,\theta}\!\left(H_{i}\right)H_{j}\rho(\theta)\right]+\Tr\!\left[H_{j}\Phi_{f,\theta}\!\left(H_{i}\right)\rho(\theta)\right]\\
 & =\frac{1}{2}\left\langle \left\{ \Phi_{f,\theta}\!\left(H_{i}\right),H_{j}\right\} \right\rangle _{\rho(\theta)}.
\end{align}

Finally, the claim in \eqref{eq:normalization-f-fourier} follows
because
\begin{align}
\int_{-\infty}^{\infty}dt\:f(t) & =\lim_{\omega\to0}2\,\zeta\!\left(e^{-\omega},1\right)\frac{\left(e^{-\omega}-1\right)^{2}}{\omega^{2}\left(e^{-\omega}+1\right)}\\
 & =2\,\zeta\!\left(1,1\right)\left(\lim_{\omega\to0}\frac{e^{-\omega}-1}{\omega}\right)^{2}\frac{1}{2}\\
 & =\kappa\left(\left.\frac{d}{d\omega}e^{-\omega}\right|_{\omega=0}\right)^{2}\\
 & =\kappa,
\end{align}
where the third equality follows from \eqref{eq:zeta-kappa-normalization}.
\end{proof}

\subsection{$\alpha$-$z$ High-peak tent probability densities}

In this subsection, I prove Lemma \ref{lem:fourier-trans-a-z}, which
is the second ingredient needed for the proof of Theorem~\ref{thm:QBM-a-z-formula}.
The probability density in \eqref{eq:high-peak-tent} is known as
the high-peak tent, a name given in \cite{Patel2024a}, due to its
form when plotted (see \cite[Figure~3]{Patel2024a}). It was shown
in \cite[Appendix~C]{Ejima2019}, \cite[Section~5.1, Supplementary Information]{Anshu2021},
and \cite[Lemma~12]{Patel2024a} that the function $p(t)$ is the
Fourier transform of $\tanh\!\left(\frac{\omega}{2}\right)/\left(\frac{\omega}{2}\right)$,
the latter function having been considered in the context of quantum
belief propagation \cite{Hastings2007,Kim2012,Kato2019}.

The family of probability densities in \eqref{eq:high-peak-tent-a-z}--\eqref{eq:2nd-exp-a-z-dist},
parameterized by $\alpha\in\left(0,1\right)$ and $z>0$, have a similar
form to $p(t)$ when plotted, and so let us refer to them as the $\alpha$-$z$
high-peak tent probability densities. 
\begin{lem}
\label{lem:fourier-trans-a-z}For all $\alpha\in\left(0,1\right)$
and $z>0$, the Fourier transform of the function
\begin{equation}
\omega\mapsto2\,\zeta_{\alpha,z}\!\left(e^{-\omega},1\right)\frac{\left(e^{-\omega}-1\right)^{2}}{\omega^{2}\left(e^{-\omega}+1\right)},
\end{equation}
where $\zeta_{\alpha,z}\!\left(e^{-\omega},1\right)$ is defined in
\eqref{eq:limit-for-a-z-eigenval-func}, is equal to the following
probability density function:
\begin{equation}
q_{\alpha,z}(t)\coloneqq\left(p*p_{\alpha,z}\right)(t)=\int_{-\infty}^{\infty}d\tau\:p(\tau)p_{\alpha,z}(t-\tau),\label{eq:q-a-z-prob-dens}
\end{equation}
where the probability density functions $p$ and $p_{\alpha,z}$ are
defined on $t\in\mathbb{R}$ as
\begin{align}
p(t) & \coloneqq\frac{2}{\pi}\ln\!\left|\coth\!\left(\frac{\pi t}{2}\right)\right|,\label{eq:high-peak-tent}\\
p_{\alpha,z}(t) & \coloneqq\frac{z}{2\pi\alpha\left(1-\alpha\right)}\ln\!\left(1+\left(\frac{\sin(\pi\alpha)}{\sinh(\pi zt)}\right)^{2}\right)\label{eq:high-peak-tent-a-z}\\
 & =\frac{z}{2\pi\alpha\left(1-\alpha\right)}\ln\!\left(\coth^{2}\!\left(\pi zt\right)-\left(\frac{\cos(\pi\alpha)}{\sinh\!\left(\pi zt\right)}\right)^{2}\right).\label{eq:2nd-exp-a-z-dist}
\end{align}
Additionally,
\begin{enumerate}
\item The identity $p_{\alpha=\frac{1}{2},z=\frac{1}{2}}(t)=p(t)$ holds
for all $t\in\mathbb{R}$.
\item For all $z>0$, the characteristic function of $p_{\alpha,z}(t)$
converges pointwise to one everywhere in the limit $\alpha\to1$,
implying that, for all $z>0$, the random variable $X_{\alpha,z}\sim q_{\alpha,z}$
converges in distribution to the random variable $X\sim p$ in the
limit $\alpha\to1$.
\item Similarly, for all $\alpha\in\left(0,1\right)$, the characteristic
function of $p_{\alpha,z}(t)$ converges pointwise to one everywhere
in the limit $z\to\infty$, implying that, for all $\alpha\in\left(0,1\right)$,
the random variable $X_{\alpha,z}\sim q_{\alpha,z}$ converges in
distribution to the random variable $X\sim p$ in the limit $z\to\infty$.
\end{enumerate}
\end{lem}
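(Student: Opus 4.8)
The plan is to reduce the statement to the convolution theorem plus a single auxiliary Fourier-transform identity proved by residue calculus. The first step is purely algebraic: substituting $x=e^{-\omega}$, $y=1$ into the definition \eqref{eq:limit-for-a-z-eigenval-func} of $\zeta_{\alpha,z}$ and collecting terms, I would show that
\[
2\,\zeta_{\alpha,z}\!\left(e^{-\omega},1\right)\frac{\left(e^{-\omega}-1\right)^{2}}{\omega^{2}\left(e^{-\omega}+1\right)}=\frac{\tanh(\omega/2)}{\omega/2}\cdot\widehat{p}_{\alpha,z}(\omega),\qquad \widehat{p}_{\alpha,z}(\omega)\coloneqq\frac{1}{\alpha(1-\alpha)}\,\frac{\left(1-e^{-\alpha\omega/z}\right)\left(1-e^{-(1-\alpha)\omega/z}\right)}{(\omega/z)\left(1-e^{-\omega/z}\right)},
\]
using $\tanh(\omega/2)/(\omega/2)=\frac{2(1-e^{-\omega})}{\omega(1+e^{-\omega})}$ together with the fact (from the references cited just before the lemma) that this function is exactly the Fourier transform of $p$. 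By the scaling relation $p_{\alpha,z}(t)=z\,p_{\alpha,1}(zt)$, which gives $\int p_{\alpha,z}(t)e^{i\omega t}\,dt=\widehat{p}_{\alpha,1}(\omega/z)$, it then suffices to prove the base identity $\int_{-\infty}^{\infty}p_{\alpha,1}(t)\,e^{i\omega t}\,dt=\widehat{p}_{\alpha,1}(\omega)$; the lemma follows by applying the convolution theorem to $q_{\alpha,z}=p*p_{\alpha,z}$, whose transform is then the product displayed above.

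For the base identity I would differentiate. Writing $\sinh^{2}(\pi t)+\sin^{2}(\pi\alpha)=\tfrac12\bigl(\cosh(2\pi t)-\cos(2\pi\alpha)\bigr)$ gives $p_{\alpha,1}(t)=\frac{1}{2\pi\alpha(1-\alpha)}\bigl[\ln(\cosh2\pi t-\cos2\pi\alpha)-\ln(\cosh2\pi t-1)\bigr]$, and then the elementary identity $\coth(\pi(t+i\alpha))+\coth(\pi(t-i\alpha))=\frac{2\sinh2\pi t}{\cosh2\pi t-\cos2\pi\alpha}$ (and its $\alpha\to0$ specialization $2\coth\pi t=\frac{2\sinh2\pi t}{\cosh2\pi t-1}$) yields
\[
p_{\alpha,1}'(t)=\frac{1}{\alpha(1-\alpha)}\left[\tfrac12\coth(\pi(t+i\alpha))+\tfrac12\coth(\pi(t-i\alpha))-\coth(\pi t)\right].
\]
This combination decays exponentially at $\pm\infty$ and has only a principal-value singularity at $t=0$, so $\widehat{p_{\alpha,1}'}(\omega)=-i\omega\,\widehat{p}_{\alpha,1}(\omega)$ is a genuine function (the boundary terms in the integration by parts vanish because the logarithmic growth of $p_{\alpha,1}$ near $0$ is killed by $e^{i\omega t}-e^{-i\omega t}=O(t)$). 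I would compute $\widehat{p_{\alpha,1}'}$ for $\omega>0$ by closing the contour in the upper half plane and summing residues over the poles $i(k-\alpha)$, $i(k+\alpha)$, $ik$ of the three $\coth$ factors (each of residue $1/\pi$), together with the half-residue from the pole on the contour at $t=0$. The three resulting geometric series sum to $-i\bigl(1-e^{\alpha\omega}\bigr)\bigl(1-e^{(1-\alpha)\omega}\bigr)/(e^{\omega}-1)$; dividing by $-i\omega$ and rewriting with the substitution $e^{\omega}\mapsto 1/e^{-\omega}$ gives $\widehat{p}_{\alpha,1}(\omega)$ for $\omega\neq0$, and the value $\widehat{p}_{\alpha,1}(0)=\int p_{\alpha,1}=1$ follows from positivity, integrability, and continuity of the transform at the origin (this is also where one sees $p_{\alpha,z}$, and hence $q_{\alpha,z}$, is a probability density).

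The three additional items are then short. Item~1 is a direct substitution: for $\alpha=z=\tfrac12$ the prefactor is $1/\pi$ and $1+\sin^{2}(\pi/2)/\sinh^{2}(\pi t/2)=\coth^{2}(\pi t/2)$, so $p_{1/2,1/2}(t)=\frac{2}{\pi}\ln|\coth(\pi t/2)|=p(t)$. For items~2 and~3, the characteristic function of $p_{\alpha,z}$ is $\widehat{p}_{\alpha,z}(\omega)=\widehat{p}_{\alpha,1}(\omega/z)$; as $\alpha\to1$ the factor $1-e^{-(1-\alpha)\omega/z}\sim(1-\alpha)\omega/z$ forces $\widehat{p}_{\alpha,1}(\omega/z)\to1$ pointwise, while as $z\to\infty$ one has $\widehat{p}_{\alpha,1}(\omega/z)\to\widehat{p}_{\alpha,1}(0)=1$ by continuity at the origin. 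Since the characteristic function of $q_{\alpha,z}=p*p_{\alpha,z}$ is $\bigl(\tanh(\omega/2)/(\omega/2)\bigr)\widehat{p}_{\alpha,z}(\omega)$, in both limits it converges pointwise to $\tanh(\omega/2)/(\omega/2)$, the characteristic function of $p$, which is continuous at $0$; hence $X_{\alpha,z}\to X$ in distribution by Lévy's continuity theorem.

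The main obstacle I anticipate is the residue computation for $\widehat{p_{\alpha,1}'}$: one must justify closing the contour in the presence of the infinitely many poles accumulating along the imaginary axis (via a standard sequence of expanding rectangular contours chosen to avoid poles), handle the principal-value pole at $t=0$ correctly, and then recognize the three geometric series and their combination as the claimed rational–exponential expression. Everything else is either bookkeeping in the algebraic identity of the first paragraph or standard.
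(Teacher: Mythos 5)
Your proposal is correct, and it reaches the key identity by a genuinely different residue calculation than the paper's. The first step (factoring the target function as $\frac{\tanh(\omega/2)}{\omega/2}\cdot f_{\alpha,z}(\omega)$ and invoking the convolution theorem together with the known transform of the high-peak tent) is identical to the paper's. After that, the paper \emph{derives} $p_{\alpha,z}$ by evaluating the inverse transform $\frac{1}{2\pi}\int f_{\alpha,z}(\omega)e^{i\omega t}\,d\omega$ with a semicircular contour in the $\omega$-plane, summing residues at the simple poles $2\pi izm$, $m\in\mathbb{N}$, and recognizing the resulting series as $-\ln(1-u)$ expansions; the technical burden there is the arc estimate, which occupies several pages of $\sinh$ bounds. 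You instead \emph{verify} the candidate density: you reduce to $z=1$ via the scaling $p_{\alpha,z}(t)=z\,p_{\alpha,1}(zt)$ (a genuine simplification absent from the paper), write $p_{\alpha,1}'$ as $\frac{1}{\alpha(1-\alpha)}\bigl[\tfrac12\coth(\pi(t+i\alpha))+\tfrac12\coth(\pi(t-i\alpha))-\coth(\pi t)\bigr]$, and compute its forward transform by residues in the $t$-plane; I checked that your pole bookkeeping (residue $1/\pi$ at each $\coth$ pole, half-residue at $t=0$) and the resulting geometric series do produce $-i\omega\,\widehat{p}_{\alpha,1}(\omega)$ with the correct $\widehat{p}_{\alpha,1}$. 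Your technical burden is shifted to the principal-value pole at the origin and the boundary terms in the integration by parts, which you correctly dispose of using the evenness of $p_{\alpha,1}$, plus the expanding-rectangle justification for the accumulating poles, which you flag. The normalization argument (continuity of the transform of an integrable function at $\omega=0$, plus the removable singularity of $\widehat{p}_{\alpha,1}$ there) and the treatment of items 1--3 via pointwise convergence of the characteristic functions and L\'evy's theorem match the paper's in substance. The only omission is the trivial algebra establishing the second displayed form \eqref{eq:2nd-exp-a-z-dist} of $p_{\alpha,z}$, which follows from $\sinh^2+\sin^2=\cosh^2-\cos^2$; you should include it since it is part of the statement.
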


\begin{proof}
Recall from \eqref{eq:limit-for-a-z-eigenval-func} that
\begin{equation}
\zeta_{\alpha,z}\!\left(e^{-\omega},1\right)=\frac{z}{\alpha\left(1-\alpha\right)}\left(\frac{e^{-\left(\frac{1-\alpha}{z}\right)\omega}-1}{e^{-\omega}-1}\right)\left(\frac{e^{-\left(\frac{\alpha}{z}\right)\omega}-1}{e^{-\left(\frac{1}{z}\right)\omega}-1}\right).
\end{equation}
After defining 
\begin{equation}
f_{\alpha,z}(\omega)\coloneqq\frac{z\left(1-e^{-\left(\frac{1-\alpha}{z}\right)\omega}\right)\left(1-e^{-\left(\frac{\alpha}{z}\right)\omega}\right)}{\alpha\left(1-\alpha\right)\omega\left(1-e^{-\left(\frac{1}{z}\right)\omega}\right)},\label{eq:def-f-a-z-func}
\end{equation}
it follows that
\begin{align}
2\,\zeta_{\alpha,z}\!\left(e^{-\omega},1\right)\frac{\left(e^{-\omega}-1\right)^{2}}{\omega^{2}\left(e^{-\omega}+1\right)} & =\frac{2z\left(e^{-\left(\frac{1-\alpha}{z}\right)\omega}-1\right)\left(e^{-\left(\frac{\alpha}{z}\right)\omega}-1\right)\left(e^{-\omega}-1\right)}{\alpha\left(1-\alpha\right)\left(e^{-\left(\frac{1}{z}\right)\omega}-1\right)\omega^{2}\left(e^{-\omega}+1\right)}\\
 & =\frac{z\left(1-e^{-\left(\frac{1-\alpha}{z}\right)\omega}\right)\left(1-e^{-\left(\frac{\alpha}{z}\right)\omega}\right)}{\alpha\left(1-\alpha\right)\omega\left(1-e^{-\left(\frac{1}{z}\right)\omega}\right)}\frac{\left(1-e^{-\omega}\right)}{\frac{\omega}{2}\left(1+e^{-\omega}\right)}\\
 & =f_{\alpha,z}(\omega)\left(\frac{1-e^{-\omega}}{\frac{\omega}{2}\left(1+e^{-\omega}\right)}\right)\\
 & =f_{\alpha,z}(\omega)\left(\frac{e^{\frac{\omega}{2}}-e^{-\frac{\omega}{2}}}{\frac{\omega}{2}\left(e^{\frac{\omega}{2}}+e^{-\frac{\omega}{2}}\right)}\right)\\
 & =f_{\alpha,z}(\omega)\frac{\tanh\!\left(\frac{\omega}{2}\right)}{\frac{\omega}{2}}.
\end{align}
Our goal is to evaluate the following expression for all $t\in\mathbb{R}$:
\begin{equation}
\frac{1}{2\pi}\int_{-\infty}^{\infty}d\omega\ f_{\alpha,z}(\omega)\frac{\tanh\!\left(\frac{\omega}{2}\right)}{\frac{\omega}{2}}e^{i\omega t},
\end{equation}
and by the convolution theorem \cite{Weisstein}, this is equal to
the convolution of the following two Fourier transforms:
\begin{align}
 & \frac{1}{2\pi}\int_{-\infty}^{\infty}d\omega\ f_{\alpha,z}(\omega)e^{i\omega t},\label{eq:fourier-trans-f-a-z}\\
 & \frac{1}{2\pi}\int_{-\infty}^{\infty}d\omega\ \frac{\tanh\!\left(\frac{\omega}{2}\right)}{\frac{\omega}{2}}e^{i\omega t}=\frac{2}{\pi}\ln\!\left|\coth\!\left(\frac{\pi t}{2}\right)\right|,\label{eq:fourier-transform-orig-high-peak}
\end{align}
where \eqref{eq:fourier-transform-orig-high-peak} follows from \cite[Lemma~12]{Patel2024a}
and the function $t\mapsto\frac{2}{\pi}\ln\!\left|\coth\!\left(\frac{\pi t}{2}\right)\right|$
is a probability density function on $\mathbb{R}$ known as the high-peak
tent \cite{Patel2024a}. The equality in \eqref{eq:fourier-transform-orig-high-peak}
was calculated in \cite[Appendix~C]{Ejima2019} and \cite[Section~5.1, Supplementary Information]{Anshu2021}
by means of contour integration. By setting $\alpha=z=\frac{1}{2}$,
the following equality holds:
\begin{equation}
f_{\alpha=\frac{1}{2},z=\frac{1}{2}}(\omega)=\frac{\tanh\!\left(\frac{\omega}{2}\right)}{\frac{\omega}{2}},
\end{equation}
which implies that the Fourier transform in \eqref{eq:fourier-transform-orig-high-peak}
is a special case of that in \eqref{eq:fourier-trans-f-a-z}. Indeed,
consider that
\begin{align}
f_{\alpha=\frac{1}{2},z=\frac{1}{2}}(\omega) & =\lim_{\alpha,z\to\frac{1}{2}}\frac{z\left(1-e^{-\left(\frac{1-\alpha}{z}\right)\omega}\right)\left(1-e^{-\left(\frac{\alpha}{z}\right)\omega}\right)}{\alpha\left(1-\alpha\right)\omega\left(1-e^{-\left(\frac{1}{z}\right)\omega}\right)}\\
 & =\frac{\left(1-e^{-\omega}\right)\left(1-e^{-\omega}\right)}{\frac{\omega}{2}\left(1-e^{-2\omega}\right)}\\
 & =\frac{\left(1-e^{-\omega}\right)\left(1-e^{-\omega}\right)}{\frac{\omega}{2}\left(1-e^{-\omega}\right)\left(1+e^{-\omega}\right)}\\
 & =\frac{1-e^{-\omega}}{\frac{\omega}{2}\left(1+e^{-\omega}\right)}\\
 & =\frac{\tanh\!\left(\frac{\omega}{2}\right)}{\frac{\omega}{2}}.
\end{align}
In what follows, I prove that
\begin{align}
\frac{1}{2\pi}\int_{-\infty}^{\infty}d\omega\ f_{\alpha,z}(\omega)e^{i\omega t} & =\frac{z}{2\pi\alpha\left(1-\alpha\right)}\ln\!\left(1+\left(\frac{\sin(\pi\alpha)}{\sinh(\pi zt)}\right)^{2}\right),\label{eq:fourier-trans-f-a-z-to-p-a-z}
\end{align}
for all $\alpha\in\left(0,1\right)$ and $z>0$, and that
\begin{equation}
\lim_{\alpha,z\to\frac{1}{2}}\frac{z}{2\pi\alpha\left(1-\alpha\right)}\ln\!\left(1+\left(\frac{\sin(\pi\alpha)}{\sinh(\pi zt)}\right)^{2}\right)=\frac{2}{\pi}\ln\!\left|\coth\!\left(\frac{\pi t}{2}\right)\right|.
\end{equation}

\begin{figure}
\centering

\includegraphics[width=0.5\textwidth]{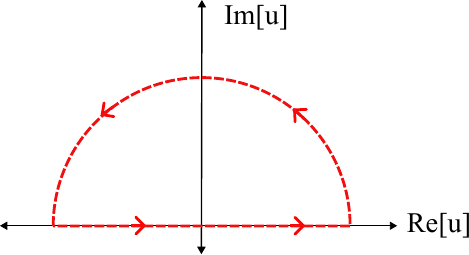}\caption{Contour $\gamma_{R}^{+}$ for $t>0$.}\label{fig:Contour-for-pos-t}

\end{figure}
Let us now evaluate the Fourier transform in \eqref{eq:fourier-trans-f-a-z}
by employing contour integration and the Cauchy residue theorem. We
begin by considering the case when $t>0$, and consider the positively
oriented contour $\gamma_{R}^{+}$ depicted in Figure \ref{fig:Contour-for-pos-t},
where $R>0$ denotes the radius of the semicircle depicted there.
Consider that
\begin{equation}
\lim_{R\to\infty}\oint_{\gamma_{R}^{+}}du\ f_{\alpha,z}(u)e^{iut}=\lim_{R\to\infty}\left[\int_{-R}^{R}d\omega\ f_{\alpha,z}(\omega)e^{i\omega t}+\int_{0}^{\pi}d\theta\ f_{\alpha,z}(Re^{i\theta})e^{iRe^{i\theta}t}\right],
\end{equation}
such that the contour integral is broken up into the line integral
along the real axis and the line integral around the arc of the semicircle.
Let us now prove that the line integral around the arc of the semicircle
evaluates to zero in the limit $R\to\infty$. To this end, consider
that
\begin{align}
 & \left|\int_{0}^{\pi}d\theta\ f_{\alpha,z}(Re^{i\theta})e^{iRe^{i\theta}t}\right|\nonumber \\
 & =\left|\int_{0}^{\pi}d\theta\ f_{\alpha,z}(Re^{i\theta})e^{iR\cos(\theta)t}e^{-R\sin(\theta)t}\right|\label{eq:arc-to-zero-1}\\
 & \leq\int_{0}^{\pi}d\theta\ \left|f_{\alpha,z}(Re^{i\theta})e^{iR\cos(\theta)t}e^{-R\sin(\theta)t}\right|\\
 & =\int_{0}^{\pi}d\theta\ \left|f_{\alpha,z}(Re^{i\theta})\right|e^{-R\sin(\theta)t}\\
 & =\int_{0}^{\pi}d\theta\ \left|\frac{z\left(1-e^{-\left(\frac{1-\alpha}{z}\right)Re^{i\theta}}\right)\left(1-e^{-\left(\frac{\alpha}{z}\right)Re^{i\theta}}\right)}{\alpha\left(1-\alpha\right)Re^{i\theta}\left(1-e^{-\left(\frac{1}{z}\right)Re^{i\theta}}\right)}\right|e^{-R\sin(\theta)t}\\
 & \leq\frac{z}{\alpha\left|1-\alpha\right|R}\int_{0}^{\pi}d\theta\ \left|\frac{\left(1-e^{-\left(\frac{1-\alpha}{z}\right)Re^{i\theta}}\right)\left(1-e^{-\left(\frac{\alpha}{z}\right)Re^{i\theta}}\right)}{\left(1-e^{-\left(\frac{1}{z}\right)Re^{i\theta}}\right)}\right|\\
 & \leq\frac{z}{\alpha\left|1-\alpha\right|R}\int_{0}^{\pi}d\theta\ 2\label{eq:2-bound-of-a-z-sinh-func}\\
 & =\frac{2\pi z}{\alpha\left|1-\alpha\right|R}
\end{align}
Thus, we conclude that
\begin{align}
\lim_{R\to\infty}\left|\int_{0}^{\pi}d\theta\ f_{\alpha,z}(Re^{i\theta})e^{iRe^{i\theta}t}\right| & \leq\lim_{R\to\infty}\frac{\pi z}{\alpha\left|1-\alpha\right|R}.\\
 & =0.
\end{align}
To see \eqref{eq:2-bound-of-a-z-sinh-func}, setting $u=x+iy$, consider
that
\begin{align}
 & \left|\frac{\left(1-e^{-\left(\frac{1-\alpha}{z}\right)u}\right)\left(1-e^{-\left(\frac{\alpha}{z}\right)u}\right)}{\left(1-e^{-\left(\frac{1}{z}\right)u}\right)}\right|\nonumber \\
 & =\left|\frac{\left(1-e^{-\left(\frac{1-\alpha}{z}\right)u}\right)\left(1-e^{-\left(\frac{\alpha}{z}\right)u}\right)}{\left(1-e^{-\left(\frac{1}{z}\right)u}\right)}\right|\\
 & =\left|\frac{e^{-\left(\frac{1-\alpha}{2z}\right)u}\left(e^{\left(\frac{1-\alpha}{2z}\right)u}-e^{-\left(\frac{1-\alpha}{2z}\right)u}\right)e^{-\left(\frac{\alpha}{2z}\right)u}\left(e^{\left(\frac{\alpha}{2z}\right)u}-e^{-\left(\frac{\alpha}{2z}\right)u}\right)}{e^{-\left(\frac{1}{2z}\right)u}\left(e^{\left(\frac{1}{2z}\right)u}-e^{-\left(\frac{1}{2z}\right)u}\right)}\right|\\
 & =\left|\frac{\left(e^{\left(\frac{1-\alpha}{2z}\right)u}-e^{-\left(\frac{1-\alpha}{2z}\right)u}\right)\left(e^{\left(\frac{\alpha}{2z}\right)u}-e^{-\left(\frac{\alpha}{2z}\right)u}\right)}{\left(e^{\left(\frac{1}{2z}\right)u}-e^{-\left(\frac{1}{2z}\right)u}\right)}\right|\\
 & =2\left|\frac{\sinh\!\left(\left(\frac{1-\alpha}{2z}\right)u\right)\sinh\!\left(\left(\frac{\alpha}{2z}\right)u\right)}{\sinh\!\left(\left(\frac{1}{2z}\right)u\right)}\right|\\
 & =2\frac{\left|\sinh\!\left(\left(\frac{1-\alpha}{2z}\right)u\right)\right|\left|\sinh\!\left(\left(\frac{\alpha}{2z}\right)u\right)\right|}{\left|\sinh\!\left(\left(\frac{1}{2z}\right)u\right)\right|}\\
 & =2\frac{\left|\sinh\!\left(\left(\frac{1-\alpha}{2z}\right)x\right)\right|\left|\sinh\!\left(\left(\frac{\alpha}{2z}\right)x\right)\right|}{\left|\sinh\!\left(\left(\frac{1}{2z}\right)x\right)\right|}\frac{\sqrt{\left(1+\sin^{2}\!\left(\left(\frac{1-\alpha}{2z}\right)y\right)\right)}\sqrt{\left(1+\sin^{2}\!\left(\left(\frac{\alpha}{2z}\right)y\right)\right)}}{\sqrt{\left(1+\sin^{2}\!\left(\left(\frac{1}{2z}\right)y\right)\right)}}\nonumber \\
 & \leq2,
\end{align}
where we used the following identities:
\begin{align}
\left|\sinh(x+iy)\right| & =\left|\sinh(x)\cos(y)+i\cosh(x)\sin(y)\right|\\
 & =\sqrt{\sinh^{2}(x)\cos^{2}(y)+\cosh^{2}(x)\sin^{2}(y)}\\
 & =\sqrt{\sinh^{2}(x)\cos^{2}(y)+\left(\sinh^{2}(x)+1\right)\sin^{2}(y)}\\
 & =\sqrt{\sinh^{2}(x)\left(\cos^{2}(y)+\sin^{2}(y)\right)+\sinh^{2}(x)\sin^{2}(y)}\\
 & =\sqrt{\sinh^{2}(x)\left(1+\sin^{2}(y)\right)}\\
 & =\left|\sinh(x)\right|\sqrt{\left(1+\sin^{2}(y)\right)},
\end{align}
and the facts that, for all $x,y\in\mathbb{R}$, $\alpha\in\left(0,1\right)$,
and $z>0$,
\begin{align}
\frac{\sqrt{\left(1+\sin^{2}\!\left(\left(\frac{1-\alpha}{2z}\right)y\right)\right)}\sqrt{\left(1+\sin^{2}\!\left(\left(\frac{\alpha}{2z}\right)y\right)\right)}}{\sqrt{\left(1+\sin^{2}\!\left(\left(\frac{1}{2z}\right)y\right)\right)}} & \leq2,\\
\frac{\left|\sinh\!\left(\left(\frac{1-\alpha}{2z}\right)x\right)\right|\left|\sinh\!\left(\left(\frac{\alpha}{2z}\right)x\right)\right|}{\left|\sinh\!\left(\left(\frac{1}{2z}\right)x\right)\right|} & \leq\frac{1}{2}.\label{eq:arc-to-zero-last}
\end{align}
The inequality in \eqref{eq:arc-to-zero-last} follows because, for
all $\alpha\in\left(0,1\right)$ and $z>0$,
\begin{align}
\lim_{x\to\infty}\frac{\sinh\!\left(\left(\frac{1-\alpha}{2z}\right)x\right)\sinh\!\left(\left(\frac{\alpha}{2z}\right)x\right)}{\sinh\!\left(\left(\frac{1}{2z}\right)x\right)} & =\lim_{x\to\infty}\frac{\frac{e^{\left(\frac{1-\alpha}{2z}\right)x}}{2}\frac{e^{\left(\frac{\alpha}{2z}\right)x}}{2}}{\frac{e^{\left(\frac{1}{2z}\right)x}}{2}}=\frac{1}{2},\\
\lim_{x\to-\infty}\frac{\sinh\!\left(\left(\frac{1-\alpha}{2z}\right)x\right)\sinh\!\left(\left(\frac{\alpha}{2z}\right)x\right)}{\sinh\!\left(\left(\frac{1}{2z}\right)x\right)} & =\lim_{x\to-\infty}\frac{\frac{-e^{-\left(\frac{1-\alpha}{2z}\right)x}}{2}\frac{e^{-\left(\frac{\alpha}{2z}\right)x}}{2}}{\frac{-e^{-\left(\frac{1}{2z}\right)x}}{2}}=-\frac{1}{2},
\end{align}
\begin{align}
 & \lim_{x\to0}\frac{\sinh\!\left(\left(\frac{1-\alpha}{2z}\right)x\right)\sinh\!\left(\left(\frac{\alpha}{2z}\right)x\right)}{\sinh\!\left(\left(\frac{1}{2z}\right)x\right)}\nonumber \\
 & =\lim_{x\to0}\frac{\left(\frac{1-\alpha}{2z}\right)\cosh\!\left(\left(\frac{1-\alpha}{2z}\right)x\right)\sinh\!\left(\left(\frac{\alpha}{2z}\right)x\right)+\sinh\!\left(\left(\frac{1-\alpha}{2z}\right)x\right)\left(\frac{\alpha}{2z}\right)\cosh\!\left(\left(\frac{\alpha}{2z}\right)x\right)}{\left(\frac{1}{2z}\right)\cosh\!\left(\left(\frac{1}{2z}\right)x\right)}\\
 & =0,
\end{align}
and the fact that the function $x\mapsto\sinh\!\left(\left(\frac{1-\alpha}{2z}\right)x\right)\sinh\!\left(\left(\frac{\alpha}{2z}\right)x\right)/\sinh\!\left(\left(\frac{1}{2z}\right)x\right)$
is increasing on $x\in\mathbb{R}$. Indeed, since this latter function
is odd, to see that it is increasing on $x\in\mathbb{R}$, it suffices
to prove that it is increasing on $\left(0,\infty\right)$. To see
this, let us calculate the derivative of this function with respect
to $x$ and prove that it is non-negative on $\left(0,\infty\right)$.
Consider that
\begin{align}
 & 2\frac{\partial}{\partial x}\left(\frac{\sinh\!\left(\left(\frac{1-\alpha}{2z}\right)x\right)\sinh\!\left(\left(\frac{\alpha}{2z}\right)x\right)}{\sinh\!\left(\left(\frac{1}{2z}\right)x\right)}\right)\nonumber \\
 & =\frac{\partial}{\partial x}\left(\frac{\left(e^{\left(\frac{1-\alpha}{2z}\right)x}-e^{-\left(\frac{1-\alpha}{2z}\right)x}\right)\left(e^{\frac{\alpha}{2z}x}-e^{-\frac{\alpha}{2z}x}\right)}{e^{\frac{1}{2z}x}-e^{-\frac{1}{2z}x}}\right)\\
 & =\frac{\partial}{\partial x}\left(\frac{\left(e^{\left(\frac{1-\alpha}{z}\right)x}-1\right)\left(e^{\frac{\alpha}{z}x}-1\right)}{e^{\frac{1}{z}x}-1}\right)\\
 & =\frac{\frac{1-\alpha}{z}e^{\left(\frac{1-\alpha}{z}\right)x}\left(e^{\frac{\alpha}{z}x}-1\right)+\frac{\alpha}{z}\left(e^{\left(\frac{1-\alpha}{z}\right)x}-1\right)e^{\frac{\alpha}{z}x}}{e^{\frac{1}{z}x}-1}-\frac{\frac{1}{z}\left(e^{\left(\frac{1-\alpha}{z}\right)x}-1\right)\left(e^{\frac{\alpha}{z}x}-1\right)e^{\frac{1}{z}x}}{\left(e^{\frac{1}{z}x}-1\right)^{2}}\\
 & =\frac{\frac{1-\alpha}{z}\left(e^{\left(\frac{1-\alpha}{z}\right)x}-1+1\right)\left(e^{\frac{\alpha}{z}x}-1\right)+\frac{\alpha}{z}\left(e^{\left(\frac{1-\alpha}{z}\right)x}-1\right)\left(e^{\frac{\alpha}{z}x}-1+1\right)}{e^{\frac{1}{z}x}-1}\nonumber \\
 & \qquad-\frac{\frac{1}{z}\left(e^{\left(\frac{1-\alpha}{z}\right)x}-1\right)\left(e^{\frac{\alpha}{z}x}-1\right)e^{\frac{1}{z}x}}{\left(e^{\frac{1}{z}x}-1\right)^{2}}\\
 & =\frac{\left[\begin{array}{c}
\frac{1-\alpha}{z}\left(e^{\left(\frac{1-\alpha}{z}\right)x}-1\right)\left(e^{\frac{\alpha}{z}x}-1\right)+\frac{\alpha}{z}\left(e^{\left(\frac{1-\alpha}{z}\right)x}-1\right)\left(e^{\frac{\alpha}{z}x}-1\right)\\
+\frac{1-\alpha}{z}\left(e^{\frac{\alpha}{z}x}-1\right)+\frac{\alpha}{z}\left(e^{\left(\frac{1-\alpha}{z}\right)x}-1\right)
\end{array}\right]}{e^{\frac{1}{z}x}-1}\nonumber \\
 & \qquad-\frac{\frac{1}{z}\left(e^{\left(\frac{1-\alpha}{z}\right)x}-1\right)\left(e^{\frac{\alpha}{z}x}-1\right)}{\left(e^{\frac{1}{z}x}-1\right)}\frac{e^{\frac{1}{z}x}}{\left(e^{\frac{1}{z}x}-1\right)}\\
 & =\frac{\left(e^{\left(\frac{1-\alpha}{z}\right)x}-1\right)\left(e^{\frac{\alpha}{z}x}-1\right)}{\left(e^{\frac{1}{z}x}-1\right)}\left(\frac{1-\alpha}{z}+\frac{\alpha}{z}-\frac{e^{\frac{1}{z}x}}{z\left(e^{\frac{1}{z}x}-1\right)}\right)\nonumber \\
 & \qquad+\frac{\frac{1-\alpha}{z}\left(e^{\frac{\alpha}{z}x}-1\right)+\frac{\alpha}{z}\left(e^{\left(\frac{1-\alpha}{z}\right)x}-1\right)}{e^{\frac{1}{z}x}-1}\\
 & =\frac{\left(e^{\left(\frac{1-\alpha}{z}\right)x}-1\right)\left(e^{\frac{\alpha}{z}x}-1\right)}{\left(e^{\frac{1}{z}x}-1\right)}\frac{1}{z}\left(1-\frac{e^{\frac{1}{z}x}}{e^{\frac{1}{z}x}-1}\right)+\frac{\frac{1-\alpha}{z}\left(e^{\frac{\alpha}{z}x}-1\right)+\frac{\alpha}{z}\left(e^{\left(\frac{1-\alpha}{z}\right)x}-1\right)}{e^{\frac{1}{z}x}-1}\\
 & =-\frac{\left(e^{\left(\frac{1-\alpha}{z}\right)x}-1\right)\left(e^{\frac{\alpha}{z}x}-1\right)}{\left(e^{\frac{1}{z}x}-1\right)}\left(\frac{1}{z\left(e^{\frac{1}{z}x}-1\right)}\right)+\frac{\frac{1-\alpha}{z}\left(e^{\frac{\alpha}{z}x}-1\right)}{e^{\frac{1}{z}x}-1}+\frac{\frac{\alpha}{z}\left(e^{\left(\frac{1-\alpha}{z}\right)x}-1\right)}{e^{\frac{1}{z}x}-1}\\
 & =\frac{\left(e^{\left(\frac{1-\alpha}{z}\right)x}-1\right)\left(e^{\frac{\alpha}{z}x}-1\right)}{z\left(e^{\frac{1}{z}x}-1\right)}\left(\frac{1-\alpha}{e^{\left(\frac{1-\alpha}{z}\right)x}-1}+\frac{\alpha}{e^{\frac{\alpha}{z}x}-1}-\frac{1}{e^{\frac{1}{z}x}-1}\right).
\end{align}
Since $\frac{\left(e^{\left(\frac{1-\alpha}{z}\right)x}-1\right)\left(e^{\frac{\alpha}{z}x}-1\right)}{z\left(e^{\frac{1}{z}x}-1\right)}>0$
for all $x>0$, the desired statement will follow if the inequality
\begin{equation}
\frac{1-\alpha}{e^{\left(\frac{1-\alpha}{z}\right)x}-1}+\frac{\alpha}{e^{\frac{\alpha}{z}x}-1}\geq\frac{1}{e^{\frac{1}{z}x}-1}
\end{equation}
holds for all $\alpha\in\left(0,1\right)$ and $x>0$. Thus, the statement
follows if the function $x\mapsto\frac{1}{e^{\frac{1}{z}x}-1}$ is
convex and decreasing on $x\in\left(0,\infty\right)$. Indeed, if
these properties hold, then
\begin{equation}
\frac{1-\alpha}{e^{\left(\frac{1-\alpha}{z}\right)x}-1}+\frac{\alpha}{e^{\frac{\alpha}{z}x}-1}\geq\frac{1}{e^{\frac{1}{z}\left(\left(1-\alpha\right)^{2}+\alpha^{2}\right)x}-1}\geq\frac{1}{e^{\frac{1}{z}x}-1},
\end{equation}
where the first inequality follows from convexity and the second from
the decreasing property because $\left(1-\alpha\right)^{2}+\alpha^{2}<1$
for $\alpha\in\left(0,1\right)$. To prove these properties, let us
calculate the first and second derivatives of the function $x\mapsto\frac{1}{e^{\frac{1}{z}x}-1}$:
\begin{align}
\frac{\partial}{\partial x}\left(\frac{1}{e^{\frac{1}{z}x}-1}\right) & =-\frac{e^{\frac{x}{z}}}{z\left(e^{\frac{1}{z}x}-1\right)^{2}},\\
\frac{\partial^{2}}{\partial x^{2}}\left(\frac{1}{e^{\frac{1}{z}x}-1}\right) & =\frac{e^{\frac{x}{z}}\left(e^{\frac{x}{z}}+1\right)}{z^{2}\left(e^{\frac{1}{z}x}-1\right)^{3}}.
\end{align}
Since $\frac{\partial}{\partial x}\left(\frac{1}{e^{\frac{1}{z}x}-1}\right)\leq0$
and $\frac{\partial^{2}}{\partial x^{2}}\left(\frac{1}{e^{\frac{1}{z}x}-1}\right)\geq0$
on $x\in\left(0,\infty\right)$, the claim follows. Thus, it follows
that
\begin{align}
\lim_{x\to\pm\infty}\frac{\left|\sinh\!\left(\left(\frac{1-\alpha}{2z}\right)x\right)\right|\left|\sinh\!\left(\left(\frac{\alpha}{2z}\right)x\right)\right|}{\left|\sinh\!\left(\left(\frac{1}{2z}\right)x\right)\right|} & =\frac{1}{2},\\
\lim_{x\to0}\frac{\left|\sinh\!\left(\left(\frac{1-\alpha}{2z}\right)x\right)\right|\left|\sinh\!\left(\left(\frac{\alpha}{2z}\right)x\right)\right|}{\left|\sinh\!\left(\left(\frac{1}{2z}\right)x\right)\right|} & =0,
\end{align}
and that the function $x\mapsto\frac{\left|\sinh\left(\left(\frac{1-\alpha}{2z}\right)x\right)\right|\left|\sinh\left(\left(\frac{\alpha}{2z}\right)x\right)\right|}{\left|\sinh\left(\left(\frac{1}{2z}\right)x\right)\right|}$
is decreasing on $\left(-\infty,0\right]$ and increasing on $\left[0,\infty\right)$.
We then conclude that
\begin{align}
\frac{1}{2\pi}\lim_{R\to\infty}\oint_{\gamma_{R}^{+}}du\ f_{\alpha,z}(u)e^{iut} & =\frac{1}{2\pi}\lim_{R\to\infty}\int_{-R}^{R}d\omega\ f_{\alpha,z}(\omega)e^{i\omega t}\label{eq:cauchy-residue-about-to-happen}\\
 & =\frac{1}{2\pi}\int_{-\infty}^{\infty}d\omega\ f_{\alpha,z}(\omega)e^{i\omega t}.
\end{align}

We can evaluate the expression on the left-hand side of \eqref{eq:cauchy-residue-about-to-happen}
by means of the Cauchy residue theorem. For $u\in\mathbb{C}$ and
$\alpha\in\left(0,1\right)$, the singularities of the function 
\begin{equation}
f_{\alpha,z}(u)e^{iut}=\frac{z\left(1-e^{-\left(\frac{1-\alpha}{z}\right)u}\right)\left(1-e^{-\left(\frac{\alpha}{z}\right)u}\right)}{\alpha\left(1-\alpha\right)u\left(1-e^{-\left(\frac{1}{z}\right)u}\right)}e^{iut}\label{eq:cauchy-residue-to-the-rescue-1}
\end{equation}
in the region enclosed by the contour $\gamma_{R}^{+}$, as $R\to\infty$,
occur at $u\in\left\{ 2\pi izm:m\in\mathbb{N}\right\} $. Indeed,
there is not a singularity of $f_{\alpha,z}(u)e^{iut}$ at $u=0$
because
\begin{align}
 & \lim_{u\to0}f_{\alpha,z}(u)e^{iut}\nonumber \\
 & =\lim_{u\to0}\frac{z\left(1-e^{-\left(\frac{1-\alpha}{z}\right)u}\right)\left(1-e^{-\left(\frac{\alpha}{z}\right)u}\right)}{\alpha\left(1-\alpha\right)u\left(1-e^{-\left(\frac{1}{z}\right)u}\right)}e^{iut}\\
 & =\frac{z}{\alpha\left(1-\alpha\right)}\lim_{u\to0}\frac{\left(1-e^{-\left(\frac{1-\alpha}{z}\right)u}\right)\left(1-e^{-\left(\frac{\alpha}{z}\right)u}\right)}{u\left(1-e^{-\left(\frac{1}{z}\right)u}\right)}\lim_{u\to0}e^{iut}\label{eq:limit-f-a-z-is-1-first}\\
 & =\frac{z}{\alpha\left(1-\alpha\right)}\lim_{u\to0}\frac{\left(\frac{1-\alpha}{z}\right)e^{-\left(\frac{1-\alpha}{z}\right)u}\left(1-e^{-\left(\frac{\alpha}{z}\right)u}\right)+\left(1-e^{-\left(\frac{1-\alpha}{z}\right)u}\right)\left(\frac{\alpha}{z}\right)e^{-\left(\frac{\alpha}{z}\right)u}}{1-e^{-\left(\frac{1}{z}\right)u}+u\left(\frac{1}{z}\right)e^{-\left(\frac{1}{z}\right)u}}\\
 & =\frac{z}{\alpha\left(1-\alpha\right)}\lim_{u\to0}\frac{\left(\frac{1-\alpha}{z}\right)\left(e^{-\left(\frac{1-\alpha}{z}\right)u}-e^{-\left(\frac{1}{z}\right)u}\right)+\left(\frac{\alpha}{z}\right)\left(e^{-\left(\frac{\alpha}{z}\right)u}-e^{-\left(\frac{1}{z}\right)u}\right)}{1-e^{-\left(\frac{1}{z}\right)u}+u\left(\frac{1}{z}\right)e^{-\left(\frac{1}{z}\right)u}}\\
 & =\frac{z}{\alpha\left(1-\alpha\right)}\lim_{u\to0}\frac{\left(\frac{1-\alpha}{z}\right)e^{-\left(\frac{1-\alpha}{z}\right)u}+\left(\frac{\alpha}{z}\right)e^{-\left(\frac{\alpha}{z}\right)u}-\left(\frac{1}{z}\right)e^{-\left(\frac{1}{z}\right)u}}{1-e^{-\left(\frac{1}{z}\right)u}+u\left(\frac{1}{z}\right)e^{-\left(\frac{1}{z}\right)u}}\\
 & =\frac{z}{\alpha\left(1-\alpha\right)}\lim_{u\to0}\frac{-\left(\frac{1-\alpha}{z}\right)^{2}e^{-\left(\frac{1-\alpha}{z}\right)u}-\left(\frac{\alpha}{z}\right)^{2}e^{-\left(\frac{\alpha}{z}\right)u}+\left(\frac{1}{z}\right)^{2}e^{-\left(\frac{1}{z}\right)u}}{\left(\frac{1}{z}\right)e^{-\left(\frac{1}{z}\right)u}+\left(\frac{1}{z}\right)e^{-\left(\frac{1}{z}\right)u}-u\left(\frac{1}{z}\right)^{2}e^{-\left(\frac{1}{z}\right)u}}\\
 & =\frac{z}{\alpha\left(1-\alpha\right)}\frac{-\left(\frac{1-\alpha}{z}\right)^{2}-\left(\frac{\alpha}{z}\right)^{2}+\left(\frac{1}{z}\right)^{2}}{\frac{1}{z}+\frac{1}{z}}\\
 & =\frac{-\left(1-\alpha\right)^{2}-\alpha^{2}+1}{2\alpha\left(1-\alpha\right)}\\
 & =\frac{-\left(1-2\alpha+\alpha^{2}\right)-\alpha^{2}+1}{2\alpha\left(1-\alpha\right)}\\
 & =\frac{2\alpha-2\alpha^{2}}{2\alpha\left(1-\alpha\right)}\\
 & =1.\label{eq:limit-f-a-z-is-1-last}
\end{align}
For all $m\in\mathbb{N}$, we then need to calculate the residues.
Since the poles of the function $u\mapsto f_{\alpha,z}(u)e^{iut}$
at $2\pi izm$ are simple poles, we can use the formula $\text{Res}_{u=c}\!\left[\frac{P(z)}{Q(z)}\right]=\frac{P(c)}{Q'(c)}$
to conclude that
\begin{align}
\text{Res}_{u=2\pi imz}\!\left[f_{\alpha,z}(u)e^{iut}\right] & =\frac{z}{\alpha\left(1-\alpha\right)}\left.\frac{\left(1-e^{-\left(\frac{1-\alpha}{z}\right)u}\right)\left(1-e^{-\left(\frac{\alpha}{z}\right)u}\right)e^{iut}}{\frac{\partial}{\partial u}\left[u\left(1-e^{-\left(\frac{1}{z}\right)u}\right)\right]}\right|_{u=2\pi imz}\\
 & =\frac{z\left(1-e^{-\left(\frac{1-\alpha}{z}\right)2\pi imz}\right)\left(1-e^{-\left(\frac{\alpha}{z}\right)2\pi imz}\right)}{\alpha\left(1-\alpha\right)\left.\left(1-e^{-\left(\frac{1}{z}\right)u}+\frac{u}{z}e^{-\left(\frac{1}{z}\right)u}\right)\right|_{u=2\pi imz}}e^{i\left(2\pi imz\right)t}\\
 & =\frac{z\left(1-e^{-\left(1-\alpha\right)2\pi im}\right)\left(1-e^{-\alpha2\pi im}\right)}{\alpha\left(1-\alpha\right)2\pi im}e^{-2\pi mzt}\\
 & =\frac{z}{\alpha\left(1-\alpha\right)}\left(\frac{2-e^{\alpha2\pi im}-e^{-\alpha2\pi im}}{2\pi im}\right)e^{-2\pi mzt}.
\end{align}
Thus, by applying the Cauchy residue theorem, it follows that
\begin{align}
 & \frac{1}{2\pi}\lim_{R\to\infty}\oint_{\gamma_{R}^{+}}du\ f_{\alpha,z}(u)e^{iut}\nonumber \\
 & =i\sum_{m=1}^{\infty}\text{Res}_{u=i2\pi mz}\!\left[f_{\alpha,z}(u)e^{iut}\right]\\
 & =i\sum_{m=1}^{\infty}\frac{z}{\alpha\left(1-\alpha\right)}\left(\frac{2-e^{\alpha2\pi im}-e^{-\alpha2\pi im}}{2\pi im}\right)e^{-2\pi mzt}\\
 & =\frac{z}{2\pi\alpha\left(1-\alpha\right)}\sum_{m=1}^{\infty}\frac{2e^{-2\pi mzt}}{m}-\frac{\left(e^{2\pi\left(i\alpha-zt\right)}\right)^{m}}{m}-\frac{\left(e^{2\pi\left(-i\alpha-zt\right)}\right)^{m}}{m}\\
 & =\frac{z}{2\pi\alpha\left(1-\alpha\right)}\left(-2\ln\!\left(1-e^{-2\pi zt}\right)+\ln\!\left(1-e^{2\pi\left(i\alpha-zt\right)}\right)+\ln\!\left(1-e^{2\pi\left(-i\alpha-zt\right)}\right)\right)\\
 & =\frac{z}{2\pi\alpha\left(1-\alpha\right)}\ln\!\left(\frac{\left(1-e^{2\pi\left(i\alpha-zt\right)}\right)\left(1-e^{2\pi\left(-i\alpha-zt\right)}\right)}{\left(1-e^{-2\pi zt}\right)^{2}}\right)\\
 & =\frac{z}{2\pi\alpha\left(1-\alpha\right)}\ln\!\left(\frac{1-e^{2\pi\left(i\alpha-zt\right)}-e^{2\pi\left(-i\alpha-zt\right)}+e^{-4\pi zt}}{\left(1-e^{-2\pi zt}\right)^{2}}\right)\\
 & =\frac{z}{2\pi\alpha\left(1-\alpha\right)}\ln\!\left(\frac{1+e^{-4\pi zt}-2e^{-2\pi zt}\cos(2\pi\alpha)}{\left(1-e^{-2\pi zt}\right)^{2}}\right)\\
 & =\frac{z}{2\pi\alpha\left(1-\alpha\right)}\ln\!\left(\frac{\left(1-e^{-2\pi zt}\right)^{2}+2e^{-2\pi zt}\left(1-\cos(2\pi\alpha)\right)}{\left(1-e^{-2\pi zt}\right)^{2}}\right)\\
 & =\frac{z}{2\pi\alpha\left(1-\alpha\right)}\ln\!\left(1+\frac{2e^{-2\pi zt}\left(1-\cos(2\pi\alpha)\right)}{\left(1-e^{-2\pi zt}\right)^{2}}\right)\\
 & =\frac{z}{2\pi\alpha\left(1-\alpha\right)}\ln\!\left(1+\frac{2\left(1-\cos(2\pi\alpha)\right)}{\left(e^{\pi zt}-e^{-\pi zt}\right)^{2}}\right)\\
 & =\frac{z}{2\pi\alpha\left(1-\alpha\right)}\ln\!\left(1+\left(\frac{\sin(\pi\alpha)}{\sinh(\pi zt)}\right)^{2}\right).
\end{align}
where the fourth equality follows from the Taylor expansion $-\ln(1-u)=\sum_{m=1}^{\infty}\frac{u^{m}}{m}$,
which holds for all $u\in\mathbb{C}$ such that $\left|u\right|<1$.
Thus, for $t>0$, we conclude that
\begin{equation}
\frac{1}{2\pi}\int_{-\infty}^{\infty}d\omega\ f_{\alpha,z}(\omega)e^{i\omega t}=\frac{z}{2\pi\alpha\left(1-\alpha\right)}\ln\!\left(1+\left(\frac{\sin(\pi\alpha)}{\sinh(\pi zt)}\right)^{2}\right).\label{eq:cauchy-residue-to-the-rescue-last}
\end{equation}

We can handle the case when $t<0$ by a similar, yet symmetric argument,
instead considering the negatively oriented contour $\gamma_{R}^{-}$
depicted in Figure \ref{fig:Contour-for-neg-t}. To summarize the
argument succinctly, we again use the fact that
\begin{equation}
\lim_{R\to\infty}\oint_{\gamma_{R}^{-}}du\ f_{\alpha,z}(u)e^{iut}=\lim_{R\to\infty}\left[\int_{-R}^{R}d\omega\ f_{\alpha,z}(\omega)e^{i\omega t}+\int_{0}^{-\pi}d\theta\ f_{\alpha,z}(Re^{i\theta})e^{iRe^{i\theta}t}\right],
\end{equation}
and apply a similar argument as in \eqref{eq:arc-to-zero-1}--\eqref{eq:arc-to-zero-last}
to conclude that
\begin{equation}
\lim_{R\to\infty}\left|\int_{0}^{-\pi}d\theta\ f_{\alpha,z}(Re^{i\theta})e^{iRe^{i\theta}t}\right|=0,
\end{equation}
while keeping in mind that $\sin(\theta)\leq0$ for all $\theta\in\left[-\pi,0\right]$.
Then, similar to \eqref{eq:cauchy-residue-to-the-rescue-1}--\eqref{eq:cauchy-residue-to-the-rescue-last},
we apply the Cauchy residue theorem to conclude that
\begin{equation}
\frac{1}{2\pi}\int_{-\infty}^{\infty}d\omega\ f_{\alpha,z}(\omega)e^{i\omega t}=\frac{z}{2\pi\alpha\left(1-\alpha\right)}\ln\!\left(1+\left(\frac{\sin(\pi\alpha)}{\sinh(\pi zt)}\right)^{2}\right)
\end{equation}
for all $t<0$.

\begin{figure}
\centering
\includegraphics[width=0.5\textwidth]{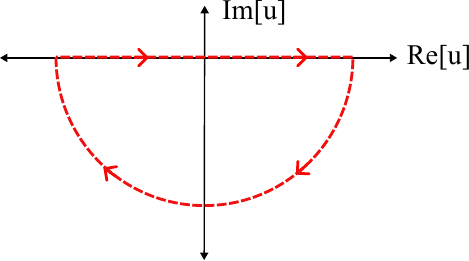}\caption{Contour $\gamma_{R}^{-}$ for $t<0$.}\label{fig:Contour-for-neg-t}
\end{figure}

The function $p_{\alpha,z}(t)$ is a probability density for all $\alpha\in\left(0,1\right)$
and $z>0$ because
\begin{align}
p_{\alpha,z}(t) & =\frac{z}{2\pi\alpha\left(1-\alpha\right)}\ln\!\left(1+\left(\frac{\sin(\pi\alpha)}{\sinh(\pi zt)}\right)^{2}\right)\\
 & \geq\frac{z}{2\pi\alpha\left(1-\alpha\right)}\ln\!\left(1\right)\\
 & =0,
\end{align}
where we used the fact that $\left(\frac{\sin(\pi\alpha)}{\sinh(\pi zt)}\right)^{2}\geq0$
for all $t,z>0$ and $\alpha\in\left(0,1\right)$. Additionally,
\begin{equation}
\int_{-\infty}^{\infty}dt\ p_{\alpha,z}(t)=\lim_{\omega\to0}f_{\alpha,z}(\omega)=1,
\end{equation}
the latter equality already having been argued in \eqref{eq:limit-f-a-z-is-1-first}--\eqref{eq:limit-f-a-z-is-1-last}.

The expression in \eqref{eq:2nd-exp-a-z-dist} follows because
\begin{align}
 & \frac{z}{2\pi\alpha\left(1-\alpha\right)}\ln\!\left(1+\left(\frac{\sin(\pi\alpha)}{\sinh(\pi zt)}\right)^{2}\right)\nonumber \\
 & =\frac{z}{2\pi\alpha\left(1-\alpha\right)}\ln\!\left(\frac{\sinh^{2}(\pi zt)+\sin^{2}(\pi\alpha)}{\sinh^{2}(\pi zt)}\right)\\
 & =\frac{z}{2\pi\alpha\left(1-\alpha\right)}\ln\!\left(\frac{\sinh^{2}(\pi zt)+1-\cos^{2}(\pi\alpha)}{\sinh^{2}(\pi zt)}\right)\\
 & =\frac{z}{2\pi\alpha\left(1-\alpha\right)}\ln\!\left(\frac{\cosh^{2}(\pi zt)-\cos^{2}(\pi\alpha)}{\sinh^{2}(\pi zt)}\right)\\
 & =\frac{z}{2\pi\alpha\left(1-\alpha\right)}\ln\!\left(\coth^{2}\!\left(\pi zt\right)-\left(\frac{\cos(\pi\alpha)}{\sinh\left(\pi zt\right)}\right)^{2}\right).
\end{align}

The equality $p_{\alpha=\frac{1}{2},z=\frac{1}{2}}(t)=p(t)$ holds
because 
\begin{align}
p_{\alpha=\frac{1}{2},z=\frac{1}{2}}(t) & =\frac{\frac{1}{2}}{2\pi\frac{1}{2}\left(1-\frac{1}{2}\right)}\ln\!\left(\coth^{2}\!\left(\pi\left(\frac{1}{2}\right)t\right)-\left(\frac{\cos\!\left(\pi\left(\frac{1}{2}\right)\right)}{\sinh\left(\pi\left(\frac{1}{2}\right)t\right)}\right)^{2}\right)\\
 & =\frac{1}{\pi}\ln\!\left(\coth^{2}\!\left(\frac{\pi t}{2}\right)\right)\\
 & =\frac{2}{\pi}\ln\!\left|\coth\!\left(\frac{\pi t}{2}\right)\right|.
\end{align}

The final statement about the characteristic function $f_{\alpha,z}(\omega)$
of $p_{\alpha,z}(t)$ holds because, for all $z>0$ and $\omega\in\mathbb{R}\backslash\left\{ 0\right\} $,
\begin{align}
 & \lim_{\alpha\to1}f_{\alpha,z}(\omega)\nonumber \\
 & =\lim_{\alpha\to1}\frac{z\left(1-e^{-\left(\frac{1-\alpha}{z}\right)\omega}\right)\left(1-e^{-\left(\frac{\alpha}{z}\right)\omega}\right)}{\alpha\left(1-\alpha\right)\omega\left(1-e^{-\left(\frac{1}{z}\right)\omega}\right)}\\
 & =\frac{z}{\omega\left(1-e^{-\left(\frac{1}{z}\right)\omega}\right)}\lim_{\alpha\to1}\frac{\left(1-e^{-\left(\frac{1-\alpha}{z}\right)\omega}\right)\left(1-e^{-\left(\frac{\alpha}{z}\right)\omega}\right)}{\alpha\left(1-\alpha\right)}\\
 & =\frac{z}{\omega\left(1-e^{-\left(\frac{1}{z}\right)\omega}\right)}\lim_{\alpha\to1}\frac{\left(-\frac{\omega}{z}e^{-\left(\frac{1-\alpha}{z}\right)\omega}\right)\left(1-e^{-\left(\frac{\alpha}{z}\right)\omega}\right)+\left(1-e^{-\left(\frac{1-\alpha}{z}\right)\omega}\right)\left(\frac{\omega}{z}e^{-\left(\frac{\alpha}{z}\right)\omega}\right)}{1-2\alpha}\\
 & =\frac{z}{\omega\left(1-e^{-\left(\frac{1}{z}\right)\omega}\right)}\left(-\left(-\frac{\omega}{z}\right)\left(1-e^{-\left(\frac{1}{z}\right)\omega}\right)\right)\\
 & =1.
\end{align}
In the case that $\omega=0$, we already showed in \eqref{eq:limit-f-a-z-is-1-first}--\eqref{eq:limit-f-a-z-is-1-last}
that $\lim_{\omega\to0}f_{\alpha,z}(\omega)=1$ for all $\alpha\in\left(0,1\right)\cup\left(1,\infty\right)$
and $z>0$, so that $\lim_{\alpha\to1}f_{\alpha,z}(0)=1$. It follows
from Levy's continuity theorem that, for all $z>0$, the random variable
$X_{\alpha,z}\sim q_{\alpha,z}$ converges in distribution to the
random variable $X\sim p$ in the limit $\alpha\to1$.

Similarly, for all $\alpha\in\left(0,1\right)$ and $\omega\in\mathbb{R}\backslash\left\{ 0\right\} $,
\begin{align}
 & \lim_{z\to\infty}f_{\alpha,z}(\omega)\\
 & =\lim_{z\to\infty}\frac{\left(1-e^{-\left(\frac{1-\alpha}{z}\right)\omega}\right)\left(1-e^{-\left(\frac{\alpha}{z}\right)\omega}\right)}{\alpha\left(1-\alpha\right)\omega\left(\frac{1}{z}\right)\left(1-e^{-\left(\frac{1}{z}\right)\omega}\right)}\\
 & =\frac{1}{\alpha\left(1-\alpha\right)\omega}\lim_{h\to0}\frac{\left(1-e^{-h\left(1-\alpha\right)\omega}\right)\left(1-e^{-h\alpha\omega}\right)}{h\left(1-e^{-h\omega}\right)}\\
 & =\frac{1}{\alpha\left(1-\alpha\right)\omega}\lim_{h\to0}\frac{\left(1-\alpha\right)\omega e^{-h\left(1-\alpha\right)\omega}\left(1-e^{-h\alpha\omega}\right)+\left(1-e^{-h\left(1-\alpha\right)\omega}\right)\alpha\omega e^{-h\alpha\omega}}{1-e^{-h\omega}+h\omega e^{-h\omega}}\\
 & =\frac{1}{\alpha\left(1-\alpha\right)\omega}\lim_{h\to0}\frac{\left[\begin{array}{c}
-\left[\left(1-\alpha\right)\omega\right]^{2}e^{-h\left(1-\alpha\right)\omega}\left(1-e^{-h\alpha\omega}\right)+\left(1-\alpha\right)\omega e^{-h\left(1-\alpha\right)\omega}\alpha\omega e^{-h\alpha\omega}\\
+\left(1-\alpha\right)\omega e^{-h\left(1-\alpha\right)\omega}\alpha\omega e^{-h\alpha\omega}-\left(1-e^{-h\left(1-\alpha\right)\omega}\right)\left(\alpha\omega\right)^{2}e^{-h\alpha\omega}
\end{array}\right]}{\omega e^{-h\omega}+\omega e^{-h\omega}-h\omega^{2}e^{-h\omega}}\\
 & =\frac{1}{\alpha\left(1-\alpha\right)\omega}\left(\frac{2\left(1-\alpha\right)\alpha\omega^{2}}{2\omega}\right)\\
 & =1.
\end{align}
In the case that $\omega=0$, we already showed in \eqref{eq:limit-f-a-z-is-1-first}--\eqref{eq:limit-f-a-z-is-1-last}
that $\lim_{\omega\to0}f_{\alpha,z}(\omega)=1$ for all $\alpha\in\left(0,1\right)\cup\left(1,\infty\right)$
and $z>0$, so that $\lim_{z\to\infty}f_{\alpha,z}(0)=1$. It follows
from Levy's continuity theorem that, for all $\alpha\in\left(0,1\right)$,
the random variable $X_{\alpha,z}\sim q_{\alpha,z}$ converges in
distribution to the random variable $X\sim p$ in the limit $z\to\infty$.
\end{proof}

\section{$\alpha$-$z$ Information matrices of time-evolved states}

\label{sec:Information-matrices-time-evolved}In this section, I establish
a formula for the $\alpha$-$z$ information matrix of time-evolved
states, for all $\alpha\in\left(0,1\right)$ and $z>0$ (Theorem~\ref{thm:time-evolved-a-z-formula}).
By appealing to the former results of \cite{Minervini2025}, this
formula leads to a hybrid quantum--classical algorithm for estimating
the elements of the $\alpha$-$z$ information matrix of time-evolved
states. As such, this formula has applications in quantum machine
learning, namely, in natural gradient descent algorithms for performing
optimization using quantum evolution machines, as put forward in \cite{Minervini2025}.

Let us now evaluate the expression in \eqref{eq:gen-fisher-info-repeated}
for time-evolved states. Let
\begin{align}
H(\phi) & \coloneqq\sum_{j=1}^{L}\phi_{j}H_{j},\\
\rho & \coloneqq\frac{e^{-G}}{Z},\label{eq:initial-fixed-thermal-state}\\
\sigma(\phi) & \coloneqq e^{-iH(\phi)}\rho e^{iH(\phi)},\label{eq:time-evolved-state}
\end{align}
where $\phi_{j}\in\mathbb{R}$ and $H_{j}$ is Hermitian for all $j\in\left\{ 1,\ldots,L\right\} $,
$G$ is Hermitian, and $Z\coloneqq\Tr\!\left[e^{-G}\right]$. The
state $\sigma(\phi)$ is what we refer to as a ``time-evolved state.''
For analyzing these states, the following quantum channel $\Psi_{\phi}$
and its corresponding adjoint channel $\Psi_{\phi}^{\dag}$ play a
role:
\begin{align}
\Psi_{\phi}(X) & \coloneqq\int_{0}^{1}dt\ e^{iH(\phi)t}Xe^{-iH(\phi)t}=e^{iH(\phi)}\Psi_{\phi}^{\dag}(H_{i})e^{-iH(\phi)},\label{eq:time-evolved-channel}\\
\Psi_{\phi}^{\dag}(X) & \coloneqq\int_{0}^{1}dt\ e^{-iH(\phi)t}Xe^{iH(\phi)t},
\end{align}
As a consequence of Lemma \ref{lem:general-fisher-time-evolved-states}
and Corollary \ref{cor:fourier-trans-a-z-other}, the following theorem
holds for the $\alpha$-$z$ information matrices of time-evolved
states:
\begin{thm}
\label{thm:time-evolved-a-z-formula}For time-evolved states of the
form in \eqref{eq:time-evolved-state}, the following equality holds
for all $\alpha\in\left(0,1\right)$ and $z>0$:
\begin{equation}
\left[I_{\alpha,z}(\phi)\right]_{i,j}=\left\langle \left[\Phi_{p_{\alpha,z}}\!\left(\Psi_{\phi}\!\left(H_{i}\right)\right),\left[G,\Psi_{\phi}(H_{j})\right]\right]\right\rangle _{\rho},
\end{equation}
where the channel $\Psi_{\phi}$ is defined in \eqref{eq:time-evolved-channel},
the channel $\Phi_{q_{\alpha,z}}$ is defined as
\begin{equation}
\Phi_{p_{\alpha,z}}\!\left(X\right)\coloneqq\int_{-\infty}^{\infty}dt\:p_{\alpha,z}(t)e^{-iGt}Xe^{iGt},\label{eq:time-evolved-p-a-z-channel}
\end{equation}
and $p_{\alpha,z}$ is the probability density function defined in
\eqref{eq:high-peak-tent-a-z}.
\end{thm}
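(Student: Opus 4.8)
The plan is to follow the same two-step strategy used for Theorem~\ref{thm:QBM-a-z-formula}: first establish a general formula for the information matrix $I_\zeta(\phi)$ of the time-evolved family $\sigma(\phi)=e^{-iH(\phi)}\rho e^{iH(\phi)}$ in terms of a Fourier-transform pair (this is Lemma~\ref{lem:general-fisher-time-evolved-states}), and then identify the needed transform explicitly when $\zeta=\zeta_{\alpha,z}$ (this is Corollary~\ref{cor:fourier-trans-a-z-other}, which will reduce to Lemma~\ref{lem:fourier-trans-a-z}). I begin by differentiating the state. Writing $U(\phi)\coloneqq e^{-iH(\phi)}$ and applying Proposition~\ref{prop:deriv-exp}, one gets $\partial_i U(\phi)=-iU(\phi)\Psi_\phi(H_i)=-i\Psi_\phi^{\dagger}(H_i)U(\phi)$ and hence $\partial_i U(\phi)^{\dagger}=i\Psi_\phi(H_i)U(\phi)^{\dagger}$ (using that $\Psi_\phi$ preserves Hermiticity), so that
\begin{equation}
\partial_i\sigma(\phi)=\left(\partial_i U\right)\rho U^{\dagger}+U\rho\left(\partial_i U^{\dagger}\right)=-iU(\phi)\big[\Psi_\phi(H_i),\rho\big]U(\phi)^{\dagger}.
\end{equation}

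Next I would exploit that $\sigma(\phi)=U(\phi)\rho U(\phi)^{\dagger}$ is unitarily equivalent to $\rho=\sum_k\lambda_k\Pi_k$, so its spectral decomposition is $\sigma(\phi)=\sum_k\lambda_k\widetilde\Pi_k$ with $\widetilde\Pi_k=U(\phi)\Pi_k U(\phi)^{\dagger}$ and the same eigenvalues $\lambda_k$. Substituting $\widetilde\Pi_k$ and the expression for $\partial_i\sigma(\phi)$ into the eigenvalue representation of the information matrix — which holds for $\zeta_{\alpha,z}$ by Theorem~\ref{thm:fisher-info-from-alpha-z} and for a general $\zeta$ by Theorem~\ref{thm:q-fisher-info-char} — the conjugating unitaries telescope under the trace, and after using $\Pi_m\Pi_\ell=\delta_{m\ell}\Pi_\ell$ and cyclicity the double sum collapses to an off-diagonal one:
\begin{equation}
\left[I_\zeta(\phi)\right]_{i,j}=\sum_{k,\ell}\zeta(\lambda_k,\lambda_\ell)\left(\lambda_k-\lambda_\ell\right)^2\Tr\!\big[\Pi_k\,\Psi_\phi(H_i)\,\Pi_\ell\,\Psi_\phi(H_j)\big].
\end{equation}
Then I would rewrite this as the stated nested commutator. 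Since $\rho$ and $G$ share eigenprojections, write $G=\sum_k g_k\Pi_k$ with $\lambda_k=e^{-g_k}/Z$, so $g_k-g_\ell=-\ln(\lambda_k/\lambda_\ell)$. Expanding $\Phi_f(\Psi_\phi(H_i))=\sum_{m,n}\widehat f(g_m-g_n)\Pi_m\Psi_\phi(H_i)\Pi_n$ with $\widehat f(\omega)\coloneqq\int_{-\infty}^{\infty}dt\,f(t)e^{-i\omega t}$, together with $[G,\Psi_\phi(H_j)]=\sum_{p,q}(g_p-g_q)\Pi_p\Psi_\phi(H_j)\Pi_q$, and evaluating $\langle[\Phi_f(\Psi_\phi(H_i)),[G,\Psi_\phi(H_j)]]\rangle_\rho$ in this basis, the two terms of the outer commutator combine to $\sum_{k,\ell}(\lambda_k-\lambda_\ell)(g_\ell-g_k)\widehat f(g_k-g_\ell)\Tr[\Pi_k\Psi_\phi(H_i)\Pi_\ell\Psi_\phi(H_j)]$. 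Matching this against the displayed formula and using the homogeneity $\zeta(\lambda_k,\lambda_\ell)=\lambda_\ell^{-1}\zeta(\lambda_k/\lambda_\ell,1)$ reduces the required identity to the scalar relation $\widehat f(\omega)=(1-e^{-\omega})\zeta(e^{-\omega},1)/\omega$ (the $\lambda_k=\lambda_\ell$ contributions are trivially consistent since both sides carry a factor $(\lambda_k-\lambda_\ell)^2$). This is the content of Lemma~\ref{lem:general-fisher-time-evolved-states}; one also checks, exactly as in the proof of Lemma~\ref{lem:general-fisher-thermal-states}, that $\omega\mapsto(1-e^{-\omega})\zeta(e^{-\omega},1)/\omega$ is even (via the symmetry and homogeneity of $\zeta$), which guarantees that its inverse Fourier transform $f$ is real-valued.

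Finally I would specialize to $\zeta=\zeta_{\alpha,z}$. Inserting the formula \eqref{eq:limit-for-a-z-eigenval-func} for $\zeta_{\alpha,z}(e^{-\omega},1)$ into $(1-e^{-\omega})\zeta_{\alpha,z}(e^{-\omega},1)/\omega$ and cancelling the factor $(1-e^{-\omega})/(e^{-\omega}-1)=-1$, a short manipulation shows this equals exactly the function $f_{\alpha,z}(\omega)$ defined in \eqref{eq:def-f-a-z-func}. But Lemma~\ref{lem:fourier-trans-a-z} already shows that the inverse Fourier transform of $f_{\alpha,z}$ is the probability density $p_{\alpha,z}$ of \eqref{eq:high-peak-tent-a-z}; packaging this as the needed transform is Corollary~\ref{cor:fourier-trans-a-z-other}. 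Taking $f=p_{\alpha,z}$ in the general formula, and noting that $\int_{-\infty}^{\infty}dt\,p_{\alpha,z}(t)=1$ so that $\Phi_{p_{\alpha,z}}$ is a channel, yields the claimed identity. The main obstacle is the bookkeeping in the second and third steps — threading the conjugating unitaries $U(\phi)$ correctly through the trace when passing from $\widetilde\Pi_k$ to $\Pi_k$, and verifying that the several terms produced by expanding the nested commutators collapse to the single off-diagonal sum above; the Fourier-transform identification is then routine algebra given Lemma~\ref{lem:fourier-trans-a-z}.
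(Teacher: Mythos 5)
Your proposal is correct and follows essentially the same route as the paper: the same reduction to $\left[I_{\zeta}(\phi)\right]_{i,j}=\sum_{k,\ell}\zeta(\lambda_{k},\lambda_{\ell})\left(\lambda_{k}-\lambda_{\ell}\right)^{2}\Tr\!\left[\Pi_{k}\Psi_{\phi}(H_{i})\Pi_{\ell}\Psi_{\phi}(H_{j})\right]$ via $\partial_{i}\sigma(\phi)=i[\sigma(\phi),\Psi_{\phi}^{\dag}(H_{i})]$, the same scalar Fourier condition $\widehat{f}(\omega)=\zeta(e^{-\omega},1)(1-e^{-\omega})/\omega$ with the same evenness check, and the same identification of this function with $f_{\alpha,z}$ so that Lemma~\ref{lem:fourier-trans-a-z} supplies $p_{\alpha,z}$. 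The only cosmetic difference is that you verify the commutator identity by expanding $\langle[\Phi_{f}(\Psi_{\phi}(H_{i})),[G,\Psi_{\phi}(H_{j})]]\rangle_{\rho}$ in the eigenbasis and matching coefficients, whereas the paper derives the commutator form in the forward direction; these are the same computation.
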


As stated above, Theorem~\ref{thm:time-evolved-a-z-formula} follows
in part from Lemma \ref{lem:general-fisher-time-evolved-states},
which establishes a formula for a general information matrix $I_{\zeta}(\phi)$
of time-evolved states, under the assumption that a certain Fourier
transform exists. Additionally, Theorem~\ref{thm:time-evolved-a-z-formula}
follows from Corollary \ref{cor:fourier-trans-a-z-other}, which precisely
determines the needed Fourier transform for all $\alpha\in\left(0,1\right)$
and $z>0$ when the function $\zeta(x,y)$ is set to $\zeta_{\alpha,z}\!\left(x,y\right)$,
as defined in \eqref{eq:limit-for-a-z-eigenval-func}.

The following corollary was established already in \eqref{eq:fourier-trans-f-a-z-to-p-a-z}
(therein, recall the definition of $f_{\alpha,z}(\omega)$ in \eqref{eq:def-f-a-z-func}):
\begin{cor}
\label{cor:fourier-trans-a-z-other}For all $\alpha\in\left(0,1\right)$
and $z>0$, the Fourier transform of the function
\begin{equation}
\omega\mapsto\frac{\zeta_{\alpha,z}\!\left(e^{-\omega},1\right)\left(1-e^{-\omega}\right)}{\omega},
\end{equation}
where $\zeta_{\alpha,z}\!\left(e^{-\omega},1\right)$ is defined in
\eqref{eq:limit-for-a-z-eigenval-func}, is equal to the probability
density function $p_{\alpha,z}(t)$ defined in \eqref{eq:high-peak-tent-a-z}.
That is, for all $\alpha\in\left(0,1\right)$ and $z>0$,
\begin{multline}
\frac{1}{2\pi}\int_{-\infty}^{\infty}d\omega\ \frac{z\left(1-e^{-\left(\frac{1-\alpha}{z}\right)\omega}\right)\left(1-e^{-\left(\frac{\alpha}{z}\right)\omega}\right)}{\alpha\left(1-\alpha\right)\omega\left(1-e^{-\left(\frac{1}{z}\right)\omega}\right)}e^{i\omega t}\\
=\frac{z}{2\pi\alpha\left(1-\alpha\right)}\ln\!\left(1+\left(\frac{\sin(\pi\alpha)}{\sinh(\pi zt)}\right)^{2}\right)\eqqcolon p_{\alpha,z}(t).
\end{multline}
\end{cor}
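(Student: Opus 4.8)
The plan is to recognize that the function appearing in the statement is, after a one-line simplification, precisely the function $f_{\alpha,z}$ introduced in \eqref{eq:def-f-a-z-func}, so that the corollary is an immediate consequence of the identity \eqref{eq:fourier-trans-f-a-z-to-p-a-z} already proved in the course of Lemma~\ref{lem:fourier-trans-a-z}. First I would substitute $x=e^{-\omega}$ and $y=1$ into the definition \eqref{eq:limit-for-a-z-eigenval-func} of $\zeta_{\alpha,z}$, obtaining
\begin{equation}
\zeta_{\alpha,z}\!\left(e^{-\omega},1\right)=\frac{z}{\alpha\left(1-\alpha\right)}\left(\frac{e^{-\left(\frac{1-\alpha}{z}\right)\omega}-1}{e^{-\omega}-1}\right)\left(\frac{e^{-\left(\frac{\alpha}{z}\right)\omega}-1}{e^{-\left(\frac{1}{z}\right)\omega}-1}\right),
\end{equation}
and then multiply by $\left(1-e^{-\omega}\right)/\omega$. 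Using $1-e^{-\omega}=-\left(e^{-\omega}-1\right)$ to cancel against the denominator $e^{-\omega}-1$, and rewriting each remaining factor $e^{-c\omega}-1$ as $-\left(1-e^{-c\omega}\right)$, the signs combine to $+1$ and one is left with
\begin{equation}
\frac{\zeta_{\alpha,z}\!\left(e^{-\omega},1\right)\left(1-e^{-\omega}\right)}{\omega}=\frac{z\left(1-e^{-\left(\frac{1-\alpha}{z}\right)\omega}\right)\left(1-e^{-\left(\frac{\alpha}{z}\right)\omega}\right)}{\alpha\left(1-\alpha\right)\omega\left(1-e^{-\left(\frac{1}{z}\right)\omega}\right)}=f_{\alpha,z}(\omega),
\end{equation}
matching \eqref{eq:def-f-a-z-func} exactly.

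Second, I would invoke the identity \eqref{eq:fourier-trans-f-a-z-to-p-a-z}, namely $\frac{1}{2\pi}\int_{-\infty}^{\infty}d\omega\,f_{\alpha,z}(\omega)e^{i\omega t}=\frac{z}{2\pi\alpha\left(1-\alpha\right)}\ln\!\left(1+\left(\frac{\sin(\pi\alpha)}{\sinh(\pi zt)}\right)^{2}\right)$, which was established in the proof of Lemma~\ref{lem:fourier-trans-a-z} by contour integration and the Cauchy residue theorem: the case $t>0$ uses the contour $\gamma_{R}^{+}$ of Figure~\ref{fig:Contour-for-pos-t}, the case $t<0$ the symmetric contour $\gamma_{R}^{-}$ of Figure~\ref{fig:Contour-for-neg-t}, and the case $t=0$ follows by continuity from $\lim_{\omega\to0}f_{\alpha,z}(\omega)=1$. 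Since the right-hand side is by definition $p_{\alpha,z}(t)$ as in \eqref{eq:high-peak-tent-a-z}, combining the two steps yields the corollary.

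There is no genuine obstacle to overcome here, since all the analytic work --- the residue computation at the simple poles $u=2\pi izm$, the estimate showing the semicircular arc contribution vanishes as $R\to\infty$ (via the bound $2$ established there for the relevant ratio of $\sinh$ factors), and the logarithm series summation --- was carried out inside the proof of Lemma~\ref{lem:fourier-trans-a-z}. The only point to check is the elementary cancellation identifying the stated function with $f_{\alpha,z}$, so the proof of the corollary reduces to that one-line verification together with a citation of \eqref{eq:fourier-trans-f-a-z-to-p-a-z}.
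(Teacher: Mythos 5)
Your proposal is correct and coincides with the paper's own treatment: the paper introduces this corollary with the remark that it ``was established already in \eqref{eq:fourier-trans-f-a-z-to-p-a-z}'' (recalling the definition of $f_{\alpha,z}$ in \eqref{eq:def-f-a-z-func}), which is precisely your two-step argument of identifying $\zeta_{\alpha,z}(e^{-\omega},1)(1-e^{-\omega})/\omega$ with $f_{\alpha,z}(\omega)$ via the sign cancellations and then citing the contour-integration computation from the proof of Lemma~\ref{lem:fourier-trans-a-z}. The elementary cancellation you carry out is verified correctly, so nothing further is needed.
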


\begin{rem}
We recover the result from \cite[Theorem~19]{Minervini2025} as a
special case of Theorem~\ref{thm:time-evolved-a-z-formula} in the
limit $\alpha\to1$ for fixed $z>0$ or in the limit $z\to\infty$
for fixed $\alpha\in\left(0,1\right)$. Indeed, Lemma \ref{lem:fourier-trans-a-z}
states that the characteristic function of $p_{\alpha,z}(t)$ converges
to one everywhere in these limits, so that the expression
\begin{equation}
\frac{\zeta_{\alpha,z}\!\left(e^{-\left(\mu_{k}-\mu_{\ell}\right)},1\right)\left(1-e^{-\left(\mu_{k}-\mu_{\ell}\right)}\right)}{\mu_{k}-\mu_{\ell}}
\end{equation}
in \eqref{eq:evol-mach-gen-proof-step-2} is equal to one for all
$\mu_{k},\mu_{\ell}\in\mathbb{R}$ in either of these limits. Then
proceeding through the rest of the analysis after \eqref{eq:evol-mach-gen-proof-step-2},
one finds that
\begin{equation}
\left[I_{\KM}(\phi)\right]_{i,j}=\lim_{\alpha\to1}\left[I_{\alpha,z}(\phi)\right]_{i,j}=\lim_{z\to\infty}\left[I_{\alpha,z}(\phi)\right]_{i,j}=\left\langle \left[\left[\Psi_{\phi}^{\dag}(H_{j}),G\right],\Psi_{\phi}^{\dag}(H_{i})\right]\right\rangle _{\rho},
\end{equation}
consistent with the expression reported in \cite[Theorem~19]{Minervini2025}.
\begin{rem}
By using the same methods use to prove Theorem~\ref{thm:time-evolved-a-z-formula},
the results from \cite{Minervini2025a}, regarding information matrices
for parameterized quantum circuits, also generalize to $\alpha$-$z$
information matrices. Indeed, by adopting the same notation used in
\cite{Minervini2025a}, we arrive at the following statement: Let
$\rho$ be an initial thermal state of the form in \eqref{eq:initial-fixed-thermal-state}.
For a general layered parameterized circuit of the form in \cite[Eq.~(1)]{Minervini2025a},
the following equality holds for all $\alpha\in\left(0,1\right)$
and $z>0$:
\begin{equation}
\left[I_{\alpha,z}(\phi)\right]_{i,j}=\left\langle \left[\Phi_{p_{\alpha,z}}\!\left(\mathcal{U}_{R_{i}}^{\dag}\!\left(H_{i}\right)\right),\left[G,\mathcal{U}_{R_{j}}^{\dag}(H_{j})\right]\right]\right\rangle _{\rho},
\end{equation}
where $\mathcal{U}_{R_{i}}^{\dag}$ denotes the unitary channel defined
in \cite[Eq.~(8)]{Minervini2025a} and $\Phi_{p_{\alpha,z}}$ is the
channel defined in \eqref{eq:time-evolved-p-a-z-channel}.
\end{rem}

\end{rem}

\subsection{General formula for information matrices of time-evolved states}
\begin{lem}
\label{lem:general-fisher-time-evolved-states}Let $\zeta(x,y)$ be
a function defined for $x,y>0$ satisfying the properties stated in
\eqref{eq:zeta-func-prop-1-later}--\eqref{eq:zeta-kappa-normalization},
and suppose that the Fourier transform of the function
\begin{equation}
\omega\mapsto\frac{\zeta\!\left(e^{-\omega},1\right)\left(1-e^{-\omega}\right)}{\omega}
\end{equation}
exists, where $\omega\in\mathbb{R}$. Then, for time-evolved states
of the form in \eqref{eq:time-evolved-state}, the following equality
holds:
\begin{equation}
\left[I_{\zeta}(\phi)\right]_{i,j}=\left\langle \left[\Phi_{g}\!\left(\Psi_{\phi}\!\left(H_{i}\right)\right),\left[G,\Psi_{\phi}(H_{j})\right]\right]\right\rangle _{\rho},
\end{equation}
where $I_{\zeta}(\phi)$ is defined in \eqref{eq:gen-fisher-info-repeated},
the channel $\Psi_{\phi}$ is defined in \eqref{eq:time-evolved-channel},
\begin{equation}
\Phi_{g}\!\left(X\right)\coloneqq\int_{-\infty}^{\infty}dt\:g(t)e^{-iGt}Xe^{iGt},
\end{equation}
and $g(t)$ is a real-valued function satisfying the following for
all $\omega\in\mathbb{R}$:
\begin{align}
\int_{-\infty}^{\infty}dt\:g(t)e^{it\omega} & =\frac{\zeta\!\left(e^{-\omega},1\right)\left(1-e^{-\omega}\right)}{\omega},\\
\int_{-\infty}^{\infty}dt\:g(t) & =\kappa.\label{eq:normalization-f-fourier-1}
\end{align}
\end{lem}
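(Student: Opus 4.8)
The plan is to proceed exactly as in the proof of Lemma~\ref{lem:general-fisher-thermal-states}: differentiate $\sigma(\phi)$, exploit the unitary-conjugation structure to reduce everything to a double sum over spectral projections of $\rho$ (equivalently of $G$), and then recognize the scalar weight attached to that sum as a Fourier transform so it can be repackaged as the channel $\Phi_{g}$ acting inside a trace against $\rho$.

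First I would compute the derivative. Writing $V(\phi)\coloneqq e^{iH(\phi)}$ so that $\sigma(\phi)=V(\phi)^{\dagger}\rho\,V(\phi)$, Proposition~\ref{prop:deriv-exp} gives $V(\phi)^{\dagger}\big(\partial_{i}V(\phi)\big)=i\,\Psi_{\phi}^{\dagger}(H_{i})$ after the change of variable $s=1-t$, and differentiating $V(\phi)V(\phi)^{\dagger}=I$ then yields the clean identity $\partial_{i}\sigma(\phi)=i\,[\sigma(\phi),\Psi_{\phi}^{\dagger}(H_{i})]$. Conjugating by $V(\phi)$ and using $\Psi_{\phi}(X)=e^{iH(\phi)}\Psi_{\phi}^{\dagger}(X)e^{-iH(\phi)}$ from \eqref{eq:time-evolved-channel}, this becomes $\partial_{i}\sigma(\phi)=i\,V(\phi)^{\dagger}[\rho,\Psi_{\phi}(H_{i})]V(\phi)$. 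Since $\sigma(\phi)$ and $\rho$ are unitarily equivalent, they share eigenvalues, and the eigenprojections of $\sigma(\phi)$ are $V(\phi)^{\dagger}\widetilde{\Pi}_{k}V(\phi)$, where $\rho=\sum_{k}\lambda_{k}\widetilde{\Pi}_{k}=e^{-G}/Z$ and $G=\sum_{k}\mu_{k}\widetilde{\Pi}_{k}$ with $\lambda_{k}=e^{-\mu_{k}}/Z$. Substituting all of this into the general formula \eqref{eq:gen-fisher-info-repeated}, the unitaries cancel by cyclicity of the trace, and after using $\widetilde{\Pi}_{k}[\rho,X]\widetilde{\Pi}_{\ell}=(\lambda_{k}-\lambda_{\ell})\widetilde{\Pi}_{k}X\widetilde{\Pi}_{\ell}$ one is left with $[I_{\zeta}(\phi)]_{i,j}=\sum_{k,\ell}\zeta(\lambda_{k},\lambda_{\ell})(\lambda_{k}-\lambda_{\ell})^{2}\,\Tr[\widetilde{\Pi}_{k}\Psi_{\phi}(H_{i})\widetilde{\Pi}_{\ell}\Psi_{\phi}(H_{j})]$.

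It then remains to match this weight with the target. Expanding $\langle[\Phi_{g}(\Psi_{\phi}(H_{i})),[G,\Psi_{\phi}(H_{j})]]\rangle_{\rho}=\Tr\big[[\rho,\Phi_{g}(\Psi_{\phi}(H_{i}))]\,[G,\Psi_{\phi}(H_{j})]\big]$ in the same spectral basis, the three ingredients contribute, respectively, a factor $(\lambda_{k}-\lambda_{\ell})$, a factor $\int dt\,g(t)e^{i(\mu_{\ell}-\mu_{k})t}=\tfrac{\zeta(e^{-(\mu_{\ell}-\mu_{k})},1)(1-e^{-(\mu_{\ell}-\mu_{k})})}{\mu_{\ell}-\mu_{k}}$ (using the hypothesized Fourier transform), and a factor $(\mu_{\ell}-\mu_{k})$, so that the net weight collapses to $(\lambda_{k}-\lambda_{\ell})\,\zeta(e^{-(\mu_{\ell}-\mu_{k})},1)\,(1-e^{-(\mu_{\ell}-\mu_{k})})$. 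A short manipulation using the homogeneity and symmetry of $\zeta$ (namely $\zeta(e^{\omega},1)=e^{-\omega}\zeta(e^{-\omega},1)$, $\lambda_{k}-\lambda_{\ell}=\lambda_{\ell}(e^{-(\mu_{k}-\mu_{\ell})}-1)$, and $(e^{-\omega}-1)(1-e^{\omega})=e^{\omega}(e^{-\omega}-1)^{2}$) shows that this equals $\zeta(\lambda_{k},\lambda_{\ell})(\lambda_{k}-\lambda_{\ell})^{2}$, i.e.\ exactly the weight obtained above, so the two expressions coincide. Finally, $g$ is real-valued because $\omega\mapsto\tfrac{\zeta(e^{-\omega},1)(1-e^{-\omega})}{\omega}$ is an even function — a one-line check from $\zeta(e^{\omega},1)=e^{-\omega}\zeta(e^{-\omega},1)$ — and \eqref{eq:normalization-f-fourier-1} follows by evaluating the Fourier transform at $\omega=0$, using $\lim_{\omega\to0}\tfrac{1-e^{-\omega}}{\omega}=1$ and $\zeta(1,1)=\kappa$ from \eqref{eq:zeta-kappa-normalization}.

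The main obstacle is the first step: keeping the bookkeeping straight when passing from $\partial_{i}e^{iH(\phi)}$ through the commutator identity $\partial_{i}\sigma(\phi)=i[\sigma(\phi),\Psi_{\phi}^{\dagger}(H_{i})]$ and then the conjugation that turns $\Psi_{\phi}^{\dagger}$ into $\Psi_{\phi}$ and moves the eigenbasis from that of $\sigma(\phi)$ to that of $\rho$ — several $e^{\pm iH(\phi)t}$ factors, changes of variable $s=1-t$, and adjoint-versus-non-adjoint versions of $\Psi_{\phi}$ must all be tracked carefully. Once past that, everything is scalar: the collapse of the double sum, the identification of the Fourier transform, and the reality and normalization of $g$ are all routine. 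One mild point worth flagging is that the weight $\zeta(\lambda_{k},\lambda_{\ell})(\lambda_{k}-\lambda_{\ell})^{2}$ looks asymmetric under $k\leftrightarrow\ell$ because of a leftover $\lambda_{\ell}$ after rewriting in terms of $\mu_{k}-\mu_{\ell}$; this matches the (also asymmetric-looking) product produced on the target side, and the consistency is again just homogeneity of $\zeta$.
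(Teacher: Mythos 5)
Your proposal is correct and follows essentially the same route as the paper: the same derivative identity $\partial_{i}\sigma(\phi)=i[\sigma(\phi),\Psi_{\phi}^{\dag}(H_{i})]$, the same reduction to the spectral double sum with weight $\zeta(\lambda_{k},\lambda_{\ell})(\lambda_{k}-\lambda_{\ell})^{2}$ against $\Tr[\widetilde{\Pi}_{k}\Psi_{\phi}(H_{i})\widetilde{\Pi}_{\ell}\Psi_{\phi}(H_{j})]$, and the same evenness and $\omega\to0$ arguments for the reality and normalization of $g$. The only difference is one of direction in the final step — you expand the target double commutator in the eigenbasis of $G$ and match scalar weights using the homogeneity of $\zeta$, whereas the paper factors the weight and resums the four resulting terms into operator form — and your weight-matching computation checks out, so this is a valid (and slightly more economical) presentation of the same argument.
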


\begin{proof}
As a consequence of Proposition \ref{prop:deriv-exp}, consider that
\begin{align}
 & \frac{\partial}{\partial\phi}\sigma(\phi)\nonumber \\
 & =\frac{\partial}{\partial\phi}\left(e^{-iH(\phi)}\rho e^{iH(\phi)}\right)\\
 & =\left(\frac{\partial}{\partial\phi}e^{-iH(\phi)}\right)\rho e^{iH(\phi)}+e^{-iH(\phi)}\rho\left(\frac{\partial}{\partial\phi}e^{iH(\phi)}\right)\\
 & =-i\int_{0}^{1}dt\ e^{-iH(\phi)t}H_{i}e^{-iH(\phi)\left(1-t\right)}\rho e^{iH(\phi)}+ie^{-iH(\phi)}\rho\int_{0}^{1}dt\ e^{iH(\phi)t}H_{i}e^{iH(\phi)\left(1-t\right)}\\
 & =-i\int_{0}^{1}dt\ e^{-iH(\phi)t}H_{i}e^{iH(\phi)t}e^{-iH(\phi)}\rho e^{iH(\phi)}+ie^{-iH(\phi)}\rho\int_{0}^{1}dt\ e^{iH(\phi)\left(1-t\right)}H_{i}e^{iH(\phi)t}\\
 & =-i\Psi_{\phi}^{\dag}(H_{i})\sigma(\phi)+i\sigma(\phi)\Psi_{\phi}^{\dag}(H_{i})\\
 & =i\left[\sigma(\phi),\Psi_{\phi}^{\dag}(H_{i})\right].\label{eq:deriv-evolution-machine}
\end{align}
Let $G=\sum_{k}\mu_{k}\widetilde{\Pi}_{k}$ be a spectral decomposition
of $G$, where $\mu_{k}$ is an eigenvalue and $\widetilde{\Pi}_{k}$
is an eigenprojection. Then $\rho=\sum_{k}\frac{e^{-\mu_{k}}}{Z}\widetilde{\Pi}_{k}$
is a spectral decomposition of $\rho$, and set $\lambda_{k}\coloneqq\frac{e^{-\mu_{k}}}{Z}$.
Defining $\Pi_{k}\coloneqq e^{-iH(\phi)}\widetilde{\Pi}_{k}e^{iH(\phi)}$,
it follows that $\sigma(\phi)=\sum_{k}\lambda_{k}\Pi_{k}$ is a spectral
decomposition of $\sigma(\phi)$. Applying \eqref{eq:deriv-evolution-machine},
we conclude that
\begin{align}
\Pi_{k}\left(\partial_{i}\sigma(\phi)\right)\Pi_{\ell} & =\Pi_{k}i\left[\sigma(\phi),\Psi_{\phi}^{\dag}(H_{i})\right]\Pi_{\ell}\\
 & =i\Pi_{k}\sigma(\phi)\Psi_{\phi}^{\dag}(H_{i})\Pi_{\ell}-i\Pi_{k}\Psi_{\phi}^{\dag}(H_{i})\sigma(\phi)\Pi_{\ell}\\
 & =i\lambda_{k}\Pi_{k}\Psi_{\phi}^{\dag}(H_{i})\Pi_{\ell}-i\lambda_{\ell}\Pi_{k}\Psi_{\phi}^{\dag}(H_{i})\Pi_{\ell}\\
 & =i\left(\lambda_{k}-\lambda_{\ell}\right)\Pi_{k}\Psi_{\phi}^{\mathbf{\dag}}(H_{i})\Pi_{\ell}.
\end{align}
Then, considering \eqref{eq:gen-fisher-info-repeated}, we find that
\begin{align}
\left[I_{\zeta}(\phi)\right]_{i,j} & =\sum_{k,\ell}\zeta(\lambda_{k},\lambda_{\ell})\Tr\!\left[\Pi_{k}\left(\partial_{i}\sigma(\phi)\right)\Pi_{\ell}\left(\partial_{j}\sigma(\phi)\right)\right]\\
 & =\sum_{k,\ell}\zeta(\lambda_{k},\lambda_{\ell})i\left(\lambda_{k}-\lambda_{\ell}\right)\Tr\!\left[\Pi_{k}\Psi_{\phi}^{\dag}(H_{i})\Pi_{\ell}\left(\partial_{j}\sigma(\phi)\right)\right]\\
 & =\sum_{k,\ell}\zeta(\lambda_{k},\lambda_{\ell})i\left(\lambda_{k}-\lambda_{\ell}\right)\Tr\!\left[\Psi_{\phi}(H_{i})\Pi_{\ell}\left(\partial_{j}\sigma(\phi)\right)\Pi_{k}\right]\\
 & =\sum_{k,\ell}\zeta(\lambda_{k},\lambda_{\ell})i\left(\lambda_{k}-\lambda_{\ell}\right)i\left(\lambda_{\ell}-\lambda_{k}\right)\Tr\!\left[\Psi_{\phi}^{\dag}(H_{i})\Pi_{\ell}\Psi_{\phi}^{\dag}(H_{j})\Pi_{k}\right]\\
 & =\sum_{k,\ell}\zeta(\lambda_{k},\lambda_{\ell})\left(\lambda_{k}-\lambda_{\ell}\right)^{2}\Tr\!\left[\Pi_{k}\Psi_{\phi}^{\dag}(H_{i})\Pi_{\ell}\Psi_{\phi}^{\dag}(H_{j})\right]\label{eq:evol-mach-gen-proof-step-1}
\end{align}
Note that
\begin{align}
\Tr\!\left[\Pi_{k}\Psi_{\phi}^{\dag}(H_{i})\Pi_{\ell}\Psi_{\phi}^{\dag}(H_{j})\right] & =\Tr\!\left[e^{-iH(\phi)}\widetilde{\Pi}_{k}e^{iH(\phi)}\Psi_{\phi}^{\dag}(H_{i})e^{-iH(\phi)}\widetilde{\Pi}_{\ell}e^{iH(\phi)}\Psi_{\phi}^{\dag}(H_{j})\right]\\
 & =\Tr\!\left[\widetilde{\Pi}_{k}e^{iH(\phi)}\Psi_{\phi}^{\dag}(H_{i})e^{-iH(\phi)}\widetilde{\Pi}_{\ell}e^{iH(\phi)}\Psi_{\phi}^{\dag}(H_{j})e^{-iH(\phi)}\right]\\
 & =\Tr\!\left[\widetilde{\Pi}_{k}\Psi_{\phi}(H_{i})\widetilde{\Pi}_{\ell}\Psi_{\phi}(H_{j})\right],
\end{align}
where we applied \eqref{eq:time-evolved-channel}. Then, continuing
from \eqref{eq:evol-mach-gen-proof-step-1},
\begin{align}
 & \left[I_{\zeta}(\phi)\right]_{i,j}\nonumber \\
 & =\sum_{k,\ell}\zeta(\lambda_{k},\lambda_{\ell})\left(\lambda_{k}-\lambda_{\ell}\right)^{2}\Tr\!\left[\widetilde{\Pi}_{k}\Psi_{\phi}(H_{i})\widetilde{\Pi}_{\ell}\Psi_{\phi}(H_{j})\right]\\
 & =\sum_{k,\ell}\zeta\!\left(\frac{e^{-\mu_{k}}}{Z},\frac{e^{-\mu_{\ell}}}{Z}\right)\left(\frac{e^{-\mu_{k}}}{Z}-\frac{e^{-\mu_{\ell}}}{Z}\right)^{2}\Tr\!\left[\widetilde{\Pi}_{k}\Psi_{\phi}(H_{i})\widetilde{\Pi}_{\ell}\Psi_{\phi}(H_{j})\right]\\
 & =\sum_{k,\ell}Z\zeta\!\left(e^{-\mu_{k}},e^{-\mu_{\ell}}\right)\frac{1}{Z^{2}}\left(e^{-\mu_{k}}-e^{-\mu_{\ell}}\right)^{2}\Tr\!\left[\widetilde{\Pi}_{k}\Psi_{\phi}(H_{i})\widetilde{\Pi}_{\ell}\Psi_{\phi}(H_{j})\right]\\
 & =\sum_{k,\ell}\zeta\!\left(e^{-\mu_{k}},e^{-\mu_{\ell}}\right)\frac{1}{Z}\left(e^{-\mu_{k}}-e^{-\mu_{\ell}}\right)^{2}\Tr\!\left[\widetilde{\Pi}_{k}\Psi_{\phi}(H_{i})\widetilde{\Pi}_{\ell}\Psi_{\phi}(H_{j})\right]\\
 & =\sum_{k,\ell}e^{\mu_{\ell}}\zeta\!\left(e^{-\left(\mu_{k}-\mu_{\ell}\right)},1\right)\frac{e^{-2\mu_{\ell}}}{Z}\left(e^{-\left(\mu_{k}-\mu_{\ell}\right)}-1\right)^{2}\Tr\!\left[\widetilde{\Pi}_{k}\Psi_{\phi}(H_{i})\widetilde{\Pi}_{\ell}\Psi_{\phi}(H_{j})\right]\\
 & =\sum_{k,\ell}\zeta\!\left(e^{-\left(\mu_{k}-\mu_{\ell}\right)},1\right)\frac{e^{-\mu_{\ell}}}{Z}\left(1-e^{-\left(\mu_{k}-\mu_{\ell}\right)}\right)^{2}\Tr\!\left[\widetilde{\Pi}_{k}\Psi_{\phi}(H_{i})\widetilde{\Pi}_{\ell}\Psi_{\phi}(H_{j})\right]\\
 & =\sum_{k,\ell}\frac{\zeta\!\left(e^{-\left(\mu_{k}-\mu_{\ell}\right)},1\right)\left(1-e^{-\left(\mu_{k}-\mu_{\ell}\right)}\right)}{\mu_{k}-\mu_{\ell}}\frac{e^{-\mu_{\ell}}}{Z}\left(\mu_{k}-\mu_{\ell}\right)\left(1-e^{-\left(\mu_{k}-\mu_{\ell}\right)}\right)\Tr\!\left[\widetilde{\Pi}_{k}\Psi_{\phi}(H_{i})\widetilde{\Pi}_{\ell}\Psi_{\phi}(H_{j})\right]\\
 & =\sum_{k,\ell}\frac{\zeta\!\left(e^{-\left(\mu_{k}-\mu_{\ell}\right)},1\right)\left(1-e^{-\left(\mu_{k}-\mu_{\ell}\right)}\right)}{\mu_{k}-\mu_{\ell}}\left(\mu_{k}-\mu_{\ell}\right)\left(\frac{e^{-\mu_{\ell}}}{Z}-\frac{e^{-\mu_{k}}}{Z}\right)\Tr\!\left[\widetilde{\Pi}_{k}\Psi_{\phi}(H_{i})\widetilde{\Pi}_{\ell}\Psi_{\phi}(H_{j})\right]\label{eq:evol-mach-gen-proof-step-2}
\end{align}
Now suppose that $g(t)$ is a function satisfying 
\begin{equation}
\int_{-\infty}^{\infty}dt\ g(t)e^{it\omega}=\frac{\zeta\!\left(e^{-\omega},1\right)\left(1-e^{-\omega}\right)}{\omega}.
\end{equation}
To establish that $g(t)$ is real-valued, it suffices to prove that
its Fourier transform is even. To this end, consider that
\begin{align}
-\omega & \mapsto\frac{\zeta\!\left(e^{-\left(-\omega\right)},1\right)\left(1-e^{-\left(-\omega\right)}\right)}{-\omega}\\
 & =\frac{\zeta\!\left(e^{\omega},1\right)\left(1-e^{\omega}\right)}{-\omega}\\
 & =\frac{\zeta\!\left(1,e^{\omega}\right)\left(1-e^{\omega}\right)}{-\omega}\\
 & =\frac{e^{-\omega}\zeta\!\left(e^{-\omega},1\right)\left(1-e^{\omega}\right)}{-\omega}\\
 & =\frac{\zeta\!\left(e^{-\omega},1\right)\left(e^{-\omega}-1\right)}{-\omega}\\
 & =\frac{\zeta\!\left(e^{-\omega},1\right)\left(1-e^{-\omega}\right)}{\omega}.
\end{align}
Then, continuing from \eqref{eq:evol-mach-gen-proof-step-2}, 
\begin{align}
 & \left[I_{\zeta}(\phi)\right]_{i,j}\nonumber \\
 & =\sum_{k,\ell}\int_{-\infty}^{\infty}dt\ g(t)e^{it\left(\mu_{k}-\mu_{\ell}\right)}\left(\mu_{k}-\mu_{\ell}\right)\left(\frac{e^{-\mu_{\ell}}}{Z}-\frac{e^{-\mu_{k}}}{Z}\right)\Tr\!\left[\widetilde{\Pi}_{k}\Psi_{\phi}(H_{i})\widetilde{\Pi}_{\ell}\Psi_{\phi}(H_{j})\right]\\
 & =\int_{-\infty}^{\infty}dt\ g(t)\sum_{k,\ell}\left(\mu_{k}-\mu_{\ell}\right)\left(\frac{e^{-\mu_{\ell}}}{Z}-\frac{e^{-\mu_{k}}}{Z}\right)\Tr\!\left[e^{it\mu_{k}}\widetilde{\Pi}_{k}\Psi_{\phi}(H_{i})e^{-it\mu_{\ell}}\widetilde{\Pi}_{\ell}\Psi_{\phi}(H_{j})\right]\\
 & =-\int_{-\infty}^{\infty}dt\ g(t)\sum_{k,\ell}\mu_{k}\frac{e^{-\mu_{k}}}{Z}\Tr\!\left[e^{it\mu_{k}}\widetilde{\Pi}_{k}\Psi_{\phi}(H_{i})e^{-it\mu_{\ell}}\widetilde{\Pi}_{\ell}\Psi_{\phi}(H_{j})\right]\nonumber \\
 & \qquad+\int_{-\infty}^{\infty}dt\ g(t)\sum_{k,\ell}\mu_{k}\frac{e^{-\mu_{\ell}}}{Z}\Tr\!\left[e^{it\mu_{k}}\widetilde{\Pi}_{k}\Psi_{\phi}(H_{i})e^{-it\mu_{\ell}}\widetilde{\Pi}_{\ell}\Psi_{\phi}(H_{j})\right]\nonumber \\
 & \qquad+\int_{-\infty}^{\infty}dt\ g(t)\sum_{k,\ell}\mu_{\ell}\frac{e^{-\mu_{k}}}{Z}\Tr\!\left[e^{it\mu_{k}}\widetilde{\Pi}_{k}\Psi_{\phi}(H_{i})e^{-it\mu_{\ell}}\widetilde{\Pi}_{\ell}\Psi_{\phi}(H_{j})\right]\nonumber \\
 & \qquad-\int_{-\infty}^{\infty}dt\ g(t)\sum_{k,\ell}\mu_{\ell}\frac{e^{-\mu_{\ell}}}{Z}\Tr\!\left[e^{it\mu_{k}}\widetilde{\Pi}_{k}\Psi_{\phi}(H_{i})e^{-it\mu_{\ell}}\widetilde{\Pi}_{\ell}\Psi_{\phi}(H_{j})\right]\\
 & =-\int_{-\infty}^{\infty}dt\ g(t)\Tr\!\left[\left(\sum_{k}\mu_{k}\frac{e^{-\mu_{k}}}{Z}e^{it\mu_{k}}\right)\widetilde{\Pi}_{k}\Psi_{\phi}(H_{i})\left(\sum_{\ell}e^{-it\mu_{\ell}}\widetilde{\Pi}_{\ell}\right)\Psi_{\phi}(H_{j})\right]\nonumber \\
 & \qquad+\int_{-\infty}^{\infty}dt\ g(t)\Tr\!\left[\left(\sum_{k}\mu_{k}e^{it\mu_{k}}\widetilde{\Pi}_{k}\right)\Psi_{\phi}(H_{i})\left(\sum_{\ell}e^{-it\mu_{\ell}}\frac{e^{-\mu_{\ell}}}{Z}\widetilde{\Pi}_{\ell}\right)\Psi_{\phi}(H_{j})\right]\nonumber \\
 & \qquad+\int_{-\infty}^{\infty}dt\ g(t)\Tr\!\left[\left(\sum_{k}\frac{e^{-\mu_{k}}}{Z}e^{it\mu_{k}}\widetilde{\Pi}_{k}\right)\Psi_{\phi}(H_{i})\left(\sum_{\ell}e^{-it\mu_{\ell}}\mu_{\ell}\widetilde{\Pi}_{\ell}\right)\Psi_{\phi}(H_{j})\right]\nonumber \\
 & \qquad-\int_{-\infty}^{\infty}dt\ g(t)\Tr\!\left[\left(\sum_{k}e^{it\mu_{k}}\widetilde{\Pi}_{k}\right)\Psi_{\phi}(H_{i})\left(\sum_{\ell}e^{-it\mu_{\ell}}\mu_{\ell}\frac{e^{-\mu_{\ell}}}{Z}\widetilde{\Pi}_{\ell}\right)\Psi_{\phi}(H_{j})\right]\\
 & =-\int_{-\infty}^{\infty}dt\ g(t)\Tr\!\left[G\rho e^{itG}\Psi_{\phi}(H_{i})e^{-itG}\Psi_{\phi}(H_{j})\right]\nonumber \\
 & \qquad+\int_{-\infty}^{\infty}dt\ g(t)\Tr\!\left[Ge^{itG}\Psi_{\phi}(H_{i})e^{-itG}\rho\Psi_{\phi}(H_{j})\right]\nonumber \\
 & \qquad+\int_{-\infty}^{\infty}dt\ g(t)\Tr\!\left[\rho e^{itG}\Psi_{\phi}(H_{i})e^{-itG}G\Psi_{\phi}(H_{j})\right]\nonumber \\
 & \qquad-\int_{-\infty}^{\infty}dt\ g(t)\Tr\!\left[e^{itG}\Psi_{\phi}(H_{i})e^{-itG}G\rho\Psi_{\phi}(H_{j})\right]\\
 & =-\Tr\!\left[G\rho\,\Phi_{g}\!\left(\Psi_{\phi}(H_{i})\right)\Psi_{\phi}(H_{j})\right]+\Tr\!\left[G\,\Phi_{g}\!\left(\Psi_{\phi}(H_{i})\right)\rho\Psi_{\phi}(H_{j})\right]\nonumber \\
 & \qquad+\Tr\!\left[\rho\,\Phi_{g}\!\left(\Psi_{\phi}(H_{i})\right)G\Psi_{\phi}(H_{j})\right]-\Tr\!\left[\Phi_{g}\!\left(\Psi_{\phi}(H_{i})\right)G\rho\Psi_{\phi}(H_{j})\right]\\
 & =-\Tr\!\left[\Phi_{g}\!\left(\Psi_{\phi}(H_{i})\right)\Psi_{\phi}(H_{j})G\rho\right]+\Tr\!\left[\Psi_{\phi}(H_{j})G\,\Phi_{g}\!\left(\Psi_{\phi}(H_{i})\right)\rho\right]\nonumber \\
 & \qquad+\Tr\!\left[\Phi_{g}\!\left(\Psi_{\phi}(H_{i})\right)G\Psi_{\phi}(H_{j})\rho\right]-\Tr\!\left[G\Psi_{\phi}(H_{j})\Phi_{g}\!\left(\Psi_{\phi}(H_{i})\right)\rho\right]\label{eq:long-proof-time-evolved-general}\\
 & =\Tr\!\left[\Phi_{g}\!\left(\Psi_{\phi}(H_{i})\right)\left[G,\Psi_{\phi}(H_{j})\right]\rho\right]-\Tr\!\left[\left[G,\Psi_{\phi}(H_{j})\right]\Phi_{g}\!\left(\Psi_{\phi}(H_{i})\right)\rho\right]\\
 & =\Tr\!\left[\left[\Phi_{g}\!\left(\Psi_{\phi}(H_{i})\right),\left[G,\Psi_{\phi}(H_{j})\right]\right]\rho\right]\\
 & =\left\langle \left[\Phi_{g}\!\left(\Psi_{\phi}(H_{i})\right),\left[G,\Psi_{\phi}(H_{j})\right]\right]\right\rangle _{\rho}.
\end{align}
For the equality in \eqref{eq:long-proof-time-evolved-general}, we
used that $\left[\rho,G\right]=0$. Finally, the claim in \eqref{eq:normalization-f-fourier-1}
follows because
\begin{align}
\int_{-\infty}^{\infty}dt\:g(t) & =\lim_{\omega\to0}\frac{\zeta\!\left(e^{-\omega},1\right)\left(1-e^{-\omega}\right)}{\omega}\\
 & =\zeta\!\left(1,1\right)\left(\lim_{\omega\to0}\frac{1-e^{-\omega}}{\omega}\right)\\
 & =\kappa\left(-\left.\frac{d}{d\omega}e^{-\omega}\right|_{\omega=0}\right)\\
 & =\kappa,
\end{align}
thus concluding the proof.
\end{proof}

\section{Information matrices of classical--quantum states}

\label{sec:c-q-states}In this section, I show how the various information
matrices considered in this paper decompose whenever the underlying
parameterized family of states have a classical--quantum structure,
generalizing the previous finding reported for the single-parameter
case in \cite[Proposition~7]{Katariya2021}. In particular, the various
information matrices decompose as a sum of a classical part and a
quantum part. For the purposes of this section, I adopt a revised
notation in which the family on which the information matrix is being
evaluated is explicitly written. That is, rather than the abbreviated
notation used in \eqref{eq:fisher-info-from-div-quantum}, I write
the following instead:
\begin{equation}
\left[I_{\boldsymbol{D}}(\theta;(\rho(\theta))_{\theta\in\Theta})\right]_{i,j}\coloneqq\left.\frac{\partial^{2}}{\partial\varepsilon_{i}\partial\varepsilon_{j}}\boldsymbol{D}(\rho(\theta)\|\rho(\theta+\varepsilon))\right|_{\varepsilon=0},
\end{equation}
where $\Theta\subseteq\mathbb{R}^{L}$ is open.

A parameterized family $\left(\rho_{XA}(\theta)\right)_{\theta\in\Theta}$
of classical-quantum states has the following form:
\begin{equation}
\rho_{XA}(\theta)\coloneqq\sum_{x\in\mathcal{X}}p_{\theta}(x)|x\rangle\!\langle x|\otimes\rho_{x}(\theta),\label{eq:cq-family}
\end{equation}
where $\left(p_{\theta}\right)_{\theta\in\Theta}$ is a parameterized
family of probability distributions and, for all $x\in\mathcal{X}$,
$\left(\rho_{x}(\theta)\right)_{\theta\in\Theta}$ is a parameterized
family of quantum states.
\begin{thm}
\label{thm:cq-decomposition}Let $\left(\rho_{XA}(\theta)\right)_{\theta\in\Theta}$
be a parameterized family of second-order differentiable, positive
definite, classical-quantum states, of the form in \eqref{eq:cq-family}.
For all $\alpha\in\left(0,1\right)\cup\left(1,\infty\right)$ and
$z>0$, the following equality holds:
\begin{equation}
I_{\alpha,z}(\theta;(\rho_{XA}(\theta))_{\theta\in\Theta})=I_{F}(\theta;(p_{\theta})_{\theta\in\Theta})+\sum_{x\in\mathcal{X}}p_{\theta}(x)I_{\alpha,z}(\theta;(\rho_{x}(\theta))_{\theta\in\Theta}),\label{eq:cq-a-z}
\end{equation}
where the classical Fisher information matrix $I_{F}(\theta;(p_{\theta})_{\theta\in\Theta})$
is defined in \eqref{eq:classical-fisher-information-matrix}. Also,
the following equalities hold:
\begin{align}
I_{\KM}(\theta;(\rho_{XA}(\theta))_{\theta\in\Theta}) & =I_{F}(\theta;(p_{\theta})_{\theta\in\Theta})+\sum_{x\in\mathcal{X}}p_{\theta}(x)I_{\KM}(\theta;(\rho_{x}(\theta))_{\theta\in\Theta}),\label{eq:cq-KM}\\
I_{\RLD}(\theta;(\rho_{XA}(\theta))_{\theta\in\Theta}) & =I_{F}(\theta;(p_{\theta})_{\theta\in\Theta})+\sum_{x\in\mathcal{X}}p_{\theta}(x)I_{\RLD}(\theta;(\rho_{x}(\theta))_{\theta\in\Theta}).\label{eq:cq-RLD}
\end{align}
\end{thm}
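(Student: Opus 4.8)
The plan is to exploit the block-diagonal structure of classical--quantum states to reduce each quantum divergence to a ``classical $+$ quantum'' form, and then to Taylor expand. For the $\alpha$-$z$ case, I would first use the first expression in \eqref{eq:alpha-z-renyi-def} together with the observation that powers, products, and traces of block-diagonal positive-definite operators decompose blockwise, to establish that, for classical--quantum states $\rho_{XA}=\sum_x p(x)|x\rangle\!\langle x|\otimes\rho_x$ and $\sigma_{XA}=\sum_x q(x)|x\rangle\!\langle x|\otimes\sigma_x$,
\begin{equation}
D_{\alpha,z}(\rho_{XA}\|\sigma_{XA})=\frac{1}{\alpha-1}\ln\sum_{x\in\mathcal{X}}p(x)^{\alpha}q(x)^{1-\alpha}\,e^{(\alpha-1)D_{\alpha,z}(\rho_x\|\sigma_x)}.
\end{equation}
Specializing to $p=p_\theta$, $q=p_{\theta+\varepsilon}$, $\rho_x=\rho_x(\theta)$, $\sigma_x=\rho_x(\theta+\varepsilon)$, I would set $g_x(\varepsilon)\coloneqq(\alpha-1)D_{\alpha,z}(\rho_x(\theta)\|\rho_x(\theta+\varepsilon))$ and $F(\varepsilon)\coloneqq\sum_x p_\theta(x)^{\alpha}p_{\theta+\varepsilon}(x)^{1-\alpha}e^{g_x(\varepsilon)}$, so that $D_{\alpha,z}(\rho_{XA}(\theta)\|\rho_{XA}(\theta+\varepsilon))=\frac{1}{\alpha-1}\ln F(\varepsilon)$.

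Next I would verify that $F(0)=\sum_x p_\theta(x)=1$ and that $\nabla F(0)=0$: this uses $g_x(0)=0$, $\nabla g_x(0)=0$ (the latter being vanishing of the first derivative of $D_{\alpha,z}$ at coinciding arguments, which holds for all $\alpha,z>0$ by Lemma~\ref{lem:first-deriv-zero}), and $\sum_x\partial_i p_\theta(x)=\partial_i 1=0$. Since $F(0)=1$ and $\nabla F(0)=0$, the Hessian of $\frac{1}{\alpha-1}\ln F$ at $\varepsilon=0$ equals $\frac{1}{\alpha-1}$ times the Hessian of $F$ at $\varepsilon=0$. Differentiating $F$ twice, evaluating at $\varepsilon=0$, and using the Hessian identity $\partial_i\partial_j g_x(0)=\alpha(\alpha-1)[I_{\alpha,z}(\theta;(\rho_x(\theta))_{\theta\in\Theta})]_{i,j}$ coming from \eqref{eq:a-z-fisher-info-def} applied to the family $(\rho_x(\theta))_\theta$, together with $\sum_x\partial_i\partial_j p_\theta(x)=0$, every term involving $\nabla g_x(0)$ or $\nabla\nabla p_\theta$ drops out and one obtains
\begin{equation}
\partial_i\partial_j F(0)=\alpha(\alpha-1)\left(\sum_{x}\frac{(\partial_i p_\theta(x))(\partial_j p_\theta(x))}{p_\theta(x)}+\sum_x p_\theta(x)\,[I_{\alpha,z}(\theta;(\rho_x(\theta))_{\theta\in\Theta})]_{i,j}\right).
\end{equation}
Recognizing the first sum as $[I_F(\theta;(p_\theta)_{\theta\in\Theta})]_{i,j}$ via \eqref{eq:classical-fisher-information-matrix} and dividing by $\alpha(\alpha-1)$ yields \eqref{eq:cq-a-z}.

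For \eqref{eq:cq-KM} I would use the standard chain-rule identity $D(\rho_{XA}(\theta)\|\rho_{XA}(\theta+\varepsilon))=D(p_\theta\|p_{\theta+\varepsilon})+\sum_x p_\theta(x)\,D(\rho_x(\theta)\|\rho_x(\theta+\varepsilon))$, which follows from $\ln(p(x)\rho_x)=(\ln p(x))I+\ln\rho_x$ on each block; applying $\partial_i\partial_j|_{\varepsilon=0}$ and invoking \eqref{eq:classical-rel-ent-fisher} gives the claim (alternatively, \eqref{eq:cq-KM} follows from \eqref{eq:cq-a-z} by sending $\alpha\to1$, since $I_{\alpha,z}\to I_{\KM}$ while $I_F$ is $\alpha$-$z$--independent). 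For \eqref{eq:cq-RLD}, I would substitute the block decompositions $\partial_i\rho_{XA}(\theta)=\sum_x|x\rangle\!\langle x|\otimes\big((\partial_i p_\theta(x))\rho_x(\theta)+p_\theta(x)\partial_i\rho_x(\theta)\big)$ and $\rho_{XA}(\theta)^{-1}=\sum_x p_\theta(x)^{-1}|x\rangle\!\langle x|\otimes\rho_x(\theta)^{-1}$ into \eqref{eq:RLD-def}, expand the triple product within each block, and use $\Tr[\partial_i\rho_x(\theta)]=\partial_i\Tr[\rho_x(\theta)]=0$; the surviving contributions are exactly $\tfrac{(\partial_i p_\theta(x))(\partial_j p_\theta(x))}{p_\theta(x)}$ and $p_\theta(x)\,\Re\Tr[(\partial_i\rho_x(\theta))\rho_x(\theta)^{-1}(\partial_j\rho_x(\theta))]$, which sum to the right-hand side of \eqref{eq:cq-RLD}.

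The main obstacle is not any single hard inequality but rather two bookkeeping points: (i) justifying the blockwise reduction of $D_{\alpha,z}$ for non-integer $z$ (powers of block-diagonal positive-definite operators are computed blockwise, so this is routine but should be spelled out), and (ii) carefully tracking, in the Taylor expansion of $F$, that every cross term involving $\nabla g_x(0)$, $\nabla p_\theta$, or $\nabla\nabla p_\theta$ either vanishes or is absorbed into $I_F$. Once the reduction is in place, the remaining steps are elementary calculus, and the RLD case is a direct computation with no limits involved.
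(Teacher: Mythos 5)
Your proposal is correct and follows essentially the same route as the paper: your function $F(\varepsilon)=\sum_{x}p_{\theta}(x)^{\alpha}p_{\theta+\varepsilon}(x)^{1-\alpha}e^{g_{x}(\varepsilon)}$ is precisely the blockwise decomposition of the trace functional $\Tr\!\left[\left(\rho_{XA}(\theta)^{\frac{\alpha}{2z}}\rho_{XA}(\theta+\varepsilon)^{\frac{1-\alpha}{z}}\rho_{XA}(\theta)^{\frac{\alpha}{2z}}\right)^{z}\right]$ that the paper differentiates, your critical-point identity for the Hessian of $\frac{1}{\alpha-1}\ln F$ is the paper's reduction \eqref{eq:first-reduction-a-z}, and the vanishing of the cross terms rests on the same Lemma~\ref{lem:first-deriv-zero} and on $\sum_{x}\partial_{i}p_{\theta}(x)=0$. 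Your direct block computation for \eqref{eq:cq-RLD} and the chain-rule (or $\alpha\to1$) argument for \eqref{eq:cq-KM} are valid instantiations of what the paper leaves as ``similar.''
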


\begin{proof}
By applying \eqref{eq:first-reduction-a-z}, it follows that
\begin{multline}
\left.\frac{\partial^{2}}{\partial\varepsilon_{i}\partial\varepsilon_{j}}D_{\alpha,z}(\rho_{XA}(\theta)\|\rho_{XA}(\theta+\varepsilon))\right|_{\varepsilon=0}\\
=\frac{1}{\alpha-1}\left[\left.\frac{\partial^{2}}{\partial\varepsilon_{i}\partial\varepsilon_{j}}\Tr\!\left[\left(\rho_{XA}(\theta)^{\frac{\alpha}{2z}}\rho_{XA}(\theta+\varepsilon)^{\frac{1-\alpha}{z}}\rho_{XA}(\theta)^{\frac{\alpha}{2z}}\right)^{z}\right]\right|_{\varepsilon=0}\right].
\end{multline}
Now consider that
\begin{align}
 & \Tr\!\left[\left(\rho_{XA}(\theta)^{\frac{\alpha}{2z}}\rho_{XA}(\theta+\varepsilon)^{\frac{1-\alpha}{z}}\rho_{XA}(\theta)^{\frac{\alpha}{2z}}\right)^{z}\right]\nonumber \\
 & =\Tr\!\left[\left(\begin{array}{c}
\left(\sum_{x}|x\rangle\!\langle x|\otimes p_{\theta}(x)\rho_{x}(\theta)\right)^{\frac{\alpha}{2z}}\left(\sum_{x'}|x'\rangle\!\langle x'|\otimes p_{\theta+\varepsilon}(x')\rho_{x'}(\theta+\varepsilon)\right)^{\frac{1-\alpha}{z}}\times\\
\left(\sum_{x''}|x''\rangle\!\langle x''|\otimes p_{\theta}(x'')\rho_{x''}(\theta)\right)^{\frac{\alpha}{2z}}
\end{array}\right)^{z}\right]\\
 & =\Tr\!\left[\left(\begin{array}{c}
\left(\sum_{x}|x\rangle\!\langle x|\otimes\left[p_{\theta}(x)\rho_{x}(\theta)\right]^{\frac{\alpha}{2z}}\right)\left(\sum_{x'}|x'\rangle\!\langle x'|\otimes\left[p_{\theta+\varepsilon}(x')\rho_{x'}(\theta+\varepsilon)\right]^{\frac{1-\alpha}{z}}\right)\times\\
\left(\sum_{x''}|x''\rangle\!\langle x''|\otimes\left[p_{\theta}(x'')\rho_{x''}(\theta)\right]^{\frac{\alpha}{2z}}\right)
\end{array}\right)^{z}\right]\\
 & =\Tr\!\left[\left(\begin{array}{c}
\sum_{x}|x\rangle\!\langle x|\otimes\left(\left[p_{\theta}(x)\rho_{x}(\theta)\right]^{\frac{\alpha}{2z}}\left[p_{\theta+\varepsilon}(x)\rho_{x}(\theta+\varepsilon)\right]^{\frac{1-\alpha}{z}}\left[p_{\theta}(x)\rho_{x}(\theta)\right]^{\frac{\alpha}{2z}}\right)\end{array}\right)^{z}\right]\\
 & =\Tr\!\left[\sum_{x}|x\rangle\!\langle x|\otimes\left(\left[p_{\theta}(x)\rho_{x}(\theta)\right]^{\frac{\alpha}{2z}}\left[p_{\theta+\varepsilon}(x)\rho_{x}(\theta+\varepsilon)\right]^{\frac{1-\alpha}{z}}\left[p_{\theta}(x)\rho_{x}(\theta)\right]^{\frac{\alpha}{2z}}\right)^{z}\right]\\
 & =\sum_{x}\Tr\!\left[\left(\left[p_{\theta}(x)\rho_{x}(\theta)\right]^{\frac{\alpha}{2z}}\left[p_{\theta+\varepsilon}(x)\rho_{x}(\theta+\varepsilon)\right]^{\frac{1-\alpha}{z}}\left[p_{\theta}(x)\rho_{x}(\theta)\right]^{\frac{\alpha}{2z}}\right)^{z}\right]\\
 & =\sum_{x}p_{\theta}(x)^{\alpha}p_{\theta+\varepsilon}(x)^{1-\alpha}\Tr\!\left[\left(\left[\rho_{x}(\theta)\right]^{\frac{\alpha}{2z}}\left[\rho_{x}(\theta+\varepsilon)\right]^{\frac{1-\alpha}{z}}\left[\rho_{x}(\theta)\right]^{\frac{\alpha}{2z}}\right)^{z}\right].\nonumber 
\end{align}
Then we find that
\begin{align}
 & \frac{\partial^{2}}{\partial\varepsilon_{i}\partial\varepsilon_{j}}\Tr\!\left[\left(\rho_{XA}(\theta)^{\frac{\alpha}{2z}}\rho_{XA}(\theta+\varepsilon)^{\frac{1-\alpha}{z}}\rho_{XA}(\theta)^{\frac{\alpha}{2z}}\right)^{z}\right]\nonumber \\
 & =\frac{\partial^{2}}{\partial\varepsilon_{i}\partial\varepsilon_{j}}\left(\sum_{x}p_{\theta}(x)^{\alpha}p_{\theta+\varepsilon}(x)^{1-\alpha}\Tr\!\left[\left(\left[\rho_{x}(\theta)\right]^{\frac{\alpha}{2z}}\left[\rho_{x}(\theta+\varepsilon)\right]^{\frac{1-\alpha}{z}}\left[\rho_{x}(\theta)\right]^{\frac{\alpha}{2z}}\right)^{z}\right]\right)\\
 & =\frac{\partial}{\partial\varepsilon_{i}}\left(\begin{array}{c}
\sum_{x}p_{\theta}(x)^{\alpha}\frac{\partial}{\partial\varepsilon_{j}}p_{\theta+\varepsilon}(x)^{1-\alpha}\Tr\!\left[\left(\left[\rho_{x}(\theta)\right]^{\frac{\alpha}{2z}}\left[\rho_{x}(\theta+\varepsilon)\right]^{\frac{1-\alpha}{z}}\left[\rho_{x}(\theta)\right]^{\frac{\alpha}{2z}}\right)^{z}\right]\\
+\sum_{x}p_{\theta}(x)^{\alpha}p_{\theta+\varepsilon}(x)^{1-\alpha}\frac{\partial}{\partial\varepsilon_{j}}\Tr\!\left[\left(\left[\rho_{x}(\theta)\right]^{\frac{\alpha}{2z}}\left[\rho_{x}(\theta+\varepsilon)\right]^{\frac{1-\alpha}{z}}\left[\rho_{x}(\theta)\right]^{\frac{\alpha}{2z}}\right)^{z}\right]
\end{array}\right)\\
 & =\sum_{x}p_{\theta}(x)^{\alpha}\left(\frac{\partial^{2}}{\partial\varepsilon_{i}\partial\varepsilon_{j}}p_{\theta+\varepsilon}(x)^{1-\alpha}\right)\Tr\!\left[\left(\left[\rho_{x}(\theta)\right]^{\frac{\alpha}{2z}}\left[\rho_{x}(\theta+\varepsilon)\right]^{\frac{1-\alpha}{z}}\left[\rho_{x}(\theta)\right]^{\frac{\alpha}{2z}}\right)^{z}\right]\nonumber \\
 & \qquad+\sum_{x}p_{\theta}(x)^{\alpha}\left(\frac{\partial}{\partial\varepsilon_{j}}p_{\theta+\varepsilon}(x)^{1-\alpha}\right)\left(\frac{\partial}{\partial\varepsilon_{i}}\Tr\!\left[\left(\left[\rho_{x}(\theta)\right]^{\frac{\alpha}{2z}}\left[\rho_{x}(\theta+\varepsilon)\right]^{\frac{1-\alpha}{z}}\left[\rho_{x}(\theta)\right]^{\frac{\alpha}{2z}}\right)^{z}\right]\right)\nonumber \\
 & \qquad+\sum_{x}p_{\theta}(x)^{\alpha}p_{\theta+\varepsilon}(x)^{1-\alpha}\left(\frac{\partial^{2}}{\partial\varepsilon_{i}\partial\varepsilon_{j}}\Tr\!\left[\left(\left[\rho_{x}(\theta)\right]^{\frac{\alpha}{2z}}\left[\rho_{x}(\theta+\varepsilon)\right]^{\frac{1-\alpha}{z}}\left[\rho_{x}(\theta)\right]^{\frac{\alpha}{2z}}\right)^{z}\right]\right).
\end{align}
Now evaluating this last expression at $\varepsilon=0$ gives 
\begin{align}
 & \left.\left(\sum_{x}p_{\theta}(x)^{\alpha}\left(\frac{\partial^{2}}{\partial\varepsilon_{i}\partial\varepsilon_{j}}p_{\theta+\varepsilon}(x)^{1-\alpha}\right)\Tr\!\left[\left(\left[\rho_{x}(\theta)\right]^{\frac{\alpha}{2z}}\left[\rho_{x}(\theta+\varepsilon)\right]^{\frac{1-\alpha}{z}}\left[\rho_{x}(\theta)\right]^{\frac{\alpha}{2z}}\right)^{z}\right]\right)\right|_{\varepsilon=0}\nonumber \\
 & \qquad+\left.\left(\sum_{x}p_{\theta}(x)^{\alpha}\left(\frac{\partial}{\partial\varepsilon_{j}}p_{\theta+\varepsilon}(x)^{1-\alpha}\right)\left(\frac{\partial}{\partial\varepsilon_{i}}\Tr\!\left[\left(\left[\rho_{x}(\theta)\right]^{\frac{\alpha}{2z}}\left[\rho_{x}(\theta+\varepsilon)\right]^{\frac{1-\alpha}{z}}\left[\rho_{x}(\theta)\right]^{\frac{\alpha}{2z}}\right)^{z}\right]\right)\right)\right|_{\varepsilon=0}\nonumber \\
 & \qquad+\left.\left(\sum_{x}p_{\theta}(x)^{\alpha}p_{\theta+\varepsilon}(x)^{1-\alpha}\left(\frac{\partial^{2}}{\partial\varepsilon_{i}\partial\varepsilon_{j}}\Tr\!\left[\left(\left[\rho_{x}(\theta)\right]^{\frac{\alpha}{2z}}\left[\rho_{x}(\theta+\varepsilon)\right]^{\frac{1-\alpha}{z}}\left[\rho_{x}(\theta)\right]^{\frac{\alpha}{2z}}\right)^{z}\right]\right)\right)\right|_{\varepsilon=0}\\
 & =\sum_{x}p_{\theta}(x)^{\alpha}\left(\left.\frac{\partial^{2}}{\partial\varepsilon_{i}\partial\varepsilon_{j}}p_{\theta+\varepsilon}(x)^{1-\alpha}\right|_{\varepsilon=0}\right)\Tr\!\left[\left(\left[\rho_{x}(\theta)\right]^{\frac{\alpha}{2z}}\left[\rho_{x}(\theta)\right]^{\frac{1-\alpha}{z}}\left[\rho_{x}(\theta)\right]^{\frac{\alpha}{2z}}\right)^{z}\right]\nonumber \\
 & \qquad+\sum_{x}p_{\theta}(x)^{\alpha}\left(\left.\frac{\partial}{\partial\varepsilon_{j}}p_{\theta+\varepsilon}(x)^{1-\alpha}\right|_{\varepsilon=0}\right)\left(\left.\frac{\partial}{\partial\varepsilon_{i}}\Tr\!\left[\left(\left[\rho_{x}(\theta)\right]^{\frac{\alpha}{2z}}\left[\rho_{x}(\theta+\varepsilon)\right]^{\frac{1-\alpha}{z}}\left[\rho_{x}(\theta)\right]^{\frac{\alpha}{2z}}\right)^{z}\right]\right|_{\varepsilon=0}\right)\nonumber \\
 & \qquad+\sum_{x}p_{\theta}(x)^{\alpha}p_{\theta}(x)^{1-\alpha}\left(\left.\frac{\partial^{2}}{\partial\varepsilon_{i}\partial\varepsilon_{j}}\Tr\!\left[\left(\left[\rho_{x}(\theta)\right]^{\frac{\alpha}{2z}}\left[\rho_{x}(\theta+\varepsilon)\right]^{\frac{1-\alpha}{z}}\left[\rho_{x}(\theta)\right]^{\frac{\alpha}{2z}}\right)^{z}\right]\right|_{\varepsilon=0}\right)\\
 & =\left[I(\theta;(p_{\theta})_{\theta\in\Theta})\right]_{i,j}+\sum_{x}p_{\theta}(x)\left[I_{\alpha,z}(\theta;(\rho_{x}(\theta))_{\theta\in\Theta})\right]_{i,j}.
\end{align}
Proofs of \eqref{eq:cq-KM} and \eqref{eq:cq-RLD} are similar.
\end{proof}
A simple consequence of Theorem~\ref{thm:cq-decomposition} is that
information matrices that satisfy \eqref{eq:cq-a-z}--\eqref{eq:cq-RLD}
are convex whenever the data-processing inequality holds:
\begin{cor}[Convexity]
\label{cor:convexity}For all $x\in\mathcal{X}$, let $\left(\rho_{x}(\theta)\right)_{\theta\in\Theta}$
be a parameterized family of second-order differentiable, positive-definite
states, let $\left(p(x)\right)_{x\in\mathcal{X}}$ be a probability
distribution, and let $\overline{\rho}(\theta)\coloneqq\sum_{x\in\mathcal{X}}p(x)\rho_{x}(\theta)$.
Then, for all $\alpha,z>0$ satisfying the conditions stated in Fact
\ref{fact:a-z-data-proc}, the following convexity inequality holds:
\begin{equation}
\sum_{x\in\mathcal{X}}p(x)I_{\alpha,z}(\theta;(\rho_{x}(\theta))_{\theta\in\Theta})\geq I_{\alpha,z}(\theta;(\overline{\rho}(\theta))_{\theta\in\Theta}).
\end{equation}
Additionally,
\begin{align}
\sum_{x\in\mathcal{X}}p(x)I_{\KM}(\theta;(\rho_{x}(\theta))_{\theta\in\Theta}) & \geq I_{\KM}(\theta;(\overline{\rho}(\theta))_{\theta\in\Theta}),\label{eq:convexity-KM}\\
\sum_{x\in\mathcal{X}}p(x)I_{\RLD}(\theta;(\rho_{x}(\theta))_{\theta\in\Theta}) & \geq I_{\RLD}(\theta;(\overline{\rho}(\theta))_{\theta\in\Theta}).\label{eq:convexity-RLD}
\end{align}
\end{cor}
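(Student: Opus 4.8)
The plan is to obtain Corollary~\ref{cor:convexity} as a direct consequence of the classical--quantum decomposition established in Theorem~\ref{thm:cq-decomposition} together with the data-processing inequality for the information matrices involved. First I would form the classical--quantum family $\left(\rho_{XA}(\theta)\right)_{\theta\in\Theta}$ given by
\[
\rho_{XA}(\theta)\coloneqq\sum_{x\in\mathcal{X}}p(x)\,|x\rangle\!\langle x|\otimes\rho_{x}(\theta),
\]
which is precisely the special case of the family in \eqref{eq:cq-family} for which the classical marginal $p_{\theta}(x)=p(x)$ carries no dependence on $\theta$. In this case the classical Fisher information matrix $I_{F}(\theta;(p_{\theta})_{\theta\in\Theta})$ vanishes, as is immediate from its definition in \eqref{eq:classical-fisher-information-matrix}, so Theorem~\ref{thm:cq-decomposition} reduces to
\[
I_{\alpha,z}\!\left(\theta;(\rho_{XA}(\theta))_{\theta\in\Theta}\right)=\sum_{x\in\mathcal{X}}p(x)\,I_{\alpha,z}\!\left(\theta;(\rho_{x}(\theta))_{\theta\in\Theta}\right),
\]
and similarly for $I_{\KM}$ and $I_{\RLD}$ via \eqref{eq:cq-KM} and \eqref{eq:cq-RLD}. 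If $p(x)=0$ for some $x$, one restricts $\mathcal{X}$ to the support of $p$; the omitted terms are zero because each $\rho_{x}(\theta)$ is positive definite, and $\overline{\rho}(\theta)$ is unchanged.

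Next I would apply the quantum channel $\mathcal{N}\coloneqq\Tr_{X}$ that discards the classical register, noting that $\mathcal{N}(\rho_{XA}(\theta))=\sum_{x\in\mathcal{X}}p(x)\rho_{x}(\theta)=\overline{\rho}(\theta)$ for every $\theta\in\Theta$. Since $I_{\alpha,z}$ is the information matrix induced by the $\alpha$-$z$ R\'enyi relative entropy $D_{\alpha,z}$, and since $D_{\alpha,z}$ obeys the data-processing inequality exactly for the values of $\alpha$ and $z$ listed in Fact~\ref{fact:a-z-data-proc}, the induced information matrix obeys the data-processing inequality in the Loewner order for those same values (see, e.g., \cite[Proposition~4]{Minervini2025}). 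Applying this to the family $\left(\rho_{XA}(\theta)\right)_{\theta\in\Theta}$ under the channel $\mathcal{N}$ gives
\[
I_{\alpha,z}\!\left(\theta;(\rho_{XA}(\theta))_{\theta\in\Theta}\right)\geq I_{\alpha,z}\!\left(\theta;(\overline{\rho}(\theta))_{\theta\in\Theta}\right),
\]
and combining this with the decomposition above yields the desired convexity inequality. The inequalities \eqref{eq:convexity-KM} and \eqref{eq:convexity-RLD} follow by the identical argument: $I_{\KM}$ is induced by the Umegaki relative entropy, which always obeys data processing, while $I_{\RLD}$ is induced, for instance, by the Belavkin--Staszewski relative entropy (Theorem~\ref{thm:belavski-RLD}) or by the geometric R\'enyi relative entropy $\widehat{D}_{\alpha}$ with $\alpha\in(0,1)$ (Theorem~\ref{thm:geometric-Renyi-to-RLD}), each of which obeys data processing.

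I do not expect a genuine obstacle, since the proof is a clean composition of two already-established facts. The only points demanding a little care are confirming that $\theta$-independence of the classical marginal forces $I_{F}=0$ (immediate from \eqref{eq:classical-fisher-information-matrix}), invoking the data-processing inequality for the information matrix only in the $(\alpha,z)$ regime where $D_{\alpha,z}$ itself satisfies it (which is exactly the hypothesis of the corollary), and the minor bookkeeping for vanishing $p(x)$, handled by passing to the support of $p$.
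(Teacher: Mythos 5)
Your proposal is correct and follows essentially the same route as the paper's own proof: apply Theorem~\ref{thm:cq-decomposition} to the classical--quantum family with $\theta$-independent classical marginal $p$ (so that $I_{F}=0$), then invoke the data-processing inequality under the partial trace $\Tr_{X}$. The extra remarks about restricting to the support of $p$ and about which relative entropies induce $I_{\KM}$ and $I_{\RLD}$ are harmless additions to the same argument.
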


\begin{proof}
This is a consequence of Theorem~\ref{thm:cq-decomposition} and
the data-processing inequality. Indeed, we can apply Theorem~\ref{thm:cq-decomposition}
to the following classical-quantum state family:
\begin{equation}
\sigma_{XA}(\theta)\coloneqq\sum_{x\in\mathcal{X}}p(x)|x\rangle\!\langle x|\otimes\rho_{x}(\theta),
\end{equation}
to conclude that
\begin{align}
I_{\alpha,z}(\theta;(\sigma_{XA}(\theta))_{\theta\in\Theta}) & =I_{F}(\theta;(p)_{\theta\in\Theta})+\sum_{x\in\mathcal{X}}p(x)I_{\alpha,z}(\theta;(\rho_{x}(\theta))_{\theta\in\Theta})\\
 & =\sum_{x\in\mathcal{X}}p(x)I_{\alpha,z}(\theta;(\rho_{x}(\theta))_{\theta\in\Theta}),
\end{align}
while noting that $I_{F}(\theta;(p)_{\theta\in\Theta})=0$ because
the family $(p)_{\theta\in\Theta}$ has no dependence on the parameter
vector $\theta$. Applying the data-processing inequality for $I_{\alpha,z}$,
we then find that
\begin{align}
I_{\alpha,z}(\theta;(\sigma_{XA}(\theta))_{\theta\in\Theta}) & \geq I_{\alpha,z}(\theta;(\Tr_{X}[\sigma_{XA}(\theta)])_{\theta\in\Theta})\\
 & =I_{\alpha,z}(\theta;(\overline{\rho}(\theta))_{\theta\in\Theta}),
\end{align}
because $\Tr_{X}[\sigma_{XA}(\theta)]=\overline{\rho}(\theta)$. Proofs
for \eqref{eq:convexity-KM} and \eqref{eq:convexity-RLD} follow
similarly.
\end{proof}

\section{Conclusion}

\label{sec:Conclusion}

In summary, this paper explored quantum generalizations of the Fisher
information matrix that are derived from quantum R\'enyi relative
entropies, including the log-Euclidean, geometric, and $\alpha$-$z$
R\'enyi relative entropies. Among the various results of this paper,
key contributions include a detailed proof of the formula in Theorem~\ref{thm:fisher-info-from-alpha-z}
for the $\alpha$-$z$ information matrix, the formula in Theorem~\ref{thm:QBM-a-z-formula}
for the $\alpha$-$z$ information matrix of parameterized thermal
states, the formula in Theorem~\ref{thm:time-evolved-a-z-formula}
for the $\alpha$-$z$ information matrix of time-evolved states,
and orderings for the Petz-- and sandwiched R\'enyi information
matrices (Theorem~\ref{thm:ordering-Petz-Renyi} and Theorem~\ref{thm:ordering-sandwiched-Renyi}).

Going forward from here, it is an interesting open question to determine
the full range of $\alpha,z>0$ for which the data processing inequality
holds for the $\alpha$-$z$ information matrix. Fact~\ref{fact:a-z-data-proc}
provides sufficient conditions on $\alpha,z>0$ for which it holds,
but Theorem~\ref{thm:log-Euclidean-information-matrix} implies that
these conditions are not necessary. By \cite[Theorems~3 and 5]{Petz1996},
determining the full range of $\alpha,z>0$ for which data processing
holds is equivalent to determining the full range of $\alpha,z>0$
for which the function in Corollary \ref{cor:operator-monotone-a-z}
is operator monotone. It is also an intriguing open question to determine
if there is an appealing expression for the $\alpha$-$z$ information
matrices of parameterized thermal states and time-evolved states,
like those in Theorem~\ref{thm:QBM-a-z-formula} and Theorem~\ref{thm:time-evolved-a-z-formula},
for $\alpha>1$.

\medskip

\textit{Acknowledgements}---I am grateful to Milan Mosonyi for helpful
comments on the manuscript, and I thank Hami Mehrabi for a helpful
discussion. I am also grateful to Michele Minervini and Dhrumil Patel
for our collaborations on \cite{Patel2024a,Patel2024,Minervini2025,Minervini2025a},
which were foundational for the present paper. I acknowledge support
from the National Science Foundation under grant no.~2329662 and
from the Cornell School of Electrical and Computer Engineering.

\bibliographystyle{alphaurl}
\bibliography{Ref}

\appendix

\section{Proofs of Equations \eqref{eq:classical-renyi-fisher} and \eqref{eq:classical-rel-ent-fisher}}

\label{app:classical-fisher-from-renyi}This appendix provides short
proofs of \eqref{eq:classical-renyi-fisher} and \eqref{eq:classical-rel-ent-fisher},
when the parameterized family $\left(p_{\theta}\right)_{\theta\in\Theta}$
of probability distributions are in the interior of the simplex (i.e.,
$p_{\theta}(x)>0$ for all $x\in\mathcal{X}$).

Beginning with \eqref{eq:classical-renyi-fisher}, consider that
\begin{align}
 & \frac{\partial^{2}}{\partial\varepsilon_{i}\partial\varepsilon_{j}}D_{\alpha}(p_{\theta}\|p_{\theta+\varepsilon})\nonumber \\
 & =\frac{\partial^{2}}{\partial\varepsilon_{i}\partial\varepsilon_{j}}\frac{1}{\alpha-1}\ln\!\left(\sum_{x\in\mathcal{X}}p_{\theta}(x)^{\alpha}p_{\theta+\varepsilon}(x)^{1-\alpha}\right)\\
 & =\frac{1}{\alpha-1}\frac{\partial}{\partial\varepsilon_{i}}\left(\frac{\sum_{x\in\mathcal{X}}p_{\theta}(x)^{\alpha}\frac{\partial}{\partial\varepsilon_{j}}\left(p_{\theta+\varepsilon}(x)^{1-\alpha}\right)}{\sum_{x\in\mathcal{X}}p_{\theta}(x)^{\alpha}p_{\theta+\varepsilon}(x)^{1-\alpha}}\right)\\
 & =\frac{\sum_{x\in\mathcal{X}}p_{\theta}(x)^{\alpha}\frac{\partial^{2}}{\partial\varepsilon_{i}\partial\varepsilon_{j}}\left(p_{\theta+\varepsilon}(x)^{1-\alpha}\right)}{\left(\alpha-1\right)\sum_{x\in\mathcal{X}}p_{\theta}(x)^{\alpha}p_{\theta+\varepsilon}(x)^{1-\alpha}}\nonumber \\
 & \qquad-\frac{\left[\sum_{x\in\mathcal{X}}p_{\theta}(x)^{\alpha}\frac{\partial}{\partial\varepsilon_{i}}\left(p_{\theta+\varepsilon}(x)^{1-\alpha}\right)\right]\left[\sum_{x\in\mathcal{X}}p_{\theta}(x)^{\alpha}\frac{\partial}{\partial\varepsilon_{j}}\left(p_{\theta+\varepsilon}(x)^{1-\alpha}\right)\right]}{\left(\alpha-1\right)\left(\sum_{x\in\mathcal{X}}p_{\theta}(x)^{\alpha}p_{\theta+\varepsilon}(x)^{1-\alpha}\right)^{2}}.
\end{align}
It then follows that
\begin{align}
 & \left.\frac{\partial^{2}}{\partial\varepsilon_{i}\partial\varepsilon_{j}}D_{\alpha}(p_{\theta}\|p_{\theta+\varepsilon})\right|_{\varepsilon=0}\nonumber \\
 & =\left.\frac{\sum_{x\in\mathcal{X}}p_{\theta}(x)^{\alpha}\frac{\partial^{2}}{\partial\varepsilon_{i}\partial\varepsilon_{j}}\left(p_{\theta+\varepsilon}(x)^{1-\alpha}\right)}{\left(\alpha-1\right)\sum_{x\in\mathcal{X}}p_{\theta}(x)^{\alpha}p_{\theta+\varepsilon}(x)^{1-\alpha}}\right|_{\varepsilon=0}\nonumber \\
 & \qquad-\left.\frac{\left[\sum_{x\in\mathcal{X}}p_{\theta}(x)^{\alpha}\frac{\partial}{\partial\varepsilon_{i}}\left(p_{\theta+\varepsilon}(x)^{1-\alpha}\right)\right]\left[\sum_{x\in\mathcal{X}}p_{\theta}(x)^{\alpha}\frac{\partial}{\partial\varepsilon_{j}}\left(p_{\theta+\varepsilon}(x)^{1-\alpha}\right)\right]}{\left(\alpha-1\right)\left(\sum_{x\in\mathcal{X}}p_{\theta}(x)^{\alpha}p_{\theta+\varepsilon}(x)^{1-\alpha}\right)^{2}}\right|_{\varepsilon=0}\\
 & =\frac{\sum_{x\in\mathcal{X}}p_{\theta}(x)^{\alpha}\left.\frac{\partial^{2}}{\partial\varepsilon_{i}\partial\varepsilon_{j}}\left(p_{\theta+\varepsilon}(x)^{1-\alpha}\right)\right|_{\varepsilon=0}}{\left(\alpha-1\right)\sum_{x\in\mathcal{X}}p_{\theta}(x)^{\alpha}p_{\theta}(x)^{1-\alpha}}\nonumber \\
 & \qquad-\frac{\left[\sum_{x\in\mathcal{X}}p_{\theta}(x)^{\alpha}\left.\frac{\partial}{\partial\varepsilon_{i}}\left(p_{\theta+\varepsilon}(x)^{1-\alpha}\right)\right|_{\varepsilon=0}\right]\left[\sum_{x\in\mathcal{X}}p_{\theta}(x)^{\alpha}\left.\frac{\partial}{\partial\varepsilon_{j}}\left(p_{\theta+\varepsilon}(x)^{1-\alpha}\right)\right|_{\varepsilon=0}\right]}{\left(\alpha-1\right)\left(\sum_{x\in\mathcal{X}}p_{\theta}(x)^{\alpha}p_{\theta}(x)^{1-\alpha}\right)^{2}}\\
 & =\frac{1}{\alpha-1}\sum_{x\in\mathcal{X}}p_{\theta}(x)^{\alpha}\frac{\partial^{2}}{\partial\theta_{i}\partial\theta_{j}}\left(p_{\theta}(x)^{1-\alpha}\right)\nonumber \\
 & \qquad-\frac{1}{\alpha-1}\left[\sum_{x\in\mathcal{X}}p_{\theta}(x)^{\alpha}\frac{\partial}{\partial\theta_{i}}\left(p_{\theta}(x)^{1-\alpha}\right)\right]\left[\sum_{x\in\mathcal{X}}p_{\theta}(x)^{\alpha}\frac{\partial}{\partial\theta_{j}}\left(p_{\theta}(x)^{1-\alpha}\right)\right].
\end{align}
Now consider that
\begin{align}
\sum_{x\in\mathcal{X}}p_{\theta}(x)^{\alpha}\frac{\partial}{\partial\theta_{i}}\left(p_{\theta}(x)^{1-\alpha}\right) & =\left(1-\alpha\right)\sum_{x\in\mathcal{X}}p_{\theta}(x)^{\alpha}p_{\theta}(x)^{-\alpha}\frac{\partial}{\partial\theta_{i}}\left(p_{\theta}(x)\right)\\
 & =\left(1-\alpha\right)\sum_{x\in\mathcal{X}}\frac{\partial}{\partial\theta_{i}}p_{\theta}(x)\\
 & =\left(1-\alpha\right)\frac{\partial}{\partial\theta_{i}}\sum_{x\in\mathcal{X}}p_{\theta}(x)\\
 & =\left(1-\alpha\right)\frac{\partial}{\partial\theta_{i}}(1)\\
 & =0.
\end{align}
Furthermore, 
\begin{align}
 & \sum_{x\in\mathcal{X}}p_{\theta}(x)^{\alpha}\frac{\partial^{2}}{\partial\theta_{i}\partial\theta_{j}}\left(p_{\theta}(x)^{1-\alpha}\right)\nonumber \\
 & =\left(1-\alpha\right)\sum_{x\in\mathcal{X}}p_{\theta}(x)^{\alpha}\frac{\partial}{\partial\theta_{i}}\left(p_{\theta}(x)^{-\alpha}\frac{\partial}{\partial\theta_{j}}p_{\theta}(x)\right)\\
 & =\left(1-\alpha\right)\sum_{x\in\mathcal{X}}p_{\theta}(x)^{\alpha}\left(\frac{\partial}{\partial\theta_{i}}p_{\theta}(x)^{-\alpha}\frac{\partial}{\partial\theta_{j}}p_{\theta}(x)+p_{\theta}(x)^{-\alpha}\frac{\partial^{2}}{\partial\theta_{i}\partial\theta_{j}}p_{\theta}(x)\right)\\
 & =\alpha\left(\alpha-1\right)\sum_{x\in\mathcal{X}}p_{\theta}(x)^{\alpha}p_{\theta}(x)^{-\alpha-1}\left(\frac{\partial}{\partial\theta_{i}}p_{\theta}(x)\right)\left(\frac{\partial}{\partial\theta_{j}}p_{\theta}(x)\right)\nonumber \\
 & \qquad+\left(1-\alpha\right)\sum_{x\in\mathcal{X}}p_{\theta}(x)^{\alpha}p_{\theta}(x)^{-\alpha}\frac{\partial^{2}}{\partial\theta_{i}\partial\theta_{j}}p_{\theta}(x)\\
 & =\alpha\left(\alpha-1\right)\sum_{x\in\mathcal{X}}p_{\theta}(x)^{-1}\left(\frac{\partial}{\partial\theta_{i}}p_{\theta}(x)\right)\left(\frac{\partial}{\partial\theta_{j}}p_{\theta}(x)\right)+\left(1-\alpha\right)\sum_{x\in\mathcal{X}}\frac{\partial^{2}}{\partial\theta_{i}\partial\theta_{j}}p_{\theta}(x)\\
 & =\alpha\left(\alpha-1\right)\sum_{x\in\mathcal{X}}p_{\theta}(x)^{-1}\left(\frac{\partial}{\partial\theta_{i}}p_{\theta}(x)\right)\left(\frac{\partial}{\partial\theta_{j}}p_{\theta}(x)\right)+\left(1-\alpha\right)\frac{\partial^{2}}{\partial\theta_{i}\partial\theta_{j}}\sum_{x\in\mathcal{X}}p_{\theta}(x)\\
 & =\alpha\left(\alpha-1\right)\sum_{x\in\mathcal{X}}p_{\theta}(x)^{-1}\left(\frac{\partial}{\partial\theta_{i}}p_{\theta}(x)\right)\left(\frac{\partial}{\partial\theta_{j}}p_{\theta}(x)\right).
\end{align}
Thus, we conclude that
\begin{equation}
\left.\frac{\partial^{2}}{\partial\varepsilon_{i}\partial\varepsilon_{j}}D_{\alpha}(p_{\theta}\|p_{\theta+\varepsilon})\right|_{\varepsilon=0}=\alpha\sum_{x\in\mathcal{X}}p_{\theta}(x)^{-1}\left(\frac{\partial}{\partial\theta_{i}}p_{\theta}(x)\right)\left(\frac{\partial}{\partial\theta_{j}}p_{\theta}(x)\right),
\end{equation}
which is equivalent to the desired equality in \eqref{eq:classical-renyi-fisher}.

Now let us prove \eqref{eq:classical-rel-ent-fisher}. Consider that
\begin{align}
\frac{\partial^{2}}{\partial\varepsilon_{i}\partial\varepsilon_{j}}D(p_{\theta}\|p_{\theta+\varepsilon}) & =\frac{\partial^{2}}{\partial\varepsilon_{i}\partial\varepsilon_{j}}\sum_{x\in\mathcal{X}}p_{\theta}(x)\ln\!\left(\frac{p_{\theta}(x)}{p_{\theta+\varepsilon}(x)}\right)\\
 & =\frac{\partial^{2}}{\partial\varepsilon_{i}\partial\varepsilon_{j}}\left(\sum_{x\in\mathcal{X}}p_{\theta}(x)\ln\!\left(p_{\theta}(x)\right)-\sum_{x\in\mathcal{X}}p_{\theta}(x)\ln\!\left(p_{\theta+\varepsilon}(x)\right)\right)\\
 & =-\sum_{x\in\mathcal{X}}p_{\theta}(x)\frac{\partial^{2}}{\partial\varepsilon_{i}\partial\varepsilon_{j}}\ln\!\left(p_{\theta+\varepsilon}(x)\right)\\
 & =-\sum_{x\in\mathcal{X}}p_{\theta}(x)\frac{\partial}{\partial\varepsilon_{i}}\left(\frac{\frac{\partial}{\partial\varepsilon_{j}}p_{\theta+\varepsilon}(x)}{p_{\theta+\varepsilon}(x)}\right)\\
 & =\sum_{x\in\mathcal{X}}p_{\theta}(x)\left(-\frac{\frac{\partial^{2}}{\partial\varepsilon_{i}\partial\varepsilon_{j}}p_{\theta+\varepsilon}(x)}{p_{\theta+\varepsilon}(x)}+\frac{\left(\frac{\partial}{\partial\varepsilon_{i}}p_{\theta+\varepsilon}(x)\right)\left(\frac{\partial}{\partial\varepsilon_{j}}p_{\theta+\varepsilon}(x)\right)}{\left(p_{\theta+\varepsilon}(x)\right)^{2}}\right).
\end{align}
Then it follows that
\begin{align}
 & \left.\frac{\partial^{2}}{\partial\varepsilon_{i}\partial\varepsilon_{j}}D(p_{\theta}\|p_{\theta+\varepsilon})\right|_{\varepsilon=0}\\
 & =\left.\sum_{x\in\mathcal{X}}p_{\theta}(x)\left(-\frac{\frac{\partial^{2}}{\partial\varepsilon_{i}\partial\varepsilon_{j}}p_{\theta+\varepsilon}(x)}{p_{\theta+\varepsilon}(x)}+\frac{\left(\frac{\partial}{\partial\varepsilon_{i}}p_{\theta+\varepsilon}(x)\right)\left(\frac{\partial}{\partial\varepsilon_{j}}p_{\theta+\varepsilon}(x)\right)}{\left(p_{\theta+\varepsilon}(x)\right)^{2}}\right)\right|_{\varepsilon=0}\\
 & =\sum_{x\in\mathcal{X}}p_{\theta}(x)\left(-\frac{\left.\frac{\partial^{2}}{\partial\varepsilon_{i}\partial\varepsilon_{j}}p_{\theta+\varepsilon}(x)\right|_{\varepsilon=0}}{p_{\theta}(x)}+\frac{\left(\left.\frac{\partial}{\partial\varepsilon_{i}}p_{\theta+\varepsilon}(x)\right|_{\varepsilon=0}\right)\left(\left.\frac{\partial}{\partial\varepsilon_{j}}p_{\theta+\varepsilon}(x)\right|_{\varepsilon=0}\right)}{\left(p_{\theta}(x)\right)^{2}}\right)\\
 & =\sum_{x\in\mathcal{X}}p_{\theta}(x)\left(-\frac{\frac{\partial^{2}}{\partial\theta_{i}\partial\theta_{j}}p_{\theta}(x)}{p_{\theta}(x)}+\frac{\left(\frac{\partial}{\partial\theta_{i}}p_{\theta}(x)\right)\left(\frac{\partial}{\partial\theta_{j}}p_{\theta}(x)\right)}{\left(p_{\theta}(x)\right)^{2}}\right)\\
 & =-\sum_{x\in\mathcal{X}}\frac{\partial^{2}}{\partial\theta_{i}\partial\theta_{j}}p_{\theta}(x)+\sum_{x\in\mathcal{X}}p_{\theta}(x)^{-1}\left(\frac{\partial}{\partial\theta_{i}}p_{\theta}(x)\right)\left(\frac{\partial}{\partial\theta_{j}}p_{\theta}(x)\right)\\
 & =-\frac{\partial^{2}}{\partial\theta_{i}\partial\theta_{j}}\sum_{x\in\mathcal{X}}p_{\theta}(x)+\sum_{x\in\mathcal{X}}p_{\theta}(x)^{-1}\left(\frac{\partial}{\partial\theta_{i}}p_{\theta}(x)\right)\left(\frac{\partial}{\partial\theta_{j}}p_{\theta}(x)\right)\\
 & =\sum_{x\in\mathcal{X}}p_{\theta}(x)^{-1}\left(\frac{\partial}{\partial\theta_{i}}p_{\theta}(x)\right)\left(\frac{\partial}{\partial\theta_{j}}p_{\theta}(x)\right),
\end{align}
thus completing the proof of \eqref{eq:classical-rel-ent-fisher}.

\section{Review: Matrix derivatives from divided differences}

\label{app:Review:-Matrix-derivatives}This appendix provides a review
of the method of divided differences for calculating matrix derivatives.
See also \cite[Section~V.3]{Bhatia1997} and \cite[Sections~2.2 and 2.3]{Hiai2010}.

For a general $d$-dimensional matrix $A(x)$, the matrix derivative
$\frac{\partial}{\partial x}A(x)$ is defined as
\begin{equation}
\frac{\partial}{\partial x}A(x)\coloneqq\begin{bmatrix}\frac{\partial}{\partial x}a_{1,1}(x) & \frac{\partial}{\partial x}a_{1,2}(x) & \cdots & \frac{\partial}{\partial x}a_{1,d}(x)\\
\frac{\partial}{\partial x}a_{2,1}(x) & \frac{\partial}{\partial x}a_{2,2}(x) &  & \frac{\partial}{\partial x}a_{2,d}(x)\\
\vdots &  & \ddots & \vdots\\
\frac{\partial}{\partial x}a_{d,1}(x) & \frac{\partial}{\partial x}a_{d,2}(x) & \cdots & \frac{\partial}{\partial x}a_{d,d}(x)
\end{bmatrix}.
\end{equation}
where the matrix elements of $A(x)$ are denoted by $a_{i,j}(x)$.
From this, one can deduce that the product rule holds for matrices
$A(x)$ and $B(x)$:
\begin{equation}
\frac{\partial}{\partial x}\left[A(x)B(x)\right]=\left[\frac{\partial}{\partial x}A(x)\right]B(x)+A(x)\left[\frac{\partial}{\partial x}B(x)\right].
\end{equation}

\subsection{First derivative}

Let $A(x)$ be a Hermitian matrix parameterized by $x\in\mathbb{R}$.
Suppose that $A(x)$ has a spectral decomposition as follows:
\begin{equation}
A(x)=\sum_{\ell}\lambda_{\ell}\Pi_{\ell},\label{eq:spectral-decomp-A-x}
\end{equation}
where we have suppressed the dependence of each eigenvalue $\lambda_{\ell}$
and eigenprojection $\Pi_{\ell}$ on the parameter $x$. 
\begin{thm}
\label{thm:divided-difference-matrix-deriv}Let $x\mapsto A(x)$ be
a Hermitian operator-valued function, and let $f$ be an analytic
function that has a power series expansion convergent on an open interval
$I\subseteq\mathbb{R}$, such that, for all $x$, all of the eigenvalues
of $A(x)$ are contained in $I$. Then the following equality holds:
\begin{equation}
\frac{\partial}{\partial x}f(A(x))=\sum_{\ell,m}f^{[1]}(\lambda_{\ell},\lambda_{m})\Pi_{\ell}\left(\frac{\partial}{\partial x}A(x)\right)\Pi_{m},
\end{equation}
where the first divided difference function $f^{[1]}(y_{1},y_{2})$
is defined for $y_{1},y_{2}\in I$ as
\begin{equation}
f^{[1]}(y_{1},y_{2})\coloneqq\begin{cases}
f'(y_{1}) & :y_{1}=y_{2}\\
\frac{f(y_{1})-f(y_{2})}{y_{1}-y_{2}} & :y_{1}\neq y_{2}
\end{cases}.
\end{equation}
\end{thm}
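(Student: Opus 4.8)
The plan is to establish the formula first for monomials $f(t)=t^{n}$, then extend by linearity to polynomials, and finally pass to general analytic $f$ by a uniform-convergence argument applied to the power series.

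First I would show, by induction on $n$ using the matrix product rule recalled at the start of this appendix, that
\[
\frac{\partial}{\partial x}A(x)^{n}=\sum_{k=0}^{n-1}A(x)^{k}\left(\frac{\partial}{\partial x}A(x)\right)A(x)^{n-1-k}.
\]
Fixing $x$ and using the spectral decomposition $A(x)=\sum_{\ell}\lambda_{\ell}\Pi_{\ell}$ together with $\Pi_{\ell}A(x)^{k}=\lambda_{\ell}^{k}\Pi_{\ell}$ and $A(x)^{k}\Pi_{m}=\lambda_{m}^{k}\Pi_{m}$, sandwiching between $\Pi_{\ell}$ and $\Pi_{m}$ gives
\[
\Pi_{\ell}\left(\frac{\partial}{\partial x}A(x)^{n}\right)\Pi_{m}=\left(\sum_{k=0}^{n-1}\lambda_{\ell}^{k}\lambda_{m}^{n-1-k}\right)\Pi_{\ell}\left(\frac{\partial}{\partial x}A(x)\right)\Pi_{m}.
\]
The key observation is that the scalar prefactor is exactly the first divided difference of $t\mapsto t^{n}$: the geometric sum $\sum_{k=0}^{n-1}\lambda_{\ell}^{k}\lambda_{m}^{n-1-k}$ equals $\frac{\lambda_{\ell}^{n}-\lambda_{m}^{n}}{\lambda_{\ell}-\lambda_{m}}$ when $\lambda_{\ell}\neq\lambda_{m}$ and equals $n\lambda^{n-1}$ when $\lambda_{\ell}=\lambda_{m}=\lambda$. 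Inserting the resolution of identity $\sum_{\ell}\Pi_{\ell}=I$ on the left and right then yields the claimed formula for $f(t)=t^{n}$, and linearity of both $\partial/\partial x$ and the operation $f\mapsto f^{[1]}$ extends it to arbitrary polynomials.

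For general analytic $f(t)=\sum_{n\geq0}c_{n}t^{n}$ convergent on $I$, I would fix $x_{0}$, choose a compact interval $[a,b]\subset I$ containing the spectra of $A(x)$ for $x$ in a neighborhood of $x_{0}$ (possible by continuity), and show that the partial sums $f_{N}(A(x))=\sum_{n=0}^{N}c_{n}A(x)^{n}$ and their derivatives $\frac{\partial}{\partial x}f_{N}(A(x))$ converge uniformly in $x$ near $x_{0}$, using that a power series and its formal derivative share the same radius of convergence together with local bounds on $\|A(x)\|$ and $\|\partial_{x}A(x)\|$. Term-by-term differentiation is then legitimate, and since $f^{[1]}(y_{1},y_{2})=\sum_{n}c_{n}(t^{n})^{[1]}(y_{1},y_{2})$ with this series converging on $[a,b]\times[a,b]$, the polynomial identity passes to the limit, giving the theorem.

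The main obstacle is this last step: rigorously justifying the interchange of differentiation with the infinite sum. The monomial route deliberately avoids differentiating the eigenprojections $\Pi_{\ell}$, which is essential since $\Pi_{\ell}$ and $\lambda_{\ell}$ need not be differentiable in $x$ when eigenvalues cross; nonetheless one still must control $\frac{\partial}{\partial x}f_{N}(A(x))$ uniformly on a neighborhood of $x_{0}$ rather than merely pointwise. The estimate $\|A(x)^{k}(\partial_{x}A(x))A(x)^{n-1-k}\|\leq\|A(x)\|^{n-1}\|\partial_{x}A(x)\|$ combined with convergence of $\sum_{n}n|c_{n}|r^{n-1}$ for $r$ strictly inside the radius of convergence suffices to close this gap.
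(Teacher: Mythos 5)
Your proposal follows essentially the same route as the paper: the product rule gives $\partial_x A(x)^n=\sum_{k=0}^{n-1}A^k(\partial_x A)A^{n-1-k}$, the spectral decomposition turns the coefficient into the geometric sum $\sum_{k=0}^{n-1}\lambda_\ell^k\lambda_m^{n-1-k}=f_{t^n}^{[1]}(\lambda_\ell,\lambda_m)$, and the result is extended to $f$ by summing the power series term by term; your explicit uniform-convergence justification of that last interchange is a welcome addition that the paper leaves implicit. The only small caveat is that you expand $f$ about $0$, whereas the theorem (and the paper's use of it for $\ln$ and $t^r$) requires expanding about a general center $c\in I$ chosen so that the spectrum lies in the disc of convergence; the fix is immediate, replacing $t^n$ by $(t-c)^n$ and noting $\partial_x\left(A(x)-cI\right)=\partial_x A(x)$.
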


\begin{proof}
By applying the product rule for differentiation, the following chain
of equalities holds for all $n\in\mathbb{N}$: 
\begin{align}
\frac{\partial}{\partial x}\left(A(x)^{n}\right) & =\sum_{k=0}^{n-1}A(x)^{k}\left(\frac{\partial}{\partial x}A(x)\right)A(x)^{n-k-1}\\
 & =\sum_{k=0}^{n-1}\left(\sum_{\ell}\lambda_{\ell}\Pi_{\ell}\right)^{k}\left(\frac{\partial}{\partial x}A(x)\right)\left(\sum_{m}\lambda_{m}\Pi_{m}\right)^{n-k-1}\\
 & =\sum_{k=0}^{n-1}\left(\sum_{\ell}\lambda_{\ell}^{k}\Pi_{\ell}\right)\left(\frac{\partial}{\partial x}A(x)\right)\left(\sum_{m}\lambda_{m}^{n-k-1}\Pi_{m}\right)\\
 & =\sum_{\ell,m}\left(\sum_{k=0}^{n-1}\lambda_{\ell}^{k}\lambda_{m}^{n-k-1}\right)\Pi_{\ell}\left(\frac{\partial}{\partial x}A(x)\right)\Pi_{m}\\
 & =\sum_{\ell,m}f_{x^{n}}^{\left[1\right]}(\lambda_{\ell},\lambda_{m})\Pi_{\ell}\left(\frac{\partial}{\partial x}A(x)\right)\Pi_{m}\label{eq:power-function-simple-derivative}
\end{align}
where the last equality follows because
\begin{equation}
\sum_{k=0}^{n-1}\lambda_{\ell}^{k}\lambda_{m}^{n-k-1}=f_{x^{n}}^{\left[1\right]}(\lambda_{\ell},\lambda_{m})\coloneqq\begin{cases}
n\lambda_{\ell}^{n-1} & :\lambda_{\ell}=\lambda_{m}\\
\frac{\lambda_{\ell}^{n}-\lambda_{m}^{n}}{\lambda_{\ell}-\lambda_{m}} & :\lambda_{\ell}\neq\lambda_{m}
\end{cases}.
\end{equation}
The notation $f_{x^{n}}^{\left[1\right]}(\lambda_{\ell},\lambda_{m})$
means that this function is the first divided difference for $x\mapsto x^{n}$.
To see this, if $\lambda_{\ell}=\lambda_{m}$, then
\begin{align}
\sum_{k=0}^{n-1}\lambda_{\ell}^{k}\lambda_{m}^{n-k-1} & =\sum_{k=0}^{n-1}\lambda_{\ell}^{n-1}=n\lambda_{\ell}^{n-1},
\end{align}
and if $\lambda_{\ell}\neq\lambda_{m}$, then
\begin{align}
\sum_{k=0}^{n-1}\lambda_{\ell}^{k}\lambda_{m}^{n-k-1} & =\lambda_{m}^{n-1}\left[\sum_{k=0}^{n-1}\left(\frac{\lambda_{\ell}}{\lambda_{m}}\right)^{k}\right]\\
 & =\lambda_{m}^{n-1}\left(\frac{1-\left(\frac{\lambda_{\ell}}{\lambda_{m}}\right)^{n}}{1-\frac{\lambda_{\ell}}{\lambda_{m}}}\right)\\
 & =\frac{\lambda_{m}^{n}-\lambda_{\ell}^{n}}{\lambda_{m}-\lambda_{\ell}}.
\end{align}
Thus, $\sum_{k=0}^{n-1}\lambda_{\ell}^{k}\lambda_{m}^{n-k-1}$ is
indeed the first divided difference for the function $x\mapsto x^{n}$.

We now extend this development to an arbitrary analytic function that
has a power series expansion convergent on an open interval $I\subseteq\mathbb{R}$.
Toward this, let $y\in I$, and let $f$ be a function that has the
following power series expansion centered at $c\in I$ and convergent
on $I$:
\begin{equation}
f(y)=\sum_{n=0}^{\infty}a_{n}\left(y-c\right)^{n}.
\end{equation}
Then
\begin{equation}
f(A(x))=\sum_{n=0}^{\infty}a_{n}\left(A(x)-cI\right)^{n}.
\end{equation}
By applying \eqref{eq:power-function-simple-derivative} and noting
that $\frac{\partial}{\partial x}\left(A(x)-cI\right)=\frac{\partial}{\partial x}A(x)$
and $A(x)-cI=\sum_{\ell}\left(\lambda_{\ell}-c\right)\Pi_{\ell}$,
we conclude that
\begin{align}
\frac{\partial}{\partial x}f(A(x)) & =\frac{\partial}{\partial x}\sum_{n=0}^{\infty}a_{n}\left(A(x)-cI\right)^{n}\\
 & =\sum_{n=0}^{\infty}a_{n}\frac{\partial}{\partial x}\left(\left(A(x)-cI\right)^{n}\right)\\
 & =\sum_{n=0}^{\infty}a_{n}\sum_{\ell,m}f_{x^{n}}^{\left[1\right]}(\lambda_{\ell}-c,\lambda_{m}-c)\Pi_{\ell}\left(\frac{\partial}{\partial x}A(x)\right)\Pi_{m}\nonumber \\
 & =\sum_{\ell,m}\left(\sum_{n=0}^{\infty}a_{n}f_{x^{n}}^{\left[1\right]}(\lambda_{\ell}-c,\lambda_{m}-c)\right)\Pi_{\ell}\left(\frac{\partial}{\partial x}A(x)\right)\Pi_{m}\\
 & =\sum_{\ell,m}f^{[1]}(\lambda_{\ell},\lambda_{m})\Pi_{\ell}\left(\frac{\partial}{\partial x}A(x)\right)\Pi_{m}.
\end{align}
The last equality follows because, if $\lambda_{\ell}=\lambda_{m}$,
then
\begin{align}
\sum_{n=0}^{\infty}a_{n}f_{x^{n}}^{\left[1\right]}(\lambda_{\ell}-c,\lambda_{m}-c) & =\sum_{n=0}^{\infty}a_{n}n\left(\lambda_{\ell}-c\right)^{n-1}\\
 & =\sum_{n=0}^{\infty}a_{n}\frac{\partial}{\partial\lambda_{\ell}}\left(\lambda_{\ell}-c\right)^{n}\\
 & =\frac{\partial}{\partial\lambda_{\ell}}\sum_{n=0}^{\infty}a_{n}\left(\lambda_{\ell}-c\right)^{n}\\
 & =\frac{\partial}{\partial\lambda_{\ell}}f(\lambda_{\ell})\\
 & =f'(\lambda_{\ell}),
\end{align}
and if $\lambda_{\ell}\neq\lambda_{m}$, then
\begin{align}
 & \sum_{n=0}^{\infty}a_{n}f_{x^{n}}^{\left[1\right]}(\lambda_{\ell}-c,\lambda_{m}-c)\nonumber \\
 & =\sum_{n=0}^{\infty}a_{n}\left(\frac{\left(\lambda_{m}-c\right)^{n}-\left(\lambda_{\ell}-c\right)^{n}}{\lambda_{m}-\lambda_{\ell}}\right)\\
 & =\frac{\left(\sum_{n=0}^{\infty}a_{n}\left(\lambda_{m}-c\right)^{n}\right)-\left(\sum_{n=0}^{\infty}a_{n}\left(\lambda_{\ell}-c\right)^{n}\right)}{\lambda_{m}-\lambda_{\ell}}\\
 & =\frac{f(\lambda_{m})-f(\lambda_{\ell})}{\lambda_{m}-\lambda_{\ell}}\\
 & =f^{[1]}(\lambda_{\ell},\lambda_{m}).
\end{align}
This concludes the proof.
\end{proof}
\begin{cor}
\label{cor:derivative-in-trace}Let $x\mapsto A(x)$ be a Hermitian
operator-valued function, and let $f$ be an analytic function that
has a power series expansion convergent on an open interval $I\subseteq\mathbb{R}$,
such that, for all $x$, all of the eigenvalues of $A(x)$ are contained
in $I$. Then the following equality holds:
\begin{equation}
\Tr\!\left[\frac{\partial}{\partial x}f(A(x))\right]=\Tr\!\left[f'(A(x))\frac{\partial}{\partial x}A(x)\right].
\end{equation}
\end{cor}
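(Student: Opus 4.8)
The plan is to derive the identity directly from Theorem~\ref{thm:divided-difference-matrix-deriv} by taking the trace of the divided-difference formula and exploiting the orthogonality of the eigenprojections. First I would invoke the theorem to write
\begin{equation}
\frac{\partial}{\partial x}f(A(x))=\sum_{\ell,m}f^{[1]}(\lambda_{\ell},\lambda_{m})\Pi_{\ell}\left(\frac{\partial}{\partial x}A(x)\right)\Pi_{m},
\end{equation}
where $A(x)=\sum_{\ell}\lambda_{\ell}\Pi_{\ell}$ is the spectral decomposition and the dependence of $\lambda_{\ell},\Pi_{\ell}$ on $x$ is suppressed. Taking the trace of both sides and using linearity gives $\Tr[\partial_{x}f(A(x))]=\sum_{\ell,m}f^{[1]}(\lambda_{\ell},\lambda_{m})\Tr[\Pi_{\ell}(\partial_{x}A(x))\Pi_{m}]$.

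The key step is then to collapse the double sum to a single sum. By cyclicity of the trace, $\Tr[\Pi_{\ell}(\partial_{x}A(x))\Pi_{m}]=\Tr[\Pi_{m}\Pi_{\ell}(\partial_{x}A(x))]$, and since the eigenprojections are mutually orthogonal, $\Pi_{m}\Pi_{\ell}=\delta_{\ell,m}\Pi_{\ell}$, so only the diagonal terms $\ell=m$ survive. Using $f^{[1]}(\lambda_{\ell},\lambda_{\ell})=f'(\lambda_{\ell})$ from the definition of the first divided difference, this yields
\begin{equation}
\Tr\!\left[\frac{\partial}{\partial x}f(A(x))\right]=\sum_{\ell}f'(\lambda_{\ell})\Tr\!\left[\Pi_{\ell}\left(\frac{\partial}{\partial x}A(x)\right)\right].
\end{equation}

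Finally I would recognize the right-hand side as a trace involving $f'(A(x))$: since $f'(A(x))=\sum_{\ell}f'(\lambda_{\ell})\Pi_{\ell}$ by the spectral mapping theorem (valid because $f$ is analytic with the eigenvalues of $A(x)$ in the interval of convergence), we have $\Tr[f'(A(x))\,\partial_{x}A(x)]=\sum_{\ell}f'(\lambda_{\ell})\Tr[\Pi_{\ell}(\partial_{x}A(x))]$, which matches the displayed expression and completes the proof. I do not anticipate any real obstacle here; the only point requiring a little care is the justification that the orthogonality relation $\Pi_{m}\Pi_{\ell}=\delta_{\ell,m}\Pi_{\ell}$ applies, i.e., that the sum runs over distinct eigenvalues with their spectral projections, which is exactly the setup of \eqref{eq:spectral-decomp-A-x}.
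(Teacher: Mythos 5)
Your proposal is correct and follows essentially the same route as the paper's own proof: apply Theorem~\ref{thm:divided-difference-matrix-deriv}, use cyclicity of the trace together with orthogonality of the eigenprojections to kill the off-diagonal terms, identify $f^{[1]}(\lambda_{\ell},\lambda_{\ell})=f'(\lambda_{\ell})$, and resum to obtain $\Tr[f'(A(x))\,\partial_{x}A(x)]$. No gaps.
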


\begin{proof}
Applying Theorem~\ref{thm:divided-difference-matrix-deriv}, consider
that
\begin{align}
\Tr\!\left[\frac{\partial}{\partial x}f(A(x))\right] & =\Tr\!\left[\sum_{\ell,m}f^{[1]}(\lambda_{\ell},\lambda_{m})\Pi_{\ell}\left(\frac{\partial}{\partial x}A(x)\right)\Pi_{m}\right],\\
 & =\sum_{\ell,m}f^{[1]}(\lambda_{\ell},\lambda_{m})\Tr\!\left[\Pi_{m}\Pi_{\ell}\left(\frac{\partial}{\partial x}A(x)\right)\right]\\
 & =\sum_{\ell}f^{[1]}(\lambda_{\ell},\lambda_{\ell})\Tr\!\left[\Pi_{\ell}\left(\frac{\partial}{\partial x}A(x)\right)\right]\\
 & =\sum_{\ell}f'(\lambda_{\ell})\Tr\!\left[\Pi_{\ell}\left(\frac{\partial}{\partial x}A(x)\right)\right]\\
 & =\Tr\!\left[\sum_{\ell}f'(\lambda_{\ell})\Pi_{\ell}\left(\frac{\partial}{\partial x}A(x)\right)\right]\\
 & =\Tr\!\left[f'(A(x))\frac{\partial}{\partial x}A(x)\right],
\end{align}
thus concluding the proof.
\end{proof}
\begin{prop}
The following expression holds for a holomorphic function $f$ and
for a contour $\gamma$ containing all the eigenvalues of the Hermitian
operator-valued function $x\mapsto A(x)$:
\begin{equation}
\frac{\partial}{\partial x}f(A(x))=\frac{1}{2\pi i}\oint_{\gamma}dz\ f(z)\left(zI-A(x)\right)^{-1}\left(\frac{\partial}{\partial x}A(x)\right)\left(zI-A(x)\right)^{-1}.
\end{equation}
\end{prop}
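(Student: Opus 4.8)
The plan is to start from the Riesz--Dunford (holomorphic functional calculus) representation of $f(A(x))$ and then differentiate under the integral sign, using the elementary formula for the derivative of a resolvent. Since $f$ is holomorphic on a domain containing $\gamma$ and its interior, and $\gamma$ encircles every eigenvalue of $A(x)$, the functional calculus gives
\begin{equation}
f(A(x))=\frac{1}{2\pi i}\oint_{\gamma}dz\ f(z)\left(zI-A(x)\right)^{-1}.
\end{equation}
Because the eigenvalues of $A(x)$ depend continuously on $x$, encircling them by $\gamma$ is an open condition, so I may keep one fixed contour $\gamma$ valid for all $x$ in a small neighborhood of the point at which the derivative is being evaluated.

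Next I would compute the $x$-derivative of the resolvent $R(z,x)\coloneqq\left(zI-A(x)\right)^{-1}$. Differentiating the identity $\left(zI-A(x)\right)R(z,x)=I$ with the product rule for matrix-valued functions yields
\begin{equation}
-\left(\frac{\partial}{\partial x}A(x)\right)R(z,x)+\left(zI-A(x)\right)\left(\frac{\partial}{\partial x}R(z,x)\right)=0,
\end{equation}
and hence, after left-multiplying by $R(z,x)$,
\begin{equation}
\frac{\partial}{\partial x}R(z,x)=R(z,x)\left(\frac{\partial}{\partial x}A(x)\right)R(z,x).
\end{equation}
Substituting this into the derivative of the contour integral then gives exactly the claimed expression.

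The only real subtlety — and the main step to justify with care — is the interchange of $\frac{\partial}{\partial x}$ with $\oint_{\gamma}$. Here I would invoke the standard theorem on differentiation under the integral sign: $\gamma$ is a compact contour, $z\mapsto f(z)$ is bounded on $\gamma$, and the distance from $\gamma$ to the spectrum of $A(x)$ is bounded below uniformly for $x$ in a neighborhood, so $(z,x)\mapsto R(z,x)$ and its $x$-partial derivative are jointly continuous on $\gamma$ times that neighborhood; this supplies the domination/continuity hypotheses needed for the interchange. Once this is in place, the computation is routine.

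As an optional cross-check for consistency, I would note that inserting the spectral decomposition $A(x)=\sum_{\ell}\lambda_{\ell}\Pi_{\ell}$, so that $R(z,x)=\sum_{\ell}\left(z-\lambda_{\ell}\right)^{-1}\Pi_{\ell}$, turns the contour integral into $\sum_{\ell,m}\left(\frac{1}{2\pi i}\oint_{\gamma}dz\ \frac{f(z)}{\left(z-\lambda_{\ell}\right)\left(z-\lambda_{m}\right)}\right)\Pi_{\ell}\left(\frac{\partial}{\partial x}A(x)\right)\Pi_{m}$, and evaluating the scalar integral by the residue theorem recovers the first divided difference $f^{[1]}(\lambda_{\ell},\lambda_{m})$, reproducing the formula of Theorem~\ref{thm:divided-difference-matrix-deriv}.
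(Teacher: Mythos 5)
Your proposal is correct, but it runs in the opposite direction from the paper's argument. The paper takes Theorem~\ref{thm:divided-difference-matrix-deriv} as its starting point: it uses the Cauchy integral formula to show that the first divided difference admits the contour representation $f^{[1]}(a,b)=\frac{1}{2\pi i}\oint_{\gamma}dz\ \frac{f(z)}{\left(z-a\right)\left(z-b\right)}$, substitutes this into the divided-difference sum $\sum_{\ell,m}f^{[1]}(\lambda_{\ell},\lambda_{m})\Pi_{\ell}\left(\partial_{x}A(x)\right)\Pi_{m}$, and reassembles the projections into resolvents. You instead begin from the Riesz--Dunford representation $f(A(x))=\frac{1}{2\pi i}\oint_{\gamma}dz\ f(z)\left(zI-A(x)\right)^{-1}$, differentiate under the integral sign, and apply the resolvent identity $\frac{\partial}{\partial x}R(z,x)=R(z,x)\left(\frac{\partial}{\partial x}A(x)\right)R(z,x)$; your closing ``cross-check'' via the spectral decomposition is essentially the paper's proof run backwards. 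Your route is self-contained in that it does not lean on the divided-difference theorem at all, and it makes the analytic hypotheses (a fixed contour valid in a neighborhood of $x$, justification for interchanging $\partial_{x}$ with $\oint_{\gamma}$) explicit, which the paper leaves implicit; the paper's route is shorter given that Theorem~\ref{thm:divided-difference-matrix-deriv} has already been established and keeps the contour formula tied to the divided-difference machinery used throughout the appendix. Both are complete and correct.
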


\begin{proof}
The Cauchy integral formula implies that the following equality holds
for a holomorphic function $f$ and a contour $\gamma$ containing
$a\in\mathbb{C}$:
\begin{equation}
f(a)=\frac{1}{2\pi i}\oint_{\gamma}dz\ \frac{f(z)}{z-a}.
\end{equation}
We can then apply this to establish the following formula for the
first divided difference at $a,b\in\mathbb{C}$ and now with $\gamma$
a contour containing $a$ and $b$:
\begin{align}
\frac{f(a)-f(b)}{a-b} & =\left(\frac{1}{a-b}\right)\left(\frac{1}{2\pi i}\oint_{\gamma}dz\ \frac{f(z)}{z-a}-\frac{1}{2\pi i}\oint_{\gamma}dz\ \frac{f(z)}{z-b}\right)\\
 & =\left(\frac{1}{a-b}\right)\frac{1}{2\pi i}\oint_{\gamma}dz\ f(z)\left(\frac{1}{z-a}-\frac{1}{z-b}\right)\\
 & =\left(\frac{1}{a-b}\right)\frac{1}{2\pi i}\oint_{\gamma}dz\ f(z)\left(\frac{z-b-\left(z-a\right)}{\left(z-a\right)\left(z-b\right)}\right)\\
 & =\frac{1}{2\pi i}\oint_{\gamma}dz\ \frac{f(z)}{\left(z-a\right)\left(z-b\right)}.
\end{align}
Applying this and Theorem~\ref{thm:divided-difference-matrix-deriv},
we find that, when $\gamma$ is a contour containing all the eigenvalues
of $A(x)$,
\begin{align}
 & \frac{\partial}{\partial x}f(A(x))\nonumber \\
 & =\sum_{\ell,m}f^{[1]}(\lambda_{\ell},\lambda_{m})\Pi_{\ell}\left(\frac{\partial}{\partial x}A(x)\right)\Pi_{m}\\
 & =\sum_{\ell,m}\left(\frac{1}{2\pi i}\oint_{\gamma}dz\ \frac{f(z)}{\left(z-\lambda_{\ell}\right)\left(z-\lambda_{m}\right)}\right)\Pi_{\ell}\left(\frac{\partial}{\partial x}A(x)\right)\Pi_{m}\\
 & =\frac{1}{2\pi i}\oint_{\gamma}dz\ f(z)\left(\sum_{\ell}\frac{1}{\left(z-\lambda_{\ell}\right)}\Pi_{\ell}\right)\left(\frac{\partial}{\partial x}A(x)\right)\left(\sum_{m}\frac{1}{\left(z-\lambda_{m}\right)}\Pi_{m}\right)\\
 & =\frac{1}{2\pi i}\oint_{\gamma}dz\ f(z)\left(zI-A(x)\right)^{-1}\left(\frac{\partial}{\partial x}A(x)\right)\left(zI-A(x)\right)^{-1}.
\end{align}
This concludes the proof.
\end{proof}
We can also consider the case of operator monotone functions. Recall
from \cite[pp.~144--145]{Bhatia1997} the following representation
theorem regarding operator monotone functions:
\begin{thm}
A function $f\colon\left(0,\infty\right)\to\mathbb{R}$ is operator
monotone on $\left(0,\infty\right)$ if and only if it has the representation
\begin{equation}
f(t)=a+bt+\int_{0}^{\infty}d\mu(v)\ \frac{tv}{t+v},
\end{equation}
where $a\in\mathbb{R}$, $b\geq0$, and $\mu$ is a positive measure
on $\left(0,\infty\right)$ such that
\begin{equation}
\int_{0}^{\infty}d\mu(v)\ \frac{v}{1+v}<+\infty.
\end{equation}
If $f$ is operator monotone on $\left[0,\infty\right)$, then $a=f(0)$.
\end{thm}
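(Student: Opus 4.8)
The plan is to prove the two directions separately: the ``if'' direction is an elementary consequence of closure properties of operator monotone functions, while the ``only if'' direction is the substantive part, for which I would follow the classical route through L\"owner's theorem as developed in \cite[Chapters~V--VI]{Bhatia1997}.

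For sufficiency, suppose $f$ has the stated integral representation. The identity function $t\mapsto t$ and constant functions are trivially operator monotone on $(0,\infty)$, and for each fixed $v>0$ the function
\begin{equation}
t\mapsto\frac{tv}{t+v}=v-v^{2}\left(t+v\right)^{-1}
\end{equation}
is operator monotone because $A\mapsto\left(A+vI\right)^{-1}$ is operator monotone decreasing on positive definite $A$, so that $A\mapsto-v^{2}\left(A+vI\right)^{-1}$ is operator monotone increasing. The convergence condition $\int_{0}^{\infty}\frac{v}{1+v}\,d\mu(v)<\infty$ guarantees that $\int_{0}^{\infty}\frac{tv}{t+v}\,d\mu(v)$ is finite for each $t>0$ (near $v=0$ the integrand is comparable to $v$, and for large $v$ it is bounded by $t$ while $\mu$ has finite mass there), and, being a limit of positive combinations of operator monotone functions, the integral is itself operator monotone. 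Adding $a+bt$ with $b\geq0$ preserves this, so $f$ is operator monotone on $(0,\infty)$.

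For necessity, I would first record the bridge to analytic function theory. Operator monotonicity is equivalent to positive semidefiniteness of all L\"owner matrices $\big[f^{[1]}(\lambda_{\ell},\lambda_{m})\big]_{\ell,m=1}^{d}$ for finite tuples $\lambda_{1},\dots,\lambda_{d}\in(0,\infty)$; the differential form of this criterion is exactly Theorem~\ref{thm:divided-difference-matrix-deriv}, since for $A=\sum_{\ell}\lambda_{\ell}\Pi_{\ell}$ with distinct eigenvalues and $H$ the rank-one projection onto the all-ones vector one has $\left.\frac{d}{dt}f(A+tH)\right|_{t=0}=\sum_{\ell,m}f^{[1]}(\lambda_{\ell},\lambda_{m})\Pi_{\ell}H\Pi_{m}$, which coincides with the L\"owner matrix and is $\geq0$ because $A\leq A+tH$. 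From this point the argument is the classical one: positivity of all L\"owner matrices forces $f$ to extend to a Pick (Herglotz) function holomorphic on $\mathbb{C}\setminus(-\infty,0]$, i.e.\ $\operatorname{Im}f(z)\geq0$ whenever $\operatorname{Im}z>0$. Applying the Nevanlinna--Herglotz representation gives
\begin{equation}
f(z)=\alpha+\beta z+\int_{\mathbb{R}}\left(\frac{1}{\lambda-z}-\frac{\lambda}{1+\lambda^{2}}\right)d\nu(\lambda),
\end{equation}
with $\alpha\in\mathbb{R}$, $\beta\geq0$, $\int_{\mathbb{R}}\frac{d\nu(\lambda)}{1+\lambda^{2}}<\infty$. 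Since $f$ is holomorphic across all of $(0,\infty)$, the measure $\nu$ is supported on $(-\infty,0]$; substituting $\lambda=-v$, reorganizing the kernel into the form $\frac{zv}{z+v}$ modulo an affine term in $z$, and absorbing the affine contributions and any atom at $\lambda=0$ into $a+bz$, one obtains a positive measure $\mu$ on $(0,\infty)$ with $\int_{0}^{\infty}\frac{v}{1+v}\,d\mu(v)<\infty$ and the claimed representation.

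For the boundary statement, if $f$ is operator monotone on $[0,\infty)$ then $f(0)$ is finite, and letting $t\to0^{+}$ in the representation---using $bt\to0$, $\frac{tv}{t+v}\to0$, and dominated convergence for the integral---yields $f(0)=a$. The main obstacle is the passage from ``all L\"owner matrices are positive semidefinite'' to ``$f$ extends to a Pick function'': this is L\"owner's theorem proper, whose proof (via positivity of Pick/Hankel matrices and analytic continuation, or via the Bendat--Sherman or Kor\'anyi approaches) is lengthy and standard, so I would invoke it directly from \cite{Bhatia1997} rather than reprove it.
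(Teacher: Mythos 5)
The paper does not actually prove this theorem; it is recalled from \cite[pp.~144--145]{Bhatia1997} without argument, so there is no in-paper proof to compare against. Your sufficiency direction is complete and correct: the decomposition $\frac{tv}{t+v}=v-v^{2}\left(t+v\right)^{-1}$, operator antitonicity of the resolvent, and the domination $0\leq\frac{tv}{t+v}\leq2\max\left\{ 1,t\right\} \frac{v}{1+v}$ settle it. Your necessity direction also follows the standard road (positivity of L\"owner matrices, extension to a Pick function, Nevanlinna--Herglotz), and deferring L\"owner's theorem itself to the literature is reasonable.

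The genuine gap is your final step, ``absorbing the affine contributions and any atom at $\lambda=0$ into $a+bz$.'' An atom of $\nu$ at $\lambda=0$ contributes $-1/z$, which is not affine; and after substituting $\lambda=-v$ the kernel identity $\frac{1}{\lambda-z}-\frac{\lambda}{1+\lambda^{2}}=\frac{1}{v^{2}}\cdot\frac{zv}{z+v}-\frac{1}{v\left(1+v^{2}\right)}$ produces the measure $d\mu(v)=v^{-2}d\nu(-v)$, for which the Herglotz condition $\int\left(1+\lambda^{2}\right)^{-1}d\nu(\lambda)<\infty$ only yields $\int\frac{v^{2}}{1+v^{2}}d\mu(v)<\infty$; it gives neither $\int\frac{v}{1+v}d\mu(v)<\infty$ near $v=0$ nor finiteness of the constant $\int\frac{v}{1+v^{2}}d\mu(v)$ that must be folded into $a$. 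This cannot be patched as stated, because the ``only if'' direction on all of $\left(0,\infty\right)$ is actually false: any function of the displayed form satisfies $\lim_{t\to0^{+}}f(t)=a\in\mathbb{R}$ by dominated convergence (use $\frac{tv}{t+v}\leq\frac{v}{1+v}$ for $t\leq1$, since $t(1+v)\leq t+v$), whereas $f(t)=-1/t$ and $f(t)=\ln t$ are operator monotone on $\left(0,\infty\right)$ and tend to $-\infty$ at the origin. The representation, and your argument, become correct under the additional hypothesis $\lim_{t\to0^{+}}f(t)>-\infty$ (e.g., $f$ operator monotone on $\left[0,\infty\right)$), which is the reading that the theorem's final clause $a=f(0)$ presupposes and the only setting in which the paper later uses it: then monotone convergence as $t\downarrow0$ in the Herglotz formula rules out an atom at $\lambda=0$ and forces $\int\frac{d\nu(-v)}{v\left(1+v^{2}\right)}<\infty$, equivalently $\int\frac{v}{1+v^{2}}d\mu(v)<\infty$, which combined with $\int\frac{v^{2}}{1+v^{2}}d\mu(v)<\infty$ gives exactly $\int\frac{v}{1+v}d\mu(v)<\infty$ and closes the proof. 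You should either add that hypothesis or restrict the domain to $\left[0,\infty\right)$.
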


We can use this theorem to establish the following representation
for the derivative of operator monotone functions.
\begin{prop}
Let $x\mapsto A(x)$ be a positive operator-valued function. The following
expression holds for the derivative of a function $f$ that is operator
monotone on $\left(0,\infty\right)$:
\begin{equation}
\frac{\partial}{\partial x}f(A(x))=b\frac{\partial}{\partial x}A(x)+\int_{0}^{\infty}d\mu(v)\ v^{2}\left(A(x)+vI\right)^{-1}\left(\frac{\partial}{\partial x}A(x)\right)\left(A(x)+vI\right)^{-1}.
\end{equation}
\end{prop}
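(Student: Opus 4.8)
The plan is to insert the integral representation of operator monotone functions recalled in the theorem just above and then differentiate term by term. Writing $f(t) = a + bt + \int_0^\infty d\mu(v)\,\frac{tv}{t+v}$ and applying the functional calculus to the positive operator $A(x)$, we get
\[
f(A(x)) = aI + bA(x) + \int_0^\infty d\mu(v)\ v\,A(x)\left(A(x)+vI\right)^{-1}.
\]
The key algebraic step is the identity $A(x)\left(A(x)+vI\right)^{-1} = I - v\left(A(x)+vI\right)^{-1}$, which turns the integrand into $vI - v^2\left(A(x)+vI\right)^{-1}$, isolating all of the $x$-dependence inside a single resolvent.

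Next I would differentiate the resolvent: differentiating $\left(A(x)+vI\right)\left(A(x)+vI\right)^{-1} = I$ with the product rule and rearranging gives
\[
\frac{\partial}{\partial x}\left(A(x)+vI\right)^{-1} = -\left(A(x)+vI\right)^{-1}\left(\frac{\partial}{\partial x}A(x)\right)\left(A(x)+vI\right)^{-1},
\]
so that $\frac{\partial}{\partial x}\left[vI - v^2\left(A(x)+vI\right)^{-1}\right] = v^2\left(A(x)+vI\right)^{-1}\left(\frac{\partial}{\partial x}A(x)\right)\left(A(x)+vI\right)^{-1}$. The term $aI$ differentiates to zero and $bA(x)$ to $b\frac{\partial}{\partial x}A(x)$; assembling the three pieces yields the claimed formula.

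The one point needing care is the interchange of $\frac{\partial}{\partial x}$ with $\int_0^\infty d\mu(v)$. Using that $A(x)$ is positive, we have $\left\Vert \left(A(x)+vI\right)^{-1}\right\Vert \le \left(\lambda_{\min}(A(x)) + v\right)^{-1}$, so the norm of the differentiated integrand is bounded by $v^2\left(\lambda_{\min}(A(x))+v\right)^{-2}\left\Vert \frac{\partial}{\partial x}A(x)\right\Vert$, which for $v \le 1$ is $O(v)$ and for $v \ge 1$ is $O(1)$. Since the representation theorem guarantees $\int_0^\infty d\mu(v)\,\frac{v}{1+v} < \infty$ --- equivalently $\int_0^1 v\,d\mu(v) < \infty$ together with $\int_1^\infty d\mu(v) < \infty$ --- a dominating function that is uniform on a neighborhood of the point in question exists, and a dominated-convergence argument legitimizes differentiation under the integral sign. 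This integrability matching, pairing the $v^2$ weight of the integrand against the $1/v^2$ decay of the squared resolvent and the finiteness hypothesis on $\mu$, is the only real obstacle; everything else is the routine computation above.
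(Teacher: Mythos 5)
Your proof is correct, but it takes a genuinely different route from the paper's. The paper feeds the integral representation of $f$ into its divided-difference machinery: it first computes
\begin{equation*}
\frac{f(x)-f(y)}{x-y}=b+\int_{0}^{\infty}d\mu(v)\ \frac{v^{2}}{\left(x+v\right)\left(y+v\right)},
\end{equation*}
then invokes Theorem~\ref{thm:divided-difference-matrix-deriv} to write $\frac{\partial}{\partial x}f(A(x))=\sum_{\ell,m}f^{[1]}(\lambda_{\ell},\lambda_{m})\Pi_{\ell}\left(\frac{\partial}{\partial x}A(x)\right)\Pi_{m}$, and finally resums the eigenprojections into the resolvents $\left(A(x)+vI\right)^{-1}$. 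You instead apply the functional calculus directly to the representation of $f$, reduce the integrand to $vI-v^{2}\left(A(x)+vI\right)^{-1}$ via $A\left(A+vI\right)^{-1}=I-v\left(A+vI\right)^{-1}$, and differentiate the resolvent with the standard identity $\frac{\partial}{\partial x}\left(A(x)+vI\right)^{-1}=-\left(A(x)+vI\right)^{-1}\left(\frac{\partial}{\partial x}A(x)\right)\left(A(x)+vI\right)^{-1}$. The trade-offs: the paper's route keeps this proposition inside the uniform divided-difference framework of the appendix and never has to confront differentiation under an infinite integral at the operator level (the eigenprojection sum is finite), but it silently relies on the hypothesis of Theorem~\ref{thm:divided-difference-matrix-deriv} that $f$ admits a convergent power series covering the spectrum, which for a general operator monotone function requires an appeal to analyticity of Pick functions. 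Your route is basis-free, avoids that analyticity hypothesis entirely, and makes the one genuinely delicate point --- the interchange of $\frac{\partial}{\partial x}$ with $\int_{0}^{\infty}d\mu(v)$ --- explicit, with the correct dominating function $v^{2}\left(\lambda_{\min}(A(x))+v\right)^{-2}\left\Vert \frac{\partial}{\partial x}A(x)\right\Vert$ matched against the finiteness condition $\int_{0}^{\infty}d\mu(v)\ \frac{v}{1+v}<\infty$. Both arguments are sound; yours is arguably the more self-contained of the two.
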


\begin{proof}
For $x,y>0$ such that $x\neq y$, consider that
\begin{align}
f(x)-f(y) & =\left(a+bx+\int_{0}^{\infty}d\mu(v)\ \frac{xv}{x+v}\right)\\
 & \qquad-\left(a+by+\int_{0}^{\infty}d\mu(v)\ \frac{yv}{y+v}\right)\\
 & =b\left(x-y\right)+\int_{0}^{\infty}d\mu(v)\ \left(\frac{xv}{x+v}-\frac{yv}{y+v}\right)\\
 & =b\left(x-y\right)+\int_{0}^{\infty}d\mu(v)\ \frac{xv\left(y+v\right)-yv\left(x+v\right)}{\left(x+v\right)\left(y+v\right)}\\
 & =b\left(x-y\right)+\left(x-y\right)\int_{0}^{\infty}d\mu(v)\ \frac{v^{2}}{\left(x+v\right)\left(y+v\right)}\\
 & =\left(x-y\right)\left(b+\int_{0}^{\infty}d\mu(v)\ \frac{v^{2}}{\left(x+v\right)\left(y+v\right)}\right).
\end{align}
Then it follows that
\begin{align}
\frac{f(x)-f(y)}{x-y} & =b+\int_{0}^{\infty}d\mu(v)\ \frac{v^{2}}{\left(x+v\right)\left(y+v\right)},
\end{align}
and furthermore that
\begin{equation}
f'(x)=b+\int_{0}^{\infty}d\mu(v)\ \frac{v^{2}}{\left(x+v\right)^{2}}.
\end{equation}
By applying Theorem~\ref{thm:divided-difference-matrix-deriv}, consider
that
\begin{align}
 & \frac{\partial}{\partial x}f(A(x))\nonumber \\
 & =\sum_{\ell,m}f^{[1]}(\lambda_{\ell},\lambda_{m})\Pi_{\ell}\left(\frac{\partial}{\partial x}A(x)\right)\Pi_{m}\\
 & =\sum_{\ell,m}\left(b+\int_{0}^{\infty}d\mu(v)\ \frac{v^{2}}{\left(\lambda_{\ell}+v\right)\left(\lambda_{m}+v\right)}\right)\Pi_{\ell}\left(\frac{\partial}{\partial x}A(x)\right)\Pi_{m}\\
 & =b\sum_{\ell,m}\Pi_{\ell}\left(\frac{\partial}{\partial x}A(x)\right)\Pi_{m}\nonumber \\
 & \qquad\int_{0}^{\infty}d\mu(v)\ v^{2}\left(\sum_{\ell}\frac{1}{\left(\lambda_{\ell}+v\right)}\Pi_{\ell}\right)\left(\frac{\partial}{\partial x}A(x)\right)\left(\sum_{m}\frac{1}{\left(\lambda_{m}+v\right)}\Pi_{m}\right)\\
 & =b\frac{\partial}{\partial x}A(x)+\int_{0}^{\infty}d\mu(v)\ v^{2}\left(A(x)+vI\right)^{-1}\left(\frac{\partial}{\partial x}A(x)\right)\left(A(x)+vI\right)^{-1},
\end{align}
thus concluding the proof.
\end{proof}

\subsection{Examples}

We can verify the consistency of the first divided difference expression
for the derivative with other known expressions for derivatives. As
a first example, let us verify Duhamel's formula:
\begin{prop}[Exponential function]
\label{prop:deriv-exp}For a Hermitian operator-valued function $x\mapsto A(x)$,
the following identity holds:
\begin{equation}
\frac{\partial}{\partial x}e^{A(x)}=\int_{0}^{1}dt\ e^{tA(x)}\left(\frac{\partial}{\partial x}A(x)\right)e^{\left(1-t\right)A(x)}.
\end{equation}
\end{prop}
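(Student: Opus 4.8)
The plan is to apply Theorem~\ref{thm:divided-difference-matrix-deriv} with the analytic function $f(t)\coloneqq e^{t}$, whose power series converges on all of $\mathbb{R}$. Writing a spectral decomposition $A(x)=\sum_{\ell}\lambda_{\ell}\Pi_{\ell}$ (suppressing the $x$-dependence of $\lambda_{\ell},\Pi_{\ell}$), that theorem immediately gives
\begin{equation}
\frac{\partial}{\partial x}e^{A(x)}=\sum_{\ell,m}f_{\exp}^{\left[1\right]}(\lambda_{\ell},\lambda_{m})\,\Pi_{\ell}\left(\frac{\partial}{\partial x}A(x)\right)\Pi_{m},
\end{equation}
where $f_{\exp}^{\left[1\right]}(\lambda_{\ell},\lambda_{m})=\frac{e^{\lambda_{\ell}}-e^{\lambda_{m}}}{\lambda_{\ell}-\lambda_{m}}$ when $\lambda_{\ell}\neq\lambda_{m}$ and $f_{\exp}^{\left[1\right]}(\lambda_{\ell},\lambda_{\ell})=e^{\lambda_{\ell}}$. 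The goal is then to show that the right-hand side of the claimed identity has exactly this eigenprojection expansion.

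Next I would expand the integral by substituting the spectral decomposition into $e^{tA(x)}=\sum_{\ell}e^{t\lambda_{\ell}}\Pi_{\ell}$ and $e^{(1-t)A(x)}=\sum_{m}e^{(1-t)\lambda_{m}}\Pi_{m}$, obtaining
\begin{equation}
\int_{0}^{1}dt\ e^{tA(x)}\left(\frac{\partial}{\partial x}A(x)\right)e^{\left(1-t\right)A(x)}=\sum_{\ell,m}\left(\int_{0}^{1}dt\ e^{t\lambda_{\ell}}e^{\left(1-t\right)\lambda_{m}}\right)\Pi_{\ell}\left(\frac{\partial}{\partial x}A(x)\right)\Pi_{m},
\end{equation}
where interchanging the finite sum and the integral is immediate in the finite-dimensional setting. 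It then remains to evaluate the scalar integral: pulling out $e^{\lambda_{m}}$ gives $e^{\lambda_{m}}\int_{0}^{1}dt\ e^{t(\lambda_{\ell}-\lambda_{m})}$, which equals $\frac{e^{\lambda_{\ell}}-e^{\lambda_{m}}}{\lambda_{\ell}-\lambda_{m}}$ when $\lambda_{\ell}\neq\lambda_{m}$ and equals $e^{\lambda_{m}}=e^{\lambda_{\ell}}$ when $\lambda_{\ell}=\lambda_{m}$. Matching these against $f_{\exp}^{\left[1\right]}(\lambda_{\ell},\lambda_{m})$ term by term yields the stated equality.

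The only point requiring any care is checking that the scalar integral reproduces the first divided difference in both cases (coinciding and distinct eigenvalues), but this is an elementary one-line computation, so there is no real obstacle: the proposition is essentially a direct corollary of Theorem~\ref{thm:divided-difference-matrix-deriv}. An alternative, self-contained route would be to differentiate the series $e^{A(x)}=\sum_{n\geq0}A(x)^{n}/n!$ termwise, apply the product rule exactly as in the proof of that theorem via $\frac{\partial}{\partial x}A(x)^{n}=\sum_{k=0}^{n-1}A(x)^{k}(\partial_{x}A(x))A(x)^{n-k-1}$, and then resum using $\sum_{n}\frac{1}{n!}\sum_{k=0}^{n-1}\lambda_{\ell}^{k}\lambda_{m}^{n-k-1}=\int_{0}^{1}dt\,e^{t\lambda_{\ell}}e^{(1-t)\lambda_{m}}$; this works equally well but duplicates effort already carried out in establishing Theorem~\ref{thm:divided-difference-matrix-deriv}, so I would present the first approach.
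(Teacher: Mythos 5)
Your proposal is correct and follows essentially the same route as the paper's proof: both verify the identity by inserting the spectral decomposition into the integral, evaluating the scalar integral $\int_{0}^{1}dt\,e^{t\lambda_{\ell}}e^{(1-t)\lambda_{m}}$ in the coinciding and distinct eigenvalue cases, and matching the result against the first divided difference of the exponential supplied by Theorem~\ref{thm:divided-difference-matrix-deriv}. The only difference is cosmetic (you expand both sides and match, while the paper transforms the right-hand side into the divided-difference form), so no further comment is needed.
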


\begin{proof}
Suppose that $A(x)$ has a spectral decomposition as in \eqref{eq:spectral-decomp-A-x}.
Then
\begin{align}
 & \int_{0}^{1}dt\ e^{tA(x)}\left(\frac{\partial}{\partial x}A(x)\right)e^{\left(1-t\right)A(x)}\nonumber \\
 & =\int_{0}^{1}dt\ \left(\sum_{\ell}e^{t\lambda_{\ell}}\Pi_{\ell}\right)\left(\frac{\partial}{\partial x}A(x)\right)\left(\sum_{m}e^{\left(1-t\right)\lambda_{m}}\Pi_{m}\right)\\
 & =\sum_{\ell,m}\int_{0}^{1}dt\ e^{t\lambda_{\ell}}e^{\left(1-t\right)\lambda_{m}}\Pi_{\ell}\left(\frac{\partial}{\partial x}A(x)\right)\Pi_{m}\\
 & =\sum_{\ell,m}e^{\lambda_{m}}\int_{0}^{1}dt\ e^{t\left(\lambda_{\ell}-\lambda_{m}\right)}\Pi_{\ell}\left(\frac{\partial}{\partial x}A(x)\right)\Pi_{m}\\
 & =\sum_{\ell}e^{\lambda_{m}}\Pi_{\ell}\left(\frac{\partial}{\partial x}A(x)\right)\Pi_{\ell}\nonumber \\
 & \qquad+\sum_{\ell\neq m}e^{\lambda_{m}}\int_{0}^{1}dt\ e^{t\left(\lambda_{\ell}-\lambda_{m}\right)}\Pi_{\ell}\left(\frac{\partial}{\partial x}A(x)\right)\Pi_{m}.\label{eq:duhamel-verify-proof}
\end{align}
Now consider that
\begin{align}
 & \sum_{\ell\neq m}e^{\lambda_{m}}\int_{0}^{1}dt\ e^{t\left(\lambda_{\ell}-\lambda_{m}\right)}\Pi_{\ell}\left(\frac{\partial}{\partial x}A(x)\right)\Pi_{m}\nonumber \\
 & =\sum_{\ell\neq m}e^{\lambda_{m}}\left[\left.\frac{e^{t\left(\lambda_{\ell}-\lambda_{m}\right)}}{\lambda_{\ell}-\lambda_{m}}\right|_{0}^{1}\right]\Pi_{\ell}\left(\frac{\partial}{\partial x}A(x)\right)\Pi_{m}\\
 & =\sum_{\ell\neq m}e^{\lambda_{m}}\left[\frac{e^{\left(\lambda_{\ell}-\lambda_{m}\right)}}{\lambda_{\ell}-\lambda_{m}}-\frac{1}{\lambda_{\ell}-\lambda_{m}}\right]\Pi_{\ell}\left(\frac{\partial}{\partial x}A(x)\right)\Pi_{m}\\
 & =\sum_{\ell\neq m}\frac{e^{\lambda_{\ell}}-e^{\lambda_{m}}}{\lambda_{\ell}-\lambda_{m}}\Pi_{\ell}\left(\frac{\partial}{\partial x}A(x)\right)\Pi_{m}.
\end{align}
Substituting back into \eqref{eq:duhamel-verify-proof}, we conclude
that
\begin{align}
 & \int_{0}^{1}dt\ e^{tA(x)}\left(\frac{\partial}{\partial x}A(x)\right)e^{\left(1-t\right)A(x)}\nonumber \\
 & =\sum_{\ell}e^{\lambda_{m}}\Pi_{\ell}\left(\frac{\partial}{\partial x}A(x)\right)\Pi_{\ell}+\sum_{\ell\neq m}\frac{e^{\lambda_{\ell}}-e^{\lambda_{m}}}{\lambda_{\ell}-\lambda_{m}}\Pi_{\ell}\left(\frac{\partial}{\partial x}A(x)\right)\Pi_{m}\\
 & =\sum_{\ell,m}f_{e^{x}}^{\left[1\right]}(\lambda_{\ell},\lambda_{m})\Pi_{\ell}\left(\frac{\partial}{\partial x}A(x)\right)\Pi_{m}\\
 & =\frac{\partial}{\partial x}e^{A(x)},
\end{align}
where $f_{e^{x}}^{\left[1\right]}$ denotes the first divided difference
for the function $x\mapsto e^{x}$ and the last equality follows from
Theorem~\ref{thm:divided-difference-matrix-deriv} and the fact that
the Taylor series expansion $e^{x}=\sum_{n=0}^{\infty}\frac{x^{n}}{n!}$
converges for all $x\in\mathbb{R}$.
\end{proof}
We can also verify a well known expression for the derivative of the
logarithm:
\begin{prop}[Logarithmic function]
\label{prop:derivative-log}For a positive operator-valued function
$x\mapsto A(x)$, the following identity holds:
\begin{equation}
\frac{\partial}{\partial x}\ln A(x)=\int_{0}^{\infty}ds\ \left(A(x)+sI\right)^{-1}\left(\frac{\partial}{\partial x}A(x)\right)\left(A(x)+sI\right)^{-1}.
\end{equation}
\end{prop}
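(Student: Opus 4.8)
The plan is to reduce everything to the divided-difference formula of Theorem~\ref{thm:divided-difference-matrix-deriv} together with one elementary scalar integral. First I would apply Theorem~\ref{thm:divided-difference-matrix-deriv} with $f=\ln$: since $x\mapsto A(x)$ is positive definite, for each $x$ all eigenvalues $\lambda_\ell$ lie in some open interval $I\subseteq(0,\infty)$, and $\ln$ is real-analytic there with a convergent power-series expansion about any interior point, so the theorem gives
\begin{equation}
\frac{\partial}{\partial x}\ln A(x)=\sum_{\ell,m}f_{\ln}^{[1]}(\lambda_\ell,\lambda_m)\,\Pi_\ell\left(\frac{\partial}{\partial x}A(x)\right)\Pi_m ,
\end{equation}
where $f_{\ln}^{[1]}$ is the first divided difference of the logarithm, i.e.\ $f_{\ln}^{[1]}(u,v)=\frac{\ln u-\ln v}{u-v}$ for $u\neq v$ and $\frac1u$ for $u=v$, and $A(x)=\sum_\ell\lambda_\ell\Pi_\ell$ is a spectral decomposition.

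Next I would record the scalar identity
\begin{equation}
\int_0^\infty ds\ \frac{1}{(u+s)(v+s)}=f_{\ln}^{[1]}(u,v)\qquad(u,v>0).
\end{equation}
For $u\neq v$ this follows from the partial-fraction decomposition $\frac{1}{(u+s)(v+s)}=\frac{1}{v-u}\left(\frac{1}{u+s}-\frac{1}{v+s}\right)$ and integrating, which yields $\frac{1}{v-u}\ln\frac{v+s}{u+s}\big|_0^\infty=\frac{1}{v-u}\ln\frac{u}{v}=\frac{\ln u-\ln v}{u-v}$; for $u=v$ one simply has $\int_0^\infty (u+s)^{-2}\,ds=\tfrac1u$.

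Then I would insert the spectral expansion $(A(x)+sI)^{-1}=\sum_\ell(\lambda_\ell+s)^{-1}\Pi_\ell$ into the right-hand side of the proposition and interchange the (finite) spectral sum with the $s$-integral:
\begin{equation}
\int_0^\infty ds\ (A(x)+sI)^{-1}\left(\frac{\partial}{\partial x}A(x)\right)(A(x)+sI)^{-1}
=\sum_{\ell,m}\left(\int_0^\infty\frac{ds}{(\lambda_\ell+s)(\lambda_m+s)}\right)\Pi_\ell\left(\frac{\partial}{\partial x}A(x)\right)\Pi_m .
\end{equation}
Applying the identity from the previous step turns the bracketed integral into $f_{\ln}^{[1]}(\lambda_\ell,\lambda_m)$, and comparing with the first displayed equation completes the proof.

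I do not anticipate a genuine obstacle here: the argument is a direct specialization of Theorem~\ref{thm:divided-difference-matrix-deriv}. The only points requiring a word of care are (i) justifying that $\ln$ admits a convergent power-series expansion on an open interval containing the spectrum of $A(x)$ (immediate from positive definiteness, choosing the interval small enough, or alternatively one may invoke the operator-monotone representation of $\ln$ stated earlier), and (ii) the interchange of the sum and the integral, which is trivial because the spectral sum is finite and each integrand is absolutely integrable on $(0,\infty)$.
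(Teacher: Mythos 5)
Your proposal is correct and follows essentially the same route as the paper: both rest on Theorem~\ref{thm:divided-difference-matrix-deriv}, the scalar integral identity $\int_0^\infty ds\,(u+s)^{-1}(v+s)^{-1}=f_{\ln}^{[1]}(u,v)$, and the spectral expansion of the resolvent, with the paper merely writing the chain of equalities starting from the integral side rather than from the derivative side. The only nitpick is your remark about choosing the interval of convergence ``small enough'': the power series of $\ln$ about $c>0$ converges on $(0,2c)$, so one takes $c$ \emph{larger} than the largest eigenvalue, exactly as the paper does.
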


\begin{proof}
Suppose that $A(x)$ has a spectral decomposition as in \eqref{eq:spectral-decomp-A-x}.
Then
\begin{align}
 & \int_{0}^{\infty}ds\ \left(A(x)+sI\right)^{-1}\left(\frac{\partial}{\partial x}A(x)\right)\left(A(x)+sI\right)^{-1}\nonumber \\
 & =\int_{0}^{\infty}ds\ \left(\sum_{\ell}\left(\frac{1}{\lambda_{\ell}+s}\right)\Pi_{\ell}\right)\left(\frac{\partial}{\partial x}A(x)\right)\left(\sum_{m}\left(\frac{1}{\lambda_{m}+s}\right)\Pi_{m}\right)\\
 & =\sum_{\ell,m}\left[\int_{0}^{\infty}ds\ \left(\frac{1}{\lambda_{\ell}+s}\right)\left(\frac{1}{\lambda_{m}+s}\right)\right]\Pi_{\ell}\left(\frac{\partial}{\partial x}A(x)\right)\Pi_{m}\\
 & =\sum_{\ell}\left[\int_{0}^{\infty}ds\ \frac{1}{\left(\lambda_{\ell}+s\right)^{2}}\right]\Pi_{\ell}\left(\frac{\partial}{\partial x}A(x)\right)\Pi_{\ell}\nonumber \\
 & \qquad+\sum_{\ell\neq m}\left[\int_{0}^{\infty}ds\ \left(\frac{1}{\lambda_{\ell}+s}\right)\left(\frac{1}{\lambda_{m}+s}\right)\right]\Pi_{\ell}\left(\frac{\partial}{\partial x}A(x)\right)\Pi_{m}\\
 & =\sum_{\ell}\frac{1}{\lambda_{\ell}}\Pi_{\ell}\left(\frac{\partial}{\partial x}A(x)\right)\Pi_{\ell}+\sum_{\ell\neq m}\frac{\ln\lambda_{\ell}-\ln\lambda_{m}}{\lambda_{\ell}-\lambda_{m}}\Pi_{\ell}\left(\frac{\partial}{\partial x}A(x)\right)\Pi_{m}\\
 & =\sum_{\ell,m}f_{\ln x}^{\left[1\right]}(\lambda_{\ell},\lambda_{m})\Pi_{\ell}\left(\frac{\partial}{\partial x}A(x)\right)\Pi_{m}\\
 & =\frac{\partial}{\partial x}\ln A(x),
\end{align}
where $f_{\ln x}^{\left[1\right]}$ denotes the first divided difference
for the function $x\mapsto\ln x$. The penultimate equality is a consequence
of the following integral formulas, holding for all $x,y>0$:
\begin{align}
\int_{0}^{\infty}ds\ \frac{1}{\left(x+s\right)^{2}} & =\frac{1}{x},\\
\int_{0}^{\infty}ds\ \left(\frac{1}{x+s}\right)\left(\frac{1}{y+s}\right) & =\frac{\ln x-\ln y}{x-y},
\end{align}
and the last equality follows from Theorem~\ref{thm:divided-difference-matrix-deriv}
and the fact that the logarithm has the following power series expansion
that converges for all $y\in\left(0,2c\right)$, where $c>0$:
\begin{equation}
\ln y=\ln c+\sum_{n=1}^{\infty}\frac{\left(-1\right)^{n-1}}{n}\left(\frac{y-c}{c}\right)^{n}.
\end{equation}
To see this, we can write
\begin{align}
\ln y & =\ln\!\left(c\left(1+\frac{y-c}{c}\right)\right)\\
 & =\ln c+\ln\!\left(1+\frac{y-c}{c}\right)
\end{align}
and then apply the standard Taylor expansion for $\ln(1+y)$ that
converges for all $\left|y\right|<1$. Substituting $y\to\frac{y-c}{c}$,
the convergence condition becomes $\left|\frac{y-c}{c}\right|<1$,
which is equivalent to $y\in\left(0,2c\right)$. In order to apply
Theorem~\ref{thm:divided-difference-matrix-deriv}, we take $c>0$
to be larger than the largest eigenvalue of $A(x)$.
\end{proof}
We now use Propositions \ref{prop:deriv-exp} and \ref{prop:derivative-log}
to find an integral formula for the derivative of all power functions. 
\begin{prop}[Power function]
\label{prop:power-function-all-powers}Let $x\mapsto A(x)$ be a
positive operator-valued function. For all $r\in\mathbb{R}$, the
following equality holds:
\begin{equation}
\frac{\partial}{\partial x}\left(A(x)^{r}\right)=r\int_{0}^{1}dt\int_{0}^{\infty}ds\ \frac{A(x)^{rt}}{A(x)+sI}\left(\frac{\partial}{\partial x}A(x)\right)\frac{A(x)^{r\left(1-t\right)}}{A(x)+sI}.
\end{equation}
\end{prop}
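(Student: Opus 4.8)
The plan is to write the power function as a composition of the exponential and the logarithm, namely $A(x)^{r} = e^{r\ln A(x)}$, which makes sense because $A(x)$ is positive definite, so $\ln A(x)$ is a well-defined Hermitian operator-valued function and hence so is $r\ln A(x)$. I would then apply Duhamel's formula from Proposition~\ref{prop:deriv-exp} with the substitution $A(x)\to r\ln A(x)$, together with $\frac{\partial}{\partial x}\left(r\ln A(x)\right) = r\,\frac{\partial}{\partial x}\ln A(x)$ and $e^{tr\ln A(x)} = A(x)^{rt}$, to obtain
\begin{align}
\frac{\partial}{\partial x}\left(A(x)^{r}\right) & = \int_{0}^{1}dt\ e^{tr\ln A(x)}\left(\frac{\partial}{\partial x}\left(r\ln A(x)\right)\right)e^{(1-t)r\ln A(x)}\\
& = r\int_{0}^{1}dt\ A(x)^{rt}\left(\frac{\partial}{\partial x}\ln A(x)\right)A(x)^{r(1-t)}.
\end{align}

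Next I would substitute the integral representation of $\frac{\partial}{\partial x}\ln A(x)$ established in Proposition~\ref{prop:derivative-log}, namely $\frac{\partial}{\partial x}\ln A(x) = \int_{0}^{\infty}ds\ \left(A(x)+sI\right)^{-1}\left(\frac{\partial}{\partial x}A(x)\right)\left(A(x)+sI\right)^{-1}$, and interchange the $t$- and $s$-integrals by Fubini's theorem, giving
\begin{equation}
\frac{\partial}{\partial x}\left(A(x)^{r}\right) = r\int_{0}^{1}dt\int_{0}^{\infty}ds\ A(x)^{rt}\left(A(x)+sI\right)^{-1}\left(\frac{\partial}{\partial x}A(x)\right)\left(A(x)+sI\right)^{-1}A(x)^{r(1-t)}.
\end{equation}
Since $A(x)^{rt}$, $A(x)^{r(1-t)}$, and $\left(A(x)+sI\right)^{-1}$ are all functions of the single operator $A(x)$, they commute pairwise, so I can slide $A(x)^{rt}$ next to the first resolvent and $A(x)^{r(1-t)}$ next to the second, producing exactly the claimed form $r\int_{0}^{1}dt\int_{0}^{\infty}ds\ \frac{A(x)^{rt}}{A(x)+sI}\left(\frac{\partial}{\partial x}A(x)\right)\frac{A(x)^{r(1-t)}}{A(x)+sI}$.

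The main technical point — more bookkeeping than genuine obstacle — is justifying the interchange of the two integrals and the convergence of the resulting double integral: one checks that the matrix-valued integrand is continuous in $(t,s)$ on $[0,1]\times(0,\infty)$ and has norm that is bounded as $s\to 0^{+}$ (by positive definiteness of $A(x)$) and $O(s^{-2})$ as $s\to\infty$, so that Fubini applies and the expression is absolutely convergent for every real $r$. As an independent sanity check, one could instead derive the formula directly from Theorem~\ref{thm:divided-difference-matrix-deriv}: it suffices to verify the scalar identity $\frac{x^{r}-y^{r}}{x-y} = r\int_{0}^{1}dt\int_{0}^{\infty}ds\ \frac{x^{rt}y^{r(1-t)}}{(x+s)(y+s)}$ for $x,y>0$, which reduces to the two elementary one-variable integrals $\int_{0}^{1}dt\ x^{at}y^{a(1-t)}$ and $\int_{0}^{\infty}ds\ (x+s)^{-1}(y+s)^{-1}$ that already appear in the proof of Theorem~\ref{thm:fisher-info-from-alpha-z}, and then resum via the divided-difference machinery.
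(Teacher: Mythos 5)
Your proposal is correct and follows essentially the same route as the paper's own proof: write $A(x)^{r}=e^{r\ln A(x)}$, apply Duhamel's formula (Proposition~\ref{prop:deriv-exp}), substitute the resolvent integral representation of $\frac{\partial}{\partial x}\ln A(x)$ from Proposition~\ref{prop:derivative-log}, and commute the functions of $A(x)$ past the resolvents. The additional remarks on Fubini and the divided-difference sanity check are fine but not part of the paper's argument.
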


\begin{proof}
Applying Propositions \ref{prop:deriv-exp} and \ref{prop:derivative-log},
consider that
\begin{align}
 & \frac{\partial}{\partial x}\left(A(x)^{r}\right)\nonumber \\
 & =\frac{\partial}{\partial x}e^{r\ln A(x)}\\
 & =\int_{0}^{1}dt\ e^{tr\ln A(x)}\left(\frac{\partial}{\partial x}r\ln A(x)\right)e^{\left(1-t\right)r\ln A(x)}\\
 & =r\int_{0}^{1}dt\ A(x)^{rt}\left(\frac{\partial}{\partial x}\ln A(x)\right)A(x)^{r\left(1-t\right)}\\
 & =r\int_{0}^{1}dt\ A(x)^{rt}\left(\frac{\partial}{\partial x}\ln A(x)\right)A(x)^{r\left(1-t\right)}\\
 & =r\int_{0}^{1}dt\ A(x)^{rt}\int_{0}^{\infty}ds\ \left(A(x)+sI\right)^{-1}\left(\frac{\partial}{\partial x}A(x)\right)\left(A(x)+sI\right)^{-1}A(x)^{r\left(1-t\right)}\\
 & =r\int_{0}^{1}dt\int_{0}^{\infty}ds\ \frac{A(x)^{rt}}{A(x)+sI}\left(\frac{\partial}{\partial x}A(x)\right)\frac{A(x)^{r\left(1-t\right)}}{A(x)+sI},
\end{align}
thus concluding the proof.
\end{proof}
By taking a spectral decomposition of $A(x)$ as in \eqref{eq:spectral-decomp-A-x},
we can further manipulate the expression from Proposition \ref{prop:power-function-all-powers}
to see that it is consistent with the first divided difference expression
from Theorem~\ref{thm:divided-difference-matrix-deriv}:
\begin{align}
 & \frac{\partial}{\partial x}\left(A(x)^{r}\right)\nonumber \\
 & =r\int_{0}^{1}dt\int_{0}^{\infty}ds\ \frac{A(x)^{rt}}{A(x)+sI}\left(\frac{\partial}{\partial x}A(x)\right)\frac{A(x)^{r\left(1-t\right)}}{A(x)+sI}\\
 & =r\int_{0}^{1}dt\int_{0}^{\infty}ds\ \left(\sum_{\ell}\frac{\lambda_{\ell}^{rt}}{\lambda_{\ell}+sI}\Pi_{\ell}\right)\left(\frac{\partial}{\partial x}A(x)\right)\left(\sum_{m}\frac{\lambda_{m}^{r\left(1-t\right)}}{\lambda_{m}+sI}\Pi_{m}\right)\\
 & =r\sum_{\ell,m}\int_{0}^{1}dt\ \lambda_{\ell}^{rt}\lambda_{m}^{r\left(1-t\right)}\int_{0}^{\infty}ds\ \left(\frac{1}{\lambda_{\ell}+sI}\right)\left(\frac{1}{\lambda_{m}+sI}\right)\Pi_{\ell}\left(\frac{\partial}{\partial x}A(x)\right)\Pi_{m}\\
 & =r\sum_{\ell}\lambda_{\ell}^{r}\int_{0}^{\infty}ds\ \frac{1}{\left(\lambda_{\ell}+sI\right)^{2}}\Pi_{\ell}\left(\frac{\partial}{\partial x}A(x)\right)\Pi_{\ell}\nonumber \\
 & \qquad+r\sum_{\ell\neq m}\int_{0}^{1}dt\ \lambda_{\ell}^{rt}\lambda_{m}^{r\left(1-t\right)}\int_{0}^{\infty}ds\ \left(\frac{1}{\lambda_{\ell}+sI}\right)\left(\frac{1}{\lambda_{m}+sI}\right)\Pi_{\ell}\left(\frac{\partial}{\partial x}A(x)\right)\Pi_{m}\\
 & =\sum_{\ell}r\lambda_{\ell}^{r-1}\Pi_{\ell}\left(\frac{\partial}{\partial x}A(x)\right)\Pi_{\ell}\nonumber \\
 & \qquad+r\sum_{\ell\neq m}\left(\frac{\lambda_{\ell}^{r}-\lambda_{m}^{r}}{\ln\lambda_{\ell}^{r}-\ln\lambda_{m}^{r}}\right)\left(\frac{\ln\lambda_{\ell}-\ln\lambda_{m}}{\lambda_{\ell}-\lambda_{m}}\right)\Pi_{\ell}\left(\frac{\partial}{\partial x}A(x)\right)\Pi_{m}\\
 & =\sum_{\ell}r\lambda_{\ell}^{r-1}\Pi_{\ell}\left(\frac{\partial}{\partial x}A(x)\right)\Pi_{\ell}\nonumber \\
 & \qquad+\sum_{\ell\neq m}\left(\frac{\lambda_{\ell}^{r}-\lambda_{m}^{r}}{\lambda_{\ell}-\lambda_{m}}\right)\Pi_{\ell}\left(\frac{\partial}{\partial x}A(x)\right)\Pi_{m}\\
 & =\sum_{\ell,m}f_{x^{r}}^{\left[1\right]}(\lambda_{\ell},\lambda_{m})\Pi_{\ell}\left(\frac{\partial}{\partial x}A(x)\right)\Pi_{m},
\end{align}
where $f_{x^{r}}^{\left[1\right]}$ denotes the first divided difference
for the function $x\mapsto x^{r}$.

There is an alternative simpler expression for the derivative of the
power function when the power $r\in\left(0,1\right)$:
\begin{prop}[Power function]
\label{prop:power-function-r-in-0-1}For a positive operator-valued
function $x\mapsto A(x)$, the following equality holds for all $r\in\left(-1,0\right)\cup\left(0,1\right)$:
\begin{equation}
\frac{\partial}{\partial x}\left(A(x)^{r}\right)=\frac{\sin(r\pi)}{\pi}\int_{0}^{\infty}dt\ t^{r}\left(A(x)+tI\right)^{-1}\left(\frac{\partial}{\partial x}A(x)\right)\left(A(x)+tI\right)^{-1}.\label{eq:integral-rep-deriv-r-minus-1-to-1}
\end{equation}
\end{prop}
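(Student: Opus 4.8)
The plan is to reduce the asserted operator identity to a scalar identity about the first divided difference of $t\mapsto t^{r}$, in the same spirit as the verification of the resolvent formula for the logarithm in Proposition~\ref{prop:derivative-log}. First I would apply Theorem~\ref{thm:divided-difference-matrix-deriv} to the function $f(t)=t^{r}$. Although $t\mapsto t^{r}$ is not entire, it is analytic on $(0,\infty)$ and, around any $c>0$, has a power series expansion convergent on $(0,2c)$; choosing $c$ larger than the largest eigenvalue of $A(x)$ — the same device used in the proof of Proposition~\ref{prop:derivative-log} — makes the theorem applicable and gives
\begin{equation}
\frac{\partial}{\partial x}\left(A(x)^{r}\right)=\sum_{\ell,m}f_{x^{r}}^{[1]}(\lambda_{\ell},\lambda_{m})\,\Pi_{\ell}\left(\frac{\partial}{\partial x}A(x)\right)\Pi_{m},
\end{equation}
where $A(x)=\sum_{\ell}\lambda_{\ell}\Pi_{\ell}$ is a spectral decomposition.

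Next I would expand the right-hand side of \eqref{eq:integral-rep-deriv-r-minus-1-to-1} using the same spectral decomposition. Writing $\left(A(x)+tI\right)^{-1}=\sum_{\ell}\left(\lambda_{\ell}+t\right)^{-1}\Pi_{\ell}$ and pulling the finite sums over $\ell,m$ outside the $t$-integral yields
\begin{equation}
\frac{\sin(r\pi)}{\pi}\int_{0}^{\infty}dt\ t^{r}\left(A(x)+tI\right)^{-1}\left(\frac{\partial}{\partial x}A(x)\right)\left(A(x)+tI\right)^{-1}=\sum_{\ell,m}\left(\frac{\sin(r\pi)}{\pi}\int_{0}^{\infty}dt\ \frac{t^{r}}{\left(\lambda_{\ell}+t\right)\left(\lambda_{m}+t\right)}\right)\Pi_{\ell}\left(\frac{\partial}{\partial x}A(x)\right)\Pi_{m}.
\end{equation}
Comparing the two displays term by term, the proposition reduces to the scalar identity
\begin{equation}
\frac{\sin(r\pi)}{\pi}\int_{0}^{\infty}dt\ \frac{t^{r}}{\left(x+t\right)\left(y+t\right)}=f_{x^{r}}^{[1]}(x,y)\qquad\left(x,y>0,\ r\in\left(-1,0\right)\cup\left(0,1\right)\right),
\end{equation}
which is precisely the integral representation of the divided difference of $t\mapsto t^{r}$ already recorded in \eqref{eq:integral-rep-div-diff-x-r}.

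To establish that scalar identity — should it need re-deriving across both sign ranges of $r$ — I would avoid partial fractions, since splitting the integral into $\int t^{r}/(x+t)\,dt$ and $\int t^{r}/(y+t)\,dt$ produces pieces that diverge individually when $r\in(0,1)$. Instead I would use the Feynman–Schwinger identity $\frac{1}{(x+t)(y+t)}=\int_{0}^{1}d\xi\,\left(t+\xi x+(1-\xi)y\right)^{-2}$, interchange the $\xi$- and $t$-integrations by Tonelli (all integrands are non-negative), evaluate $\int_{0}^{\infty}dt\ t^{r}\left(t+c\right)^{-2}=c^{r-1}B(r+1,1-r)=c^{r-1}\,\pi r/\sin(r\pi)$ via the Euler reflection formula (valid exactly for $-1<r<1$), and then compute $\int_{0}^{1}d\xi\,\left(\xi x+(1-\xi)y\right)^{r-1}=(x^{r}-y^{r})/(r(x-y))$ for $x\neq y$, with the limiting value $x^{r-1}$ for $x=y$. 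Substituting these back and matching the resulting sums against the first display completes the argument; one could alternatively start from Proposition~\ref{prop:power-function-all-powers}, but the divided-difference route is cleaner.

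The genuinely delicate points here are bookkeeping rather than conceptual: invoking Theorem~\ref{thm:divided-difference-matrix-deriv} for the non-entire function $t\mapsto t^{r}$ (handled by the local-power-series device) and carrying the Beta-function evaluation uniformly over $r\in\left(-1,0\right)\cup\left(0,1\right)$ while respecting convergence of the scalar integral near $t=0$ and $t=\infty$. I expect this last item to be the main obstacle, since the manipulations that look most natural for one sign of $r$ break down for the other, so the Schwinger-parameter route (which never fragments the integral) is essential.
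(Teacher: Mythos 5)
Your proposal is correct, and its skeleton is the same as the paper's: both reduce the operator identity, via Theorem~\ref{thm:divided-difference-matrix-deriv} (with the local power-series device about a point $c$ exceeding the largest eigenvalue) and a spectral expansion of the two resolvents, to the scalar identity $\frac{\sin(r\pi)}{\pi}\int_{0}^{\infty}dt\,\frac{t^{r}}{(x+t)(y+t)}=\frac{x^{r}-y^{r}}{x-y}$, which is exactly \eqref{eq:integral-rep-div-diff-x-r}. Where you genuinely diverge is in how that scalar identity is established. The paper starts from the representation $x^{r}=\frac{\sin(r\pi)}{\pi}\int_{0}^{\infty}dt\,t^{r-1}\frac{x}{x+t}$ for $r\in(0,1)$ (each such integral converges on its own, since the integrand behaves like $t^{r-2}$ at infinity), subtracts two copies under one integral sign, and then handles $r\in(-1,0)$ by a separate reduction (dividing by $x$ and shifting $r\to r+1$). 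You instead use the Feynman--Schwinger parametrization $\frac{1}{(x+t)(y+t)}=\int_{0}^{1}d\xi\,(t+\xi x+(1-\xi)y)^{-2}$, Tonelli, the Beta-function evaluation $\int_{0}^{\infty}dt\,t^{r}(t+c)^{-2}=c^{r-1}\,\pi r/\sin(r\pi)$, and the elementary integral $\int_{0}^{1}d\xi\,(\xi x+(1-\xi)y)^{r-1}=(x^{r}-y^{r})/(r(x-y))$; all of these check out, and the Beta-function step is valid uniformly on $r\in(-1,1)$, so your route treats both sign ranges in one pass rather than two. (Your stated worry about divergent partial fractions is a non-issue for the paper's actual argument, which never splits $\frac{t^{r}}{(x+t)(y+t)}$ that way, but it does correctly motivate why your Schwinger route avoids fragmenting the integral.) Either derivation of the scalar identity is acceptable; yours is marginally more self-contained and uniform in $r$, while the paper's leans on the standard Bhatia representation.
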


\begin{proof}
The following integral representation holds for $r\in(0,1)$ and $x>0$
\cite[Exercise~V.4.20]{Bhatia1997}:
\begin{equation}
x^{r}=\frac{\sin(r\pi)}{\pi}\int_{0}^{\infty}dt\ t^{r-1}\left(\frac{x}{x+t}\right).\label{eq:integral-rep-power-r-0-1}
\end{equation}
From this, we can conclude the following integral representation for
$r\in(0,1)$ and $x,y>0$ such that $x\neq y$:
\begin{align}
\frac{x^{r}-y^{r}}{x-y} & =\frac{\sin(r\pi)}{\pi}\int_{0}^{\infty}dt\ \frac{t^{r}}{\left(x+t\right)\left(y+t\right)}.\label{eq:integral-rep-div-diff-x-r}
\end{align}
Indeed, consider that
\begin{align}
 & \frac{x^{r}-y^{r}}{x-y}\nonumber \\
 & =\left(\frac{1}{x-y}\right)\left(\frac{\sin(r\pi)}{\pi}\int_{0}^{\infty}dt\ t^{r-1}\left(\frac{x}{x+t}\right)-\frac{\sin(r\pi)}{\pi}\int_{0}^{\infty}dt\ t^{r-1}\left(\frac{y}{y+t}\right)\right)\\
 & =\left(\frac{1}{x-y}\right)\frac{\sin(r\pi)}{\pi}\int_{0}^{\infty}dt\ t^{r-1}\left(\frac{x}{x+t}-\frac{y}{y+t}\right)\\
 & =\left(\frac{1}{x-y}\right)\frac{\sin(r\pi)}{\pi}\int_{0}^{\infty}dt\ t^{r-1}\left(\frac{x\left(y+t\right)-y\left(x+t\right)}{\left(x+t\right)\left(y+t\right)}\right)\\
 & =\frac{\sin(r\pi)}{\pi}\int_{0}^{\infty}dt\ t^{r-1}\left(\frac{t}{\left(x+t\right)\left(y+t\right)}\right)\\
 & =\frac{\sin(r\pi)}{\pi}\int_{0}^{\infty}dt\ \frac{t^{r}}{\left(x+t\right)\left(y+t\right)}.
\end{align}
This allows us to write, for $r\in(0,1)$,
\begin{align}
 & \frac{\partial}{\partial x}\left(A(x)^{r}\right)\nonumber \\
 & =\sum_{\ell,m}f_{x^{r}}^{\left[1\right]}(\lambda_{\ell},\lambda_{m})\Pi_{\ell}\left(\frac{\partial}{\partial x}A(x)\right)\Pi_{m}\label{eq:power-func-r-0-1-proof-final-steps-1}\\
 & =\sum_{\ell,m}\left[\frac{\sin(r\pi)}{\pi}\int_{0}^{\infty}dt\ \frac{t^{r}}{\left(\lambda_{\ell}+t\right)\left(\lambda_{m}+t\right)}\right]\Pi_{\ell}\left(\frac{\partial}{\partial x}A(x)\right)\Pi_{m}\\
 & =\frac{\sin(r\pi)}{\pi}\int_{0}^{\infty}dt\ t^{r}\left(\sum_{\ell}\frac{1}{\left(\lambda_{\ell}+t\right)}\Pi_{\ell}\right)\left(\frac{\partial}{\partial x}A(x)\right)\left(\sum_{m}\frac{1}{\left(\lambda_{m}+t\right)}\Pi_{m}\right)\\
 & =\frac{\sin(r\pi)}{\pi}\int_{0}^{\infty}dt\ t^{r}\left(A(x)+tI\right)^{-1}\left(\frac{\partial}{\partial x}A(x)\right)\left(A(x)+tI\right)^{-1}.\label{eq:power-func-r-0-1-proof-final-steps-last}
\end{align}
The first equality follows from applying Theorem~\ref{thm:divided-difference-matrix-deriv}
and the fact that the power function has the following power series
expansion that converges for all $y\in\left(0,2c\right)$, where $c>0$:
\begin{equation}
y^{r}=c^{r}\sum_{n=0}^{\infty}{r \choose n}\left(\frac{y-c}{c}\right)^{n}.
\end{equation}
To see this, we can write
\begin{align}
y^{r} & =\left(c\left(1+\frac{y-c}{c}\right)\right)^{r}\\
 & =c^{r}\left(1+\frac{y-c}{c}\right)^{r}
\end{align}
and then apply the standard binomial series for $(1+y)^{r}$ that
converges for all $\left|y\right|<1$. Substituting $y\to\frac{y-c}{c}$,
the convergence condition becomes $\left|\frac{y-c}{c}\right|<1$,
which is equivalent to $y\in\left(0,2c\right)$. In order to apply
Theorem~\ref{thm:divided-difference-matrix-deriv}, we take $c>0$
to be larger than the largest eigenvalue of $A(x)$. This concludes
the proof of \eqref{eq:integral-rep-deriv-r-minus-1-to-1} for $r\in(0,1)$.

Dividing the integral representation in \eqref{eq:integral-rep-power-r-0-1}
by $x>0$ gives
\begin{equation}
x^{r-1}=\frac{\sin(r\pi)}{\pi}\int_{0}^{\infty}dt\ t^{r-1}\left(\frac{1}{x+t}\right).
\end{equation}
This is then equivalent to the following for $r\in\left(-1,0\right)$
and $x>0$:
\begin{align}
x^{r} & =\frac{\sin(\left(r+1\right)\pi)}{\pi}\int_{0}^{\infty}dt\ t^{r}\left(\frac{1}{x+t}\right).\\
 & =-\frac{\sin(r\pi)}{\pi}\int_{0}^{\infty}dt\ t^{r}\left(\frac{1}{x+t}\right)
\end{align}
Then it follows that, for $r\in\left(-1,0\right)$ and $x,y>0$ such
that $x\neq y$:
\begin{align}
\frac{x^{r}-y^{r}}{x-y} & =\left(\frac{-\frac{\sin(r\pi)}{\pi}}{x-y}\right)\left(\int_{0}^{\infty}dt\ t^{r}\left(\frac{1}{x+t}\right)-\int_{0}^{\infty}dt\ t^{r}\left(\frac{1}{y+t}\right)\right)\\
 & =\left(\frac{\frac{\sin(r\pi)}{\pi}}{y-x}\right)\int_{0}^{\infty}dt\ t^{r}\left(\frac{1}{x+t}-\frac{1}{y+t}\right)\\
 & =\left(\frac{\frac{\sin(r\pi)}{\pi}}{y-x}\right)\int_{0}^{\infty}dt\ t^{r}\left(\frac{y+t-\left(x+t\right)}{\left(x+t\right)\left(y+t\right)}\right)\\
 & =\frac{\sin(r\pi)}{\pi}\int_{0}^{\infty}dt\ t^{r}\frac{1}{\left(x+t\right)\left(y+t\right)}.
\end{align}
Now performing the same manipulations as in \eqref{eq:power-func-r-0-1-proof-final-steps-1}--\eqref{eq:power-func-r-0-1-proof-final-steps-last}
and similar justification, we conclude the following for $r\in\left(-1,0\right)$:
\begin{equation}
\frac{\partial}{\partial x}\left(A(x)^{r}\right)=\frac{\sin(r\pi)}{\pi}\int_{0}^{\infty}dt\ t^{r}\left(A(x)+tI\right)^{-1}\left(\frac{\partial}{\partial x}A(x)\right)\left(A(x)+tI\right)^{-1},
\end{equation}
thus concluding the proof of \eqref{eq:integral-rep-deriv-r-minus-1-to-1}
for $r\in(-1,0)$.
\end{proof}

\section{Proof of Equation \eqref{eq:a-z-eigenvals-simplified}}

\label{app:proof-algebra-a-z-eigenvals-simplify}Consider that for
$x,y>0$, such that $x\neq y$,
\begin{align}
 & \left(xy\right)^{\frac{\alpha}{z}}\left(\frac{x^{\frac{z-1}{z}}-y^{\frac{z-1}{z}}}{x^{\frac{1}{z}}-y^{\frac{1}{z}}}\right)\left(\frac{x^{\frac{1-\alpha}{z}}-y^{\frac{1-\alpha}{z}}}{x-y}\right)^{2}-\left(\frac{x^{\frac{z-1+\alpha}{z}}-y^{\frac{z-1+\alpha}{z}}}{x-y}\right)\left(\frac{x^{\frac{1-\alpha}{z}}-y^{\frac{1-\alpha}{z}}}{x-y}\right)\nonumber \\
 & =\left(\frac{x^{\frac{1-\alpha}{z}}-y^{\frac{1-\alpha}{z}}}{\left(x-y\right)^{2}}\right)\left[\left(xy\right)^{\frac{\alpha}{z}}\left(\frac{x^{\frac{z-1}{z}}-y^{\frac{z-1}{z}}}{x^{\frac{1}{z}}-y^{\frac{1}{z}}}\right)\left(x^{\frac{1-\alpha}{z}}-y^{\frac{1-\alpha}{z}}\right)-\left(x^{\frac{z-1+\alpha}{z}}-y^{\frac{z-1+\alpha}{z}}\right)\right]\\
 & =\left(\frac{x^{\frac{1-\alpha}{z}}-y^{\frac{1-\alpha}{z}}}{x-y}\right)\left(\frac{1}{x-y}\right)\left[\left(xy\right)^{\frac{\alpha}{z}}\left(\frac{x^{\frac{z-1}{z}}-y^{\frac{z-1}{z}}}{x^{\frac{1}{z}}-y^{\frac{1}{z}}}\right)\left(x^{\frac{1-\alpha}{z}}-y^{\frac{1-\alpha}{z}}\right)-\left(x^{\frac{z-1+\alpha}{z}}-y^{\frac{z-1+\alpha}{z}}\right)\right].\label{eq:first-steps-a-z-algebra-simplify}
\end{align}
Now consider that
\begin{align}
 & \left(\frac{1}{x-y}\right)\left[\left(xy\right)^{\frac{\alpha}{z}}\left(\frac{x^{\frac{z-1}{z}}-y^{\frac{z-1}{z}}}{x^{\frac{1}{z}}-y^{\frac{1}{z}}}\right)\left(x^{\frac{1-\alpha}{z}}-y^{\frac{1-\alpha}{z}}\right)-\left(x^{\frac{z-1+\alpha}{z}}-y^{\frac{z-1+\alpha}{z}}\right)\right]\nonumber \\
 & =\left(\frac{\left(xy\right)^{\frac{\alpha}{z}}}{x-y}\right)\left[\left(\frac{x^{\frac{z-1}{z}}-y^{\frac{z-1}{z}}}{x^{\frac{1}{z}}-y^{\frac{1}{z}}}\right)\left(x^{\frac{1-\alpha}{z}}-y^{\frac{1-\alpha}{z}}\right)-\left(x^{\frac{z-1}{z}}y^{-\frac{\alpha}{z}}-y^{\frac{z-1}{z}}x^{-\frac{\alpha}{z}}\right)\right]\\
 & =\left(\frac{\left(xy\right)^{\frac{\alpha}{z}}}{x-y}\right)\left[\left(\frac{x^{\frac{z-1}{z}}-y^{\frac{z-1}{z}}}{x^{\frac{1}{z}}-y^{\frac{1}{z}}}\right)\left(x^{\frac{1-\alpha}{z}}-y^{\frac{1-\alpha}{z}}\right)-\frac{\left(x^{\frac{1}{z}}-y^{\frac{1}{z}}\right)\left(x^{\frac{z-1}{z}}y^{-\frac{\alpha}{z}}-y^{\frac{z-1}{z}}x^{-\frac{\alpha}{z}}\right)}{x^{\frac{1}{z}}-y^{\frac{1}{z}}}\right]\\
 & =\left(\frac{\left(xy\right)^{\frac{\alpha}{z}}}{\left(x-y\right)\left(x^{\frac{1}{z}}-y^{\frac{1}{z}}\right)}\right)\times\nonumber \\
 & \qquad\left[\left(x^{\frac{z-1}{z}}-y^{\frac{z-1}{z}}\right)\left(x^{\frac{1-\alpha}{z}}-y^{\frac{1-\alpha}{z}}\right)-\left(x^{\frac{1}{z}}-y^{\frac{1}{z}}\right)\left(x^{\frac{z-1}{z}}y^{-\frac{\alpha}{z}}-y^{\frac{z-1}{z}}x^{-\frac{\alpha}{z}}\right)\right]\label{eq:a-z-algebra-app-long}
\end{align}
Focusing on the term in the bottom line of \eqref{eq:a-z-algebra-app-long},
observe that
\begin{align}
 & \left(x^{\frac{z-1}{z}}-y^{\frac{z-1}{z}}\right)\left(x^{\frac{1-\alpha}{z}}-y^{\frac{1-\alpha}{z}}\right)-\left(x^{\frac{1}{z}}-y^{\frac{1}{z}}\right)\left(x^{\frac{z-1}{z}}y^{-\frac{\alpha}{z}}-y^{\frac{z-1}{z}}x^{-\frac{\alpha}{z}}\right)\nonumber \\
 & =\left(x^{\frac{z-1}{z}}\left(x^{\frac{1-\alpha}{z}}-y^{\frac{1-\alpha}{z}}\right)-y^{\frac{z-1}{z}}\left(x^{\frac{1-\alpha}{z}}-y^{\frac{1-\alpha}{z}}\right)\right)\nonumber \\
 & \qquad-\left(x^{\frac{1}{z}}\left(x^{\frac{z-1}{z}}y^{-\frac{\alpha}{z}}-y^{\frac{z-1}{z}}x^{-\frac{\alpha}{z}}\right)-y^{\frac{1}{z}}\left(x^{\frac{z-1}{z}}y^{-\frac{\alpha}{z}}-y^{\frac{z-1}{z}}x^{-\frac{\alpha}{z}}\right)\right)\\
 & =\left(x^{\frac{z-1}{z}}x^{\frac{1-\alpha}{z}}-x^{\frac{z-1}{z}}y^{\frac{1-\alpha}{z}}-y^{\frac{z-1}{z}}x^{\frac{1-\alpha}{z}}+y^{\frac{z-1}{z}}y^{\frac{1-\alpha}{z}}\right)\nonumber \\
 & \qquad-\left(x^{\frac{1}{z}}x^{\frac{z-1}{z}}y^{-\frac{\alpha}{z}}-x^{\frac{1}{z}}y^{\frac{z-1}{z}}x^{-\frac{\alpha}{z}}-y^{\frac{1}{z}}x^{\frac{z-1}{z}}y^{-\frac{\alpha}{z}}+y^{\frac{1}{z}}y^{\frac{z-1}{z}}x^{-\frac{\alpha}{z}}\right)\\
 & =\left(x^{\frac{z-\alpha}{z}}-x^{\frac{z-1}{z}}y^{\frac{1-\alpha}{z}}-y^{\frac{z-1}{z}}x^{\frac{1-\alpha}{z}}+y^{\frac{z-\alpha}{z}}\right)\nonumber \\
 & \qquad-\left(xy^{-\frac{\alpha}{z}}-y^{\frac{z-1}{z}}x^{\frac{1-\alpha}{z}}-x^{\frac{z-1}{z}}y^{\frac{1-\alpha}{z}}+yx^{-\frac{\alpha}{z}}\right)\\
 & =\left(xx^{-\frac{\alpha}{z}}+yy^{-\frac{\alpha}{z}}\right)-\left(xy^{-\frac{\alpha}{z}}+yx^{-\frac{\alpha}{z}}\right)\\
 & =\left(x-y\right)\left(x^{-\frac{\alpha}{z}}-y^{-\frac{\alpha}{z}}\right).
\end{align}
Plugging back into \eqref{eq:a-z-algebra-app-long}, we conclude that
\begin{align}
 & \left(\frac{\left(xy\right)^{\frac{\alpha}{z}}}{\left(x-y\right)\left(x^{\frac{1}{z}}-y^{\frac{1}{z}}\right)}\right)\left[\left(x^{\frac{z-1}{z}}-y^{\frac{z-1}{z}}\right)\left(x^{\frac{1-\alpha}{z}}-y^{\frac{1-\alpha}{z}}\right)-\left(x^{\frac{1}{z}}-y^{\frac{1}{z}}\right)\left(x^{\frac{z-1}{z}}y^{-\frac{\alpha}{z}}-y^{\frac{z-1}{z}}x^{-\frac{\alpha}{z}}\right)\right]\nonumber \\
 & =\left(\frac{\left(xy\right)^{\frac{\alpha}{z}}}{\left(x-y\right)\left(x^{\frac{1}{z}}-y^{\frac{1}{z}}\right)}\right)\left(x-y\right)\left(x^{-\frac{\alpha}{z}}-y^{-\frac{\alpha}{z}}\right)\\
 & =\frac{\left(xy\right)^{\frac{\alpha}{z}}\left(x^{-\frac{\alpha}{z}}-y^{-\frac{\alpha}{z}}\right)}{x^{\frac{1}{z}}-y^{\frac{1}{z}}}\\
 & =-\left(\frac{x^{\frac{\alpha}{z}}-y^{\frac{\alpha}{z}}}{x^{\frac{1}{z}}-y^{\frac{1}{z}}}\right).
\end{align}
We have thus now proved that
\begin{multline}
\left(\frac{1}{x-y}\right)\left[\left(xy\right)^{\frac{\alpha}{z}}\left(\frac{x^{\frac{z-1}{z}}-y^{\frac{z-1}{z}}}{x^{\frac{1}{z}}-y^{\frac{1}{z}}}\right)\left(x^{\frac{1-\alpha}{z}}-y^{\frac{1-\alpha}{z}}\right)-\left(x^{\frac{z-1+\alpha}{z}}-y^{\frac{z-1+\alpha}{z}}\right)\right]\\
=-\left(\frac{x^{\frac{\alpha}{z}}-y^{\frac{\alpha}{z}}}{x^{\frac{1}{z}}-y^{\frac{1}{z}}}\right).
\end{multline}
Finally, plugging back into \eqref{eq:first-steps-a-z-algebra-simplify},
we conclude that
\begin{multline}
\left(xy\right)^{\frac{\alpha}{z}}\left(\frac{x^{\frac{z-1}{z}}-y^{\frac{z-1}{z}}}{x^{\frac{1}{z}}-y^{\frac{1}{z}}}\right)\left(\frac{x^{\frac{1-\alpha}{z}}-y^{\frac{1-\alpha}{z}}}{x-y}\right)^{2}-\left(\frac{x^{\frac{z-1+\alpha}{z}}-y^{\frac{z-1+\alpha}{z}}}{x-y}\right)\left(\frac{x^{\frac{1-\alpha}{z}}-y^{\frac{1-\alpha}{z}}}{x-y}\right)\\
=-\left(\frac{x^{\frac{1-\alpha}{z}}-y^{\frac{1-\alpha}{z}}}{x-y}\right)\left(\frac{x^{\frac{\alpha}{z}}-y^{\frac{\alpha}{z}}}{x^{\frac{1}{z}}-y^{\frac{1}{z}}}\right),
\end{multline}
thus completing the proof of \eqref{eq:a-z-eigenvals-simplified}.

\section{Proof of Equation \eqref{eq:limit-for-a-z-eigenval-func}}

\label{app:limit-for-a-z-eigenval-func}Consider that
\begin{align}
\left(\frac{x^{\frac{1-\alpha}{z}}-y^{\frac{1-\alpha}{z}}}{x-y}\right)\left(\frac{x^{\frac{\alpha}{z}}-y^{\frac{\alpha}{z}}}{x^{\frac{1}{z}}-y^{\frac{1}{z}}}\right) & =\left(\frac{y^{\frac{1-\alpha}{z}}}{y}\right)\left(\frac{\left(\frac{x}{y}\right)^{\frac{1-\alpha}{z}}-1}{\frac{x}{y}-1}\right)\left(\frac{y^{\frac{\alpha}{z}}}{y^{\frac{1}{z}}}\right)\left(\frac{\left(\frac{x}{y}\right)^{\frac{\alpha}{z}}-1}{\left(\frac{x}{y}\right)^{\frac{1}{z}}-1}\right)\\
 & =\left(\frac{y^{\frac{1-\alpha}{z}}}{y}\right)\left(\frac{y^{\frac{\alpha}{z}}}{y^{\frac{1}{z}}}\right)\left(\frac{\left(\frac{x}{y}\right)^{\frac{1-\alpha}{z}}-1}{\frac{x}{y}-1}\right)\left(\frac{\left(\frac{x}{y}\right)^{\frac{\alpha}{z}}-1}{\left(\frac{x}{y}\right)^{\frac{1}{z}}-1}\right)\\
 & =\left(\frac{1}{y}\right)\left(\frac{\left(\frac{x}{y}\right)^{\frac{1-\alpha}{z}}-1}{\frac{x}{y}-1}\right)\left(\frac{\left(\frac{x}{y}\right)^{\frac{\alpha}{z}}-1}{\left(\frac{x}{y}\right)^{\frac{1}{z}}-1}\right).
\end{align}
Now consider that
\begin{align}
\lim_{x\to y}\left(\frac{\left(\frac{x}{y}\right)^{\frac{1-\alpha}{z}}-1}{\frac{x}{y}-1}\right)\left(\frac{\left(\frac{x}{y}\right)^{\frac{\alpha}{z}}-1}{\left(\frac{x}{y}\right)^{\frac{1}{z}}-1}\right) & =\lim_{x\to1}\left(\frac{x^{\frac{1-\alpha}{z}}-1}{x-1}\right)\left(\frac{x^{\frac{\alpha}{z}}-1}{x^{\frac{1}{z}}-1}\right)\\
 & =\left(\lim_{x\to1}\frac{x^{\frac{1-\alpha}{z}}-1}{x-1}\right)\left(\lim_{x\to1}\frac{x^{\frac{\alpha}{z}}-1}{x^{\frac{1}{z}}-1}\right)\\
 & =\left(\lim_{x\to1}\frac{\frac{1-\alpha}{z}x^{\frac{1-\alpha}{z}-1}}{1}\right)\left(\lim_{x\to1}\frac{\frac{\alpha}{z}x^{\frac{\alpha}{z}-1}}{\frac{1}{z}x^{\frac{1}{z}-1}}\right)\\
 & =\frac{\left(\frac{1-\alpha}{z}\right)\left(\frac{\alpha}{z}\right)}{\frac{1}{z}}\lim_{x\to1}x^{\frac{1-\alpha}{z}-1+\frac{\alpha}{z}-1-\left(\frac{1}{z}-1\right)}\\
 & =\frac{\alpha\left(1-\alpha\right)}{z}\lim_{x\to1}x^{-1}\\
 & =\frac{\alpha\left(1-\alpha\right)}{z}.
\end{align}
Thus,
\begin{equation}
\lim_{x\to y}\left(\frac{x^{\frac{1-\alpha}{z}}-y^{\frac{1-\alpha}{z}}}{x-y}\right)\left(\frac{x^{\frac{\alpha}{z}}-y^{\frac{\alpha}{z}}}{x^{\frac{1}{z}}-y^{\frac{1}{z}}}\right)=\alpha\left(\frac{1-\alpha}{z}\right)\left(\frac{1}{y}\right).
\end{equation}

\section{Proof of Equation \eqref{eq:fisher-bures-pure-states}}

\label{app:Fisher-Bures-pure}Consider that
\begin{align}
 & \frac{\partial^{2}}{\partial\varepsilon_{i}\partial\varepsilon_{j}}\left(-2\ln F(\psi(\theta),\psi(\theta+\varepsilon))\right)\nonumber \\
 & =\frac{\partial^{2}}{\partial\varepsilon_{i}\partial\varepsilon_{j}}\left(-2\ln\Tr\!\left[\psi(\theta)\psi(\theta+\varepsilon)\right]\right)\\
 & =-2\frac{\partial}{\partial\varepsilon_{i}}\left(\frac{\Tr\!\left[\psi(\theta)\left(\frac{\partial}{\partial\varepsilon_{j}}\psi(\theta+\varepsilon)\right)\right]}{\Tr\!\left[\psi(\theta)\psi(\theta+\varepsilon)\right]}\right)\\
 & =-2\left(\frac{\Tr\!\left[\psi(\theta)\left(\frac{\partial^{2}}{\partial\varepsilon_{i}\partial\varepsilon_{j}}\psi(\theta+\varepsilon)\right)\right]}{\Tr\!\left[\psi(\theta)\psi(\theta+\varepsilon)\right]}-\frac{\Tr\!\left[\psi(\theta)\left(\frac{\partial}{\partial\varepsilon_{i}}\psi(\theta+\varepsilon)\right)\right]\Tr\!\left[\psi(\theta)\left(\frac{\partial}{\partial\varepsilon_{j}}\psi(\theta+\varepsilon)\right)\right]}{\left(\Tr\!\left[\psi(\theta)\psi(\theta+\varepsilon)\right]\right)^{2}}\right).
\end{align}
Then it follows that
\begin{align}
 & \left.\frac{\partial^{2}}{\partial\varepsilon_{i}\partial\varepsilon_{j}}\left(-2\ln F(\psi(\theta),\psi(\theta+\varepsilon))\right)\right|_{\varepsilon=0}\nonumber \\
 & =-2\left.\frac{\Tr\!\left[\psi(\theta)\left(\frac{\partial^{2}}{\partial\varepsilon_{i}\partial\varepsilon_{j}}\psi(\theta+\varepsilon)\right)\right]}{\Tr\left[\psi(\theta)\psi(\theta+\varepsilon)\right]}\right|_{\varepsilon=0}\nonumber \\
 & \qquad+2\left.\frac{\Tr\!\left[\psi(\theta)\left(\frac{\partial}{\partial\varepsilon_{i}}\psi(\theta+\varepsilon)\right)\right]\Tr\!\left[\psi(\theta)\left(\frac{\partial}{\partial\varepsilon_{j}}\psi(\theta+\varepsilon)\right)\right]}{\left(\Tr\left[\psi(\theta)\psi(\theta+\varepsilon)\right]\right)^{2}}\right|_{\varepsilon=0}\\
 & =-2\frac{\Tr\!\left[\psi(\theta)\left(\left.\frac{\partial^{2}}{\partial\varepsilon_{i}\partial\varepsilon_{j}}\psi(\theta+\varepsilon)\right|_{\varepsilon=0}\right)\right]}{\Tr\left[\psi(\theta)\psi(\theta)\right]}\nonumber \\
 & \qquad+2\frac{\Tr\!\left[\psi(\theta)\left(\left.\frac{\partial}{\partial\varepsilon_{i}}\psi(\theta+\varepsilon)\right|_{\varepsilon=0}\right)\right]\Tr\!\left[\psi(\theta)\left(\left.\frac{\partial}{\partial\varepsilon_{j}}\psi(\theta+\varepsilon)\right|_{\varepsilon=0}\right)\right]}{\left(\Tr\left[\psi(\theta)\psi(\theta)\right]\right)^{2}}\\
 & =-2\left(\Tr\!\left[\psi(\theta)\frac{\partial^{2}}{\partial\theta_{i}\partial\theta_{j}}\psi(\theta)\right]-\Tr\!\left[\psi(\theta)\frac{\partial}{\partial\theta_{i}}\psi(\theta)\right]\Tr\!\left[\psi(\theta)\frac{\partial}{\partial\theta_{j}}\psi(\theta)\right]\right)
\end{align}
Now, given that $\Tr[\psi(\theta)\psi(\theta)]=1$, consider that
\begin{align}
0 & =\frac{\partial}{\partial\theta_{i}}\Tr[\psi(\theta)\psi(\theta)]\\
 & =\Tr\!\left[\left(\frac{\partial}{\partial\theta_{i}}\psi(\theta)\right)\psi(\theta)\right]+\Tr\!\left[\psi(\theta)\left(\frac{\partial}{\partial\theta_{i}}\psi(\theta)\right)\right]\\
 & =2\Tr\!\left[\psi(\theta)\left(\frac{\partial}{\partial\theta_{i}}\psi(\theta)\right)\right].
\end{align}
Thus, 
\begin{align}
\left.\frac{\partial^{2}}{\partial\varepsilon_{i}\partial\varepsilon_{j}}\left(-2\ln F(\psi(\theta),\psi(\theta+\varepsilon))\right)\right|_{\varepsilon=0} & =-2\Tr\!\left[\psi(\theta)\frac{\partial^{2}}{\partial\theta_{i}\partial\theta_{j}}\psi(\theta)\right].\\
 & =2\Tr\!\left[\left(\frac{\partial}{\partial\theta_{i}}\psi(\theta)\right)\left(\frac{\partial}{\partial\theta_{j}}\psi(\theta)\right)\right],
\end{align}
where the equality follows because
\begin{align}
0 & =\frac{\partial}{\partial\theta_{i}}\Tr\!\left[\psi(\theta)\left(\frac{\partial}{\partial\theta_{j}}\psi(\theta)\right)\right]\\
 & =\Tr\!\left[\left(\frac{\partial}{\partial\theta_{i}}\psi(\theta)\right)\left(\frac{\partial}{\partial\theta_{j}}\psi(\theta)\right)\right]+\Tr\!\left[\psi(\theta)\frac{\partial^{2}}{\partial\theta_{i}\partial\theta_{j}}\psi(\theta)\right].
\end{align}
Finally, while using $\partial_{i}\equiv\frac{\partial}{\partial\theta_{i}}$,
observe that
\begin{align}
 & \Tr\!\left[\left(\frac{\partial}{\partial\theta_{i}}\psi(\theta)\right)\left(\frac{\partial}{\partial\theta_{j}}\psi(\theta)\right)\right]\\
 & =\Tr\!\left[\left(\frac{\partial}{\partial\theta_{i}}|\psi(\theta)\rangle\langle\psi(\theta)|\right)\left(\frac{\partial}{\partial\theta_{j}}|\psi(\theta)\rangle\langle\psi(\theta)|\right)\right]\\
 & =\Tr\!\left[\left(|\partial_{i}\psi(\theta)\rangle\langle\psi(\theta)|+|\psi(\theta)\rangle\langle\partial_{i}\psi(\theta)|\right)\left(|\partial_{j}\psi(\theta)\rangle\langle\psi(\theta)|+|\psi(\theta)\rangle\langle\partial_{j}\psi(\theta)|\right)\right]\\
 & =\Tr\!\left[|\partial_{i}\psi(\theta)\rangle\langle\psi(\theta)|\partial_{j}\psi(\theta)\rangle\langle\psi(\theta)|\right]+\Tr\!\left[|\partial_{i}\psi(\theta)\rangle\langle\psi(\theta)|\psi(\theta)\rangle\langle\partial_{j}\psi(\theta)|\right]\nonumber \\
 & \qquad+\Tr\!\left[|\psi(\theta)\rangle\langle\partial_{i}\psi(\theta)|\partial_{j}\psi(\theta)\rangle\langle\psi(\theta)|\right]+\Tr\!\left[|\psi(\theta)\rangle\langle\partial_{i}\psi(\theta)|\psi(\theta)\rangle\langle\partial_{j}\psi(\theta)|\right]\\
 & =\langle\psi(\theta)|\partial_{i}\psi(\theta)\rangle\,\langle\psi(\theta)|\partial_{j}\psi(\theta)\rangle+\langle\partial_{j}\psi(\theta)|\partial_{i}\psi(\theta)\rangle\nonumber \\
 & \qquad+\langle\partial_{i}\psi(\theta)|\partial_{j}\psi(\theta)\rangle+\langle\partial_{j}\psi(\theta)|\psi(\theta)\rangle\,\langle\partial_{i}\psi(\theta)|\psi(\theta)\rangle\\
 & =2\Re\left[\langle\partial_{i}\psi(\theta)|\partial_{j}\psi(\theta)\rangle+\langle\psi(\theta)|\partial_{i}\psi(\theta)\rangle\,\langle\psi(\theta)|\partial_{j}\psi(\theta)\rangle\right]\\
 & =2\Re\left[\langle\partial_{i}\psi(\theta)|\partial_{j}\psi(\theta)\rangle-\langle\partial_{i}\psi(\theta)|\psi(\theta)\rangle\,\langle\psi(\theta)|\partial_{j}\psi(\theta)\rangle\right]\\
 & =2\Re\left[\langle\partial_{i}\psi(\theta)|\left(I-|\psi(\theta)\rangle\!\langle\psi(\theta)|\right)\partial_{j}\psi(\theta)\rangle\right],
\end{align}
where the penultimate equality follows because
\begin{equation}
0=\partial_{i}\left(\langle\psi(\theta)|\psi(\theta)\rangle\right)=\langle\partial_{i}\psi(\theta)|\psi(\theta)\rangle+\langle\psi(\theta)|\partial_{i}\psi(\theta)\rangle.
\end{equation}

\end{document}